\documentclass[journal,onecolumn,letterpaper,10pt]{IEEEtran}
\usepackage[utf8]{inputenc}
\usepackage[T1]{fontenc}
\usepackage{url}
\usepackage{ifthen}
\usepackage{cite}
\usepackage[cmex10]{amsmath}
\usepackage{array}
\usepackage{booktabs,multirow,mathtools,enumitem}
\usepackage{amssymb,graphicx}
\usepackage[dvipsnames]{xcolor}
\newtheorem{Theo}{\bf Theorem}
\newtheorem{Lem}{\bf Lemma}
\newtheorem{Cor}{\bf Corollary}
\newtheorem{Propo}{\bf Proposition}
\newtheorem{Def}{Definition}
\newtheorem{Remark}{Remark}
\newtheorem{Example}{Example}
\usepackage{arydshln}
\usepackage{placeins}
\usepackage{needspace}
\usepackage[hidelinks,unicode]{hyperref}
\hypersetup{pdftitle={Secure Multi-Access Coded Caching: A Lifting Approach},pdfauthor={Han Fang, Qinyi Lu, Nan Liu, and Wei Kang}}
\allowdisplaybreaks

\begin{document}
\bstctlcite{IEEEtranBSTcontrol}
\title{Secure Multi-Access Coded Caching:\\
A Lifting Approach}

\author{Han Fang, Qinyi Lu, Nan Liu, and Wei Kang%
\thanks{This paper was presented in part at the 2024 IEEE Information Theory Workshop (ITW), Shenzhen, China, Nov. 2024. The DOI of the conference paper is 10.1109/ITW61385.2024.10807027.}%
\thanks{Han Fang is with the School of Cyber Science and Engineering, Southeast University, Nanjing, China (e-mail: h\_fang@seu.edu.cn).}%
\thanks{Qinyi Lu is with the School of Information Science and Engineering, Southeast University, Nanjing, China (e-mail: qylu@seu.edu.cn).}%
\thanks{Nan Liu is with the National Mobile Communications Research Laboratory and the School of Cyber Science and Engineering, Southeast University, Nanjing, China (e-mail: nanliu@seu.edu.cn).}%
\thanks{Wei Kang is with the School of Information Science and Engineering and the School of Cyber Science and Engineering, Southeast University, Nanjing, China (e-mail: wkang@seu.edu.cn).}}
\maketitle

\begin{abstract}
We construct secure cyclic multi-access coded caching schemes by re-encoding the
caches of a secure single-access scheme and reusing its multicast message
unchanged.  For a library of $N$ files, each of $K$ users reads $L<K$
consecutive shared caches and must
recover its requested file while learning no information about the other
files.  Starting from a parent scheme with a cyclic representation of its
cached symbols, the transformation lets each access window recover the
corresponding parent-cache symbols and, conditioned on them, learn nothing
about the remaining symbols.  A deterministic Pascal-window encoder handles
symbols whose consecutive appearances span at least $L$ users.  A randomized
complement-dual encoder handles shorter spans.  Over a suitable finite field,
these encoders attain the exact minimum per-block storage for every input length in the
block-separable, parent-preserving class.  Applying the transformation to two
secure single-access schemes gives an achievable memory-rate family for
$N,K\ge2$ and $1\le L<K$, including the exact rate-one cache size
$(N-1)(K-1)/L$.  Using converse bounds for arbitrary secure multi-access
schemes, we prove approximation guarantees for the achievable family and
give a proof of the exact $L=K-1$ tradeoff.
\end{abstract}

\begin{IEEEkeywords}
Coded caching, user-level file secrecy, cyclic multi-access caching,
parent-preserving lifting, memory-rate tradeoff.
\end{IEEEkeywords}

\section{Introduction}

Secure single-access coded caching lets each user recover its requested file
from one cache and a common multicast message while keeping the other files
secret \cite{PrivateCodedCaching,FangSmallRate}.  We show how to retain this
delivery message when each user instead reads an overlapping window of shared
caches.  The network has $N$ files, $K$ users, and $K$ caches on a cycle, with
user $k$ reading $L<K$ consecutive caches
\cite{CodedCachingForMultiLevelPopularityAndAccess}.  The construction starts
from a secure single-access scheme, called the \emph{parent}, together with a
cyclic representation of which users' caches contain each parent symbol.

Cyclic multi-access transformations already provide ways to recover logical
cache contents from overlapping physical windows
\cite{ANovelTransformationApproachOfSharedLinkCodedCachingSchemesForMultiaccessNetworks}.
To reuse a secure scheme's multicast message, each cache window must also
preserve the parent's secrecy guarantee.
Our construction meets both requirements: each user's window reconstructs its
parent-cache symbols and, conditioned on those symbols, reveals no information
about the remaining parent variables.  The user can therefore apply the
parent decoder to the original multicast message.  The conditional secrecy property carries
the parent's file-secrecy guarantee to the physical window, so the delivery
message and rate remain unchanged. We call this transformation \emph{parent-preserving lifting}.  To construct it,
we list the consecutive groups of users whose parent caches contain a given
symbol.  Each maximal group is an \emph{appearance run}, and its length is the
run's span $b$.  We arrange runs of the same span into cyclic blocks and encode
each block across the physical caches.  Each block is designed to satisfy
the local recovery and conditional secrecy requirements.  Combining the
encoded blocks gives the complete physical-cache placement. Two encoders handle these blocks.  For $b\ge L$, a deterministic encoder uses
invertible submatrices of a Pascal matrix to recover the required symbols.
For $b<L$, we use the complementary parameters $(K,K-L,K-b)$ to build a
system of linear equations, then choose a solution using independent
randomness.  The random choice hides the symbols outside each user's target set while
preserving exact recovery of the target symbols.

Our main contributions are as follows.
\begin{itemize}[leftmargin=*]
\item \emph{Secure reuse of the parent broadcast.}  We give local recovery
and secrecy conditions under which a physical cache window can replace its
parent cache.  For a secure parent supplied with a finite cyclic-incidence
representation, the lifting preserves correct decoding and user-level file
secrecy while leaving the multicast message unchanged.

\item \emph{Explicit encoders and achievable memory-rate points.}  We construct
a deterministic Pascal-window encoder for $b\ge L$ and a randomized
complement-dual encoder for $b<L$.  Applying them to the
subset-indexed parent of \cite{PrivateCodedCaching} gives achievable
memory-rate points for arbitrary $N$, $K$, and $1\le L<K$.  Lifting the
rate-one parent of \cite{FangSmallRate} gives the globally exact cache size
$(N-1)(K-1)/L$ at rate one.

\item \emph{Exact local storage and global memory-rate bounds.}  The two
encoders attain the minimum per-block allocation for every appearance span in
the block-separable, parent-preserving design class.  For arbitrary secure
multi-access schemes, we combine a secrecy-aware cyclic-window converse with
a necessary memory lower bound to prove approximation guarantees for the
achievable family.  The cyclic-window converse used in the conference analysis
\cite{FangLargestAccess} was established in
\cite[Theorem~6]{ImprovedLowerBoundsForMultiAccessCodedCaching}.
\end{itemize}

The conference paper \cite{FangLargestAccess} established the exact $L=K-1$
memory-rate tradeoff.  This paper provides the lifting construction for all
access degrees, establishes approximation guarantees, and includes a complete
proof of the conference result.

\subsection{Related work}
Coded caching creates multicast opportunities by placing data in the caches
before demands are known.  During delivery, the server uses the cached data
as side information to serve several users with a common message
\cite{FundamentalLimitsofCaching,DecentralizedCodedCachingAttainsOrderOptimalMemoryRateTradeoff}.
User-level file secrecy additionally requires that each user learn no
information about the files it does not request.  The scheme in
\cite{PrivateCodedCaching} encodes files into secret shares and masks delivery
messages.  Its subset-indexed cache contents supply the parent for our
general achievable family.  Placement-delivery arrays also provide secretive
constructions with reduced subpacketization \cite{SecretiveCodedCachingFromPDAs}.
Our second parent is the unit-rate scheme of \cite{FangSmallRate}, whose
lifting gives the exact rate-one cache size.

In cyclic multi-access caching, each user's side information is the content
of a consecutive cache window.  Overlapping windows give users shared side
information, which affects both cache placement and multicast delivery.
Schemes with uncoded placement and their memory-rate
bounds are studied in
\cite{RateMemoryTradeoffForMultiaccessCodedCachingWithUncodedPlacement},
while \cite{StructuredIndexCodingProblemAndMultiAccessCodedCaching} uses
structured index-coding problems to obtain achievable rates and bounds.
The window geometry also supports caching gains beyond cache redundancy
\cite{MultiaccessCodedCachingGainsBeyondCacheRedundancy}.
The transformation in
\cite{ANovelTransformationApproachOfSharedLinkCodedCachingSchemesForMultiaccessNetworks}
builds multi-access schemes from shared-link placement-delivery arrays.
Coded placement and cyclic placement strategies provide further ways to
exploit the shared cache contents
\cite{MultiAccessCodedCachingWithCodedPlacement,CyclicPlacementStrategyForMultiAccessCodedCaching}.
Converse bounds constrain the achievable memory-rate region
\cite{ImprovedLowerBoundsForMultiAccessCodedCaching}, and exact tradeoffs have
been established in large-cache regimes
\cite{TheOptimalRateMemoryTradeoffInMultiAccessCodedCachingLargeCacheSize}.

The number of pieces into which each file is divided is another design
consideration.  Index-coding methods and placement-delivery arrays give
cyclic multi-access constructions with linear subpacketization
\cite{ACodedCachingSchemeWithLinearSubpacketizationAndItsApplicationToMultiAccessCodedCaching,MultiAccessCodedCachingSchemeWithLinearSubPacketizationUsingPDAs,MultiaccessCodedCachingwithLinearSubpacketization}.
Cross resolvable designs provide constructions for other cache-access
patterns, with the user-cache associations determined by the design
\cite{MultiAccessCodedCachingSchemesFromCrossResolvableDesigns}.

Security and privacy requirements are determined by the protected
information and the observer.  Secure delivery keeps the library secret
from an external observer of the broadcast
\cite{FundamentalLimitsOfCachingWithSecureDelivery}.  Demand-private caching
keeps users' requested indices secret
\cite{DemandPrivateCodedCachingKamath,OnCodedCachingWithPrivateDemands,FundamentalLimitsOfDemandPrivateCodedCaching}.
Other formulations protect demand indices against colluding users
\cite{FundamentalLimitsOfCachingForDemandPrivacyAgainstColludingUsers}
or combine file secrecy with demand privacy
\cite{CodedCachingWithFileAndDemandPrivacy}.
The survey in \cite{SystematicReviewSecureCodedCaching} organizes these
security and privacy models.

These protection goals have also been studied in multi-access networks.
Secure-delivery schemes protect against an external broadcast observer
\cite{MultiAccessCodedCachingWithSecureDelivery}, and demand-private schemes
protect requested indices when users access several caches
\cite{MultiaccessCodedCachingWithPrivateDemands,MultiAccessCodedCachingWithDemandPrivacy}.
Cyclic multi-access networks have also been studied with private per-user
caches in addition to the shared caches
\cite{CyclicWrapAroundMultiAccessCodedCachingWithPrivateCaches}.

In combinatorial multi-access caching, each user is assigned to one cache
subset of a specified size, and the users cover all such subsets
\cite{MaddahAliNiesenSchemeForMultiaccessCodedCaching,FundamentalLimitsofCombinatorialMultiAccessCaching,CombinatorialMultiAccessCodedCachingImprovedRateMemoryTradeOffWithCodedPlacement}.
Security, privacy, and linear-function retrieval have been studied on this
topology
\cite{SecurityAndPrivacyInCacheAidedLinearFunctionRetrievalForMultiAccessCodedCaching,SecurityPrivacyLinearFunctionRetrievalCombinatorialPrivateCaches,SecrecyPrivacyMultiAccessCombinatorialTopology}.
Our construction addresses user-level file secrecy for cyclic cache windows.
It re-encodes the symbols of a secure single-access parent so that each
window recovers its parent-cache contents with conditional secrecy, allowing
the parent multicast message to be reused unchanged.

\section{System Model}

Fix integers
\begin{align*}
N\ge2,\qquad K\ge2,\qquad 1\le L<K.
\end{align*}
For a positive integer $n$, let $[n]\triangleq\{1,2,\ldots,n\}$.
For integers $a,b$, write $[a:b]\triangleq\{j\in\mathbb Z:a\le j\le b\}$
for the integer interval, which is empty when $a>b$.
For an ordered index set $\mathcal B$, write $Y_{\mathcal B}$ for the corresponding
ordered tuple of random variables.  User and cache indices lie in $[K]$ and are
interpreted cyclically through
\begin{align*}
\langle i\rangle_K
\triangleq 1+((i-1)\bmod K).
\end{align*}
The complement of an index set is taken relative to its stated ground set,
such as $[K]$ or $[N]$.

For a positive integer blocklength $F$, a code may choose any finite source
alphabet $\mathcal Q$ of size $q\ge2$.  All logarithms, entropies, and mutual
informations are measured in base-$q$ units.  The server stores mutually
independent files
\begin{align*}
W_n\sim\operatorname{Unif}(\mathcal Q^F),\qquad n\in[N],
\end{align*}
so $H(W_n)=F$ and $H(W_{[N]})=NF$.  It also has a private random seed
$S^{(F)}$, generated independently of the library and the demands before
placement.  Randomized placement and delivery maps are understood as
deterministic maps once this seed is included.

There are $K$ physical caches and $K$ users.  User $k$ accesses the cyclic
window
\begin{align}
\mathcal A_k
\triangleq
\{k,\langle k+1\rangle_K,\ldots,\langle k+L-1\rangle_K\}.
\end{align}
Let $\mathcal Z_k^{(F)}$ be the prescribed finite storage alphabet for cache
$k$.  During placement, before any demand is known, the placement map
$\phi_k^{(F)}$ assigns the cache content
\begin{align}
Z_k^{(F)}
=\phi_k^{(F)}(W_{[N]},S^{(F)})
\in\mathcal Z_k^{(F)},
\qquad k\in[K].
\label{eq:placement-map}
\end{align}

The demand vector $\mathbf D=(D_1,\ldots,D_K)$ has full support on $[N]^K$, is
independent of $(W_{[N]},S^{(F)})$, and is revealed publicly to the server and
all users.  Every vector $\mathbf d\in[N]^K$ is allowed, including vectors with
repeated entries.  For each $\mathbf d$, let $\mathcal X_{\mathbf d}^{(F)}$
be the prescribed finite broadcast alphabet.  When $\mathbf D=\mathbf d$ is
revealed, the server uses the delivery map $\psi_{\mathbf d}^{(F)}$ to transmit
\begin{align}
X_{\mathbf d}^{(F)}
=\psi_{\mathbf d}^{(F)}(W_{[N]},S^{(F)})
\in\mathcal X_{\mathbf d}^{(F)}.
\end{align}
User $k$ applies a public decoder
\begin{align*}
g_{k,\mathbf d}^{(F)}:
\left(\prod_{i\in\mathcal A_k}\mathcal Z_i^{(F)}\right)
\times\mathcal X_{\mathbf d}^{(F)}
\longrightarrow \mathcal Q^F.
\end{align*}

The finite-blocklength memory and worst-case delivery rate are defined by the
sizes of the prescribed alphabets:
\begin{align}
M_F
&\triangleq
\max_{k\in[K]}\frac{1}{F}\log_q|\mathcal Z_k^{(F)}|,
\\
R_F
&\triangleq
\max_{\mathbf d\in[N]^K}\frac{1}{F}
\log_q|\mathcal X_{\mathbf d}^{(F)}|.
\end{align}
These alphabet-size constraints imply
$H(Z_k^{(F)})\le M_FF$ and
$H(X_{\mathbf d}^{(F)})\le R_FF$ for every cache and demand vector,
respectively.  The converse proofs use these entropy bounds.

For every demand vector and every user, a valid code must provide exact
recovery:
\begin{align}
g_{k,\mathbf d}^{(F)}
\bigl(Z_{\mathcal A_k}^{(F)},X_{\mathbf d}^{(F)}\bigr)
=W_{d_k}\quad\text{with probability one},
\qquad
(\mathbf d,k)\in[N]^K\times[K].
\label{eq:correctness}
\end{align}
Equivalently,
$H(W_{d_k}\mid Z_{\mathcal A_k}^{(F)},X_{\mathbf d}^{(F)},
\mathbf D=\mathbf d)=0$.
The user-level file-secrecy requirement is
\begin{align}
I\!\left(
W_{[N]\setminus\{d_k\}};
Z_{\mathcal A_k}^{(F)},X_{\mathbf d}^{(F)}
\,\middle|\,
W_{d_k},\mathbf D=\mathbf d
\right)=0,
\quad
(\mathbf d,k)\in[N]^K\times[K].
\label{eq:security}
\end{align}
Equation~\eqref{eq:security} requires perfect secrecy against each user
individually in one delivery round.  Even after recovering its requested file,
the user learns no information about the other files from its accessible
caches and the broadcast.

\begin{Remark}[Equivalent secrecy identity]
By file independence and exact recovery, \eqref{eq:security} is equivalent to
\[
I\!\left(
W_{[N]\setminus\{d_k\}};
Z_{\mathcal A_k}^{(F)},X_{\mathbf d}^{(F)}
\,\middle|\,
\mathbf D=\mathbf d
\right)=0.
\]
\end{Remark}

A pair $(M,R)$ is \emph{securely achievable} for $(K,L,N)$ if there is a
sequence of codes with file lengths $F_j\to\infty$ such that every code in the
sequence satisfies exact recovery \eqref{eq:correctness} and perfect secrecy
\eqref{eq:security}, and
\begin{align*}
\limsup_{j\to\infty}M_{F_j}\le M,
\qquad
\limsup_{j\to\infty}R_{F_j}\le R.
\end{align*}
\begin{Remark}[Normalization along a code sequence]
The source alphabet may vary along the sequence.  For each code, memory and
rate are measured using the logarithm of that code's alphabet size.
\end{Remark}
The optimal rate-memory function is
\begin{align}
R^*_{K,L,N}(M)
\triangleq
\inf\bigl\{R\ge0:(M,R)\text{ is securely achievable for }(K,L,N)\bigr\},
\end{align}
with $\inf\varnothing=+\infty$.  When $(K,L,N)$ is fixed, we abbreviate this
quantity as $R^*(M)$.  A cache size $M$ is \emph{feasible} if
\mbox{$R^*(M)<\infty$}.

For $N,K\ge2$, exact recovery and file secrecy require the access range
$L<K$.
\begin{Remark}[Full cache access]
At $L=K$, choose a demand vector containing two distinct requests.  Such a
vector exists because $N,K\ge2$.  Every user then observes all caches together
with $X_{\mathbf d}^{(F)}$, so the corresponding public decoders recover both
requested files from the common observation
$(Z_{[K]}^{(F)},X_{\mathbf d}^{(F)})$.  Each of these users can consequently
recover an unrequested file, contradicting~\eqref{eq:security}.
\end{Remark}

\begin{figure}[t]
\centering
\includegraphics[width=0.78\textwidth]{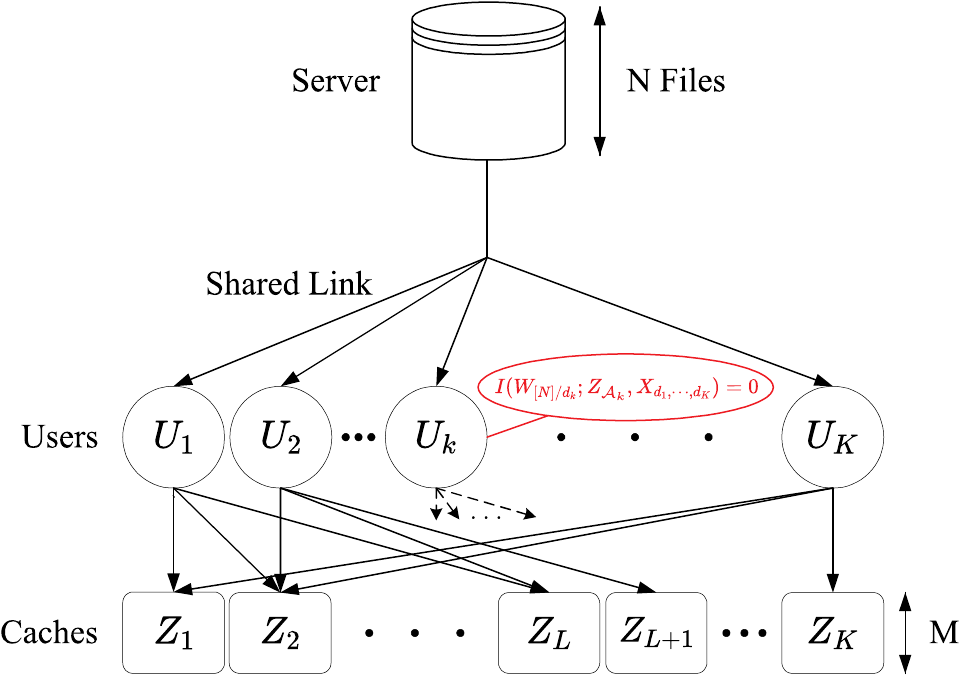}
\caption{The cyclic secure multi-access caching network.  Each user
accesses $L$ consecutive caches and receives the server's broadcast.}
\label{fig:system-model}
\end{figure}

\section{Parent-Preserving Secure Lifting}
\label{sec:secure-lifting}

We call the starting secure single-access scheme the \emph{parent}.
In that scheme, user \(u\) decodes its requested file from its cache
\(\widetilde Z_u\) and the multicast message \(X_{\mathbf d}\), with
user-level file secrecy.  Our lifting re-encodes the parent cache contents
across the physical caches.  User
\(u\)'s \(L\)-cache window recovers \(\widetilde Z_u\).
Conditioned on \(\widetilde Z_u\), the window reveals no information about
parent variables outside \(\widetilde Z_u\).  Each user can therefore run the parent
decoder, and the parent secrecy guarantee continues to hold.  Delivery uses
\(X_{\mathbf d}\) unchanged.

The construction groups parent symbols according to the users that store
them.  Symbols present in every parent cache form a common pool.  For every
other stored symbol, each maximal consecutive set of users storing it gives a
\emph{formal occurrence}.  Occurrences spanning the same number of users
are placed in \(K\)-position cyclic blocks, with one local encoder for each
block.  A deterministic Pascal-window encoder handles long spans.  A
randomized affine-coset encoder, built by complement-transpose duality,
handles short spans.  Together, they cover every proper appearance span
\(1\leq b\leq K-1\).

\subsection{Parent representation and local recovery and secrecy conditions}

Throughout this section, user, cache, and formal-position indices belong to
\(\mathbb Z_K\triangleq\{0,1,\ldots,K-1\}\), with additions to these indices
evaluated modulo \(K\).  Index-set complements are taken in \(\mathbb Z_K\).
Scalar-coordinate indices and polynomial exponents use ordinary integers.
The physical access window and the
length-\(b\) target window are
\begin{align}
\mathcal A_u
  &\triangleq\{u,u+1,\ldots,u+L-1\},
&
\mathcal T_u^{(b)}
  &\triangleq\{u,u+1,\ldots,u+b-1\}.
\end{align}
For a tuple \(V_{\mathbb Z_K}=(V_i:i\in\mathbb Z_K)\) and a set
\(\mathcal S\subseteq\mathbb Z_K\), we write
\(V_{\mathcal S}=(V_i:i\in\mathcal S)\).

For a named parent (sub)symbol \(Y\), its \emph{incidence set} is
\begin{align*}
\mathcal I(Y)
\triangleq
\{u\in\mathbb Z_K:Y\text{ is a named component of }\widetilde Z_u\}.
\end{align*}
A maximal nonempty consecutive component of \(\mathcal I(Y)\) on the cycle
is an \emph{appearance run}.  Its number of users is its \emph{appearance
span} \(b\).  Each run produces one \emph{formal occurrence}, placed at the
run's final user in cyclic order.  Occurrences of a fixed span \(b\) are packed by
endpoint into \emph{\(K\)-position formal blocks}
\(V_{\mathbb Z_K}=(V_0,\ldots,V_{K-1})\).
The parameter \(b\) specifies how many consecutive
users store each represented occurrence.  A parent symbol with several runs
has several formal occurrences, all referring to the same underlying random
variable.  We call such repeated references \emph{aliases}.

\begin{Def}[Admissible cyclic-incidence parent]
\label{def:admissible-parent}
A secure single-access parent is \emph{admissible} if its cache components have
a finite cyclic-incidence representation over a common construction field
\(\mathbb F_q\).
\begin{enumerate}
\item The parent has logical caches
\(\widetilde Z_0,\ldots,\widetilde Z_{K-1}\) and, for each public demand
vector \(\mathbf d\), a message \(X_{\mathbf d}\) satisfying the parent
correctness and user-level file-secrecy requirements.
\item The formal occurrences of all named parent (sub)symbols with nonempty
proper incidence sets form a finite list of \(K\)-position formal blocks.
Each block has an appearance span \(b\in[1:K-1]\), equal-length inputs,
and a named parent (sub)symbol assigned to every position.
\item Symbols common to all logical caches form a separate common-symbol
pool.
\item For every user \(u\), the target tuples
\(V_{\mathcal T_u^{(b)}}\) from all blocks, together with the common-symbol
pool, contain exactly the information in \(\widetilde Z_u\).
Fixed deduplication maps keep one copy of each named component to recover
\(\widetilde Z_u\).  Fixed expansion maps restore its copies at all target
references.  Both maps use the reference labels.
\end{enumerate}
\end{Def}

\begin{Def}[Valid local transformation]
Fix one \(K\)-position formal block of span \(b\).  A local transformation
maps \(V_{\mathbb Z_K}\) to a segment \(C_i\) in each physical cache \(i\).
A randomized transformation samples its seed at placement time,
independently of the block inputs and the demand vector.
The transformation is \emph{valid} if, for every user \(u\) and every joint
distribution of the formal inputs,
\begin{align}
H\!\left(V_{\mathcal T_u^{(b)}}\mid C_{\mathcal A_u}\right)&=0,
\label{eq:local-reconstruction}\\
I\!\left(
 V_{\mathbb Z_K\setminus\mathcal T_u^{(b)}};
 C_{\mathcal A_u}
 \middle|
 V_{\mathcal T_u^{(b)}}
\right)&=0.
\label{eq:local-conditional-secrecy}
\end{align}
The same encoder must satisfy both conditions for every input law, including
correlated inputs and aliases of one parent variable.
\end{Def}

\begin{Remark}[Distributional scope]
The lifting and fixed-segment allocation use encoders that are valid for every
joint input law.  The restricted local converse tests those encoders on
independent unit-entropy inputs.
\end{Remark}

\begin{Example}[Run-to-block packing]
Let \(K=4\), \(L=2\), and consider the subset-indexed parent with \(t=2\)
\cite{PrivateCodedCaching}.  A share named by \(\{0,1\}\) has one appearance
run of span \(2\) and one formal occurrence at endpoint \(1\).  A share named
by \(\{0,2\}\) has two singleton runs at endpoints \(0\) and \(2\).  The two
formal occurrences refer to the same parent share.  Each is grouped with
other occurrences of the same span into a four-position block.  For a span-two block, user \(0\) targets positions
\(\mathcal T_0^{(2)}=\{0,1\}\) and observes caches
\(\mathcal A_0=\{0,1\}\).  Equations~\eqref{eq:local-reconstruction} and
\eqref{eq:local-conditional-secrecy} require those two caches to recover
\((V_0,V_1)\) and, conditioned on that pair, to reveal nothing about
\((V_2,V_3)\).  Rotating the indices gives the same reconstruction and conditional-secrecy
requirements for the other three users.
\end{Example}

\subsection{Common scalarization and common symbols}

The direct and dual encoders require input vectors of lengths \(L\) and
\(K-L\), respectively.  To use both encoders with the same parent symbols,
express those symbols in coordinates over a common prime field.
Write \(Q=p^e\), where \(p\) is prime and \(e\) is a positive integer.
Fix a basis of \(\mathbb F_Q\) over its prime subfield \(\mathbb F_p\).
The direct encoder groups the basis coordinates into vectors of length
\(L\), and the dual encoder groups them into vectors of length \(K-L\).
Every reference to the same named symbol uses the same expansion.
We call this representation \emph{common scalarization}.

\begin{Remark}[Choice of construction field]
For an existence construction, choose a parent implementation over
\(\mathbb F_Q\), where \(Q=p^e\), \(p>K\), and
\(\operatorname{lcm}(L,K-L)\mid e\), satisfying that parent's field
requirements.  The lifting uses this implementation's symbols, reference
maps, and delivery message.  Local encoding and the common-symbol MDS layer
operate over its prime subfield.
\end{Remark}

\begin{Lem}[Common scalarization]
\label{lem:common-scalarization}
Fix \(1\leq L<K\), put
\begin{align}
\chi\triangleq\operatorname{lcm}(L,K-L),
\end{align}
and let an admissible parent be supplied over \(\mathbb F_Q\), where
\(Q=p^e\), \(p>K\), and \(\chi\mid e\).
Fix an \(\mathbb F_p\)-basis of \(\mathbb F_Q\), with coordinate bijection
\(\varphi:\mathbb F_Q\longrightarrow\mathbb F_p^e\).
If a named parent symbol \(Y\) has \(\ell_Y\) coordinates over
\(\mathbb F_Q\), its basis expansion \(\overline Y\) has length
\begin{align}
d_Y=e\ell_Y,
\qquad
L\mid d_Y,
\qquad
(K-L)\mid d_Y.
\label{eq:scalarized-length}
\end{align}
The original length \(\ell_Y\) is arbitrary.  Each reference to \(Y\)
can therefore be partitioned into exactly \(d_Y/L\) direct groups or
\(d_Y/(K-L)\) dual groups over \(\mathbb F_p\).
Both partitions use the same vector \(\overline Y\), and
\(\varphi^{-1}\) recovers the original parent symbols.

To make a common symbol recoverable from any \(L\) physical caches, split
\(\overline Y\) into \(L\) rows
\(\overline Y_1,\ldots,\overline Y_L\in\mathbb F_p^{d_Y/L}\).
Choose \(G^{\rm com}\in\mathbb F_p^{K\times L}\) whose every \(L\) rows
are nonsingular and give cache \(i\)
\begin{align}
C_i^{\rm com}(Y)
\triangleq\sum_{j=1}^{L}G^{\rm com}_{i,j}\overline Y_j.
\end{align}
Any \(L\) physical caches recover \(Y\).  A segment containing
\(a\) prime-field coordinates has alphabet \(\mathbb F_p^a\), so its storage is
\begin{align}
\log_Q\left|\mathbb F_p^a\right|=\frac{a}{e}
\end{align}
in parent-field units.  Thus each common-symbol share requires
\(\ell_Y/L\) units of storage.  Basis expansion preserves the original parent input and
broadcast lengths.

Component codes over a common parent field can be combined with any finite
set of rational memory-sharing weights.
\end{Lem}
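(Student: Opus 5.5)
The plan is to verify the claims of Lemma~\ref{lem:common-scalarization} one at a time. Each follows from elementary facts about field extensions and MDS matrices.

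\emph{Length and divisibility.} The chosen basis gives an $\mathbb F_p$-linear bijection $\varphi:\mathbb F_Q\to\mathbb F_p^e$. Applying it coordinatewise to a symbol $Y\in\mathbb F_Q^{\ell_Y}$ yields $\overline Y\in\mathbb F_p^{e\ell_Y}$, so $d_Y=e\ell_Y$. Because $\chi\mid e$ and both $L$ and $K-L$ divide $\chi$, they divide $e$, and hence divide $d_Y$ for every $\ell_Y\ge1$. So $\overline Y$ splits into $d_Y/L$ consecutive groups of length $L$, or $d_Y/(K-L)$ groups of length $K-L$. Fixing one expansion per named symbol (with the same $\varphi$ and ordering) makes every reference and alias use the same vector $\overline Y$. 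Inverting $\varphi$ coordinatewise recovers $Y$.

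\emph{Common-symbol MDS layer.} I need $G^{\rm com}\in\mathbb F_p^{K\times L}$ in which every $L$ rows are nonsingular. I will take a Vandermonde matrix with rows $(1,\alpha_i,\ldots,\alpha_i^{L-1})$ for distinct $\alpha_0,\ldots,\alpha_{K-1}\in\mathbb F_p$. Distinct values exist because $p>K$, which is the only place this hypothesis is needed. Any $L$ rows then form a square Vandermonde matrix with distinct nodes, so it is invertible.

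Now fix caches $i_1,\ldots,i_L$ and write $\overline Y$ as the $L\times(d_Y/L)$ matrix with rows $\overline Y_j$. The stacked cache contents $C^{\rm com}_{i_1},\ldots,C^{\rm com}_{i_L}$ equal $G^{\rm com}_{\{i_1,\ldots,i_L\}}\overline Y$. Multiplying by the inverse of this $L\times L$ submatrix gives $\overline Y$, and then $\varphi^{-1}$ gives $Y$. In particular, every physical window $\mathcal A_u$ recovers all common symbols.

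\emph{Storage accounting.} A segment of $a$ prime-field coordinates takes $\log_Q p^a=a\log_Q p=a/e$ units. Each share has $a=d_Y/L=e\ell_Y/L$ coordinates, so it costs $\ell_Y/L$ units. Basis expansion is a bijection, so parent inputs and the broadcast $X_{\mathbf d}$ keep their sizes measured in $\mathbb F_Q$-units. The message $X_{\mathbf d}$ is transmitted over $\mathbb F_Q$ exactly as in the parent, so its rate is unchanged.

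\emph{Memory sharing.} The final claim is the step needing the most care, because the pieces have different file lengths. Suppose the component codes have file lengths $F_1,\ldots,F_m$ over the same $\mathbb F_Q$, and the weights are rational $\lambda_j=n_j/n$. I will set $F=n\prod_j F_j$ and split each file into disjoint parts. Part $j$ has length $\lambda_jF$, which is a multiple of $F_j$, so it can be filled by $\lambda_jF/F_j$ independent copies of code $j$, each with fresh randomness. Concatenating cache segments and broadcasts gives memory $\sum_j\lambda_jM_j$ and rate $\sum_j\lambda_jR_j$.

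Correctness holds because each part is decoded separately. For secrecy, the parts are independent across copies: they use disjoint file pieces and independent seeds, and the demand is common to all. Hence the mutual information in \eqref{eq:security} is a sum of per-copy terms, each of which is zero. A common alphabet $\mathcal Q=\mathbb F_Q$ is what allows the lengths to add directly in $\mathbb F_Q$-units.
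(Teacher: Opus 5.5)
Your proposal is correct and follows essentially the same route as the paper's proof: coordinatewise basis expansion using $L\mid\chi$, $(K-L)\mid\chi$ and $\chi\mid e$; a $K\times L$ Vandermonde generator over $\mathbb F_p$ whose $K$ distinct nodes exist because $p>K$; the accounting $\log_Q p^a=a/e$; and rational memory sharing by independent copies at a common multiple of the component file lengths (you take $n\prod_jF_j$, the paper takes $A\operatorname{lcm}_s g_s$). One small correction: concatenation gives $M_F\le\sum_j\lambda_jM_j$ and $R_F\le\sum_j\lambda_jR_j$, not equality, because the maxima over caches and demands are taken after summing; the inequality is all the achievability claim needs.
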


\begin{IEEEproof}[Proof sketch]
Since \(L\mid e\) and \(K-L\mid e\), both divisibilities in
\eqref{eq:scalarized-length} hold for every \(\ell_Y\).
The basis map is a bijection, and its inverse reconstructs each parent
coordinate from its \(e\) prime-field coordinates.
The condition \(p>K\) supplies \(K\) distinct evaluation points for a
Vandermonde matrix, so every \(L\) shares of the common-symbol pool recover
its basis coordinates.  The identity
\(\log_Q(p^a)=a/e\) accounts for their exact storage.
The parent encoder and delivery message remain those of the supplied
\(\mathbb F_Q\) implementation.

A common denominator for the memory-sharing weights and a common multiple of
the component file lengths make each assigned component length an integer.
\end{IEEEproof}

\begin{Remark}[Fixed-field parent realization]
\label{rem:fixed-field-scalarization}
For a parent supplied over a fixed field \(\mathbb F_q\) of characteristic
greater than \(K\), independent parallel copies make the named symbol lengths
divisible by both local group sizes while preserving normalized memory and
rate.  The required finite repetition factor is determined by those lengths
and \(\operatorname{lcm}(L,K-L)\).
\end{Remark}

\subsection{Two complementary local encoders}

The two local encoders operate over \(\mathbb F_q\), with
\(p=\operatorname{char}(\mathbb F_q)>K\).  Under basis expansion,
\(q=p\).  The fixed-field implementation uses the supplied parent field.

The direct construction uses binomial-coefficient matrices, a family also
studied in connection with superregular matrices
\cite{StronglyMDSConvolutionalCodes,SuperregularMatricesConvolutionalCodes}.
Related coding approaches to secrecy include ramp secret sharing
\cite{RampSecretSharing} and the wiretap-II channel
\cite{WiretapChannelII}.

\begin{Theo}[Direct encoder, \(L\leq b\leq K-1\)]
\label{thm:direct-local-transform}
Represent every formal position by
\(\mathbf V_i=(v_{i,0},\ldots,v_{i,L-1})^{\mathsf T}\in\mathbb F_q^L\).
Put \(\omega=b-L\).  For each \(j\in[0:\omega]\), define a matrix
with row index \(m\in[0:b-1]\) and column index \(k\in[0:L-1]\) by
\begin{align}
G_j&\in\mathbb F_q^{b\times L},
&
(G_j)_{m,k}&=\binom{\omega}{j}\mathbf 1\{m=j+k\}.
\end{align}
Cache \(i\) stores
\begin{align}
\mathbf C_i
=\sum_{j=0}^{\omega}G_j\mathbf V_{i+j}
\in\mathbb F_q^b.
\label{eq:direct-local-encoder}
\end{align}
Define the global encoding matrix by stacking the inputs and outputs as
\(\mathbf C=\mathbf H\mathbf V\).  In a submatrix \(\mathbf H(\mathcal A,\mathcal T)\),
the index sets select cache-row blocks and input-column blocks, respectively.
Each cache contributes \(b\) rows, and each input position contributes
\(L\) columns.  For every user \(u\),
\begin{align}
\mathbf H(\mathcal A_u,
  \mathbb Z_K\setminus\mathcal T_u^{(b)})&=0,
\label{eq:direct-local-zero}\\
\mathbf H(\mathcal A_u,\mathcal T_u^{(b)})
&\in\mathbb F_q^{Lb\times Lb}
\quad\text{is nonsingular}.
\label{eq:direct-local-invertible}
\end{align}
The user therefore decodes by
\begin{align}
\mathbf V_{\mathcal T_u^{(b)}}
=\mathbf H(\mathcal A_u,\mathcal T_u^{(b)})^{-1}
 \mathbf C_{\mathcal A_u},
\label{eq:direct-local-decoder}
\end{align}
so the encoder is a deterministic valid local transformation for every
joint input law.  The encoder stores \(b\) scalars per cache for \(L\) source
scalars per formal position, so its normalized per-cache allocation is
\(b/L\).  The global matrix
\(\mathbf H\in\mathbb F_q^{Kb\times KL}\) has full column rank.
\end{Theo}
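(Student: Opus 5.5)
The plan is to read \eqref{eq:direct-local-encoder} as a banded block-Toeplitz linear map and establish the two structural facts \eqref{eq:direct-local-zero} and \eqref{eq:direct-local-invertible}. Validity, decoding and full column rank then follow without further work. By construction, cache \(i\) depends only on positions \(i,i+1,\ldots,i+\omega\). A user window \(\mathcal A_u=\{u,\ldots,u+L-1\}\) therefore depends only on positions \(u,\ldots,u+L-1+\omega=u+b-1\), which is exactly \(\mathcal T_u^{(b)}\). Because \(b\le K-1\), these \(b\) positions are distinct modulo \(K\), so no wrap-around aliasing occurs, and all column blocks outside \(\mathcal T_u^{(b)}\) vanish. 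This gives \eqref{eq:direct-local-zero}. By cyclic shift invariance of the encoder, it also suffices to treat \(u=0\).

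For \eqref{eq:direct-local-invertible}, encode each input as a polynomial \(V_s(x)=\sum_{k=0}^{L-1}v_{s,k}x^k\) and each output as \(C_r(x)=\sum_{m=0}^{b-1}c_{r,m}x^m\). The rule \((G_j)_{m,k}=\binom{\omega}{j}\mathbf 1\{m=j+k\}\) then reads
\[
C_r(x)=\sum_{j=0}^{\omega}\binom{\omega}{j}x^jV_{r+j}(x).
\]
Collect the inputs into the bivariate polynomial \(W(x,y)=\sum_{s=0}^{b-1}V_s(x)y^s\). A coefficient check then gives
\[
C_r(x)=[y^{r+\omega}]\,(x+y)^{\omega}W(x,y),\qquad r\in[0:L-1].
\]
Thus \(\mathbf H(\mathcal A_0,\mathcal T_0^{(b)})\) is the square \(Lb\times Lb\) map sending \(W\), with \(\deg_x<L\) and \(\deg_y<b\), to the \(y\)-coefficients of degrees \(\omega,\ldots,b-1\) of \((x+y)^\omega W\). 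Each such coefficient is a polynomial in \(x\) of degree less than \(b\). Nonsingularity is equivalent to trivial kernel: I must show that if these middle coefficients all vanish, then \(W=0\).

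I expect this kernel argument to be the main obstacle, and I would attack it in two ways.

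\emph{First route: induction on \(\omega\).} For \(\omega=0\) the map is the identity. For the step, use Pascal's rule \((x+y)^{\omega+1}=(x+y)(x+y)^{\omega}\), which factors the \((\omega+1)\)-map through the \(\omega\)-map followed by a bidiagonal block map. The hope is to show that this bidiagonal factor, restricted to the relevant coefficient ranges, is triangular with nonzero diagonal.

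\emph{Second route: determinant formula.} Reorder rows and columns by total degree in \((x,y)\). This should make the matrix block triangular, with diagonal blocks that are submatrices of Pascal-type matrices \(\bigl(\binom{\omega}{j-i}\bigr)\). Such minors have determinants given by Lindström--Gessel--Viennot path counts, or equivalently by products of binomial ratios, and the task is to show they are nonzero in characteristic \(p>K\).

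The delicate point is that integer determinants of these Pascal minors may contain prime factors up to roughly \(b+L\), which can exceed \(K\). I would therefore aim to show that every prime factor appearing is at most \(\max(b,L,\omega+1)\le K-1<p\). If the raw determinant does not have this property, I would rely on the inductive factorisation, which only divides by binomial coefficients \(\binom{\omega}{j}\) with \(\omega<K\).

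Given \eqref{eq:direct-local-zero} and \eqref{eq:direct-local-invertible}, the conclusions follow directly. We have \(\mathbf C_{\mathcal A_u}=\mathbf H(\mathcal A_u,\mathcal T_u^{(b)})\mathbf V_{\mathcal T_u^{(b)}}\), so \eqref{eq:direct-local-decoder} holds and gives \eqref{eq:local-reconstruction}. Moreover, \(C_{\mathcal A_u}\) is a deterministic function of \(V_{\mathcal T_u^{(b)}}\), so the conditional mutual information in \eqref{eq:local-conditional-secrecy} is zero for every joint input law, including correlated inputs and aliases. Storage is \(b\) scalars per cache against \(L\) scalars per position, giving the allocation \(b/L\). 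Finally, every position lies in some target window \(\mathcal T_u^{(b)}\), so the full vector \(\mathbf V\) is recoverable from \(\mathbf C\). Hence \(\mathbf H\) is injective, that is, it has full column rank \(KL\).
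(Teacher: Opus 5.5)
Most of your proof is sound and matches the paper. The support argument for \eqref{eq:direct-local-zero}, the reduction to $u=0$, the deduction of validity, the $b/L$ allocation, and full column rank from the covering target windows are all handled as the paper handles them. Your identity $C_r(x)=[y^{r+\omega}](x+y)^\omega W(x,y)$ is correct. Grouping by total degree in $(x,y)$ is exactly the paper's antidiagonal invariant $n+k=i+m$. It makes $\mathbf H(\mathcal A_0,\mathcal T_0^{(b)})$ block \emph{diagonal}, with Pascal-window blocks $B_{\omega,r,\delta}=\bigl[\binom{\omega}{\delta+a-c}\bigr]_{a,c=0}^{r-1}$, where $r\le L$ and $0\le\delta\le\omega$.

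The gap is that the one nontrivial claim of the theorem, nonsingularity of these blocks over a field of characteristic $p>K$, is never proved. Lindstr\"om--Gessel--Viennot only shows that the integer determinant is a positive path count, which says nothing modulo $p$. Your fallback also cannot work as stated. Take $(K,L,b)=(5,2,4)$, so $\omega=2$. The $s=2$ block acts on $(v_{1,1},v_{2,0})$ by $c_{0,2}=2v_{1,1}+v_{2,0}$ and $c_{1,1}=v_{1,1}+2v_{2,0}$, and its determinant is $3$. Over $\mathbb F_3$ the local matrix is therefore singular, even though $3\nmid\binom{2}{j}$ for every $j$. An induction whose only divisions are by coefficients $\binom{\omega'}{j}$ with $\omega'\le\omega$ would thus prove a false statement. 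Any correct argument must use $p>\omega+r-1$ in an essential way; here $3=b-1$.

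The missing ingredient is a characteristic-sensitive lemma: $B_{w,r,\delta}$ is nonsingular whenever $p>w+r-1$. This suffices because $w+r-1\le\omega+L-1=b-1<K<p$, which also removes your worry about primes near $b+L$. The paper proves the lemma with a sparse-root bound.
\begin{itemize}
\item For $w\ge1$, suppose $v\in\ker B_{w,r,\delta}$. Then $Q(z)=(1+z)^w\sum_c v_cz^c$ has degree at most $w+r-1$ and zero coefficients in degrees $\delta,\dots,\delta+r-1$. So $Q$ has at most $w$ nonzero terms, while $-1$ is a root of multiplicity at least $w$.
\item A polynomial $f$ with $J$ nonzero terms and all exponents below $p$ has no nonzero root $\alpha$ of multiplicity $J$. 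The conditions $\alpha^\rho f^{(\rho)}(\alpha)=0$ for $\rho<J$ form a linear system in the nonzero unknowns $c_j\alpha^{e_j}$. Its falling-factorial matrix is unit-triangularly equivalent to a Vandermonde matrix, with determinant $\prod_{a<c}(e_c-e_a)\ne0$ in $\mathbb F_p$.
\item Hence $Q=0$, so $v=0$.
\end{itemize}

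Alternatively, you can finish your second route with MacMahon's box formula. $\det B_{w,r,\delta}$ equals the number of plane partitions in a $\delta\times(w-\delta)\times r$ box, $\prod_{i=1}^{\delta}\prod_{j=1}^{w-\delta}\prod_{l=1}^{r}\frac{i+j+l-1}{i+j+l-2}$. This is an integer whose prime factors all divide numerators at most $w+r-1<p$. Either argument supplies exactly the step your proposal leaves open.
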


\begin{IEEEproof}[Proof Sketch]
Cache \(i\) depends only on positions \(i,\ldots,i+\omega\).  The union of
these supports over caches \(u,\ldots,u+L-1\) is exactly
\(\mathcal T_u^{(b)}\), proving \eqref{eq:direct-local-zero} and conditional
complement secrecy.
Rotate the user window to \(u=0\), with cache indices \(i\in[0:L-1]\)
and target indices \(n\in[0:b-1]\).
Write \(C_{i,m}\) for the \(m\)th scalar output of cache \(i\).
If \(v_{n,k}\) occurs in output coordinate \(C_{i,m}\),
then \(n=i+j\), \(m=j+k\), and hence
\begin{align}
n+k=i+m.
\end{align}
Group equations and unknowns by this integer antidiagonal index.
The local decoding matrix becomes block diagonal.  Each block has a positive
size \(r\leq L\) and is a Pascal-window matrix of the form
\begin{align}
B_{\omega,r,\delta}
=\left[\binom{\omega}{\delta+a-c}\right]_{a,c=0}^{r-1},
\qquad
0\leq\delta\leq\omega,
\quad r\leq L.
\end{align}
Each block is nonsingular when \(p>K\).  This gives
\eqref{eq:direct-local-invertible}.  A vector in the kernel of the global
matrix must therefore vanish on every target window.  These windows cover
all input positions, so the global matrix has full column rank.
\end{IEEEproof}

For \(b<L\), we hide the complementary inputs by choosing the cache contents
at random from the solutions of a linear system.  An auxiliary direct encoder
on complementary windows supplies the constraints.  Set
\begin{align}
L'&\triangleq K-L,
& b'&\triangleq K-b,
& \Delta&\triangleq L-b.
\end{align}
Since \(b'>L'\), the direct encoder applies to the auxiliary parameters
\((K,L',b')\).  Define the positive block shift by
\begin{align}
(P_\Delta\mathbf v)_{i+\Delta}=\mathbf v_i,
\qquad\text{equivalently}\qquad
(P_\Delta\mathbf v)_r=\mathbf v_{r-\Delta}.
\end{align}
For an index set \(\mathcal S\subseteq\mathbb Z_K\), write
\(\mathcal S\mathbin\oplus\Delta
\triangleq\{i+\Delta:i\in\mathcal S\}\).

\begin{Theo}[Complement-dual encoder, \(1\leq b<L\)]
\label{thm:short-local-transform}
Let \(\mathbf H\in\mathbb F_q^{Kb'\times KL'}\) be the global matrix of the
direct encoder with auxiliary parameters \((K,L',b')\), and set
\begin{align}
A&\triangleq\mathbf H^{\mathsf T}
  \in\mathbb F_q^{KL'\times Kb'},
&
d_{\rm ker}&\triangleq Kb'-KL'=K(L-b).
\end{align}
Since \(\mathbf H\) has full column rank, \(A\) has full row rank.
We construct a right inverse \(J\) to obtain one solution of the constraints
and a kernel basis \(B\) to randomize over all solutions.  Choose a
column-permutation matrix
\(P_{\rm col}\in\mathbb F_q^{Kb'\times Kb'}\) for which
\begin{align}
AP_{\rm col}=[A_{\rm P}\;A_{\rm F}],
\qquad
A_{\rm P}\in\mathbb F_q^{KL'\times KL'}
\quad\text{is nonsingular},
\qquad
A_{\rm F}\in\mathbb F_q^{KL'\times d_{\rm ker}},
\end{align}
and define
\begin{align}
J&\triangleq
P_{\rm col}\begin{pmatrix}A_{\rm P}^{-1}\\0\end{pmatrix},
&
B&\triangleq
P_{\rm col}\begin{pmatrix}-A_{\rm P}^{-1}A_{\rm F}\\I_{d_{\rm ker}}\end{pmatrix}.
\end{align}
Here
\(J\in\mathbb F_q^{Kb'\times KL'}\),
\(B\in\mathbb F_q^{Kb'\times d_{\rm ker}}\), and the block-permutation
matrix used in the encoder is
\(P_\Delta\in\mathbb F_q^{KL'\times KL'}\).
Thus \(AJ=I_{KL'}\), and the columns of \(B\) form a basis of \(\ker A\).

Represent each formal position by
\(\mathbf V_i\in\mathbb F_q^{L'}\), and stack these vectors into
\(\mathbf V\in\mathbb F_q^{KL'}\).  Stack cache segments
\(\mathbf C_i\in\mathbb F_q^{b'}\) into
\(\mathbf C\in\mathbb F_q^{Kb'}\).  For each block instance and scalar
group, draw a seed vector during placement:
\(\mathbf U\sim\operatorname{Unif}(\mathbb F_q^{d_{\rm ker}})\).
The seed vectors used by different block instances and scalar groups are
mutually independent.  Their collection is independent of the library, the
parent randomness, and the demand vector.  Place
\begin{align}
\mathbf C=JP_\Delta\mathbf V+B\mathbf U.
\label{eq:dual-affine-coset-encoder}
\end{align}
Every user \(u\) recovers its target tuple as
\begin{align}
\mathbf V_{\mathcal T_u^{(b)}}
=\mathbf H^{\mathsf T}
 (\mathcal T_u^{(b)}\mathbin\oplus\Delta,\mathcal A_u)
 \mathbf C_{\mathcal A_u}.
\label{eq:dual-local-decoder}
\end{align}
The accessible window also satisfies
\begin{align}
I\!\left(
 \mathbf V_{\mathbb Z_K\setminus\mathcal T_u^{(b)}};
 \mathbf C_{\mathcal A_u}
 \middle|
 \mathbf V_{\mathcal T_u^{(b)}}
\right)=0
\end{align}
for every joint input law.  Hence the encoder is valid, has normalized
per-cache allocation
\begin{align}
\frac{K-b}{K-L},
\end{align}
and uses exactly \(K(L-b)\) fresh field symbols per scalar group when the
affine solution is sampled uniformly.
\end{Theo}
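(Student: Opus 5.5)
The plan is to exploit the constraint identity \(A\mathbf C=P_\Delta\mathbf V\) and read off both recovery and secrecy from the block structure of \(\mathbf H\) given by Theorem~\ref{thm:direct-local-transform}. Since \(AJ=I_{KL'}\) and \(AB=0\), the placement \eqref{eq:dual-affine-coset-encoder} gives \(A\mathbf C=P_\Delta\mathbf V\) for every seed. Because \(B\mathbf U\) is uniform on \(\ker A\), the conditional law of \(\mathbf C\) given \(\mathbf V\) is uniform on the affine coset \(\{\mathbf c: A\mathbf c=P_\Delta\mathbf V\}\).

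\emph{Reconstruction.} The row block of \(A\) indexed by \(r\in\mathbb Z_K\) is the transpose of column block \(r\) of \(\mathbf H\). In the auxiliary direct encoder, with \(\omega'=b'-L'=L-b=\Delta\), cache \(i\) depends on positions \(i,\ldots,i+\Delta\). Hence position \(r\) enters only caches \(r-\Delta,\ldots,r\). For a target position \(t\in\mathcal T_u^{(b)}\), put \(r=t+\Delta\). The relevant caches \(t,\ldots,t+\Delta\) then lie in \(\{u,\ldots,u+b-1+\Delta\}=\mathcal A_u\). Row \(r\) of \(A\mathbf C=P_\Delta\mathbf V\) therefore reads
\[
\mathbf V_t=\mathbf H^{\mathsf T}(\{t+\Delta\},\mathcal A_u)\mathbf C_{\mathcal A_u}.
\]
Stacking these rows over \(t\) gives \eqref{eq:dual-local-decoder}. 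The same observation shows \(A(\mathcal T_u^{(b)}\oplus\Delta,\mathbb Z_K\setminus\mathcal A_u)=0\).

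\emph{Conditional secrecy.} Write \(\mathcal R=\mathcal T_u^{(b)}\oplus\Delta\), let \(\mathcal R^c\) be its complement, and let \(\mathcal A_u^c\) be the complement of \(\mathcal A_u\). The projection of the uniform coset onto \(\mathbf c_{\mathcal A_u}\) is uniform on the set of \(\mathbf c_{\mathcal A_u}\) for which some \(\mathbf c_{\mathcal A_u^c}\) solves the system. The rows in \(\mathcal R\) involve only \(\mathbf c_{\mathcal A_u}\) and impose \(A(\mathcal R,\mathcal A_u)\mathbf c_{\mathcal A_u}=\mathbf V_{\mathcal T_u^{(b)}}\). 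The rows in \(\mathcal R^c\) read
\[
A(\mathcal R^c,\mathcal A_u^c)\mathbf c_{\mathcal A_u^c}=(P_\Delta\mathbf V)_{\mathcal R^c}-A(\mathcal R^c,\mathcal A_u)\mathbf c_{\mathcal A_u}.
\]
The key step, which I expect to be the main point to verify, is that \(A(\mathcal R^c,\mathcal A_u^c)\) has full row rank. To check it, note that \(\mathcal A_u^c=\{u+L,\ldots,u+K-1\}\) is a window of \(L'\) caches starting at \(u+L\). Likewise \(\mathcal R^c\) is the window of \(b'\) positions starting at \(u+b+\Delta=u+L\). Thus \(A(\mathcal R^c,\mathcal A_u^c)=\mathbf H(\mathcal A'_{u+L},\mathcal T'^{(b')}_{u+L})^{\mathsf T}\), which is a square matrix of size \(L'b'\). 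It is nonsingular by \eqref{eq:direct-local-invertible} applied to auxiliary user \(u+L\). Consequently the \(\mathcal R^c\) equations are solvable for every \(\mathbf c_{\mathcal A_u}\) and every value of \(\mathbf V_{\mathbb Z_K\setminus\mathcal T_u^{(b)}}\).

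It follows that, conditioned on all of \(\mathbf V\), \(\mathbf C_{\mathcal A_u}\) is uniform on \(\{\mathbf c_{\mathcal A_u}:A(\mathcal R,\mathcal A_u)\mathbf c_{\mathcal A_u}=\mathbf V_{\mathcal T_u^{(b)}}\}\). This set depends only on \(\mathbf V_{\mathcal T_u^{(b)}}\). Because the seed is independent of \(\mathbf V\), this yields the Markov chain \(\mathbf V_{\mathbb Z_K\setminus\mathcal T_u^{(b)}}-\mathbf V_{\mathcal T_u^{(b)}}-\mathbf C_{\mathcal A_u}\) for every joint input law, which is the claimed zero mutual information.

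\emph{Accounting.} Each cache stores \(b'=K-b\) scalars per \(L'=K-L\) source scalars, giving allocation \((K-b)/(K-L)\). The fresh randomness is \(\dim\ker A=Kb'-KL'=K(L-b)\) symbols, since \(A\) has full row rank by the full column rank of \(\mathbf H\).
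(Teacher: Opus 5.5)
Your proposal is correct and matches the paper's proof essentially step for step. The paper also puts $A\mathbf C=P_\Delta\mathbf V$ into block-triangular form, with $A(\mathcal T_u^{(b)}\oplus\Delta,\mathcal A_u^{\rm c})=0$ and with $A((\mathcal T_u^{(b)})^{\rm c}\oplus\Delta,\mathcal A_u^{\rm c})$ equal to the transposed nonsingular auxiliary block at user $u+L$. It then reads the decoder off the first block row and gets conditional secrecy from the unique completion of each accessible point. The one cosmetic difference is the source of the zero block: you get it directly from the support pattern of the auxiliary encoder, whereas the paper derives both the zero block and the invertible block from its complement-alignment lemma together with the auxiliary zero-submatrix and invertibility properties.
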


\begin{IEEEproof}[Proof Sketch]
For \(a=u+L\), the auxiliary access window is
\(\mathcal A'_a=\mathcal A_u^{\rm c}\), and its target window is
\begin{align*}
\mathcal T'_a
=(\mathcal T_u^{(b)})^{\rm c}\mathbin\oplus\Delta,
\qquad
(\mathcal T'_a)^{\rm c}
=\mathcal T_u^{(b)}\mathbin\oplus\Delta.
\end{align*}
After transposing the auxiliary direct matrix and partitioning rows and
columns accordingly, the affine constraint
\(A\mathbf C=P_\Delta\mathbf V\) has the block form
\begin{align}
\begin{pmatrix}M_{11}&0\\M_{21}&M_{22}\end{pmatrix}
\begin{pmatrix}
\mathbf C_{\mathcal A_u}\\
\mathbf C_{\mathcal A_u^{\rm c}}
\end{pmatrix}
=
\begin{pmatrix}
\mathbf V_{\mathcal T_u^{(b)}}\\
\mathbf V_{(\mathcal T_u^{(b)})^{\rm c}}
\end{pmatrix},
\qquad M_{22}\text{ nonsingular}.
\end{align}
The first block row gives \eqref{eq:dual-local-decoder}.  For each source
value \(v\), \eqref{eq:dual-affine-coset-encoder} samples uniformly from
\(\{c:Ac=P_\Delta v\}\).  Projecting this solution set onto
\(c_{\mathcal A_u}\) gives an affine set determined by
\(v_{\mathcal T_u^{(b)}}\).  The first block row describes that set, and
invertibility of \(M_{22}\) gives each point a unique completion in the
complementary caches.  The accessible window is thus uniform on a set
determined by the target inputs.  Its conditional distribution given those
targets is independent of the complement.  Rank-nullity gives the stated
seed dimension.
\end{IEEEproof}

\begin{Example}[One-time-pad form of a short block]
Let \((K,L,b)=(4,3,2)\).  Then
\((L',b',\Delta)=(1,2,1)\).  Draw independent uniform placement-time pads
\(S_0,S_1,S_2,S_3\in\mathbb F_q\), and store, with indices modulo four,
\begin{align}
\mathbf C_i=(V_{i-1}-S_{i-1},\,S_i)^{\mathsf T}.
\end{align}
User \(u\) reads caches \(u,u+1,u+2\) and obtains
\begin{align*}
V_u&=S_u+(V_u-S_u),
&
V_{u+1}&=S_{u+1}+(V_{u+1}-S_{u+1}).
\end{align*}
The accessible cache expressions contain no \(V_{u+2}\).  The term
\(V_{u-1}-S_{u-1}\) is masked by the uniform pad
\(S_{u-1}\), which is stored in the inaccessible cache.  Given the two
decoded targets, the entire accessible window is therefore independent of
the two complementary symbols.
\end{Example}

Combining the two theorems, the normalized allocation of one formal block is
\begin{align}
\mu_{b,L}
\triangleq
\begin{cases}
\dfrac{K-b}{K-L},&1\leq b<L,\\[.8ex]
\dfrac{b}{L},&L\leq b\leq K-1.
\end{cases}
\end{align}
At \(b=L\), the direct encoder stores each position directly and attains
\(\mu_{L,L}=1\).

\subsection{Assembling a secure multi-access scheme}

Each physical cache stores its encoded proper-block segments and its share
of the common-symbol pool.  From an access window, each user recovers all
components of its parent cache.  Conditioned on those components, the
window reveals no further information about the parent variables.
The parent multicast message \(X_{\mathbf d}\) can therefore be reused
unchanged, preserving correctness, secrecy, and delivery rate.

\begin{Propo}[Local-to-global inheritance]
\label{prop:inherit}
Fix an admissible cyclic-incidence parent, a common scalarization from
Lemma~\ref{lem:common-scalarization} or
Remark~\ref{rem:fixed-field-scalarization}, and a valid local transformation for
every scalar group of every proper block.  Each randomized group uses its
own placement-time seed.  Collect all local seeds into one vector whose
components are mutually independent.  This vector is independent of the
library, the parent randomness, and the demand vector.
Encode the common-symbol pool by the deterministic
\((K,L)\) MDS layer.

Let \(\mathcal J\) index all scalarized proper-block groups.  Write
\(C_i^{(j)}\) for the segment contributed by group \(j\) to cache \(i\),
and \(C_i^{\rm com}\) for the concatenation of that cache's common-symbol
shares.  Store the segments in a fixed order with known boundaries:
\begin{align}
Z_i^{\rm lift}
\triangleq
\left((C_i^{(j)})_{j\in\mathcal J},C_i^{\rm com}\right).
\label{eq:lifted-cache-concatenation}
\end{align}
Then, for every user \(u\),
\begin{align}
H\!\left(\widetilde Z_u\mid
Z_{\mathcal A_u}^{\rm lift}\right)&=0.
\label{eq:effective-cache-reconstruction}
\end{align}
Let \(\Pi\) collect the files and all parent randomness, so that
\(X_{\mathbf d}\) is a deterministic function of \((\Pi,\mathbf d)\).
The variables form the Markov chain
\begin{align}
Z_{\mathcal A_u}^{\rm lift}
-\widetilde Z_u-\Pi.
\label{eq:lifting-global-markov}
\end{align}
Consequently, transmitting the parent's
\(X_{\mathbf d}\) unchanged gives, for every public demand vector
\(\mathbf d\),
\begin{align}
H\!\left(W_{d_u}\mid
X_{\mathbf d},Z_{\mathcal A_u}^{\rm lift},\mathbf d\right)&=0,
\\
I\!\left(W_{[N]\setminus\{d_u\}};
X_{\mathbf d},Z_{\mathcal A_u}^{\rm lift}
\middle|W_{d_u},\mathbf d\right)&=0.
\label{eq:lifting-global-security}
\end{align}
The normalized delivery rate equals the parent's rate.  These conclusions
hold for arbitrary joint laws of the parent symbols and block inputs,
including repeated references to the same parent variable.
\end{Propo}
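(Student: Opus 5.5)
Three steps: reconstruction, the Markov chain, and then the inheritance of correctness and secrecy.

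\emph{Reconstruction.} For each proper-block scalar group \(j\), validity \eqref{eq:local-reconstruction} says that \(C^{(j)}_{\mathcal A_u}\) determines that group's target coordinates. The segment boundaries in \eqref{eq:lifted-cache-concatenation} are fixed, so \(Z^{\rm lift}_{\mathcal A_u}\) contains every \(C^{(j)}_{\mathcal A_u}\). Reassembling the scalar groups through the common scalarization of Lemma~\ref{lem:common-scalarization} gives the target tuples \(V_{\mathcal T_u^{(b)}}\) of every block. The MDS layer recovers every common symbol from any \(L\) caches. By item~4 of Definition~\ref{def:admissible-parent}, the deduplication map then yields \(\widetilde Z_u\). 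This proves \eqref{eq:effective-cache-reconstruction}.

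\emph{Markov chain.} I would prove \eqref{eq:lifting-global-markov} as the identity \(I(Z^{\rm lift}_{\mathcal A_u};\Pi\mid\widetilde Z_u)=0\), which I expect to be the main obstacle. Let \(\mathbf V^{\rm all}\) denote all block inputs of all groups, and let \(\mathbf U\) denote the collection of local seeds. The seeds are independent of \(\Pi\) and mutually independent. Every formal input is a deterministic function of \(\Pi\) through the expansion maps, with aliases allowed.

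\begin{itemize}[leftmargin=*]
\item Conditioned on \(\Pi=\pi\), the groups' window outputs \(C^{(j)}_{\mathcal A_u}\) are independent across \(j\), because their seeds are independent. The common shares are deterministic given \(\pi\).
\item Each factor \(P(C^{(j)}_{\mathcal A_u}\mid \Pi=\pi)\) depends on \(\pi\) only through that group's full input \(\mathbf v^{(j)}\).
\item Apply \eqref{eq:local-conditional-secrecy} with a degenerate (point-mass) input law for each group. This is allowed because validity must hold for every joint input law. It shows that the window distribution is a function of the target inputs \(\mathbf v^{(j)}_{\mathcal T_u}\) alone, say \(P(C^{(j)}_{\mathcal A_u}=c\mid \mathbf V^{(j)}=\mathbf v)=\kappa_j(c;\mathbf v_{\mathcal T_u})\).
\item The common shares are a function of the common pool, which is part of \(\widetilde Z_u\).
\end{itemize}

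Hence \(P(Z^{\rm lift}_{\mathcal A_u}\mid\Pi=\pi)\) is a product of kernels, each evaluated at target data. All of that target data is a function of \(\widetilde Z_u\) through the expansion maps of item~4. The chain follows because \(\widetilde Z_u\) is itself a function of \(\Pi\).

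The subtle point is the one the statement emphasizes. Aliases and correlation mean the per-group conditional-secrecy statements cannot simply be summed. Writing the window law as a kernel of the targets, conditioned on the entire \(\Pi\) rather than on independent inputs, sidesteps this.

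\emph{Correctness and secrecy.} Because \(X_{\mathbf d}\) is a function of \((\Pi,\mathbf d)\) and \(\mathbf d\) is independent of everything else, the chain extends to \(Z^{\rm lift}_{\mathcal A_u}-(\widetilde Z_u,X_{\mathbf d})-\Pi\) given \(\mathbf D=\mathbf d\). For correctness,
\[
H(W_{d_u}\mid X_{\mathbf d},Z^{\rm lift}_{\mathcal A_u})\le H(W_{d_u}\mid X_{\mathbf d},\widetilde Z_u)=0.
\]
This uses \eqref{eq:effective-cache-reconstruction}, because conditioning on a function of \(Z^{\rm lift}_{\mathcal A_u}\) can only reduce entropy, together with parent correctness.

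For secrecy, start from the chain rule:
\[
I(W_{[N]\setminus\{d_u\}};X_{\mathbf d},Z^{\rm lift}_{\mathcal A_u}\mid W_{d_u})
= I(\cdot\,;X_{\mathbf d},\widetilde Z_u\mid W_{d_u})+I(\cdot\,;Z^{\rm lift}_{\mathcal A_u}\mid X_{\mathbf d},\widetilde Z_u,W_{d_u}).
\]
This holds because \(\widetilde Z_u\) is a function of \(Z^{\rm lift}_{\mathcal A_u}\). The first term vanishes by parent secrecy. The second term vanishes by the extended Markov chain, since the files are part of \(\Pi\). This gives \eqref{eq:lifting-global-security}.

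The rate is unchanged because the broadcast alphabet is the parent's.
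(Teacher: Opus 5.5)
Your overall route is the paper's. You recover the target tuples and the common pool from the window and deduplicate. You prove the Markov chain by writing the law of $Z^{\rm lift}_{\mathcal A_u}$ given $\Pi=\pi$ as a product, over groups, of kernels evaluated at target data. You then inherit correctness and secrecy by the chain rule after inserting the recoverable $\widetilde Z_u$.

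The gap is in the step you yourself flag as the crux: how you obtain the target-only kernel $\kappa_j$. Applying \eqref{eq:local-conditional-secrecy} under a point-mass input law gives nothing. Under such a law the complementary inputs $\mathbf V^{(j)}_{\mathbb Z_K\setminus\mathcal T_u^{(b_j)}}$ are constant, so the conditional mutual information is zero for every encoder. That includes one that writes the complementary inputs in the clear into the caches of $\mathcal A_u$. Conditioning on $\Pi=\pi$ is likewise a point-mass situation for each group, so the secrecy constraint cannot be extracted there. It has to be imported from a non-degenerate input law.

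The repair, which is what the paper does, runs as follows. Because the seed is independent of the inputs, the output law given $\mathbf V^{(j)}=\mathbf v$ is a fixed kernel that does not depend on the input law. Take any two inputs $\mathbf v\neq\mathbf v'$ with the same target, $\mathbf v_{\mathcal T_u^{(b_j)}}=\mathbf v'_{\mathcal T_u^{(b_j)}}$. Evaluate \eqref{eq:local-conditional-secrecy} under the law uniform on $\{\mathbf v,\mathbf v'\}$; any full-support law also works. Zero conditional mutual information then forces the accessible-output laws at $\mathbf v$ and $\mathbf v'$ to coincide. This defines $\kappa_j$ as a function of the target alone.

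With this substitution, the rest of your argument goes through as written. That covers the product formula given $\Pi$, the extension to $Z^{\rm lift}_{\mathcal A_u}-(\widetilde Z_u,X_{\mathbf d})-\Pi$, and the chain-rule secrecy computation.
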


\begin{IEEEproof}[Proof Sketch]
Fix \(u\).  Let \(\Theta_u\) be the ordered tuple of all decoded target
references together with the common-symbol pool.  Local reconstruction and
the MDS property recover \(\Theta_u\) from
\(Z_{\mathcal A_u}^{\rm lift}\).  Regrouping the decoded coordinates,
applying the inverse basis map, and deduplicating named references recover
\(\widetilde Z_u\).  Basis expansion and the supplied reference map recover
\(\Theta_u\) from the named components of \(\widetilde Z_u\).

For each proper-block group \(j\) with appearance span \(b_j\),
\eqref{eq:local-conditional-secrecy} holds for every input law.  Thus any two
input vectors with the same targets induce the same distribution on the
accessible cache segments:
\begin{align*}
P_{C_{\mathcal A_u}^{(j)}\mid V_{\mathbb Z_K}^{(j)}}
=P_{C_{\mathcal A_u}^{(j)}\mid
V_{\mathcal T_u^{(b_j)}}^{(j)}}.
\end{align*}
Condition on \(\Pi\), which fixes all formal inputs.  The local seeds remain
mutually independent because their collection is independent of \(\Pi\).
The conditional distribution of the proper-block segments therefore factors
across groups.  Each factor depends on the parent variables only through
that group's target inputs.  The accessible common-symbol shares are
deterministic functions of components of \(\widetilde Z_u\).
Hence the distribution of the entire physical window depends on \(\Pi\)
only through \(\Theta_u\), and therefore through \(\widetilde Z_u\).
This is the Markov property \eqref{eq:lifting-global-markov}, valid for
arbitrary joint laws of the parent source vectors.

Each user first reconstructs its parent cache by
\eqref{eq:effective-cache-reconstruction} and then applies the parent decoder.
This proves correctness.  Parent secrecy and the Markov chain give
\begin{align*}
&I\!\left(W_{[N]\setminus\{d_u\}};
X_{\mathbf d},\widetilde Z_u,Z_{\mathcal A_u}^{\rm lift}
\middle|W_{d_u},\mathbf d\right)\\
&\quad=
I\!\left(W_{[N]\setminus\{d_u\}};
X_{\mathbf d},\widetilde Z_u
\middle|W_{d_u},\mathbf d\right)
+I\!\left(W_{[N]\setminus\{d_u\}};
Z_{\mathcal A_u}^{\rm lift}
\middle|X_{\mathbf d},\widetilde Z_u,W_{d_u},\mathbf d\right)
=0.
\end{align*}
Because \(\widetilde Z_u\) is itself reconstructed from the lifted window,
removing it from the mutual-information expression leaves
\eqref{eq:lifting-global-security}.
Basis expansion preserves the original file and broadcast lengths.
The fixed-field realization scales both lengths by its repetition factor.
Both realizations retain the parent delivery rate.
\end{IEEEproof}

\begin{Remark}[Cache segment accounting]
In Proposition~\ref{prop:inherit}, the cache segments in
\eqref{eq:lifted-cache-concatenation} have fixed lengths and can be parsed
losslessly.  Their lengths add even when the encoded inputs are correlated
or refer repeatedly to the same parent variable.  These correlations can
reduce the entropy of the concatenated cache.
\end{Remark}

A scalarized block with \(d\) coordinates per formal position uses
\(d/L\) direct encoding calls when \(b\ge L\), and \(d/(K-L)\) dual calls
when \(b<L\).  Here \(d\) counts coordinates over the local coding field.
Each dual call uses \(K(L-b)\) fresh field symbols, giving
\(dK(L-b)/(K-L)\) random symbols per block.  These counts give the
implementation bounds in the next section.

Appendix~\ref{app:inheritance} gives the complete scalarization and inheritance
proofs.  Appendix~\ref{app:local-encoders} proves the two local encoder theorems.

\section{Memory-Rate Results and Bounds}
\label{sec:memory-rate-results}

The lifting in Section~\ref{sec:secure-lifting} gives memory-rate points for
secure multi-access coded caching (SMACC) by re-encoding secure single-access
caches and keeping each parent's
broadcast unchanged.  The subset-indexed parent gives a family of explicit
points, and a complementary parent gives a rate-one point.
Converse bounds determine how closely these points approach the optimum
over all SMACC schemes.  Together, the achievable points and converse bounds
give the exact memory-rate tradeoff at \(L=K-1\).  Known \(L=1\) results
provide additional achievable points.

\subsection{Lifted Achievable Family and Endpoints}

For an appearance span $b\in[1:K-1]$, define the normalized per-cache cost of
one proper $K$-position cyclic block by
\begin{align}
\mu_{b,L}
\triangleq
\begin{cases}
\dfrac{K-b}{K-L}, & 1\le b<L,\\[1.2ex]
\dfrac{b}{L}, & L\le b\le K-1.
\end{cases}
\label{eq:mu}
\end{align}
The randomized complement-transpose encoder attains the short-span branch,
and the deterministic direct encoder attains the long-span branch.  Both
cost formulas equal one at $b=L$.

For the subset-indexed parent of \cite{PrivateCodedCaching}, counting maximal
appearance runs gives the number of blocks and hence the required storage.
For $r\in[0:K-1]$, let $\Gamma(K,r,b)$ count the $r$-subsets with a maximal
span-$b$ run at a fixed cyclic start position.  This count is
\begin{align}
\Gamma(K,r,b)
\triangleq
\begin{cases}
\displaystyle\binom{K-b-2}{r-b},
   & 1\le b\le r\le K-2,\\[1.2ex]
1, & r=b=K-1,\\
0, & \text{otherwise}.
\end{cases}
\label{eq:gamma}
\end{align}

\begin{Lem}[Parent-profile cyclic decomposition]
\label{lem:cycliccount}
Consider a parent-symbol family
$\{Y_{\mathcal S}:\mathcal S\subseteq\mathbb Z_K,
|\mathcal S|=r\}$ in which $Y_{\mathcal S}$ belongs to parent cache
$\widetilde Z_u$ exactly when $u\in\mathcal S$.  For every fixed cyclic start
position, the number of $r$-subsets having a maximal appearance run of span
$b$ at that start is $\Gamma(K,r,b)$ in \eqref{eq:gamma}.  Placing each run at
its endpoint and collecting equal-ranked occurrences over the $K$ endpoints
produces exactly $\Gamma(K,r,b)$ complete $K$-position formal blocks.
Across all spans, user $u$'s target positions contain exactly the members
$Y_{\mathcal S}$ with $u\in\mathcal S$ from this family.  Repeated formal
occurrences refer to the same parent random variable.
\end{Lem}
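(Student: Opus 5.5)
The plan is to reduce everything to one elementary count at a fixed start position, then use cyclic rotation to build the blocks, and finally use the fact that the maximal runs of a proper subset partition that subset.

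\emph{Step 1 (count at a fixed start).} Fix a start position $s\in\mathbb Z_K$ and a span $b$. An $r$-subset $\mathcal S$ has a maximal run of span $b$ starting at $s$ exactly when three conditions hold: $\{s,\ldots,s+b-1\}\subseteq\mathcal S$, $s-1\notin\mathcal S$, and $s+b\notin\mathcal S$. This forces $b\le r\le K-1$; note $r=K$ never occurs, since $r\in[0:K-1]$, and $r=0$ gives no runs.
\begin{itemize}[leftmargin=*]
\item If $b\le K-2$, the two excluded positions $s-1$ and $s+b$ are distinct and lie outside the run. The remaining $r-b$ elements are then chosen freely from the other $K-b-2$ positions, giving $\binom{K-b-2}{r-b}$ subsets. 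This binomial vanishes when $r=K-1$, consistent with the first branch of \eqref{eq:gamma} stopping at $r\le K-2$.
\item If $b=K-1$, then $s-1=s+b$ is the single excluded position. This forces $r=K-1$ and gives exactly one subset.
\end{itemize}
All other cases give zero, which reproduces \eqref{eq:gamma}.

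\emph{Step 2 (blocks by rotation).} The map $s\mapsto s+b-1$ is a bijection between run starts and run endpoints. Hence $\Gamma(K,r,b)$ also counts, for each endpoint $e$, the pairs (subset, maximal span-$b$ run) with the run ending at $e$.

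Translation $\mathcal S\mapsto\mathcal S+1$ maps the maximal runs of $\mathcal S$ to the maximal runs of $\mathcal S+1$, preserving spans. So I will enumerate the pairs with endpoint $0$ as $\mathcal S_1,\ldots,\mathcal S_{\Gamma}$. For $\ell\in[1:\Gamma]$, define block $\ell$ by placing at position $e$ the occurrence of $Y_{\mathcal S_\ell+e}$ coming from its run ending at $e$.

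I will check that every pair (subset, maximal span-$b$ run) occurs in exactly one position of exactly one block. This holds because the pairs ending at $e$ are precisely the translates by $e$ of those ending at $0$. It yields exactly $\Gamma(K,r,b)$ complete $K$-position blocks of equal-length inputs, all positions named. A subset with several runs appears at several positions, possibly in different blocks; these are the aliases of the single variable $Y_{\mathcal S}$.

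\emph{Step 3 (targets equal cache contents).} In a span-$b$ block, user $u$ targets the positions $e\in\mathcal T_u^{(b)}=\{u,\ldots,u+b-1\}$. Equivalently, these are the endpoints $e$ whose run $\{e-b+1,\ldots,e\}$ contains $u$. Taking the union over all spans, user $u$'s targets are exactly the occurrences from maximal runs that contain $u$.

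For a nonempty proper subset $\mathcal S$, the maximal cyclic runs partition $\mathcal S$. Therefore $u$ lies in exactly one run if $u\in\mathcal S$, and in none otherwise. So the targets contain each $Y_{\mathcal S}$ with $u\in\mathcal S$ exactly once, and no others, as claimed.

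The main care point is the boundary case $b=K-1$ in Step 1, together with the well-definedness of the rotation labelling in Step 2 (an invariant ordering of occurrences per endpoint). I expect both to be routine once the translation invariance is stated explicitly.
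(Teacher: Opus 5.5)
Your proposal is correct and follows the paper's proof essentially step for step: the same fixed-start count (two distinct excluded neighbors when $b\le K-2$, a single excluded neighbor forcing $r=K-1$ when $b=K-1$), the same endpoint packing into $\Gamma(K,r,b)$ blocks, and the same observation that the maximal runs of a proper subset partition it, so user $u$ selects exactly one occurrence of each $Y_{\mathcal S}$ with $u\in\mathcal S$. The only difference is cosmetic: you fix the rank at each endpoint canonically by translating the endpoint-$0$ list, whereas the paper ranks the $\Gamma(K,r,b)$ occurrences at each endpoint arbitrarily, and either choice yields complete blocks.
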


\begin{IEEEproof}[Proof sketch]
For $1\le b\le r\le K-2$, a prescribed maximal run fixes its $b$ members and excludes
the two adjacent positions.  The remaining $r-b$ members can be chosen in
$\binom{K-b-2}{r-b}$ ways.  For $r=K-1$, only the complement of the position
immediately before the run gives the boundary case $b=K-1$.
For every nonempty proper incidence subset $\mathcal S$ and every
$u\in\mathcal S$, user $u$ belongs to exactly one maximal run of
$\mathcal S$.  Placing that run at its endpoint puts the corresponding symbol
in that user's span-$b$ target window.
Collecting these targets and removing repeated references recovers the
family members stored in each parent cache.
\end{IEEEproof}

By Lemma~\ref{lem:cycliccount}, the total per-cache allocation for an
$r$-indexed parent-symbol family is
\begin{align}
T_{\mathrm{blk}}(K,r,L)
&\triangleq
\sum_{b=1}^{K-1}\mu_{b,L}\Gamma(K,r,b)\nonumber\\
&=
\sum_{b=1}^{L-1}\frac{K-b}{K-L}\Gamma(K,r,b)
+\sum_{b=L}^{K-1}\frac{b}{L}\Gamma(K,r,b).
\label{eq:T}
\end{align}

\begin{Theo}[Lifted subset-indexed family]
\label{thm:memoryload}
Fix $N,K\ge2$ and $1\le L<K$.  For every
$t\in\{0,1,\ldots,K-2\}$, there is a securely achievable SMACC point
\begin{align}
R(t)&=\frac{K}{t+1},
\label{eq:R-t}\\
M(t)&=
\frac{N T_{\mathrm{blk}}(K,t,L)
      +T_{\mathrm{blk}}(K,t+1,L)}{\binom{K-1}{t}}.
\label{eq:M-t}
\end{align}
The construction re-encodes the caches of the subset-indexed secure
single-access parent at parameter $t$ and retains its broadcast.  A finite
scalarization over a field of characteristic greater than $K$ realizes the
point.
\end{Theo}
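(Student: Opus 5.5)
We apply Proposition~\ref{prop:inherit} to the subset-indexed secure single-access parent of \cite{PrivateCodedCaching} at parameter $t$, and then count storage with Lemma~\ref{lem:cycliccount}.

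\textbf{Parent structure.} Recall that this parent splits each file $W_n$ into $\binom{K-1}{t}$ secret shares, each of normalized length $1/\binom{K-1}{t}$. The shares are indexed by $t$-subsets $\mathcal S$, and share $\mathcal S$ is stored exactly at the users $u\in\mathcal S$. Each cache also holds unicast keys indexed by $(t+1)$-subsets $\mathcal S'$, stored at $u\in\mathcal S'$. The delivery transmits one keyed, coded multicast per $(t+1)$-subset, each of size $1/\binom{K-1}{t}$, so $R=\binom{K}{t+1}/\binom{K-1}{t}=K/(t+1)$. This is the same rate as \eqref{eq:R-t}, which the lifting inherits.

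\textbf{Admissibility and lifting.}
1. Verify Definition~\ref{def:admissible-parent}. Take the $N$ share families with $r=t$ and the key family with $r=t+1$. Because $t\le K-2$, both $t$ and $t+1$ are at most $K-1$, so no symbol with a nonempty incidence set is common to all caches and the common pool is empty. When $t=0$ the shares have empty incidence; these pieces are not cached at all, and $\Gamma(K,0,b)=0$ agrees with this. Every incidence set is therefore a proper subset.
2. By Lemma~\ref{lem:cycliccount}, each family decomposes into exactly $\Gamma(K,r,b)$ complete $K$-position blocks of span $b$. In these blocks the targets of user $u$ are precisely the members stored in $\widetilde Z_u$, with repeated runs appearing as aliases. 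This supplies the deduplication and expansion maps required in item~4.
3. Scalarize over a field of characteristic $p>K$ using Lemma~\ref{lem:common-scalarization} or Remark~\ref{rem:fixed-field-scalarization}. Encode each span-$b$ block with Theorem~\ref{thm:direct-local-transform} when $b\ge L$ and with Theorem~\ref{thm:short-local-transform} when $b<L$. Both are valid for every input law, so aliasing causes no difficulty.
4. Proposition~\ref{prop:inherit} then yields correctness and secrecy \eqref{eq:lifting-global-security}, with the parent broadcast reused unchanged.

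\textbf{Memory count.} A block whose positions carry symbols of normalized size $s$ costs $\mu_{b,L}s$ per cache. For the $N$ share families, $s=1/\binom{K-1}{t}$, giving a per-cache total of $N\,T_{\rm blk}(K,t,L)/\binom{K-1}{t}$ by \eqref{eq:T}. For the key family, the key length equals the multicast segment length, again $1/\binom{K-1}{t}$, so it contributes $T_{\rm blk}(K,t+1,L)/\binom{K-1}{t}$. Summing the two gives \eqref{eq:M-t}.

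\textbf{Main obstacle.} The hardest step is pinning down the parent's exact normalization. We must confirm that in \cite{PrivateCodedCaching} the shares and the keys both have size $F/\binom{K-1}{t}$, that the secret-sharing randomness is absorbed into the share variables rather than stored separately, and that the parent's secrecy holds at every $t\in[0:K-2]$ (including the $t=0$ endpoint, where caches hold only keys). We must also check that there are no additional cached symbols beyond the shares and keys. Once these points are verified, the remaining steps are direct bookkeeping.
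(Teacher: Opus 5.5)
Your proposal is correct and follows essentially the same route as the paper's proof: it decomposes the $N$ share families (incidence size $t$) and the masking-key family (incidence size $t+1$) into $\Gamma(K,r,b)$ blocks via Lemma~\ref{lem:cycliccount}, encodes each block with the direct or complement-dual encoder, invokes Proposition~\ref{prop:inherit} to reuse the parent broadcast at rate $K/(t+1)$, and sums $\mu_{b,L}/\binom{K-1}{t}$ over the blocks. There are two small slips that do not affect your count: the parent produces $\binom{K}{t}$ shares per file (one per $t$-subset), each of normalized length $1/\binom{K-1}{t}$, not $\binom{K-1}{t}$ shares; and the $(t+1)$-subset keys are multicast masking keys shared by all users in the subset, not unicast keys.
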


\begin{IEEEproof}[Proof sketch]
The parent contains $N$ share families indexed by $t$-subsets and one masking
key family indexed by $(t+1)$-subsets.  Each share and masking key has
normalized length $1/\binom{K-1}{t}$.  Decomposing each incidence subset into maximal cyclic runs
gives the multiplicities in \eqref{eq:gamma}.
For each user, collecting the target references from all $N$ share families
and the key family gives exactly its parent cache.
Each complete formal block of
span $b\ge L$ uses the direct encoder, and each block of span
$b<L$ uses the randomized complement-transpose encoder.  The local-to-global
inheritance result in Proposition~\ref{prop:inherit} ensures that each physical access window recovers
its parent cache and reveals no information beyond it.  Each user can
therefore recover its requested file using the unchanged parent broadcast.
Summing the local costs gives \eqref{eq:M-t}.  Keeping the parent broadcast
preserves the rate in \eqref{eq:R-t}.
\end{IEEEproof}

At $t=0$, the file-share contribution is absent and the masking-key family has
one singleton-run block.  Therefore Theorem~\ref{thm:memoryload} includes the
leftmost point of its indexed family
\begin{align}
(M,R)=\left(\frac{K-1}{K-L},K\right).
\label{eq:left-endpoint}
\end{align}

Increasing $t$ lowers the parent rate $K/(t+1)$ and changes the number and
spans of the appearance runs.  The two sums in \eqref{eq:T} give the storage
required for the short and long runs.  Memory sharing forms the lower convex
envelope of these points.  The supporting points determine which $t$-values
contribute to the achievable boundary.

\begin{Example}[A corner using both local encoders]
Take $(K,L,t)=(4,2,2)$.  For the file-share family,
\begin{align*}
\Gamma(4,2,1)=1,
\qquad
\Gamma(4,2,2)=1.
\end{align*}
The span-one run is short and costs
$\mu_{1,2}=(4-1)/(4-2)=3/2$.  The span-two run is long and costs
$\mu_{2,2}=1$.  Hence
\begin{align*}
T_{\mathrm{blk}}(4,2,2)=\frac52.
\end{align*}
The masking-key family has incidence size $t+1=3=K-1$.  Its only nonzero run
count is $\Gamma(4,3,3)=1$, with direct cost $\mu_{3,2}=3/2$.  Substituting in
\eqref{eq:M-t} and \eqref{eq:R-t} gives
\begin{align}
(M(2),R(2))
=\left(\frac{5N}{6}+\frac12,\frac43\right).
\end{align}
This point uses both local encoders.  Its masking-key family contributes the
$r=K-1$ case of the same run-count formula.
\end{Example}

A complementary secure single-access parent gives the lifted
family a rate-one endpoint.

\begin{Propo}[Achievable rate-one endpoint]
\label{prop:rate-one-achievability}
For every $N,K\ge2$ and $1\le L<K$, the point
\begin{align}
(M,R)=\left(\frac{(N-1)(K-1)}{L},1\right)
\label{eq:endpoint}
\end{align}
is securely achievable.
\end{Propo}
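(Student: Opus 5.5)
The plan is to apply the lifting of Section~\ref{sec:secure-lifting} to the rate-one secure single-access scheme of \cite{FangSmallRate}, taken as the parent at $(N,K)$. That parent attains $(M,R)=((N-1)(K-1),1)$ with $L=1$, and its broadcast $X_{\mathbf d}$ will be reused unchanged. By Proposition~\ref{prop:inherit}, the lifted scheme keeps correctness, user-level file secrecy, and rate one. It therefore suffices to do two things: exhibit the parent as an admissible cyclic-incidence parent in the sense of Definition~\ref{def:admissible-parent}, and show that the total per-cache allocation $\sum_{\text{blocks}}\mu_{b,L}\times(\text{length})$ equals $(N-1)(K-1)/L$.

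The key structural step is to write the parent's cache contents as a family $\{Y_{\mathbb Z_K\setminus\{j\}}:j\in\mathbb Z_K\}$. Here $Y_{\mathbb Z_K\setminus\{j\}}$ is stored by every user except $j$ and has normalized length $N-1$, so that $\widetilde Z_u=(Y_{\mathbb Z_K\setminus\{j\}}:j\neq u)$ has size $(N-1)(K-1)$, matching the parent memory. I expect this to follow from the structure of the unit-rate scheme. Its secrecy forces each user to hold exactly the complement-indexed shares and keys it needs to cancel the $N-1$ unrequested components of a single multicast. I would verify it directly from the placement in \cite{FangSmallRate}, possibly after an invertible relabeling of the stored linear combinations. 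Such a relabeling does not change the information content of any $\widetilde Z_u$.

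Each incidence set $\mathbb Z_K\setminus\{j\}$ is a single cyclic run of span $K-1$ with endpoint $j-1$. This is the case $r=b=K-1$ of Lemma~\ref{lem:cycliccount}, which gives exactly $\Gamma(K,K-1,K-1)=1$ complete $K$-position formal block per scalar coordinate. Moreover, user $u$'s target window $\mathcal T_u^{(K-1)}$ consists of exactly the occurrences with $j\neq u$. Since $K-1\ge L$, Theorem~\ref{thm:direct-local-transform} supplies a deterministic valid encoder at cost $\mu_{K-1,L}=(K-1)/L$ per unit of input length. Scalarize via Lemma~\ref{lem:common-scalarization} or Remark~\ref{rem:fixed-field-scalarization}, over a field of characteristic greater than $K$ so that the Pascal-window blocks are nonsingular. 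The lifted cache then has normalized size $(N-1)\cdot(K-1)/L$, while the rate stays $1$.

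The main obstacle is the first step: confirming that the parent of \cite{FangSmallRate} really has this leave-one-out incidence structure, with no symbols of other spans. If the parent also contains components common to all caches, or components of smaller incidence, I would handle them as follows. Common components go through the $(K,L)$ MDS layer at cost $\ell_Y/L$. Components of incidence $K-1$ remain at cost $(K-1)/L$. I would then check that the total still equals $(N-1)(K-1)/L$, using the fact that the parent memory $(N-1)(K-1)$ splits as $(K-1)$ times the per-symbol length, so that any common part would contribute only $\ell/L$ and cannot raise the total. Shorter spans $b<L$ would cost more, so the argument relies on their absence. Establishing that absence is where I would concentrate the verification effort.
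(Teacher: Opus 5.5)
Your route is the paper's: lift the unit-rate parent of \cite{FangSmallRate} by applying the direct encoder to its span-$(K-1)$ blocks at cost $\mu_{K-1,L}=(K-1)/L$, and reuse its broadcast via Proposition~\ref{prop:inherit}. The gap is the step you flag yourself: you never establish the leave-one-out incidence structure, and it is the substantive step. Your heuristic, that secrecy ``forces'' each user to hold exactly complement-indexed components, is not an argument. Rate-one secrecy does not determine the placement, so the structure has to be read off a concrete parent.

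The paper closes this with an explicit parent in the appendix (``The unit-rate parent realization'').
\begin{itemize}
\item Put $D_n=W_n-W_1$ and draw independent uniform file-sized keys $S_j^{(i)}$, $i\in[K-1]$, $j\in[N-1]$.
\item For each $j$, form the $K$-position vector with $V_i^{(j)}=S_j^{(i)}$ for $i\in[K-1]$ and $V_K^{(j)}=D_{j+1}+\sum_{i=1}^{K-1}S_j^{(i)}$.
\item User $1$ stores positions $1,\ldots,K-1$ of every vector, and user $i+1$ stores every position except $i$.
\end{itemize}
Each position is therefore omitted by exactly one user. Its incidence set is a $(K-1)$-subset of the cycle, hence a single run of span $K-1$ whose endpoint is that same position. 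So each $V^{(j)}$ is already one complete formal block. This gives exactly $N-1$ blocks, with no common-symbol pool and no runs of span below $L$. Your contingency analysis and any relabeling are unnecessary. The lifted cache is $(N-1)(K-1)/L$ at rate one. The paper also checks this parent's correctness and secrecy directly rather than relying on the citation alone.
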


\begin{IEEEproof}[Proof sketch]
The unit-rate secure single-access construction of \cite{FangSmallRate} has
parent-cache size $(N-1)(K-1)$.  Applying the direct encoder to its cyclic
representation lets each user reconstruct its parent cache from $L$ physical
caches.  Each physical cache stores a $1/L$ share of the parent-cache
allocation, and the unchanged parent broadcast has rate one.
\end{IEEEproof}

The achievable envelope $R_{\mathrm{lift}}(M)$ is the memory-sharing closure
of the $t$-indexed points and the rate-one endpoint.  Let
$\lambda_0,\ldots,\lambda_{K-2},\lambda_\star$ be nonnegative coefficients
with sum one, where $\lambda_\star$ multiplies the endpoint
\eqref{eq:endpoint}.  Define
\begin{align}
R_{\mathrm{lift}}(M)
\triangleq
\inf\Bigg\{&
\sum_{t=0}^{K-2}\lambda_tR(t)+\lambda_\star:
\sum_{t=0}^{K-2}\lambda_tM(t)
+\lambda_\star\frac{(N-1)(K-1)}{L}\le M,
\nonumber\\[-.3ex]
&\lambda_0,\ldots,\lambda_{K-2},\lambda_\star\ge0,
\quad
\sum_{t=0}^{K-2}\lambda_t+\lambda_\star=1
\Bigg\},
\label{eq:ach-envelope-def}
\end{align}
with the infimum of the empty set defined as $+\infty$.  To achieve
\eqref{eq:ach-envelope-def}, split every file into independent parts of the
prescribed proportions, use independent placement randomness for each part,
and concatenate the component cache segments and broadcasts.
Rational coefficients are realized at a common finite blocklength, and real
coefficients are approached by rational sequences under the asymptotic
definition of achievability.  Memory sharing preserves exact
recovery and user-level file secrecy.
Each finite-rate construction remains achievable at every larger cache
budget.
Consequently,
\begin{align}
R^*(M)\le R_{\mathrm{lift}}(M)
\end{align}
whenever the right-hand side is finite.

The points in Theorem~\ref{thm:memoryload} and the endpoint
\eqref{eq:endpoint} determine $R_{\mathrm{lift}}$.  At $L=1$,
improved secure single-access points provide additional achievable rates.

\begin{Remark}[Complete-block multiplicity]
For a pure $t$-point and each span $b$, the number of complete formal block
instances is
\begin{align}
A_t(b)=N\Gamma(K,t,b)+\Gamma(K,t+1,b).
\label{eq:pure-point-block-multiplicity}
\end{align}
These block counts also determine the finite subpacketization,
fresh-randomness, and arithmetic bounds.
\end{Remark}

To quantify the work added by the lifting layer, we count encoder calls,
stored coordinates, fresh random symbols, online arithmetic, and matrix
precomputation.  The costs of parent
share generation and delivery remain those of the chosen secure
single-access construction.

\begin{Propo}[Implementation costs]
\label{prop:implementation-cost}
Fix a pure point $t\in\{0,\ldots,K-2\}$ and define
\begin{align}
C_t\triangleq\binom{K-1}{t},
\qquad
\chi\triangleq\operatorname{lcm}(L,K-L).
\end{align}
Choose a parent implementation over $\mathbb F_Q$ that meets the parent's
field requirements.  Here $Q=p^e$, where $p>K$ is prime and $e$ is a positive
integer divisible by $\chi$.
For any positive integer $m$ meeting the parent's blocklength requirements,
put
\begin{align}
F=mC_t,
\qquad
d\triangleq em.
\end{align}
Each file has $F$ parent-field symbols, and every parent share or masking key
has $m$ parent-field symbols.  Their fixed public $\mathbb F_p$-basis
representation gives $d$ prime-field coordinates per share or key, with
$dC_t=eF$.  All encoder and arithmetic counts use these
coordinates over $\mathbb F_p$.  With
$A_t(b)$ defined in \eqref{eq:pure-point-block-multiplicity}, the numbers of
direct and dual vector-encoder calls are, respectively,
\begin{align}
G_{\mathrm{dir}}(t;d)
&=\frac{d}{L}\sum_{b=L}^{K-1}A_t(b),
\\
G_{\mathrm{dual}}(t;d)
&=\frac{d}{K-L}\sum_{b=1}^{L-1}A_t(b).
\end{align}
The number of prime-field coordinates stored by the lifting layer in each physical
cache is
\begin{align}
P_{\mathrm{cache}}(t;d)
&=\frac{d}{K-L}\sum_{b=1}^{L-1}(K-b)A_t(b)
+\frac{d}{L}\sum_{b=L}^{K-1}bA_t(b)\nonumber\\
&=d\bigl[N T_{\mathrm{blk}}(K,t,L)
        +T_{\mathrm{blk}}(K,t+1,L)\bigr],
\label{eq:cache-scalar-coordinate-count}
\end{align}
The prescribed cache alphabet is
$\mathbb F_p^{P_{\mathrm{cache}}(t;d)}$, so
\begin{align}
\log_Q|\mathcal Z_i|=\frac{P_{\mathrm{cache}}(t;d)}{e},
\qquad
\frac{P_{\mathrm{cache}}(t;d)}{eF}=M(t).
\end{align}
Under the uniform affine-coset short-block realization, the number of fresh
independent $\mathbb F_p$ symbols sampled by the local encoders is
\begin{align}
Q_{\mathrm{loc}}(t;d)
=\frac{dK}{K-L}
\sum_{b=1}^{L-1}(L-b)A_t(b).
\label{eq:total-local-randomness}
\end{align}

Sparse evaluation of one direct group of span $b\ge L$
uses at most $KL(b-L+1)$ multiplications and
$K[L(b-L+1)-b]$ additions.  For short groups, set
$m_0=K(K-L)$ and $n_b=K(K-b)$.  A right inverse and kernel basis are precomputed
once for each active $(K,L,b)$ and reused across every block with those
parameters over the chosen prime field.  The setup uses
$O(m_0^2n_b)$ operations over $\mathbb F_p$.  The stored right inverse and kernel
basis together contain $n_b^2$ elements of $\mathbb F_p$.
After setup, dense matrix-vector multiplication for
one short group uses at most $n_b^2$ multiplications and $n_b(n_b-1)$ additions.
\end{Propo}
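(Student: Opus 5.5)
The proof is a counting argument that follows the construction of Theorem~\ref{thm:memoryload} block by block.

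\emph{Setup.} Take $F=mC_t$. The subset-indexed parent then splits each file into $C_t$ shares of $m$ parent-field symbols, and each masking key also has $m$ symbols. Basis expansion gives $d=em$ prime-field coordinates per share or key. By Lemma~\ref{lem:common-scalarization}, $\chi\mid e$, so $d$ is divisible by both $L$ and $K-L$. Hence every block of span $b$ splits into exactly $d/L$ direct groups or $d/(K-L)$ dual groups. For $t\le K-2$ we have $t+1\le K-1$, so no parent symbol lies in every cache and the common-symbol pool is empty.

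\emph{Calls.} Lemma~\ref{lem:cycliccount} applied to the $N$ share families ($r=t$) and the key family ($r=t+1$) gives $A_t(b)$ complete span-$b$ blocks. Summing over $b\ge L$ with $d/L$ calls per block gives $G_{\rm dir}$. Summing over $b<L$ with $d/(K-L)$ calls per block gives $G_{\rm dual}$.

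\emph{Storage.} One direct call stores $b$ coordinates per cache (Theorem~\ref{thm:direct-local-transform}). One dual call stores $b'=K-b$ coordinates per cache (Theorem~\ref{thm:short-local-transform}). Multiplying by the call counts gives the first line of \eqref{eq:cache-scalar-coordinate-count}. The second line follows by rewriting $(K-b)/(K-L)=\mu_{b,L}$ and $b/L=\mu_{b,L}$, substituting $A_t(b)=N\Gamma(K,t,b)+\Gamma(K,t+1,b)$, and using \eqref{eq:T}. Since $\log_Q|\mathbb F_p^a|=a/e$ and $dC_t=eF$,
\[
\frac{P_{\rm cache}}{eF}=\frac{NT_{\rm blk}(K,t,L)+T_{\rm blk}(K,t+1,L)}{C_t}=M(t).
\]
\emph{Randomness.} Each dual call draws $d_{\ker}=K(L-b)$ fresh symbols. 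Multiplying by the dual call counts gives \eqref{eq:total-local-randomness}.

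\emph{Direct arithmetic.} By \eqref{eq:direct-local-encoder}, cache $i$ computes $\sum_{j=0}^{\omega}G_j\mathbf V_{i+j}$. Each $G_j$ has exactly $L$ nonzero entries, one per column, placed in rows $j,\dots,j+L-1$. So each cache uses $L(\omega+1)=L(b-L+1)$ multiplications.

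For the additions, note that row $m$ receives one contribution for each pair $(j,k)$ with $j+k=m$. Adding the terms of a row costs (number of terms)$-1$ additions. Every row $m\in[0:b-1]$ receives at least one term, since we can take $j=\min(m,\omega)$ and $k=m-j\le L-1$. The total is therefore $L(b-L+1)-b$ additions per cache. Multiplying both counts by $K$ gives the stated bounds.

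\emph{Dual setup.} Here $A\in\mathbb F_p^{m_0\times n_b}$. Gaussian elimination to find $P_{\rm col}$, $A_{\rm P}^{-1}$, and $A_{\rm P}^{-1}A_{\rm F}$ costs $O(m_0^2n_b)$ operations. The stored matrices are $J$ of size $n_b\times m_0$ and $B$ of size $n_b\times(n_b-m_0)$, so $[J\;B]$ has $n_b\cdot n_b=n_b^2$ entries.

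\emph{Dual evaluation.} Computing $[J\;B]\binom{P_\Delta\mathbf V}{\mathbf U}$ is one $n_b\times n_b$ matrix-vector product; $P_\Delta$ is a permutation and costs no arithmetic. This takes at most $n_b^2$ multiplications and $n_b(n_b-1)$ additions.

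\emph{Expected obstacle.} The only delicate step is the exact addition count for the direct encoder. It depends on confirming that every output row is nonempty, which holds because $\omega\ge0$ and $b=\omega+L$. Everything else is bookkeeping that rests on the earlier theorems.
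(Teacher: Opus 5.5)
Your proposal is correct and follows the paper's proof essentially step for step. The shared steps are: block counts $A_t(b)$ from Lemma~\ref{lem:cycliccount}; $d/L$ direct or $d/(K-L)$ dual groups per block because $\chi\mid e$; per-call storage of $b$ or $K-b$ coordinates; nullity $K(L-b)$ for the randomness count; sparse-term counting for the direct encoder; and Gauss--Jordan setup plus one dense $n_b\times n_b$ product for the dual encoder.

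The differences are cosmetic. You treat $P_\Delta$ as a free reindexing instead of absorbing it into $JP_\Delta$, and you make explicit the empty common-symbol pool and the nonempty-row argument behind the exact addition count. One small slip: each file is encoded into $\binom{K}{t}$ shares of length $m=F/C_t$, not into $C_t$ shares, but nothing in your counts depends on this.
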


\begin{IEEEproof}[Proof sketch]
Since $L\mid e$ and $K-L\mid e$, the $d=em$ coordinates of each share or
masking key form $d/L$ direct-encoder groups or $d/(K-L)$ complement-dual
groups for every admissible parent length $m$.
Lemma~\ref{lem:cycliccount} gives $A_t(b)$ complete blocks.  Multiplying the
long-group count by $b$ and the short-group count by $K-b$ gives the
stored-coordinate count in \eqref{eq:cache-scalar-coordinate-count}.  Division
by $eF=dC_t$ recovers \eqref{eq:M-t}.  A span-$b$ short group solves a
full-row-rank affine system with $K(K-b)$ variables and $K(K-L)$ equations, whose
nullity is $K(L-b)$.  Multiplying this nullity by the short-group multiplicities
gives \eqref{eq:total-local-randomness}.  Counting the nonzero coefficients in
the direct matrices and applying standard dense matrix-vector multiplication
and Gauss-Jordan elimination over $\mathbb F_p$ gives the arithmetic bounds.
\end{IEEEproof}

\subsection{Exact Blockwise Allocation for Parent-Preserving Lifting}

We now determine the minimum fixed-length allocation when each supplied
parent block is encoded separately and the parent broadcast is retained.
Definition~\ref{def:block-separable-parent-preserving} specifies this class.
For the supplied subset-indexed parent, Theorem~\ref{thm:framework} shows that
the two encoders attain this minimum at every input length.  The lower
bound allows arbitrary segment alphabets.

\enlargethispage{\baselineskip}
\begin{Def}[Block-separable parent-preserving transformation]
\label{def:block-separable-parent-preserving}
Fix a secure single-access parent and a supplied decomposition of its proper
incidence symbols into formal cyclic blocks.  A transformed representation is
\emph{block separable and parent preserving} if it satisfies these conditions.
\begin{enumerate}[label=(\roman*),leftmargin=2.1em]
\item The parent delivery message is unchanged.
\item Every formal block is encoded into its own fixed-length segment in each
physical cache.  Each segment is a function of that formal block and a
block-local placement-time seed.  The seeds are independent of the parent
variables and mutually independent across formal blocks.  The per-block
encoders can be nonlinear, randomized, and asymmetric across caches.
\item For every joint distribution of the formal positions, every block and
every user satisfy the
local reconstruction and conditional complement-secrecy conditions
\eqref{eq:local-reconstruction} through \eqref{eq:local-conditional-secrecy}.
\end{enumerate}
Let $\beta$ index the separately stored factors of each physical cache,
including the common-symbol recovery layer as one additional factor when
present.  Let $\mathcal C_i^{(\beta)}$ be the segment alphabet allocated to
factor $\beta$ in physical cache $i$.  The cache value $Z_i$ is the tuple of
these segment values.  For file length $F$, define the per-factor allocation,
the largest total allocation across caches, and the largest cache entropy by
\begin{equation}
m_i^{(\beta)}
\triangleq \frac{1}{F}\log_q|\mathcal C_i^{(\beta)}|.
\end{equation}
\begin{equation}
\begin{aligned}
M_{\mathrm{alloc}}
&\triangleq\max_{i\in[K]}\sum_\beta m_i^{(\beta)},
\\
M_{\mathrm{ent}}
&\triangleq\max_{i\in[K]}\frac{H(Z_i)}{F}.
\end{aligned}
\end{equation}
\end{Def}

\begin{Remark}[Fixed-length allocation objective]
Theorem~\ref{thm:framework} optimizes the fixed-length segment resource
$M_{\mathrm{alloc}}$.  The induced cache entropy always satisfies
$M_{\mathrm{ent}}\le M_{\mathrm{alloc}}$, with strict inequality possible
when the parent contains repeated or correlated formal references.
\end{Remark}

Definition~\ref{def:block-separable-parent-preserving} requires local
reconstruction and conditional complement secrecy for every joint
distribution of the formal positions.  Applying these requirements to
independent, equal-entropy inputs gives the blockwise lower bound in
Theorem~\ref{thm:framework}.

\begin{Theo}[Exact minimum universal blockwise allocation]
\label{thm:framework}
For every $K\ge2$ and $1\le L<K$, there exists a finite field
$\mathbb F_Q$ such that the following holds simultaneously for all proper
appearance spans $b\in[1:K-1]$ and all positive integers $h$.
Consider a representation in
Definition~\ref{def:block-separable-parent-preserving}.  Evaluate it on
mutually independent formal inputs $V_0,\ldots,V_{K-1}$, each uniform over
$\mathbb F_Q^h$.  For this block, let $\mathcal C_i$ be the prescribed segment
alphabet in cache $i$.  Define the normalized allocation in cache $i$ and its
average across caches by
\begin{align*}
a_i\triangleq\frac1h\log_Q|\mathcal C_i|,
\qquad
\bar a_b\triangleq\frac1K\sum_{i=0}^{K-1}a_i.
\end{align*}
Then
\begin{align}
\bar a_b\ge\mu_{b,L}
=
\begin{cases}
\dfrac{K-b}{K-L},&b<L,\\[1.2ex]
\dfrac bL,&b\ge L.
\end{cases}
\end{align}
The direct and complement-dual encoders attain
$a_i=\mu_{b,L}$ in every cache for every joint input law.
Their fixed-length segments satisfy
\begin{align}
|\mathcal C_i|=Q^{h\mu_{b,L}},
\qquad
\frac1h\log_Q|\mathcal C_i|=\mu_{b,L}.
\label{eq:exact-basis-allocation}
\end{align}
Thus $\mu_{b,L}$ is the exact minimum worst-cache universal allocation.
The lower bound also holds over every other finite field.

For the subset-indexed parent realized over this construction field,
summing the blockwise minima gives the exact minimum normalized
worst-cache allocation $M(t)$ in \eqref{eq:M-t} for its supplied block list.
The lifting preserves the parent delivery message and rate.
\end{Theo}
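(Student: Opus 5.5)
The plan is to prove the lower bound $\bar a_b\ge\mu_{b,L}$ by a direct entropy count on independent uniform inputs, with separate arguments for the two branches. Achievability then follows from Theorems~\ref{thm:direct-local-transform} and~\ref{thm:short-local-transform}, and the global statement follows by summing over blocks. Throughout, fix a block of span $b$ and evaluate the encoder on mutually independent $V_0,\dots,V_{K-1}$, each uniform on $\mathbb F_Q^h$. Hence $H(V_{\mathcal S})=|\mathcal S|h$ in $Q$-ary units, and $H(C_i)\le \log_Q|\mathcal C_i|=a_ih$. Validity is required for every input law, so it holds in particular for this one.

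\emph{Long spans, $b\ge L$.} By \eqref{eq:local-reconstruction}, each window $C_{\mathcal A_u}$ determines $V_{\mathcal T_u^{(b)}}$. This gives
\[
bh=H(V_{\mathcal T_u^{(b)}})\le H(C_{\mathcal A_u})\le h\sum_{i\in\mathcal A_u}a_i .
\]
I will sum this over $u\in\mathbb Z_K$. Each cache lies in exactly $L$ windows, so $Kb\le L\sum_i a_i$, that is, $\bar a_b\ge b/L$. Secrecy is not needed in this branch.

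\emph{Short spans, $b<L$.} Here secrecy is essential. First, the target windows $\mathcal T_u^{(b)}$ cover $\mathbb Z_K$, so the full collection $C_{\mathbb Z_K}$ determines all of $V_{\mathbb Z_K}$. For a fixed $u$, write $\mathcal T=\mathcal T_u^{(b)}$. Using the independence of the inputs and then the chain rule, I get
\begin{align*}
(K-b)h&=I\bigl(V_{\mathcal T^{\rm c}};C_{\mathbb Z_K}\,\big|\,V_{\mathcal T}\bigr)\\
&=I\bigl(V_{\mathcal T^{\rm c}};C_{\mathcal A_u}\,\big|\,V_{\mathcal T}\bigr)
+I\bigl(V_{\mathcal T^{\rm c}};C_{\mathcal A_u^{\rm c}}\,\big|\,V_{\mathcal T},C_{\mathcal A_u}\bigr).
\end{align*}
The first term vanishes by \eqref{eq:local-conditional-secrecy}. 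The second term is at most $H(C_{\mathcal A_u^{\rm c}})\le h\sum_{i\notin\mathcal A_u}a_i$. I will again sum over $u$. Each cache lies in exactly $K-L$ complements $\mathcal A_u^{\rm c}$, which gives $K(K-b)\le (K-L)\sum_i a_i$, i.e.\ $\bar a_b\ge (K-b)/(K-L)$. Neither branch uses field structure, so the bound holds over every finite field and for arbitrary segment alphabets.

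\emph{Achievability and the field choice.} I take $Q=p^e$ with $p>K$ prime and $\operatorname{lcm}(L,K-L)\mid e$. The $he$ prime-field coordinates of each position then split exactly into $he/L$ direct groups or $he/(K-L)$ dual groups, for every $h$, as in Lemma~\ref{lem:common-scalarization}. Running Theorem~\ref{thm:direct-local-transform} or Theorem~\ref{thm:short-local-transform} on each group stores $\mu_{b,L}\,he$ prime-field coordinates per cache. This gives $|\mathcal C_i|=p^{he\mu_{b,L}}=Q^{h\mu_{b,L}}$ in every cache, which is \eqref{eq:exact-basis-allocation}. Both encoders are valid for every joint law. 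Because $\max_i a_i\ge\bar a_b\ge\mu_{b,L}$, the uniform allocation $a_i=\mu_{b,L}$ is the exact worst-cache minimum.

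\emph{Global statement for the subset-indexed parent.} For $t\le K-2$ all incidence sets have size at most $K-1$, so there is no common pool. By Lemma~\ref{lem:cycliccount}, the block list consists of $A_t(b)$ blocks of span $b$, each with per-position length $F/C_t$. The cost is defined as a maximum over caches of a sum over blocks. I will bound it below by the average over caches, exchange the average with the sum, and then apply the blockwise bound to each factor. Here the independent uniform evaluation is legitimate, because the per-block requirement must hold under every input law. This yields $\sum_b A_t(b)\mu_{b,L}/C_t=M(t)$, and Proposition~\ref{prop:implementation-cost} shows that this value is attained.

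The main obstacle I expect is the short-span step. It requires the observation that the full cache collection determines every input, and it requires the secrecy term to be placed on the complementary caches so that the count yields the factor $K-L$. The subtle point to check is that the per-block lower bound may be applied inside the worst-cache sum. This is justified because allocations are per-factor alphabet sizes rather than entropies, so correlations across blocks in the actual parent cannot reduce them.
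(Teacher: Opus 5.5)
Your proposal is correct and follows essentially the same route as the paper. It uses the window-summing bound for $b\ge L$, the secrecy-plus-full-recovery bound on the $K-L$ inaccessible caches for $b<L$, basis-expansion achievability over a field of characteristic $p>K$, and the worst-cache $\ge$ average argument summed over the $A_t(b)$ blocks. The short-span difference is only bookkeeping: you keep $V_{\mathcal T}$ in the conditioning throughout, whereas the paper first derives $I(V_{\mathcal T^{\rm c}};C_{\mathcal A_u})=0$.

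Your field choice $Q=p^e$ with $\operatorname{lcm}(L,K-L)\mid e$ mildly generalizes the paper's $Q=p^{\chi}$. One step is left implicit: running the encoders on several scalar groups of one block, with mutually independent seeds, still gives conditional secrecy for the whole block. This follows from the product-of-kernels factorization that the paper carries out in its scalarization lemmas.
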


\begin{IEEEproof}[Proof sketch]
For $b\ge L$, the $L$ segments accessible to each user reconstruct $b$
independent inputs and have joint entropy at least $bh$.  Summing over the
$K$ cyclic windows counts each cache $L$ times and gives
$\bar a_b\ge b/L$.

For $b<L$, independence of the formal inputs and conditional local secrecy
make the accessible window independent of the $K-b$ complementary inputs.
All caches together recover every input.  The $K-L$ inaccessible caches
therefore convey $(K-b)h$ units of complementary information conditioned on
the accessible window.  Summing over users counts each cache $K-L$ times
and gives $\bar a_b\ge(K-b)/(K-L)$.

Choose a prime $p>K$, put $\chi=\operatorname{lcm}(L,K-L)$, and take
$Q=p^\chi$.  Fix an $\mathbb F_p$-basis of $\mathbb F_Q$ and expand each input vector
into $\chi h$ base-field coordinates.  The direct encoder processes these
coordinates in groups of $L$, and the complement-dual encoder processes
them in groups of $K-L$.  They store $\chi h\mu_{b,L}$ base-field
coordinates per cache.  The inverse basis map recovers the original target
symbols, and the local channel factorization preserves conditional secrecy.
The base-$Q$ log-cardinality of the segment alphabet is exactly
$h\mu_{b,L}$, since $\log_Q p=1/\chi$.  The construction therefore attains
the lower bound at the original input length $h$.
The same basis serves every block, so separate-segment additivity gives
$M(t)$.
\end{IEEEproof}

\begin{Remark}[Subfield-coordinate storage]
A segment in $\mathbb F_p^{\,\chi h\mu_{b,L}}$ has
$p^{\chi h\mu_{b,L}}$ possible values and hence fixed-length allocation
$h\mu_{b,L}$ in base-$Q$ units.  The basis map is a bijective representation
of the original $h$ symbols of $\mathbb F_Q$.
\end{Remark}

\begin{Example}[Why the two allocation branches differ]
Let $(K,L)=(5,2)$.  For a span-three block, every two-cache access window must
recover three independent input symbols.  Summing this requirement over the
five cyclic users counts each cache twice and forces average allocation at
least $3/2=\mu_{3,2}$.  For a span-one block, the accessible window must be
independent of four complementary symbols.
Adding the three inaccessible caches makes those four symbols recoverable.
Averaging the inaccessible-cache requirement gives $4/3=\mu_{1,2}$.
The inaccessible caches determine the short-span bound, and the accessible
caches determine the long-span bound in $\mu_{b,L}$.  The two formulas
coincide at $b=L$.
\end{Example}

Appendix~\ref{app:allocation} gives the complete proofs of the run counts,
achievable family, implementation bounds, and allocation theorem.

\subsection{Unrestricted Converses and Approximation Guarantees}

The optimum $R^*(M)$ ranges over every placement, delivery rule, and finite
alphabet allowed by the system model.  Treating each access window as a
single effective cache transfers secure single-access converses to SMACC
and gives the exact rate-one threshold.

\begin{Propo}[Effective-cache reduction to secure single access]
\label{prop:effective-cache-reduction}
Let $R^*_{\mathrm{SA},K,N}(m)$ denote the optimum rate for secure
single-access coded caching with $K$ users, $N$ files, and per-user cache size
$m$.  For every $N,K\ge2$, $1\le L<K$, and $M\ge0$,
\begin{align}
R^*_{K,L,N}(M)
\ge R^*_{\mathrm{SA},K,N}(LM).
\label{eq:effective-cache-general-reduction}
\end{align}
\end{Propo}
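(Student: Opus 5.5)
Given any securely achievable SMACC sequence for $(K,L,N)$, I will build a secure single-access code for $K$ users and $N$ files. Its cache size will be at most $LM$ and its rate will equal the original rate. Taking infima then gives \eqref{eq:effective-cache-general-reduction}.

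\textbf{The reduction.} Fix a SMACC code at blocklength $F$ with placement maps $\phi_k^{(F)}$, delivery maps $\psi_{\mathbf d}^{(F)}$, and decoders $g_{k,\mathbf d}^{(F)}$. Define the single-access cache of user $k$ as the whole window,
\[
Z'_k\triangleq Z^{(F)}_{\mathcal A_k}=(Z_i^{(F)}:i\in\mathcal A_k)\in\prod_{i\in\mathcal A_k}\mathcal Z_i^{(F)}.
\]
This is a legitimate placement: it is a function of $(W_{[N]},S^{(F)})$ fixed before the demands, and the seed is independent of the library and demands exactly as in the single-access model. The cache alphabet satisfies
\[
\frac1F\log_q\Bigl|\prod_{i\in\mathcal A_k}\mathcal Z_i^{(F)}\Bigr|=\sum_{i\in\mathcal A_k}\frac1F\log_q|\mathcal Z_i^{(F)}|\le LM_F,
\]
so the single-access memory is at most $LM_F$. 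Keep the delivery maps and broadcast alphabets unchanged, so the rate stays $R_F$. User $k$ uses the decoder $g_{k,\mathbf d}^{(F)}$ on $(Z'_k,X_{\mathbf d}^{(F)})$. Exact recovery \eqref{eq:correctness} and secrecy \eqref{eq:security} then carry over verbatim, because they involve only the joint law of $(W_{[N]},Z_{\mathcal A_k}^{(F)},X_{\mathbf d}^{(F)})$, which this relabeling leaves unchanged.

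\textbf{Limits.} Along the sequence $F_j\to\infty$ we have $\limsup_j LM_{F_j}\le LM$ and $\limsup_j R_{F_j}\le R$. Hence every securely achievable $(M,R)$ for $(K,L,N)$ makes $(LM,R)$ securely achievable in the single-access model, with the same convention that the source alphabet may vary along the sequence. Therefore $R\ge R^*_{\mathrm{SA},K,N}(LM)$. Taking the infimum over $R$ gives the claim, and when $R^*_{K,L,N}(M)=+\infty$ the inequality holds trivially.

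\textbf{Main check.} The only point needing care is that the single-access model of \cite{PrivateCodedCaching,FangSmallRate} is formulated exactly like the present one: arbitrary finite per-user alphabets, a private seed independent of the library, memory measured by log-alphabet size, and the same perfect-secrecy criterion. If that model instead measures memory by entropy, the bound still holds, since $H(Z'_k)\le\log_q|\prod_{i\in\mathcal A_k}\mathcal Z_i^{(F)}|\le LM_FF$. If it defines memory through a single symmetric cache size, the maximum over users of $\log_q$ of the window alphabet is still at most $LM_FF$. Neither variant changes the argument.
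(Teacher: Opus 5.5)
Your proposal is correct and is essentially the paper's proof. The paper also takes the full window $Z_{\mathcal A_k}$ with product alphabet $\prod_{i\in\mathcal A_k}\mathcal Z_i$ as the single-access cache, bounds its log-size by $LM_FF$, keeps the broadcast, decoders, correctness and secrecy unchanged, and passes to code sequences; your remarks on the $+\infty$ case and on other memory normalizations are harmless extras.
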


\begin{IEEEproof}
Given any length-$F$ SMACC code with normalized cache memory $M_F$, assign
single-access user $k$ the effective cache alphabet
$\widetilde{\mathcal Z}_k=\prod_{i\in\mathcal A_k}\mathcal Z_i$ and the
effective cache value $\widetilde Z_k=Z_{\mathcal A_k}$.  In this proof,
$\widetilde Z_k$ denotes the full SMACC access window.  The fixed-length
constraint gives
\begin{align*}
\log_q|\widetilde{\mathcal Z}_k|
\le\sum_{i\in\mathcal A_k}\log_q|\mathcal Z_i|
\le LM_FF.
\end{align*}
The original broadcast, decoder, correctness condition, and conditional file
secrecy condition are unchanged.  Thus every SMACC code induces a secure
single-access code with memory at most $LM_F$ and the same rate.
Passing to code sequences gives \eqref{eq:effective-cache-general-reduction}.
\end{IEEEproof}

\begin{Cor}[Exact rate-one cache threshold]
\label{cor:exact-unit-rate}
For every $N,K\ge2$ and $1\le L<K$, the minimum cache size at which rate at
most one is securely achievable is
\begin{align}
M_{R=1}^{\star}=\frac{(N-1)(K-1)}{L}.
\end{align}
\end{Cor}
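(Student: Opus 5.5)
The corollary has two halves. Achievability at $M=(N-1)(K-1)/L$ with rate one is already Proposition~\ref{prop:rate-one-achievability}. The whole task is therefore the converse: if $(M,R)$ with $R\le 1$ is securely achievable for $(K,L,N)$, then $M\ge (N-1)(K-1)/L$.

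For the converse I would use Proposition~\ref{prop:effective-cache-reduction}. It turns any SMACC code into a secure single-access code with per-user cache at most $LM$ and the same rate, so $R^*_{K,L,N}(M)\le 1$ forces $R^*_{\mathrm{SA},K,N}(LM)\le 1$. It then suffices to show that secure single-access caching at rate at most one needs per-user memory at least $(N-1)(K-1)$. Dividing by $L$ then gives the stated threshold.

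The remaining step, and the main obstacle, is the single-access bound $m\ge (N-1)(K-1)$ when $R\le1$. I would take this from the converse accompanying the unit-rate scheme of \cite{FangSmallRate}, which shows that parent size $(N-1)(K-1)$ is optimal at rate one. If that cannot be cited directly, I would prove it with a direct entropy argument on an arbitrary secure code with blocklength $F$, rate $R_F\le 1$ and memory $m_F$.

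\emph{Step 1.} Fix a demand in which user $k$ requests file $n$. Correctness gives $H(W_n\mid \widetilde Z_k,X)=0$, while secrecy plus $H(X)\le F$ shows that $X$ alone reveals nothing about $W_n$ given the other users' decodability constraints. From this I would aim for $H(\widetilde Z_k\mid W_n)\ge$ the key material needed to unlock $X$.

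\emph{Step 2.} Apply secrecy against file $n'\ne n$ to show that the cache must hold, for every unrequested file, key material of size about $F$ per other user that is independent of the files. The natural chain is to vary user $k$'s demand over all $N$ files while keeping the other users' demands distinct from user $k$'s. Each other user then contributes an independent one-time-pad-like quantity that $\widetilde Z_k$ must contain for each of the $N-1$ files it could be asked to hide. Summing these contributions gives $H(\widetilde Z_k)\ge (N-1)(K-1)F$.

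I expect the delicate part to be making these $K-1$ contributions additive. The plan is to order the other users, condition successively on their caches, and use the fact that their decodings through the common rate-one message force disjoint entropy in $\widetilde Z_k$. Combining this with $H(\widetilde Z_k)\le m_F F$ and passing to the limit along the code sequence gives $m\ge(N-1)(K-1)$, which closes the argument.
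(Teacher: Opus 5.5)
Your proposal is correct and follows essentially the paper's proof: achievability is Proposition~\ref{prop:rate-one-achievability}, and the converse applies the effective-cache reduction of Proposition~\ref{prop:effective-cache-reduction} and then the rate-one entropy converse of \cite{FangSmallRate} at budget $LM$ along the code sequence, giving $LM\ge(N-1)(K-1)$. The fallback entropy argument in your Steps~1--2 is not needed, and as sketched it only asserts the additivity of the $K-1$ per-user contributions without proving it, so rely on the cited converse as the paper does.
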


\begin{IEEEproof}
Proposition~\ref{prop:rate-one-achievability} gives achievability.  Conversely,
Proposition~\ref{prop:effective-cache-reduction} maps any sequence achieving
rate one to secure single access with normalized cache budget at most $LM$.
The rate-one entropy converse of \cite{FangSmallRate}, under the
code-sequence normalization, gives $LM\ge(N-1)(K-1)$.
\end{IEEEproof}

The following cyclic-window converse was established by Namboodiri and Rajan
\cite[Theorem~6]{ImprovedLowerBoundsForMultiAccessCodedCaching}.  We state the
bound in our notation and give a proof for the present system model.

\begin{Theo}[Secrecy-aware converse for unrestricted SMACC schemes
{\cite[Theorem~6]{ImprovedLowerBoundsForMultiAccessCodedCaching}}]
\label{thm:macc-converse}
For every $M\ge0$, define
\begin{align}
R_{\mathrm{conv}}(M)
\triangleq
\max_{s\in\{1,\ldots,\min\{\lfloor N/2\rfloor,K\}\}}
\frac{s\lfloor N/s\rfloor-1
-\min\{s-1,K-L\}M}{\lfloor N/s\rfloor-1}.
\label{eq:R-conv}
\end{align}
At every feasible cache size $M$, the optimum over all codes allowed by the
system model satisfies
\begin{align}
R^*_{K,L,N}(M)\ge R_{\mathrm{conv}}(M).
\label{eq:macc-th6}
\end{align}
The bound applies to both $N\ge K$ and $K>N$, including repeated demands.
\end{Theo}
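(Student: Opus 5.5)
We prove \eqref{eq:macc-th6} with a genie-aided chain argument over several demand vectors, in the style of the classical cut-set and sliding-window converses. We work at a fixed length-$F$ code; dividing by $F$ and passing to the limit along the code sequence then gives the claim. Fix $s$ and set $\ell\triangleq\lfloor N/s\rfloor\ge2$, which uses $s\le N/2$.

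\emph{Transmissions and users.} We choose $\ell$ demand vectors $\mathbf d^{(1)},\ldots,\mathbf d^{(\ell)}$. In $\mathbf d^{(j)}$ the $s$ users $k,\,k+1,\ldots,k+s-1$ request distinct files $W_{(j-1)s+1},\ldots,W_{js}$, and all remaining users repeat one of these requests. Because the $s$ users are consecutive, the union of their access windows is $\mathcal A_k\cup\cdots\cup\mathcal A_{k+s-1}=[k:k+L+s-2]$, which has size $L+\min\{s-1,K-L\}$. Denote this set of caches by $\mathcal U$.

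\emph{Chain.} We use two facts for each $j$:
\begin{itemize}
\item caches $Z_{\mathcal U}$ together with $X_{\mathbf d^{(j)}}$ decode $s$ new files;
\item a single window $Z_{\mathcal A_k}$ together with $X_{\mathbf d^{(j)}}$ reveals nothing about the files outside $W_{d_k}$.
\end{itemize}
These give
\[
sF\le H(X_{\mathbf d^{(j)}}\mid Z_{\mathcal A_k},W_{d_k})+H(Z_{\mathcal U\setminus\mathcal A_k})\le RF+\min\{s-1,K-L\}MF+\bigl[\text{secrecy term}\bigr].
\]
Secrecy is applied to user $k$: conditioned on $W_{d_k}$, the pair $(Z_{\mathcal A_k},X_{\mathbf d^{(j)}})$ is independent of the other files. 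This is what converts the usual $LM$ contribution of the window into nothing. That window must satisfy $H(Z_{\mathcal A_k})\ge$ (information about its own decoded file only), and secrecy forces its key contribution to be cancelled by the broadcast.

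\emph{Summation.} Summing over the $\ell$ transmissions and telescoping through a Han/submodularity-type combination of the conditional entropies of $(Z_{\mathcal U},X_{\mathbf d^{(1)}},\ldots)$ should give
\[
(\ell-1)R+\min\{s-1,K-L\}M\ge s\ell-1.
\]
Here $s\ell$ counts all the decoded files. The $-1$ arises because the file known to user $k$ in the last round need not be paid for: secrecy lets one broadcast be used as a pure one-time pad against the window. Rearranging this inequality gives exactly the term inside \eqref{eq:R-conv}.

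\emph{Main obstacle.} The hard step is arranging the chain so that the window $\mathcal A_k$ costs nothing while exactly one broadcast's worth of rate is saved. The plan is to follow the structure of \cite[Theorem~6]{ImprovedLowerBoundsForMultiAccessCodedCaching}. That proof uses the secure single-access lemma: for any user, $H(Z_{\mathcal A_k})\ge H(X\mid\cdot)$-type key bounds together with $I(W_{\text{other}};Z_{\mathcal A_k},X)=0$ imply $H(X_{\mathbf d}\mid Z_{\mathcal A_k})\ge F$. The same argument also covers $K>N$ and repeated demands, since only $s$ users need distinct requests.
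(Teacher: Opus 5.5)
\paragraph{Gap: the combining step.}
Your setup matches the paper's: the $\ell=\lfloor N/s\rfloor$ demand vectors in which users $k,\ldots,k+s-1$ request disjoint groups of $s$ files, the count $|\mathcal U\setminus\mathcal A_k|=\min\{s-1,K-L\}$, and secrecy for user $k$ as the source of the saving. However, the step that actually produces the bound is missing, and the per-round inequality you do write is false.

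Take the first inequality in your displayed chain at $s=1$, where $\mathcal U=\mathcal A_k$. It reads $F\le H(X_{\mathbf d}\mid Z_{\mathcal A_k},W_{d_k})$. For the rate-one code of Proposition~\ref{prop:rate-one-achievability}, the broadcast is a deterministic function of the user's window and its requested file, so the right side is $0$. Conditioning the broadcast on $W_{d_k}$ discards exactly the $F$ units that pay for $W_{d_k}$, and the bracketed ``secrecy term'' does not restore them.

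Suppose you drop $W_{d_k}$ from the conditioning. Each round then gives only $sF\le RF+\min\{s-1,K-L\}MF$. Summing $\ell$ such rounds charges the extra caches $\ell$ times, so you get nothing beyond $R\ge s-\min\{s-1,K-L\}M$. This is strictly weaker than the $s$-term of \eqref{eq:R-conv} whenever $s\ge2$, since the difference is $\bigl(s-1+(\ell-2)\min\{s-1,K-L\}M\bigr)/(\ell-1)>0$. The ``Han/submodularity-type telescoping'' that is supposed to close this gap is never specified; you only assert that it ``should give'' the result.

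\paragraph{The missing idea.}
Decode all files jointly and use secrecy only once. Then one window and one broadcast are free, and everything else is paid for exactly once. In your notation, user $k$ requests $W_1$ under $\mathbf d^{(1)}$. Correctness in all rounds gives $H(W_{1:s\ell}\mid Z_{\mathcal U},X_{\mathbf d^{(1)}},\ldots,X_{\mathbf d^{(\ell)}})=0$. Secrecy for user $k$ under $\mathbf d^{(1)}$, together with file independence, gives $I(W_{2:N};Z_{\mathcal A_k},X_{\mathbf d^{(1)}})=0$. Hence
\begin{align*}
(s\ell-1)F
&=H(W_{2:s\ell}\mid Z_{\mathcal A_k},X_{\mathbf d^{(1)}})\\
&=I(W_{2:s\ell};Z_{\mathcal U\setminus\mathcal A_k},X_{\mathbf d^{(2)}},\ldots,X_{\mathbf d^{(\ell)}}\mid Z_{\mathcal A_k},X_{\mathbf d^{(1)}})\\
&\le\min\{s-1,K-L\}MF+(\ell-1)RF,
\end{align*}
which is the paper's proof. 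The $-1$ is user $k$'s own file in the reference round. No single-access key lemma or Han-type inequality is needed.
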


\begin{IEEEproof}[Proof sketch]
Fix $s$ consecutive users.  Choose $\lfloor N/s\rfloor$ public demand
vectors so that the files requested by these users form disjoint groups
of size $s$.  Relative to the first
user's $L$-cache window, the union of the $s$ access windows introduces at most
$\min\{s-1,K-L\}$ additional physical caches.
Conditioned on the first user's requested file, that user's observation
under the first demand vector contains no information about the other files
requested by these users across the chosen demand vectors.  The additional caches and
remaining broadcasts make those files recoverable.
Bounding their fixed-length alphabets gives the $s$-indexed term in
\eqref{eq:R-conv}.  Maximizing over $s$ gives the stated bound.  The same
bound holds when $K>N$: the remaining users' requests are arbitrary, so the
chosen demand vectors can contain repeated entries.
\end{IEEEproof}

The following lemma extends the necessary cache-memory bound in
\cite[Proposition~2]{ImprovedLowerBoundsForMultiAccessCodedCaching} to arbitrary
$N,K\ge2$.

\begin{Lem}[Necessary feasibility threshold]
\label{lem:feasible-memory}
Every feasible cache size satisfies
\begin{align}
M\ge M_{\mathrm{feas}}
\triangleq\frac{\min\{N,K\}-1}{K-L}.
\label{eq:feasible-memory}
\end{align}
\end{Lem}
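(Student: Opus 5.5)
The plan is a single-demand cut-set argument that combines exact recovery at several users with secrecy at one user. Set $s\triangleq\min\{N,K\}$. If $s=1$ the bound is trivial, so assume $s\ge2$. Fix any length-$F$ code in a sequence achieving a feasible point, with normalized memory $M_F$. Choose a demand vector $\mathbf d$ in which users $1,\ldots,s$ request the distinct files $W_1,\ldots,W_s$, and the remaining users (present only when $K>N$) request arbitrary files, possibly repeated. Such a vector is allowed by the system model, and $s\le K$ ensures that the $s$ users exist.

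First I use recovery. Every user's window $\mathcal A_k$ is a subset of $[K]$, so by \eqref{eq:correctness} the full collection $(Z_{[K]},X_{\mathbf d})$ determines $W_1,\ldots,W_s$. In particular,
\[
H(W_{[2:s]}\mid Z_{[K]},X_{\mathbf d},\mathbf D=\mathbf d)=0 .
\]
Next I use secrecy at user $1$. By the equivalent secrecy identity in the Remark following \eqref{eq:security}, user $1$'s observation $(Z_{\mathcal A_1},X_{\mathbf d})$ is independent of $W_{[N]\setminus\{1\}}$, and hence of $W_{[2:s]}$, given $\mathbf D=\mathbf d$.

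Combining the two steps and writing all quantities conditioned on $\mathbf D=\mathbf d$:
\begin{align*}
(s-1)F&=H(W_{[2:s]})=I(W_{[2:s]};Z_{[K]},X_{\mathbf d})\\
&=I(W_{[2:s]};Z_{\mathcal A_1},X_{\mathbf d})
+I(W_{[2:s]};Z_{[K]\setminus\mathcal A_1}\mid Z_{\mathcal A_1},X_{\mathbf d})\\
&\le 0+H(Z_{[K]\setminus\mathcal A_1})\le (K-L)M_FF .
\end{align*}
The last inequality uses the fixed-alphabet bound $H(Z_i)\le M_FF$ for each of the $K-L$ inaccessible caches. The files are independent of $\mathbf D$, so $H(W_{[2:s]}\mid\mathbf D=\mathbf d)=(s-1)F$.

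Dividing by $F$ gives $M_F\ge(s-1)/(K-L)$ for every code in the sequence. Taking $\limsup$ along any sequence that achieves $(M,R)$ with $R<\infty$ yields $M\ge M_{\mathrm{feas}}$.

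I do not expect a serious obstacle. The one point needing care is the case $K>N$, where only $N$ distinct files exist. This is why $s$ is capped at $\min\{N,K\}$ and the other users' requests are allowed to repeat. Since the recovery step only needs users $1,\ldots,s$ to decode, repeated demands elsewhere cause no difficulty.
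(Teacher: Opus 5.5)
Your proof is correct and is essentially the paper's argument. It combines recovery of $\min\{N,K\}$ distinct requested files from the caches plus the broadcast, secrecy at user~1 to eliminate the contribution of its own window, and the fixed-length bound on the $K-L$ caches outside that window. The differences are cosmetic: the paper restricts to the union $\mathcal P_r$ of the first $r$ windows and counts $W_{1:r}$, absorbing $W_1$ into an $H(W_1)\le F$ term. You instead use $Z_{[K]}$ and count $W_{2:s}$ directly, which yields the same bound.
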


\begin{IEEEproof}[Proof sketch]
Let $r=\min\{N,K\}$ and choose a demand in which $r$ users request distinct
files.  Conditioned on its requested file, the first user's access window and
the broadcast contain no information about the other $r-1$ files.
The full cache collection and the same broadcast recover all $r$ requested
files.  Hence the $K-L$ caches outside the first user's window must supply the
required $(r-1)F$ units of additional information.  The base-$q$
log-cardinality of their joint cache alphabet is at most $(K-L)MF$.
\end{IEEEproof}

Applying the secure single-access cut-set bound of
\cite[Theorem~2]{PrivateCodedCaching} through
Proposition~\ref{prop:effective-cache-reduction} gives the $s$th lower-bound
term
\begin{align*}
\frac{s\lfloor N/s\rfloor-1-L(s-1)M}{\lfloor N/s\rfloor-1}.
\end{align*}
Feasibility of the effective single-access code gives $LM\ge1$.
The cited cut-set is stated up to $LM=N(K-1)$.  For larger effective cache
sizes, each displayed term is at most one and is covered by
$R_{\mathrm{conv}}(M)\ge1$.
The cyclic coefficient satisfies
\begin{align*}
\min\{s-1,K-L\}\le L(s-1),
\end{align*}
so every cyclic $s$-term in \eqref{eq:R-conv} is at least as large as the
corresponding single-access term.  We use $R_{\mathrm{conv}}(M)$ to bound the
achievable rate relative to the optimum.  The effective-cache reduction also
gives the exact rate-one threshold in Corollary~\ref{cor:exact-unit-rate}.

Where the lifted envelope is finite, the unrestricted optimum satisfies
\begin{align}
R_{\mathrm{conv}}(M)
\le R^*(M)
\le R_{\mathrm{lift}}(M).
\label{eq:converse-optimum-lift-sandwich}
\end{align}
Consequently, every upper bound on
$R_{\mathrm{lift}}(M)/R_{\mathrm{conv}}(M)$ also bounds
$R_{\mathrm{lift}}(M)/R^*(M)$.  Define the access-balance parameter
\begin{align}
\alpha_{K,L}
\triangleq
\min\left\{\frac{L}{K},1-\frac{L}{K}\right\}.
\end{align}

\subsubsection{Libraries at least as large as the user population}

When $N\ge K$, the feasibility threshold in
Lemma~\ref{lem:feasible-memory} equals the $t=0$ lifted memory
$(K-1)/(K-L)$.  Hence the lifted envelope begins at $M_{\mathrm{feas}}$ and is
finite throughout $M\ge M_{\mathrm{feas}}$.

\begin{Theo}[Approximation guarantee for $N\ge K$]
\label{thm:order-opt}
Assume $N\ge K$.  For every feasible cache size $M$,
\begin{align}
\frac{R_{\mathrm{lift}}(M)}{R_{\mathrm{conv}}(M)}
\le\frac{12}{\alpha_{K,L}}.
\label{eq:order-opt-main}
\end{align}
The same right-hand side bounds
$R_{\mathrm{lift}}(M)/R^*(M)$.
\end{Theo}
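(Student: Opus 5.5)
The plan is to compare $R_{\mathrm{lift}}$ with $R_{\mathrm{conv}}$ over a partition of the feasible range $M\ge M_{\mathrm{feas}}=(K-1)/(K-L)$. On each piece we pick one lifted point, or one chord between lifted points, as an upper bound, and one value of $s$ in \eqref{eq:R-conv} as a lower bound.

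\textbf{Step 1: a simple upper bound on the lifted memory.} First bound $T_{\mathrm{blk}}(K,r,L)$ from above. Every span satisfies $\mu_{b,L}\le \max\{K/(K-L),K/L\}\le 1/\alpha_{K,L}$. Also $\sum_b\Gamma(K,r,b)\le\binom{K-1}{r-1}$, since each $r$-subset has at most $r$ runs, which gives $\sum_b K\Gamma(K,r,b)\le r\binom{K}{r}$. Together these give $T_{\mathrm{blk}}(K,r,L)\le \binom{K-1}{r-1}/\alpha_{K,L}$. Substituting into \eqref{eq:M-t} yields
\[
M(t)\le \frac{1}{\alpha_{K,L}}\cdot\frac{N\binom{K-1}{t-1}+\binom{K-1}{t}}{\binom{K-1}{t}}=\frac{1}{\alpha_{K,L}}\Bigl(\frac{Nt}{K-t}+1\Bigr),
\]
with $R(t)=K/(t+1)$. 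This puts the lifted family within a factor $1/\alpha_{K,L}$ in memory of a Maddah-Ali–Niesen-like secure curve. The rate-one endpoint $(N-1)(K-1)/L\le N/\alpha$ also fits this form.

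\textbf{Step 2: the lower bound in usable form.} For $M$ in a given range, choose $s\approx\min\{\lfloor N/2\rfloor,K,\lceil 1/M'\rceil\}$, suitably rounded. Using $\lfloor N/s\rfloor\ge N/(2s)$ and $\min\{s-1,K-L\}\le s-1$, the $s$-th term is at least roughly $s-(s-1)sM/N$ minus lower-order corrections. Choosing $s\approx N/(2M)$ then gives $R_{\mathrm{conv}}(M)\gtrsim \min\{K,N/(4M)\}$ for moderate $M$, and $R_{\mathrm{conv}}(M)\ge1$ near the rate-one region. Where $s-1>K-L$, use the cap $K-L$ explicitly: near $M_{\mathrm{feas}}$ take $s=2$, which gives $R_{\mathrm{conv}}\ge 2-O(M/N)$, and compare with the left endpoint $R=K$ via the chord.

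\textbf{Step 3: case split.} Three regimes, each giving a ratio below $12/\alpha_{K,L}$:
\begin{enumerate}
\item[(a)] $M_{\mathrm{feas}}\le M\le c_1N/(\alpha K)$. Interpolate between $t=0$ and a small $t$, and compare with the $s$ near $K$ or $N/(2M)$ bound.
\item[(b)] Intermediate $M$. Pick $t$ with $\frac{1}{\alpha}(Nt/(K-t)+1)\le M$, so $t+1\gtrsim \alpha MK/N$ and $R_{\mathrm{lift}}\le K/(t+1)\lesssim N/(\alpha M)$. Then divide by $N/(4M)$, tracking constants (factors $2$ from floors, $2$ from the ceiling on $t$, and a $3/2$-type slack), to reach $12/\alpha$.
\item[(c)] $M$ near or above the rate-one point. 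Here $R_{\mathrm{lift}}\le$ a chord ending at rate one, and $R_{\mathrm{conv}}\ge 1$ or uses the $s=1$ term (value $1$). This needs $t$ up to $K-2$ and handling of $K-t$ small, where $Nt/(K-t)$ blows up. The rate-one endpoint bound $N/\alpha$ handles that.
\end{enumerate}

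\textbf{Main obstacle.} The hardest part is the constant bookkeeping at the boundaries: near $M_{\mathrm{feas}}$, where $R_{\mathrm{lift}}$ starts at $K$ but the converse coefficient is capped by $K-L$, and near $t\to K-2$. For the first boundary I would exploit that $R_{\mathrm{conv}}$ with $s=\min\{K,\lfloor N/2\rfloor\}$ stays of order $K$ until $M\approx N/(K\cdot\text{const})$. The linear chord from $(M_{\mathrm{feas}},K)$ then stays within the factor because $M_{\mathrm{feas}}\le 1/\alpha$ when $N\ge K$ (indeed $(K-1)/(K-L)\le K/(K-L)\le 1/\alpha$). I expect the factor $12$ to absorb these losses, but I would verify each regime numerically on small $(K,L,N)$ before finalizing the constants.
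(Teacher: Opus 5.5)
Your overall structure matches the paper's: bound each lifted corner's memory, select a corner by a floor rule, and pair it with one converse term on each memory range. Your Step~1 is a valid variant. The weaker count $\sum_b\Gamma(K,r,b)\le\binom{K-1}{r-1}$ (the paper uses the exact value $\binom{K-2}{r-1}$) gives $M(t)\le\alpha_{K,L}^{-1}\bigl(\frac{Nt}{K-t}+1\bigr)$ in place of the paper's linear bound $M(t)\le N(t+1)/\min\{L,K-L\}$.

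The gap is that the theorem \emph{is} the constant, and the steps meant to produce it fail. On the whole feasible range $R_{\mathrm{lift}}\le K$ and $R_{\mathrm{conv}}\ge1$, so the ratio is trivially at most $K$. The statement therefore only has content when $\min\{L,K-L\}>12$, and that is exactly where your small-memory step breaks.

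The $s=2$ term equals $2+(1-M)/(\lfloor N/2\rfloor-1)\le2$. Against $R_{\mathrm{lift}}=K$ at the left endpoint it certifies at best a ratio $K/2$, which exceeds $12/\alpha_{K,L}$ once $\min\{L,K-L\}>24$. Your alternative $s=\min\{K,\lfloor N/2\rfloor\}$ also fails when $N<2K$. Take $K=N=100$, $L=50$, $M=M_{\mathrm{feas}}=99/50$: both $s=2$ and $s=50$ give terms equal to $1.98$ against $R_{\mathrm{lift}}=100$. That is a ratio above $50>24=12/\alpha_{K,L}$, whereas $s=\lfloor N/(2M)\rfloor=25$ gives about $17.2$. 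The cap $K-L$ is never the obstruction: $\min\{s-1,K-L\}\le s-1$ only helps. The real constraint is admissibility, $s\le\min\{\lfloor N/2\rfloor,K\}$.

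Step~2 as written does not give $N/(4M)$ either. The $s$-term is at least $s-(s-1)(M-1)/(g-1)$ with $g=\lfloor N/s\rfloor$. With only $g\ge N/(2s)$ and $s\approx N/(2M)$, you get $g-1\gtrsim M-1$ and hence merely $R_{\mathrm{conv}}\gtrsim1$; your expression $s-(s-1)sM/N$ drops a factor $2$. The bookkeeping in (b) also misses the target: a base ratio $4/\alpha_{K,L}$ times the slacks $2\cdot2\cdot\frac32$ you list gives $24/\alpha_{K,L}$, not $12/\alpha_{K,L}$.

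The repair is short. Write $\alpha=\alpha_{K,L}$.

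On the converse side, for $1\le M\le N/2$ take $s=\min\{K,\lfloor N/(2M)\rfloor\}$. Then $g\ge\lfloor 2M\rfloor>2M-1$ gives $(M-1)/(g-1)\le\frac12$, so $R_{\mathrm{conv}}(M)\ge\min\{(K+1)/2,\,N/(4M)\}$.

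On the achievable side, invert Step~1 exactly. Let $t=\lfloor x\rfloor$ with $x=K(\alpha M-1)/(N+\alpha M-1)$. Then $Nt/(K-t)\le\alpha M-1$, hence $M(t)\le M$. This $t$ lies in $[0:K-2]$ whenever $\alpha M\ge1$ and $M<(N-1)(K-1)/L$, and it gives $R_{\mathrm{lift}}(M)<1+N/(\alpha M-1)$.

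Now split by memory:
\begin{itemize}
\item When $\alpha M\le2$, use $R_{\mathrm{lift}}\le K$; there $N/(4M)\ge\alpha K/8$.
\item When $\alpha M>2$, use the corner bound.
\item When $M>N/2$, use $R_{\mathrm{conv}}\ge1$. If also $\alpha M\le2$, then $\min\{L,K-L\}\le3$ and the trivial bound $K$ suffices.
\end{itemize}
In every case the ratio is at most $9/\alpha$.

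The paper instead keeps the linear corner bound and uses a generic floor-selection lemma to get $R_{\mathrm{lift}}\le 2KN/(\min\{L,K-L\}M)$. It then takes $s=\lceil (K-L)/2\rceil$ below $N/(2(K-L+1))$, $s=\lfloor N/(2M)\rfloor$ (yielding $N/(6M)$) up to $N/4$, and $s=1$ above that.
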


\begin{IEEEproof}[Proof sketch]
The piecewise-linear envelope bounds from
\eqref{eq:M-t} through \eqref{eq:endpoint} are compared with selected $s$-terms of
\eqref{eq:R-conv}.  The local-cost factors $K/L$ and $K/(K-L)$ bound the
storage required by the two encoders.  Combining these bounds with the
feasibility threshold and the exact rate-one point gives
$12/\alpha_{K,L}$ in \eqref{eq:order-opt-main}.
\end{IEEEproof}

In particular, if $\alpha_{K,L}\ge\lambda$ for a fixed
$\lambda\in(0,1/2]$, then both ratios in
Theorem~\ref{thm:order-opt} are at most $12/\lambda$.  This gives a uniform
constant for access fractions $L/K$ bounded away from zero and one.

\begin{Remark}[Illustration of the memory-rate bounds]
Figure~\ref{fig:memory-rate-comparisons} compares the achievable envelope
with the converse for \(K=12\) and two access sizes.  For \(N=24\) and
\(L=6\), the lifting envelope starts at the feasibility threshold
\(M=11/6\).  The two largest-access panels illustrate
Theorem~\ref{thm:lkm1-boundary}: for \(N=24\), the exact tradeoff is the
segment from \((11,12)\) to \((23,1)\), and for \(N=8\), the optimum is
\(R^*_{12,11,8}(M)=1\) for every \(M\ge7\).
\end{Remark}

\begin{figure}[!htbp]
  \centering
  \includegraphics[width=0.98\textwidth]{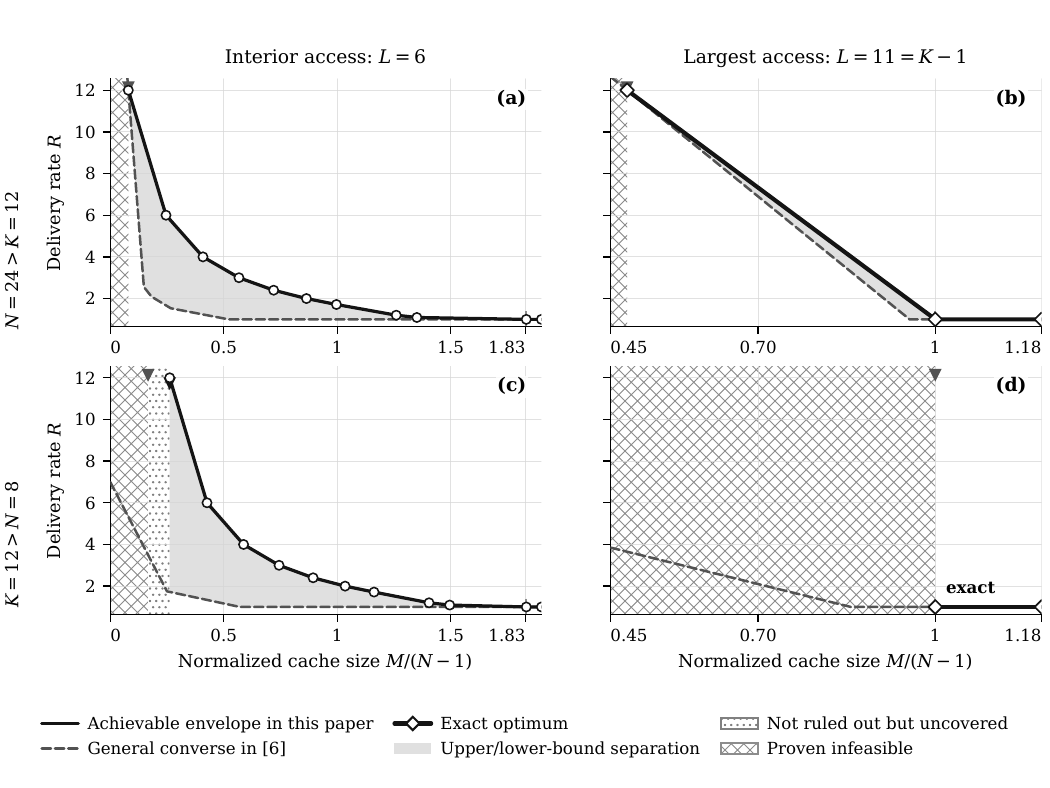}
  \caption{Memory-rate comparisons for \(K=12\).  The columns use
  \(L=6\) and \(L=11=K-1\); the rows use \(N=24>K\) and \(K>N=8\).
  The horizontal axis is \(M/(N-1)\).
  The solid curve and open circles give this paper's achievable envelope
  \(R_{\mathrm{lift}}(M)\) and its selected corners.
  The heavier solid curve and diamonds give the exact largest-access tradeoff
  in \cite[Theorem~1]{FangLargestAccess}, restated as
  Theorem~\ref{thm:lkm1-boundary}.
  The dashed curve gives the general secrecy-aware converse
  \(R_{\mathrm{conv}}(M)\) from
  \cite[Theorem~6]{ImprovedLowerBoundsForMultiAccessCodedCaching}, stated here
  as Theorem~\ref{thm:macc-converse}.
  Gray shading fills the gap between the achievable envelope and the converse.
  Cross hatching marks memories excluded by Lemma~\ref{lem:feasible-memory}.
  In panel~(c), dot hatching marks \(7/6\le M<11/6\), between that necessary
  memory bound and the first lifted point.
  Panel~(a) attains the necessary memory bound at its first achievable point;
  panels~(b) and~(d) give the exact largest-access tradeoffs.}
  \label{fig:memory-rate-comparisons}
\end{figure}

\FloatBarrier

For $N\ge K\ge4$ and $L=2$, the lifted envelope has the
$K$-independent approximation factor $24$.

\begin{Propo}[Refined approximation guarantee for $L=2$]
\label{prop:L2-order}
Assume $N\ge K\ge4$ and $L=2$.  For every feasible cache size $M$,
\begin{align}
\frac{R_{\mathrm{lift}}(M)}{R_{\mathrm{conv}}(M)}\le24,
\qquad
\frac{R_{\mathrm{lift}}(M)}{R^*(M)}\le24.
\label{eq:L2-order-main}
\end{align}
\end{Propo}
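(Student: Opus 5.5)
The plan is to exploit the fact that at $L=2$ both branches of $\mu_{b,2}$ collapse to almost the same linear cost in $b$. Once that is shown, the lifted corners $M(t)$ are dominated by the memory points of the secure single-access subset-indexed scheme. The comparison with $R_{\mathrm{conv}}$ then becomes a single-access-style cut-set argument in which the cyclic coefficient $\min\{s-1,K-2\}$ plays the role of $s-1$, with no factor $K/L$ lost.

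\emph{Step 1 (closed-form bound on $T_{\mathrm{blk}}$).} For $L=2$ we have $\mu_{1,2}=(K-1)/(K-2)$ and $\mu_{b,2}=b/2$ for $b\ge2$. Write $\mu_{b,2}=b/2+\mathbf 1\{b=1\}\bigl(\tfrac{K-1}{K-2}-\tfrac12\bigr)$. Each $r$-subset has run lengths summing to $r$, and counting runs over all $K$ endpoints gives
\[
K\sum_b b\,\Gamma(K,r,b)=r\binom Kr,
\qquad\text{so}\qquad
\sum_b b\,\Gamma(K,r,b)=\binom{K-1}{r-1}.
\]
Together with $\Gamma(K,r,1)=\binom{K-3}{r-1}$ (including the boundary case $r=K-1$, where this vanishes), this yields
\[
T_{\mathrm{blk}}(K,r,2)\le\tfrac12\binom{K-1}{r-1}+\Bigl(\tfrac12+\tfrac1{K-2}\Bigr)\binom{K-3}{r-1}\le\binom{K-1}{r-1}
\quad\text{for } K\ge4.
\]
Substituting into \eqref{eq:M-t} gives
\[
M(t)\le \frac{Nt}{K-t}+1,\qquad R(t)=\frac K{t+1},\qquad t\in[0:K-2],
\]
and we keep the exact rate-one endpoint $\bigl((N-1)(K-1)/2,\,1\bigr)$ from Proposition~\ref{prop:rate-one-achievability}. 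By memory sharing, $R_{\mathrm{lift}}$ lies below the piecewise-linear curve through these dominating points. The problem is thereby reduced to showing that this curve is within a factor $24$ of $R_{\mathrm{conv}}$.

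\emph{Step 2 (converse specialization).} For $s\le K-1$ the coefficient in \eqref{eq:R-conv} is exactly $s-1$, so
\[
R_{\mathrm{conv}}(M)\ge\frac{s\lfloor N/s\rfloor-1-(s-1)M}{\lfloor N/s\rfloor-1}.
\]
Using $\lfloor N/s\rfloor\ge N/(2s)$ for $s\le N/2$, I would derive the working lower bound
\[
R_{\mathrm{conv}}(M)\gtrsim s\Bigl(1-\frac{2s M}{N}\Bigr)\quad\text{up to constants},
\]
which is the same shape as the single-access cut-set bound used in \cite{PrivateCodedCaching}.

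\emph{Step 3 (three regimes).} The regimes are determined by $M$, with $M\ge M_{\mathrm{feas}}=(K-1)/(K-2)\le 3/2$.
\begin{enumerate}[label=(\roman*),leftmargin=2.1em]
\item Small memory, $M\le c\,N/K+1$: use $s\approx K/4$, so that $R_{\mathrm{conv}}=\Omega(K)$ while $R_{\mathrm{lift}}\le K$.
\item Intermediate memory: given $M$, pick the largest $t$ with $Nt/(K-t)+1\le M$, which gives $t+1\gtrsim (M-1)K/(N+M)$. Take $s\approx N/(4M)$ in the converse. The ratio $R(t)/R_{\mathrm{conv}}$ then compares $K/(t+1)\lesssim (N+M)/(M-1)$ with $N/(8M)$. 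This is a constant once $M-1\ge M/3$, which holds because $M\ge3/2$ away from the feasibility corner; near the corner, case (i) applies instead.
\item Large memory: once $t$ reaches $K-2$ (memory about $NK/2$) and beyond, up to the endpoint $(N-1)(K-1)/2$, use $R_{\mathrm{conv}}\ge1$ (the $s=1$ term) and $R_{\mathrm{lift}}\le R(K-2)=K/(K-1)\le 4/3$. Linear interpolation between the $t=K-2$ point and the endpoint keeps the rate at most $4/3$.
\end{enumerate}
Tracking the floors and the factors $1/2$ and $1/4$ through each case should produce a constant at most $24$.

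\emph{Main obstacle.} I expect the hardest part to be the transition between regimes (ii) and (iii), where $M$ is comparable to $N$ and the chosen $s$ would drop below $2$, so the converse degenerates to the trivial bound. There I would first try using $s=1$ and $s=2$ explicitly. With $s=2$,
\[
R_{\mathrm{conv}}\ge\frac{2\lfloor N/2\rfloor-1-M}{\lfloor N/2\rfloor-1},
\]
and I would combine this with the bound $R(t)\le 1+O(N/M)$ obtained from $t/(K-t)\approx (M-1)/N$, which gives $K/(t+1)\le 1+(K-t-1)/(t+1)\lesssim 1+N/(M-1)$. Ensuring that this stays within $24$ uniformly in $K$ is the delicate bookkeeping step, and it is also where the assumptions $K\ge4$ and $N\ge K$ enter.
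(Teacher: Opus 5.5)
\paragraph{Gap 1: the Step~1 bound fails at $r=1$.}
Your inequality $T_{\mathrm{blk}}(K,r,2)\le\binom{K-1}{r-1}$ is false at $r=1$. There $T_{\mathrm{blk}}(K,1,2)=\mu_{1,2}\Gamma(K,1,1)=\frac{K-1}{K-2}>1=\binom{K-1}{0}$, because the comparison $(\tfrac12+\tfrac1{K-2})\binom{K-3}{r-1}\le\tfrac12\binom{K-1}{r-1}$ requires $r\ge2$. Hence $M(t)\le\frac{Nt}{K-t}+1$ fails in two places:
\begin{itemize}
\item At $t=0$, the true value $M(0)=\frac{K-1}{K-2}$ is the feasibility threshold, so the lifted corners are not dominated by the single-access points.
\item At $t=1$, in fact $M(1)-\frac{N}{K-1}-1=\frac{N-1}{(K-1)(K-2)}>0$; only $M(1)\le\frac{N}{K-2}+1$ holds.
\end{itemize}
The bound is correct for $2\le t\le K-2$. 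The corners $t\in\{0,1\}$ must therefore be handled separately, and they matter in your small/intermediate transition.

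\paragraph{Gap 2: the constant $24$ is never established.}
The constant $24$ is the whole content of the statement, since Theorem~\ref{thm:order-opt} already gives $12/\alpha_{K,2}=6K$ at $L=2$. Your own estimates do not reach it. Regime (ii) gives the ratio $\frac{8M(N+M)}{N(M-1)}\le 24(1+M/N)$, which exceeds $24$. Also, with $s=\lfloor N/(4M)\rfloor$, your working bound $s(1-2sM/N)\ge s/2$ only guarantees $R_{\mathrm{conv}}\ge N/(16M)$ once the floor is accounted for.

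The missing ingredient is a sharper evaluation of the converse term:
\begin{itemize}
\item For $\frac{N}{2(K-1)}\le M\le\frac N2$, take $s=\lfloor N/(2M)\rfloor$ and $g=\lfloor N/s\rfloor\ge 2M-1$. Then $\frac{sg-1-(s-1)M}{g-1}=\frac{s(g-M)+M-1}{g-1}\ge\frac s2\ge\frac{N}{8M}$.
\item Below $\frac{N}{2(K-1)}$, take $s=K-1$, which gives $R_{\mathrm{conv}}\ge K/2$.
\end{itemize}

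\paragraph{How the paper closes it.}
The paper pairs this converse with the exact form $M(t)=NA_t+B_t$ and the bound $M(t)\le\frac{3N(t+1)}{2K}$ for $t\le h=\lfloor\frac{K-1}{2}\rfloor$. Floor selection then gives $R_{\mathrm{lift}}\le 3N/M$ on $[\frac{K-1}{K-2},N-1]$. The ratio is therefore at most $\frac{3N/M}{N/(8M)}=24$.

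\paragraph{How your route could close it.}
Your run-moment bound, once corrected at $t\in\{0,1\}$, would also close with this converse.
\begin{itemize}
\item Take $3\le M\le N/2$ and set $x=\frac{K(M-1)}{N+M-1}$, which is less than $K/3$. If $x\ge 2$, the corner $t=\lfloor x\rfloor\in[2,K-2]$ gives $R_{\mathrm{lift}}<\frac{N+M-1}{M-1}$. Moreover $\frac{8M(N+M-1)}{N(M-1)}\le24$ if and only if $M^2-(2N+1)M+3N\le0$. This holds at $M=3$ and at $M=N/2$ when $N\ge4$, hence on the whole interval by convexity.
\item For $M\le 3$, $R_{\mathrm{lift}}\le K\le N$ gives ratio at most $8M\le 24$.
\item For $x<2$, the corrected corners $t\in\{0,1\}$ give ratio below $\frac{8K}{K-2}+\frac{8K}{N}\le 24$.
\end{itemize}

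\paragraph{The flagged obstacle is not one.}
For $M\ge N/2$, simply use $R_{\mathrm{conv}}\ge1$ together with any constant bound on $R_{\mathrm{lift}}$. The paper uses $M(h)\le N-1$ with $R(h)\le 2$, and $R_{\mathrm{lift}}\le 3N/M\le 6$ on $[N/2,N-1]$. The $s=2$ term is unnecessary.
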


\begin{IEEEproof}[Proof sketch]
At $L=2$, the bounds $R_{\mathrm{lift}}(M)\le3N/M$ for $M\le N-1$ and
$R_{\mathrm{lift}}(M)\le2$ for $M\ge N-1$, together with the corresponding
terms of \eqref{eq:R-conv}, give the factor $24$.
Equation~\eqref{eq:converse-optimum-lift-sandwich} gives the same factor
relative to $R^*(M)$.
\end{IEEEproof}

\subsubsection{More users than files}

For $K>N$, we compare the lifted rate with the converse from the first
lifted memory onward.  Proposition~\ref{prop:order-opt-small-library} defines
this threshold and gives a tighter approximation factor at larger cache
budgets.  It also identifies when rate one is optimal throughout the
comparison domain.

\begin{Propo}[%
Approximation guarantee for $K>N$
]
\label{prop:order-opt-small-library}
Assume $2\le N<K$, and define
\begin{subequations}
\label{eq:small-library-thresholds}
\begin{align}
M_{\mathrm{feas}}
&\triangleq\frac{N-1}{K-L},
\\
M_{\mathrm{start}}
&\triangleq
\min\left\{
\frac{K-1}{K-L},
\frac{(N-1)(K-1)}{L}
\right\},
\\
M_{\mathrm{imp}}
&\triangleq
\frac{\max\{K-1,2(K-N)\}}{\min\{L,K-L\}}.
\end{align}
\end{subequations}
For every $M\ge M_{\mathrm{start}}$,
\begin{align}
\frac{R_{\mathrm{lift}}(M)}{R_{\mathrm{conv}}(M)}
&\le\frac{12(K/N)}{\alpha_{K,L}},
&
\frac{R_{\mathrm{lift}}(M)}{R^*(M)}
&\le\frac{12(K/N)}{\alpha_{K,L}}.
\label{eq:small-library-order-main}
\end{align}
For $M\ge M_{\mathrm{imp}}$, both right-hand sides can be replaced by
\begin{align}
\frac{12\min\{K/N,9/8\}}{\alpha_{K,L}}
\le\frac{27}{2\alpha_{K,L}}.
\label{eq:small-library-refined-order-main}
\end{align}
If
\begin{align}
\frac{(N-1)(K-1)}{L}\le\frac{K-1}{K-L},
\label{eq:small-library-unit-first-condition}
\end{align}
then $R_{\mathrm{lift}}(M)=R_{\mathrm{conv}}(M)=R^*(M)=1$ for every
$M\ge M_{\mathrm{start}}$.
\end{Propo}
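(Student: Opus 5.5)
The plan is to argue through the sandwich \eqref{eq:converse-optimum-lift-sandwich}. Every bound on $R_{\mathrm{lift}}/R_{\mathrm{conv}}$ then transfers to $R_{\mathrm{lift}}/R^*$, so it suffices to compare the lifted envelope \eqref{eq:ach-envelope-def} with $R_{\mathrm{conv}}$ of Theorem~\ref{thm:macc-converse}.

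\textbf{Step 1: the last claim.} Start with the easy statement.
\begin{itemize}
\item \emph{Converse side.} The $s=1$ term of \eqref{eq:R-conv} has $\lfloor N/1\rfloor=N$ and coefficient $\min\{0,K-L\}=0$, so it equals $(N-1)/(N-1)=1$. Hence $R_{\mathrm{conv}}(M)\ge1$ for every $M$.
\item \emph{Achievability side.} Under \eqref{eq:small-library-unit-first-condition}, $M_{\mathrm{start}}=(N-1)(K-1)/L$. Proposition~\ref{prop:rate-one-achievability} and monotonicity of the envelope give $R_{\mathrm{lift}}(M)\le1$ for $M\ge M_{\mathrm{start}}$.
\item \emph{Conclusion.} The chain $1\le R_{\mathrm{conv}}\le R^*\le R_{\mathrm{lift}}\le1$ forces all three quantities to equal one.
\end{itemize}
The same facts give $R_{\mathrm{conv}}\ge1$ throughout the comparison domain, so the envelope is finite there. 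Any regime in which $R_{\mathrm{lift}}$ is bounded by a constant is then handled trivially.

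\textbf{Step 2: upper-bound the lifted envelope.} Next I would derive a clean bound on the pure points of Theorem~\ref{thm:memoryload}.
\begin{itemize}
\item Bound $\mu_{b,L}\le \max\{K/(K-L),K/L\}=1/\alpha_{K,L}$ for every span.
\item Each incidence subset of size $r$ generates at most $r$ runs. Hence $T_{\mathrm{blk}}(K,r,L)\binom{K-1}{r-1}^{-1}$ is at most roughly $1/\alpha_{K,L}$ times the parent per-user load.
\item This gives $M(t)\le \frac{(N+1)(t+1)}{K\,\alpha_{K,L}}\cdot c$ for a small constant $c$. Together with $R(t)=K/(t+1)$ and memory sharing, it should yield an estimate of the form
\[
R_{\mathrm{lift}}(M)\le \frac{c' N}{\alpha_{K,L}M}
\]
on $[M_{\mathrm{start}},(N-1)(K-1)/L]$.
\item Beyond that range, use the bound $R_{\mathrm{lift}}\le1$ from the endpoint.
\end{itemize}
The parameter $t$ will be chosen as the largest index whose $M(t)$ fits the budget. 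The loss from rounding $t$ is absorbed into the constant, exactly as in the proof of Theorem~\ref{thm:order-opt}.

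\textbf{Step 3: lower-bound the converse with a matched $s$.} Here $s\le\lfloor N/2\rfloor$, so the converse only ``sees'' $N$ files. Choosing $s\approx N/(2M)$ (clipped to $[1,\lfloor N/2\rfloor]$) in \eqref{eq:R-conv} should give $R_{\mathrm{conv}}(M)\gtrsim N/(4M)$ in the small-memory part. It gives $R_{\mathrm{conv}}\ge1$ otherwise.

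Dividing by the bound of Step 2 leaves a ratio of order $1/\alpha_{K,L}$. However, the lifted point at $M_{\mathrm{start}}$ has rate $K$ rather than $N$. That mismatch is where the extra factor $K/N$ enters, producing $12(K/N)/\alpha_{K,L}$.

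\textbf{Step 4 (main obstacle): the refined bound for $M\ge M_{\mathrm{imp}}$.} I expect this to be the hard step. The threshold $M_{\mathrm{imp}}$ is designed so that some pure point has $t+1\ge\max\{K-1,2(K-N)\}$-scaled load within budget. Then $R(t)=K/(t+1)$ is already comparable to $N$-scale quantities, and the $K/N$ loss disappears. When $K\le 9N/8$ the old factor $K/N\le9/8$ is retained; otherwise the $2(K-N)$ term should cap the ratio at $9/8$.

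I would first attempt a direct computation:
\begin{enumerate}
\item Evaluate $R_{\mathrm{lift}}$ at the point where $M(t)\le M_{\mathrm{imp}}$.
\item Compare it with the $s=1$ and $s=2$ terms of \eqref{eq:R-conv}.
\item Tune the interval split until the constant $12\cdot9/8=27/2$ emerges.
\end{enumerate}
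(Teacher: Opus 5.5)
Your Step~1 is correct and matches the paper's argument. The proposal breaks down in Step~2. The bound $M(t)\le c(N+1)(t+1)/(K\alpha_{K,L})$ cannot hold for a fixed small constant $c$ when $K\gg N$. The masking-key family of incidence size $t+1$ contributes $T_{\mathrm{blk}}(K,t+1,L)/\binom{K-1}{t}$ per cache, and this term does not scale with $N$. Already $M(0)=(K-1)/(K-L)$, which equals $(K-1)/m_0$ when $L\ge K/2$, where $m_0=\min\{L,K-L\}$. Consequently your envelope estimate $R_{\mathrm{lift}}(M)\le c'N/(\alpha_{K,L}M)$ is false. Take $K$ even, $L=K/2$, $N=3$. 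Then $M_0=2(K-1)/K<M_\star=4(K-1)/K$, and every corner with $t\ge1$ has memory above $M_0$ (Lemma~\ref{lem:small-library-start}). Hence $R_{\mathrm{lift}}(M_0)=K$, while your bound gives $3c'K/(K-1)$. Your Step~3 remark about the rate-$K$ start point contradicts Step~2 rather than repairing it.

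What you need is the exact block count $\sum_b\Gamma(K,r,b)=\binom{K-2}{r-1}$. Counting ``at most $r$ runs'' only gives $\binom{K-1}{r-1}$, whose file term $Nt/(K-t)$ is not linear in $t+1$. The exact count, together with $\mu_{b,L}\le(K-1)/m_0$, gives $M(t)\le[(N-1)(t+1)+K-N]/m_0\le K(t+1)/m_0$. Floor selection with $a=K/m_0$ then gives $R_{\mathrm{lift}}(M)\le 2K^2/(m_0M)$ on $[M_0,M_\star]$. For $M\le N/4$, divide by $R_{\mathrm{conv}}(M)\ge s/2\ge N/(6M)$ with $s=\lfloor N/(2M)\rfloor$; for $M>N/4$, divide by $1$. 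This gives exactly $12K^2/(Nm_0)=12(K/N)/\alpha_{K,L}$, which is the actual source of both the factor $K/N$ and the constant $12$. (Separately, finiteness of the envelope for $M\ge M_{\mathrm{start}}$ comes from the $t=0$ point or the rate-one point fitting the budget, not from $R_{\mathrm{conv}}\ge1$.)

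Step~4 is missing its key idea. The refined bound uses the finer corner estimate $M(t)\le[(N-1)(t+1)+K-N]/m_0$, in which the $N$-independent part of the key cost appears as the fixed offset $K-N$. The two parts of $M_{\mathrm{imp}}$ play separate roles:
\begin{itemize}
\item $m_0M\ge2(K-N)$ gives $m_0M-(K-N)\ge m_0M/2$;
\item $m_0M\ge K-1$ gives $x(M)=[m_0M-(K-N)]/(N-1)\ge1$.
\end{itemize}
You then memory-share between the adjacent corners $t=u-1$ and $t=u$, with $u=\lfloor x(M)\rfloor$, at their padded budgets. The resulting rate $K(2u+1-x)/(u(u+1))$ is at most $\tfrac98\cdot K/x$, since $(u+\tfrac12)^2/(u(u+1))\le\tfrac98$ for $u\ge1$. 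This yields $R_{\mathrm{lift}}(M)\le 9K(N-1)/(4m_0M)$ and the ratio bound $27K(N-1)/(2Nm_0)\le27K/(2m_0)$. Floor selection alone would lose a factor $2$ instead of $9/8$ and give only $24/\alpha_{K,L}$. So the $9/8$ is an interpolation loss, not the result of a case split on $K$ versus $9N/8$; the minimum in \eqref{eq:small-library-refined-order-main} simply combines this bound with the first one.

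Finally, on $M_{\mathrm{imp}}\le M\le N/4$ you still need the floor-selected term $s=\lfloor N/(2M)\rfloor$; the $s=1,2$ terms are not enough. For $(N,K,L)=(100,101,50)$ and $M=M_{\mathrm{imp}}=2$, the envelope is about $100$. The $s\le2$ terms give at most $97/49$, a ratio of about $50>27/(2\alpha_{K,L})\approx27.3$, whereas $s=25$ gives $R_{\mathrm{conv}}(2)\ge17$.
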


\begin{IEEEproof}[Proof sketch]
The bounds on the memory required at the lifted corner points give the
approximation in \eqref{eq:small-library-order-main}.  The factor $K/N$
comes from the parent rate for $K$ users and the converse based on $N$
independent files.  Retaining the dependence on $N$ in these memory bounds
and applying memory sharing between adjacent corners gives the refined
factor in \eqref{eq:small-library-refined-order-main} for
$M\ge M_{\mathrm{imp}}$.  Under
\eqref{eq:small-library-unit-first-condition}, the rate-one point is the
earliest lifted point, and the $s=1$ converse term makes rate one exact
thereafter.
\end{IEEEproof}

\begin{Remark}[Comparison domain for $K>N$]
For $K>N$, define the interval between the necessary feasibility threshold
and the first lifted memory by
\begin{align}
\mathcal M_{\mathrm{uncov}}
\triangleq [M_{\mathrm{feas}},M_{\mathrm{start}}).
\end{align}
The lifted envelope and its ratio guarantee begin at
$M_{\mathrm{start}}$.  At $L=K-1$, the interval is empty,
and the exact boundary result applies throughout $M\ge M_{\mathrm{feas}}$.
\end{Remark}

Complete proofs of the converse and approximation guarantees appear in
Appendix~\ref{app:order-opt}.

\subsection{Boundary Characterizations}

\subsubsection{\texorpdfstring{The secure single-access boundary $L=1$}{The secure single-access boundary L=1}}

When $L=1$, each user reads one cache, and the model is secure single access.

\begin{Propo}[Known secure single-access benchmark]
Assume $L=1$.  For $K\ge3$ and $t\in[1:K-2]$, the securely achievable
single-access points of \cite{FangSmallRate} include
\begin{align}
M_{\mathrm{SA}}^{\mathrm{imp}}(t)
&=\frac{Nt}{K-t}+1-\frac{1}{\binom{K-1}{t}},
&
R_{\mathrm{SA}}^{\mathrm{imp}}(t)
&=\frac{K}{t+1}.
\label{eq:L1-improved-points}
\end{align}
The exact minimum feasible cache size is $M=1$.  At that cache size the optimum
rate is $K-1$ for $N=2$ and $K$ for $N\ge3$.  The exact rate-one cache
threshold is $(N-1)(K-1)$, consistently with
Corollary~\ref{cor:exact-unit-rate} at $L=1$.
\end{Propo}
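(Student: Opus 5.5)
The plan is to split the statement into its four claims and prove each by combining results already stated in the paper with the single-access results of \cite{FangSmallRate,PrivateCodedCaching}. The improved points \eqref{eq:L1-improved-points} are secure single-access constructions. Since $L=1$ makes $\mathcal A_k=\{k\}$, a secure single-access code is literally an SMACC code with $L=1$. Achievability of \eqref{eq:L1-improved-points} is therefore a direct citation of the corresponding construction in \cite{FangSmallRate}, and I would only check that its field and blocklength requirements fit the code-sequence definition of secure achievability. The rate-one threshold $(N-1)(K-1)$ is Corollary~\ref{cor:exact-unit-rate} evaluated at $L=1$, so nothing new is needed for it.

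For the minimum feasible cache size, Lemma~\ref{lem:feasible-memory} is too weak at $L=1$, because it gives only $(\min\{N,K\}-1)/(K-1)\le1$. I would instead prove $M\ge1$ directly.
\begin{itemize}
\item Fix any user $j$, pick another user $k\ne j$, and use a demand with $d_k=1$ and $d_j=2$; this is possible because $N,K\ge2$.
\item Secrecy \eqref{eq:security} for user $k$ gives $I(W_2;X_{\mathbf d}\mid W_1)=0$. Since the files are independent, this yields $I(W_2;X_{\mathbf d})=0$.
\item Correctness for user $j$ then gives
\[
F=I(W_2;Z_j,X_{\mathbf d})=I(W_2;Z_j\mid X_{\mathbf d})\le H(Z_j)\le M_FF.
\]
\item Since $j$ was arbitrary, every cache needs $M_F\ge1$. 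Passing to the limit gives $M\ge1$.
\end{itemize}
Achievability at $M=1$ with rate $K$ is the $t=0$ point \eqref{eq:left-endpoint} at $L=1$, namely $(K-1)/(K-1)=1$.

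The remaining claims concern the optimum rate at $M=1$: rate $K-1$ for $N=2$, and the converse that $K-1$ (for $N=2$) or $K$ (for $N\ge3$) cannot be beaten. For the achievability of $K-1$ when $N=2$, I would cite the dedicated construction of \cite{FangSmallRate}. Intuitively, with two files a user requesting $W_1$ may receive $W_1$ masked by its own key, and one transmission can be saved by exploiting the fact that any two keys' difference leaks nothing about $W_2$.

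For the converse, I would follow the tightness argument at $M=1$. The inequality chain above must be tight, so $H(Z_j\mid X_{\mathbf d})=F$ and each cache is, given the broadcast, essentially a one-time key of entropy $F$ independent of the library. From this tightness I would show that every user's requested file must appear in $X_{\mathbf d}$ as a separate key-masked component of entropy $F$. This yields $H(X_{\mathbf d})\ge KF$ when the demanded files can be made pairwise distinct. With only two files, the repeated requests allow one component to be saved, which gives $K-1$.

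This tightness-to-structure step is the main obstacle. Rather than reprove it, I would defer to the exact $M=1$ characterization in \cite{FangSmallRate}, noting only that its model coincides with ours at $L=1$ under the same fixed-length normalization.
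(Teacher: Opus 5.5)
Your plan is correct, and where it carries weight it matches the paper. The paper gives no separate argument for this proposition: it imports the improved points, the $M=1$ characterization, and the rate-one threshold from \cite{FangSmallRate}, remarking only that the threshold agrees with Corollary~\ref{cor:exact-unit-rate} at $L=1$.

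Where you are more self-contained than the paper, your steps are sound.
\begin{itemize}
\item Your direct bound $M\ge1$ is valid for all $N,K\ge2$. Secrecy for a user requesting $W_1$ gives $I(W_2;X_{\mathbf d})=0$, so a user requesting $W_2$ must draw $F$ units from its own cache. This covers the case $N<K$, where Lemma~\ref{lem:feasible-memory} gives only $(N-1)/(K-1)<1$. For $N\ge K$ that lemma already gives exactly $1$ at $L=1$, so your remark that it is too weak holds only for $N<K$.
\item The $t=0$ point \eqref{eq:left-endpoint} does give the pair $(1,K)$.
\end{itemize}

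You should discard your heuristic for the $M=1$ rates, because it is false, not merely incomplete. For $K=N=2$ the demand $(1,2)$ is pairwise distinct, yet rate one suffices. The unit-rate parent in the appendix has $\widetilde Z_1=S$, $\widetilde Z_2=W_2-W_1+S$, and $X=W_1-S$. So user $2$'s file is not a separate key-masked component of the broadcast. Moreover, the caches are jointly dependent on the library even though each is individually independent of it.

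The heuristic also fails in the other direction. For $3\le N<K$, requests must repeat, yet the optimum at $M=1$ is still $K$. Repetition is therefore not what produces the $N=2$ saving.

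Your deferral to \cite{FangSmallRate} for both the rate-$(K-1)$ construction and the matching converses is thus essential, exactly as in the paper, and not a shortcut around a routine step.
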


\subsubsection{\texorpdfstring{The largest-access boundary $L=K-1$}{The largest-access boundary L=K-1}}

For $L=K-1$, every user observes all but one physical cache.  The lifted family
contains the two points $(K-1,K)$ and $(N-1,1)$.  When $N>K$, the line through
these points is supported by the affine converse
\begin{align}
(K-1)M+(N-K)R\ge KN-2K+1.
\label{eq:lkm1-line}
\end{align}
The exact $L=K-1$ tradeoff was first established in the conference
version~\cite{FangLargestAccess}.  Here it follows from two lifted points and
matching converse bounds.

\begin{Lem}[Elementary largest-access converse]
\label{lem:lkm1-basic}
For $L=K-1$, every securely achievable pair satisfies
\begin{align}
R&\ge1,
\\
M&\ge\min\{N,K\}-1.
\end{align}
\end{Lem}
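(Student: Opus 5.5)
The two inequalities should come directly from results already stated. The memory bound is Lemma~\ref{lem:feasible-memory}, and the rate bound follows from the $s=1$ term of Theorem~\ref{thm:macc-converse} or from a short direct argument.

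\emph{Memory bound.}  At $L=K-1$ we have $K-L=1$, so Lemma~\ref{lem:feasible-memory} gives $M\ge M_{\mathrm{feas}}=\min\{N,K\}-1$ at every feasible cache size.  A securely achievable pair $(M,R)$ has $R<\infty$, so $M$ is feasible and the bound applies.  To make the plan self-contained, I would recall the mechanism.  Let $r=\min\{N,K\}$ and choose a demand in which users $1,\dots,r$ request distinct files.  Secrecy for user $1$ gives
\[
I\bigl(W_{\text{others}};Z_{\mathcal A_1},X_{\mathbf d}\mid W_{d_1}\bigr)=0 .
\]
Correctness of users $2,\dots,r$ from $Z_{[K]}$ and $X_{\mathbf d}$ then gives
\[
(r-1)F\le H\bigl(Z_{\mathcal A_1^{\rm c}}\bigr)\le (K-L)M_FF .
\]
With $K-L=1$ this is $(r-1)F\le M_FF$.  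Passing to the limsup along the code sequence yields the claim.

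\emph{Rate bound.}  Take the $s=1$ term of \eqref{eq:R-conv}.  This term is allowed because $N\ge2$ gives $\lfloor N/2\rfloor\ge1$.  With $\lfloor N/1\rfloor=N$ and $\min\{0,K-L\}=0$, the term equals
\[
\frac{N-1}{N-1}=1 .
\]
Hence Theorem~\ref{thm:macc-converse} gives $R^*(M)\ge1$ at every feasible $M$, so every achievable pair has $R\ge1$.

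Alternatively, the rate bound can be proved directly.  Fix $\mathbf d$ with $d_1\ne d_2$; this is possible since $N,K\ge2$.  Choose users $u,v$ whose windows together cover all caches; for $L=K-1$, users $1$ and $2$ suffice.  User $v$ decodes $W_{d_v}$ from $Z_{\mathcal A_v}$ and $X$, while $Z_{\mathcal A_u}$ and $X$ reveal nothing about $W_{d_v}$.  Secrecy for user $u$ gives
\[
F=H(W_{d_v})\le I\bigl(W_{d_v};Z_{\mathcal A_u^{\rm c}}\mid Z_{\mathcal A_u},X\bigr)+0 .
\]
This bounds only memory, not rate, so I would instead use the $s=1$ argument of Theorem~\ref{thm:macc-converse}.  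There, the broadcasts across $N$ demand vectors must convey the files cyclically, which is what yields rate at least one.

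The only delicate point is the normalization along code sequences.  The limsup of $M_{F_j}$ and $R_{F_j}$ must satisfy the bounds whenever each finite code does, which is immediate.  Since both bounds are instances of stated results, I expect no real obstacle beyond checking that the $s=1$ index is admissible and that $K-L=1$.
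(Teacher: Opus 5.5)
Your proposal is correct. For the memory bound you match the paper in substance. Its proof is the $K-L=1$ case of the argument behind Lemma~\ref{lem:feasible-memory}, with the single cache omitted by user~1 in place of $\mathcal C_r$, which gives $DF\le H(W_1)+H(Z_1)\le(1+M_F)F$.

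For the rate bound you take a different route. You read $R\ge1$ off the $s=1$ term of Theorem~\ref{thm:macc-converse}. That proof uses $N$ demand vectors in which user~1 requests each file in turn, so the remaining $N-1$ broadcasts must jointly carry $N-1$ files hidden from $(Z_{\mathcal A_1},X_1)$. The paper needs only two demand vectors. Since placement does not depend on the demand, secrecy for user $u$ under a demand in which $u$ does not request $W_n$ gives $I(W_n;Z_{\mathcal A_u})=0$. Correctness under a demand with $d_u=n$ then gives $F=H(W_n\mid Z_{\mathcal A_u})=I(W_n;X_{\mathbf d}\mid Z_{\mathcal A_u})\le H(X_{\mathbf d})$.

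This is the idea your abandoned direct attempt was missing. You applied one user's secrecy to another user's file, which only constrains the caches outside the window. The paper instead applies the same user's secrecy under an alternative demand, which isolates the broadcast.

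Citing the general converse is economical and not circular. The paper's argument is self-contained and gives the slightly stronger per-demand bound $H(X_{\mathbf d})\ge F$. Neither argument uses $L=K-1$ for the rate part.
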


\begin{IEEEproof}[Proof sketch]
Security under an alternative demand makes a user's accessible-cache tuple
independent of a file it did not request.  Correctness under a demand for that
file then forces the broadcast to contain at least one file unit.  For the
memory bound, let $D=\min\{N,K\}$ users request $D$ distinct files.
The full cache collection and broadcast decode all $D$ files.  One user's
accessible caches together with the broadcast are independent of the other
$D-1$ files.
Consequently, the omitted physical cache must carry at least $D-1$ file units.
\end{IEEEproof}

\begin{Propo}[Affine largest-access converse]
\label{prop:lkm1-line}
If $L=K-1$ and $N\ge K+1$, every securely achievable pair satisfies
\eqref{eq:lkm1-line}.
\end{Propo}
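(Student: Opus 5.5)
The plan is an entropy-chain argument with a \emph{single} fixed access window. Fix a length-$F$ code, and let $\mathcal A$ be user $0$'s window, so $\mathcal A$ omits exactly one cache $c$. The first two terms of the chain are
\begin{align*}
(K-1)MF&\ge H(Z_{\mathcal A}),
&
(N-K)RF&\ge\sum_{j=1}^{N-K}H(X_{\mathbf d^{(j)}}),
\end{align*}
for $N-K$ demand vectors $\mathbf d^{(j)}$ still to be chosen. I will then lower-bound $H(Z_{\mathcal A},X_{\mathbf d^{(1)}},\ldots,X_{\mathbf d^{(N-K)}})$ by $(KN-2K+1)F$ minus terms that vanish by secrecy. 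A sanity check fixes the bookkeeping: at the two lifted points $(K-1,K)$ and $(N-1,1)$ the inequality~\eqref{eq:lkm1-line} is tight. So every step must be lossless at the parent-scheme points, which tells me which entropies to peel off.

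Step 1 (the cache part, extending Lemma~\ref{lem:lkm1-basic}). Take a base demand $\mathbf d^{(0)}$ in which the $K$ users request distinct files $1,\ldots,K$, with user $0$ requesting file $1$. Lemma~\ref{lem:lkm1-basic}'s argument gives
\[
H(Z_c\mid Z_{\mathcal A},X_{\mathbf d^{(0)}},W_1)\ge(K-1)F .
\]
By secrecy, $I(W_{[2:K]};Z_{\mathcal A}\mid W_1)=0$, so $H(Z_{\mathcal A})$ is not charged for files $2,\ldots,K$. I will instead rotate the roles cyclically: the user whose window omits cache $i$ plays the role of user~$0$ for each $i$, and I average. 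This yields a symmetric bound of the form
\[
H(Z_{\mathcal A}\mid W_{\mathcal S})\ge\bigl(K-1-|\mathcal S|\bigr)F+(\text{conditional terms})
\]
for suitable file sets $\mathcal S$ of size at most $K-1$, by a Han/Tian-style symmetrization over the $K$ cyclic shifts.

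Step 2 (the broadcast part). Choose $\mathbf d^{(j)}$, $j\in[N-K]$, so that user $0$ requests the fresh file $K+j$ while the remaining users request files in $[K]$. Since user $0$ decodes $W_{K+j}$ from $(Z_{\mathcal A},X_{\mathbf d^{(j)}})$, and by secrecy under $\mathbf d^{(0)}$ the window $Z_{\mathcal A}$ carries no information about $W_{K+j}$, the chain rule gives
\[
H(X_{\mathbf d^{(j)}}\mid Z_{\mathcal A},X_{\mathbf d^{(<j)}},W_{[K]},W_{K+1},\ldots,W_{K+j-1})\ge F
\]
after conditioning on previously decoded files. Adding these $N-K$ terms to Step~1 and using $N-K\ge1$ so that the files $1,\ldots,K$ are recovered exactly once produces $(K-1)M+(N-K)R\ge(K-1)(K-1)+(N-K)K$. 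This right-hand side equals $KN-2K+1$. Finally I pass to the limit along the code sequence.

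The main obstacle is the accounting in Step~1. Naively, $H(Z_{\mathcal A})$ only yields about $\min\{N,K\}-1$ file units, whereas the inequality needs the caches to be charged $(K-1)$ per unit of $M$ \emph{and} the broadcasts $K$ per unit of $R$ at the point $(K-1,K)$. The intended fix is to pay the missing $K-1$ units from $X_{\mathbf d^{(j)}}$ via the other users. Each $X_{\mathbf d^{(j)}}$ must also let the $K-1$ users whose windows include $c$ decode their files, and that amount is bounded using secrecy relative to user~$0$. My first attempt will be to choose the other users' demands in $\mathbf d^{(j)}$ as cyclic shifts of $\mathbf d^{(0)}$. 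I would then check that the resulting conditional mutual-information terms telescope to zero; if they do not, I would average over all $K$ rotations of $\mathcal A$ before summing.
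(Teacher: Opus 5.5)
Your plan breaks at its central target. The bound $H(Z_{\mathcal A},X_{\mathbf d^{(1)}},\ldots,X_{\mathbf d^{(N-K)}})\ge(KN-2K+1)F$ is false for valid codes, so no choice of demands or cyclic averaging can establish it. Apply your own tightness check at the corner $(K-1,K)$.

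For $K=2$ and $N=6$, let the caches be independent uniform keys $Z_1=S_1$, $Z_2=S_2$, and broadcast $(W_{d_1}+S_1,W_{d_2}+S_2)$. This secure code has $(M,R)=(1,2)$ and meets the affine bound with equality, since $1\cdot1+4\cdot2=9$. Yet the entire system has entropy only $8F$.

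More generally, the lifted $t=0$ code is generated by the library, the $K$ parent keys, and the dual-encoder seeds, of total entropy $(N+K(K-1))F$. The difference $(KN-2K+1)-(N+K(K-1))=(K-1)(N-K-2)-1$ is positive whenever $(K-1)(N-K-2)\ge2$.

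The step that is lossy at this corner is the passage from $\sum_jH(X_{\mathbf d^{(j)}})$ to a joint entropy. Each broadcast there has entropy $KF$. But the components serving the other users involve only the files in $[K]$ and the same $K$ masking keys in every demand vector. So the ``missing $K-1$ units per broadcast'' cannot be collected jointly.

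Two further steps also fail as written.
\begin{itemize}
\item Step~2 conditions on earlier broadcasts. Secrecy in this model holds only within one delivery round, so in a valid code $(Z_{\mathcal A},X_{\mathbf d^{(<j)}})$ may jointly reveal $W_{K+j}$.
\item The bookkeeping does not close. Step~2 yields only $(N-K)F$, and Step~1 bounds the omitted cache $Z_c$, which never enters your chain.
\end{itemize}

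The missing idea is a way to charge a \emph{single} broadcast $N-K$ times against the unrequested files, without forming joint entropies of several broadcasts. The paper does this as follows.
\begin{itemize}
\item It symmetrizes the code over file labels and fixes one distinct demand $\mathbf d=(1,\ldots,K)$ with broadcast $X$.
\item It weights the single cache $Z_1$ omitted by user~1 by $K-1$, rather than the window.
\item Secrecy for user~1 gives $H(W_{K+1:N},Z_{[K]},X)+H(Z_1)\ge(N-1)F+H(Z_{[K]},X)$.
\item File symmetry and submodularity make $\mathcal E(a)=H(W_{1:a},Z_{[K]})$ discretely concave. Hence the average increment over files $2,\ldots,K$ dominates the average over files $K+1,\ldots,N$.
\item Together with correctness, this gives the transfer bound $(N-K)H(X\mid W_1,Z_{\mathcal B_1})\ge(K-1)H(W_{K+1:N}\mid Z_{[K]},X)$.
\end{itemize}
Rewrite the transfer bound in joint-entropy form using $I(W_1;Z_{\mathcal B_1})=0$ and user~1's decodability. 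Add it to $K-1$ copies of the secrecy inequality and $N-K$ copies of $H(Z_{\mathcal B_1})+H(X)\ge H(Z_{\mathcal B_1},X)$. Every joint term cancels, leaving $(K-1)H(Z_1)+(N-K)H(X)\ge(KN-2K+1)F$.
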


\begin{IEEEproof}[Proof sketch]
Symmetrize over file indices and fix a demand for $K$ distinct files.
For $a\in[1:N]$, define the cumulative entropy
$\mathcal E(a)=H(W_{1:a},Z_{[K]})$.  Submodularity and file symmetry make this
sequence discretely concave.  Its average increment over the requested
files after the first is therefore at least its average increment over the
unrequested files.
Collective decoding and conditioning bound the broadcast entropy and the
residual uncertainty in the unrequested files by these increments.  Adding
the secrecy bound for the cache omitted by one user gives \eqref{eq:lkm1-line}.
\end{IEEEproof}

\begin{Theo}[Exact largest-access tradeoff
{\cite[Theorem~1]{FangLargestAccess}}]
\label{thm:lkm1-boundary}
Fix $L=K-1$.
\begin{enumerate}[label=(\roman*),leftmargin=2.1em]
\item If $N\ge K+1$, no securely achievable pair exists for $M<K-1$, and
\begin{align}
R^*_{K,K-1,N}(M)
=
\begin{cases}
\displaystyle
\frac{KN-2K+1-(K-1)M}{N-K},
& K-1\le M\le N-1,\\[2ex]
1, & M\ge N-1.
\end{cases}
\end{align}
\item If $K\ge N$, no securely achievable pair exists for $M<N-1$, and
\begin{align}
R^*_{K,K-1,N}(M)=1,
\qquad M\ge N-1.
\end{align}
\end{enumerate}
\end{Theo}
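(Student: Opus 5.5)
The plan is to assemble the statement from the achievable points already available at $L=K-1$ and the converses in Lemma~\ref{lem:lkm1-basic} and Proposition~\ref{prop:lkm1-line}, then check that the lower convex envelope of the achievable points matches the converse region exactly.

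\emph{Achievability.} At $L=K-1$, the left endpoint \eqref{eq:left-endpoint} of Theorem~\ref{thm:memoryload} is $(M,R)=((K-1)/(K-L),K)=(K-1,K)$. Proposition~\ref{prop:rate-one-achievability} gives $((N-1)(K-1)/L,1)=(N-1,1)$. For $N\ge K+1$, memory sharing between these two points gives the segment
\[
R=K-\frac{(K-1)(M-K+1)}{N-K},\qquad K-1\le M\le N-1 .
\]
A short rearrangement shows that this equals $(KN-2K+1-(K-1)M)/(N-K)$. At $M=K-1$ the numerator is $KN-2K+1-(K-1)^2=K(N-K)$, so $R=K$, as required. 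For $M\ge N-1$, we reuse the rate-one point with extra unused memory. For $K\ge N$, the rate-one point $(N-1,1)$ alone suffices for all $M\ge N-1$.

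\emph{Converse.} Lemma~\ref{lem:lkm1-basic} gives $R\ge1$ everywhere and $M\ge\min\{N,K\}-1$. This is infeasibility below $K-1$ when $N\ge K+1$, and below $N-1$ when $K\ge N$. It settles case (ii) completely and the region $M\ge N-1$ of case (i). For $K-1\le M\le N-1$ with $N\ge K+1$, Proposition~\ref{prop:lkm1-line} gives $(N-K)R\ge KN-2K+1-(K-1)M$, which is exactly the stated formula. Both converses hold for every securely achievable pair, and hence for the infimum $R^*$, because they are closed affine inequalities that pass to limits along code sequences.

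\emph{Main obstacle.} The real work lies in Proposition~\ref{prop:lkm1-line}, since the two-point achievability and the elementary bounds are routine. If that proposition's proof must be supplied here, I would follow its sketch. First, symmetrize the code over file permutations so that entropies depend only on index-set sizes. Next, fix a demand with $K$ distinct files and use submodularity to show that $a\mapsto H(W_{1:a},Z_{[K]})$ is concave. Then bound $H(X)$ by the increments over the requested files, and bound the cache omitted by a user through secrecy, as in Lemma~\ref{lem:lkm1-basic}. Finally, combine the two bounds with the weights $K-1$ on $M$ and $N-K$ on $R$. The most delicate step is getting the increment averaging to produce precisely the constant $KN-2K+1$.
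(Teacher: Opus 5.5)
Your proposal is correct and follows the paper's proof essentially step by step. You use the two lifted corners $(K-1,K)$ and $(N-1,1)$ with memory sharing for achievability, Lemma~\ref{lem:lkm1-basic} for $R\ge1$ and $M\ge\min\{N,K\}-1$, and Proposition~\ref{prop:lkm1-line} for the matching line when $N\ge K+1$; your outline of that affine converse (file symmetrization, concavity of $a\mapsto H(W_{1:a},Z_{[K]})$, and a secrecy bound on the omitted cache) also matches the appendix argument.
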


\begin{IEEEproof}[Proof sketch]
Theorem~\ref{thm:memoryload} at $t=0$ gives the achievable point $(K-1,K)$,
and \eqref{eq:endpoint} gives $(N-1,1)$.
Lemma~\ref{lem:lkm1-basic} gives the
elementary lower bounds.  If $N\ge K+1$,
Proposition~\ref{prop:lkm1-line} rearranges to exactly the memory-sharing line
between the two achievable points.  If $K\ge N$, the elementary memory bound
and the unit-rate point already match at $M=N-1$.
\end{IEEEproof}

Because $K=2$ forces $L=1=K-1$, the theorem also gives the exact tradeoff for
two-user secure single access.

\begin{Cor}[Exact two-user secure single-access tradeoff]
\label{cor:k2-single-access}
For $K=2$ and $L=1$, no securely achievable pair exists for $M<1$.  If
$N\ge3$, then
\begin{align}
R^*_{2,1,N}(M)
=
\begin{cases}
\dfrac{2N-3-M}{N-2}, & 1\le M\le N-1,\\[1.5ex]
1, & M\ge N-1.
\end{cases}
\end{align}
If $N=2$, then
\begin{align}
R^*_{2,1,2}(M)=1,
\qquad M\ge1.
\end{align}
\end{Cor}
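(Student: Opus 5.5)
The corollary is the specialization $K=2$, $L=1$ of Theorem~\ref{thm:lkm1-boundary}, since $L=K-1=1$. The plan is to substitute these values and simplify, checking that nothing in the theorem's hypotheses fails at $K=2$. With $L=1$, each user's window $\mathcal A_k$ is the single cache $k$. The secure multi-access model therefore coincides with two-user secure single access, and $R^*_{2,1,N}$ is exactly the two-user secure single-access optimum.

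\emph{Case $N\ge3$.} Here $N\ge K+1$, so part~(i) of Theorem~\ref{thm:lkm1-boundary} applies. It first gives infeasibility for $M<K-1=1$. Substituting $K=2$ into the line \eqref{eq:lkm1-line} gives
\begin{align*}
M+(N-2)R\ge 2N-3,
\qquad\text{so}\qquad
R\ge\frac{2N-3-M}{N-2}.
\end{align*}
This bound holds on $1\le M\le N-1$, and $R^*=1$ for $M\ge N-1$. At the breakpoints the line passes through $(1,2)$ and $(N-1,1)$, which are the $t=0$ point $(K-1)/(K-L)=1$ with rate $K=2$ and the rate-one endpoint $(N-1)(K-1)/L=N-1$.

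\emph{Case $N=2$.} Here $K\ge N$, so part~(ii) applies. It gives infeasibility for $M<N-1=1$ and $R^*_{2,1,2}(M)=1$ for $M\ge1$.

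The only step needing care is confirming that every ingredient cited in the theorem's proof remains valid at $K=2$. These ingredients are Theorem~\ref{thm:memoryload} with $t\in\{0\}=\{0,\ldots,K-2\}$, Proposition~\ref{prop:rate-one-achievability}, Lemma~\ref{lem:lkm1-basic}, and Proposition~\ref{prop:lkm1-line}. Each is stated for $K\ge2$ and $1\le L<K$, so no degenerate case is excluded. In particular, the $t=0$ block has span $b=1=L$ and uses the direct encoder with cost $\mu_{1,1}=1$, which is consistent with the claimed point $(1,2)$.
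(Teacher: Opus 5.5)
Your proposal is correct and matches the paper's proof, which likewise obtains the corollary by specializing Theorem~\ref{thm:lkm1-boundary} to $K=2$ (so $L=1=K-1$). It uses part~(i), with the line $M+(N-2)R\ge 2N-3$, for $N\ge3$, and part~(ii) for $N=2$. Your additional check that the corners $(1,2)$ and $(N-1,1)$ and the cited converse ingredients remain valid at $K=2$ is a harmless step that the paper leaves implicit.
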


Appendix~\ref{app:largest-access} gives the complete largest-access converse
and tradeoff proofs, including the two-user case.

\section{Conclusion}

The proposed lifting re-encodes the cache blocks of a secure single-access
parent with a cyclic-block representation.  Each \(L\)-cache window recovers
its target formal positions and reveals no information about the complementary
positions conditioned on those targets.  Each user can therefore run the
parent decoder with the original multicast message, preserving its delivery
rate and file secrecy.  Deterministic Pascal-window
encoding realizes blocks with appearance spans \(b\ge L\).  A complement-dual
affine-coset encoder with placement-time randomness realizes those with \(b<L\).

For the subset-indexed parent and its supplied block list, the two encoders
minimize the largest
per-cache fixed-length allocation among parent-preserving separate-segment
transformations valid for every input distribution.  Their costs give the
achievable memory-rate envelope.  The unrestricted converse bounds give
approximation guarantees for every feasible memory when \(N\ge K\), and for
\(M\ge M_{\mathrm{start}}\) when \(K>N\).  At \(L=K-1\), the achievable
envelope and the sharp converse give the exact tradeoff, including the
two-user secure single-access case.

\bibliographystyle{IEEEtran}
\bibliography{references}

\clearpage
\appendices

\section{Parent Representation, Scalarization, and Secure Inheritance}
\label{app:inheritance}

Each formal occurrence refers to a named parent random variable.  The lift
encodes proper blocks separately, stores common symbols in an MDS pool, and
concatenates the resulting segments in a fixed order within each physical cache.
Throughout, the public demand vector is fixed in the conditioning.  The entire
vector of local seeds is sampled during placement, independently of the file
library, parent randomness, and demand vector.  Each scalar group of each
randomized block instance has its own seed vector.  These seed vectors are
mutually independent.

\subsection{Common scalarization and the common-symbol pool}

\begin{IEEEproof}[Proof of Lemma~\ref{lem:common-scalarization} and Remark~\ref{rem:fixed-field-scalarization}]
Put \(\chi=\operatorname{lcm}(L,K-L)\).  Suppose the parent is supplied over
\(\mathbb F_Q\), where \(Q=p^e\), \(p>K\), and \(\chi\mid e\).
Fix a basis \((\beta_1,\ldots,\beta_e)\) of \(\mathbb F_Q\) over
\(\mathbb F_p\), and write
\begin{align}
y=\sum_{a=1}^{e}y_a\beta_a,
\qquad
\varphi(y)=(y_1,\ldots,y_e)\in\mathbb F_p^e.
\end{align}
The map \(\varphi\) is a deterministic bijection.  Applied coordinatewise to
a named parent symbol \(Y\in\mathbb F_Q^{\ell_Y}\), it gives
\(\overline Y\in\mathbb F_p^{d_Y}\), where \(d_Y=e\ell_Y\).
Both \(L\) and \(K-L\) divide \(e\), so they divide \(d_Y\) for every
positive integer \(\ell_Y\).
Every formal reference to \(Y\) uses this same vector \(\overline Y\).

Partition its coordinates into groups of length \(L\) for the direct encoder,
or length \(K-L\) for the dual encoder.  The respective numbers of groups
are exactly \(d_Y/L\) and \(d_Y/(K-L)\).
The local decoders recover the target coordinates.  Applying
\(\varphi^{-1}\) coordinatewise then recovers the parent symbols.
Condition on all parent symbols.  The local seeds remain independent, so
the outputs of different scalar groups are conditionally independent.
Within each group, the distribution of the accessible output depends only
on the target coordinates.  Basis expansion is a bijection and preserves
conditional mutual information.  Conditional secrecy therefore holds for
every joint law of the parent variables.

For a proper block with \(\ell\) parent-field coordinates at each formal
position, the direct encoder stores \(b\) prime-field coordinates per group
in each cache.  The dual encoder stores \(K-b\) coordinates per group.
Concatenating the groups gives the following exact base-\(Q\) cache lengths:
\begin{align}
\log_Q|\mathcal C_i|
&=
\begin{cases}
\displaystyle
\log_Q p^{e\ell b/L}=\ell b/L,&b\ge L,\\[1ex]
\displaystyle
\log_Q p^{e\ell(K-b)/(K-L)}
=\ell(K-b)/(K-L),&b<L.
\end{cases}
\end{align}
The exponents are integers because \(L\mid e\) and \(K-L\mid e\).
These equalities hold for every original length \(\ell\).
The parent files, named symbols, and delivery message retain their original
\(\mathbb F_Q\) values and lengths.

Every user must recover each common symbol from its \(L\) accessible caches.
For a common symbol \(Y\), split its expansion into rows
\(\overline Y_1,\ldots,\overline Y_L\in\mathbb F_p^{d_Y/L}\).
Choose \(K\) distinct elements of \(\mathbb F_p\) and form the corresponding
\(K\times L\) Vandermonde generator \(G\).  Every \(L\) rows are
nonsingular because their determinant is a product of nonzero differences.
Cache \(i\) stores
\begin{align}
C_i^{\rm com}(Y)=\sum_{j=1}^{L}G_{i,j}\overline Y_j.
\end{align}
Any \(L\) shares recover \(\overline Y\), and hence \(Y\).
Each share has prescribed alphabet \(\mathbb F_p^{d_Y/L}\), whose
base-\(Q\) length is \(\ell_Y/L\).
Concatenating these shares in a fixed order gives the common layer, a
deterministic function of the common parent symbols.

For rational memory sharing over a common parent field, write the finite
weights as \(\lambda_s=a_s/A\), with nonnegative integers \(a_s\) summing
to \(A\), and let \(g_s\) be a valid file length of component \(s\).
Independent copies of the component code realize every positive integer
multiple of its file length.  Choose the total file length
\begin{align}
F=A\operatorname{lcm}_s g_s
\end{align}
and assign \(F_s=\lambda_sF=a_s\operatorname{lcm}_s g_s\) coordinates to
component \(s\), a multiple of \(g_s\).
Each component decodes its own file part.  Independence of the file parts and
component seeds makes the joint conditional information leakage the sum of
the component leakages, each zero.  The concatenated alphabets satisfy
\begin{align}
M_F\le\sum_s\lambda_sM_s,
\qquad
R_F\le\sum_s\lambda_sR_s.
\end{align}
Basis expansion preserves these lengths in parent-field units.

For a parent supplied over a fixed field \(\mathbb F_q\) of characteristic
greater than \(K\), use independent copies to make the length of every named
symbol divisible by \(\chi\).  Let \(\mathcal Y\) be its finite set of named
symbols of positive length, and let \(\ell_Y\) now denote the number of
\(\mathbb F_q\)-coordinates in \(Y\).
Set
\begin{align}
r_Y=\frac{\chi}{\gcd(\chi,\ell_Y)},
\qquad
r=\operatorname{lcm}_{Y\in\mathcal Y}r_Y,
\end{align}
with the least common multiple of an empty family equal to one.
Then \(\widehat d_Y=r\ell_Y\) is divisible by \(\chi\).
Use \(r\) independent parent layers and assign
\((Y^{[1]},\ldots,Y^{[r]})\) to every occurrence of the same named symbol.
The \(\widehat d_Y/L\) direct groups and
\(\widehat d_Y/(K-L)\) dual groups operate over \(\mathbb F_q\), and the
common-symbol MDS layer uses a \(K\times L\) generator over the same field.

Let \(O_u^{[a]}=(\widetilde Z_u^{[a]},X_{\mathbf d}^{[a]})\) denote user
\(u\)'s parent observation in layer \(a\).  Independence of the parent layers
and their seeds gives
\begin{align}
&I\!\left(
  \bigl(W_{[N]\setminus\{d_u\}}^{[a]}\bigr)_{a=1}^{r};
  \bigl(O_u^{[a]}\bigr)_{a=1}^{r}
  \middle|
  \bigl(W_{d_u}^{[a]}\bigr)_{a=1}^{r},\mathbf d
\right)\notag\\
&\quad=
\sum_{a=1}^{r}
 I\!\left(W_{[N]\setminus\{d_u\}}^{[a]};O_u^{[a]}
 \middle|W_{d_u}^{[a]},\mathbf d\right)=0.
\end{align}
Each layer uses its own parent decoder.  Product alphabets multiply every
file, cache, and broadcast length by \(r\), preserving normalized memory
and delivery rate for this fixed-field realization.
\end{IEEEproof}

\subsection{Local-to-global inheritance}

\begin{IEEEproof}[Proof of Proposition~\ref{prop:inherit}]
Let \(S_{\rm par}\) collect all parent placement and delivery seeds, including
those used by stochastic parent encoders, and put
\(\Pi=(W_{[N]},S_{\rm par})\).  For each demand vector, all parent caches and
\(X_{\mathbf d}\) are deterministic functions of \(\Pi\).
For the fixed-field realization, coordinate expansion and its inverse are
the identity maps on each repeated parent layer.

Fix user \(u\).  Let \(\mathcal J\) index all proper-block scalar-group
instances.  For \(j\in\mathcal J\), let \(b_j\) be the appearance span of its
block and \(V^{(j)}\) its full formal source vector.  Define
\begin{align*}
T_{u,j}\triangleq V^{(j)}_{\mathcal T_u^{(b_j)}},
\qquad
B_{u,j}\triangleq C_{\mathcal A_u}^{(j)}.
\end{align*}
Let \(G_{\rm com}\) be the common-symbol pool and
\(D_u=C_{\mathcal A_u}^{\rm com}\) the corresponding accessible MDS shares.
Collect all target groups and the common symbols in the fixed-order tuple
\begin{align}
\Theta_u\triangleq\bigl((T_{u,j})_{j\in\mathcal J},G_{\rm com}\bigr).
\end{align}
The admissible representation supplies deterministic maps \(\delta_u\) and
\(\epsilon_u\) satisfying
\begin{align}
\widetilde Z_u=\delta_u(\Theta_u),
\qquad
\Theta_u=\epsilon_u(\widetilde Z_u).
\label{eq:supp-inherit-reference-equivalence}
\end{align}
The map \(\delta_u\) regroups the decoded coordinates, applies the inverse basis
map, and removes duplicate named references.  The map \(\epsilon_u\)
expands each parent component in the fixed basis, recopies it at every
formal occurrence, and partitions its coordinates into the prescribed groups.

The physical cache stores the ordered tuple
\begin{align}
Z_i^{\rm lift}=\left(\bigl(C_i^{(j)}\bigr)_{j\in\mathcal J},C_i^{\rm com}\right).
\label{eq:supp-inherit-aggregate-cache}
\end{align}
Local reconstruction gives \(H(T_{u,j}\mid B_{u,j})=0\), and the MDS layer
gives \(H(G_{\rm com}\mid D_u)=0\).  The user therefore recovers
\(\Theta_u\) from \(Z_{\mathcal A_u}^{\rm lift}\) and applies \(\delta_u\), so
\begin{align}
H(\widetilde Z_u\mid Z_{\mathcal A_u}^{\rm lift})=0.
\label{eq:supp-inherit-effective-recovery}
\end{align}

Local validity for every joint input law means that the accessible-output law
depends only on the target.  Thus there is a stochastic kernel \(Q_{u,j}\) such that
\begin{align}
\Pr\{B_{u,j}=b\mid V^{(j)}=v\}
=Q_{u,j}\bigl(b\mid v_{\mathcal T_u^{(b_j)}}\bigr).
\end{align}
To obtain this kernel, take any two formal input vectors with the same target
and assign positive probability to each.  Zero conditional mutual information
forces their accessible-output laws to coincide.  Since the two vectors were
arbitrary, the kernel depends only on the target, including when formal
positions are correlated or refer to the same parent variable.

Each \(V^{(j)}\) is a deterministic function of \(\Pi\).  Given
\(\Pi=\pi\), all formal sources are fixed.  The local seeds retain their joint
product law because their entire vector is independent of parent randomness,
the file library, and the demand vector.  For every accessible-output
tuple \((c_j)_{j\in\mathcal J}\),
\begin{align}
&\Pr\{(B_{u,j})_{j\in\mathcal J}=(c_j)_{j\in\mathcal J}\mid\Pi=\pi\}
\notag\\
&\hspace{4em}=
\prod_{j\in\mathcal J}
Q_{u,j}\bigl(c_j\mid T_{u,j}(\pi)\bigr).
\end{align}
The common shares satisfy \(D_u=m_u(G_{\rm com})\).  Together with the preceding
product formula, this shows that the conditional distribution of the full view
\(Z_{\mathcal A_u}^{\rm lift}\equiv((B_{u,j})_{j\in\mathcal J},D_u)\) given \(\Pi\)
is determined by \(\Theta_u\).
Equation~\eqref{eq:supp-inherit-reference-equivalence} expresses \(\Theta_u\)
as a deterministic function of \(\widetilde Z_u\), so this distribution is
determined by \(\widetilde Z_u\).
Equivalently,
\begin{align}
I(\Pi;Z_{\mathcal A_u}^{\rm lift}\mid\widetilde Z_u)=0.
\end{align}
Since \(X_{\mathbf d}\) is determined by \(\Pi\), conditioning also on the
broadcast and requested file preserves this independence:
\begin{align}
I\!\left(W_{[N]\setminus\{d_u\}};Z_{\mathcal A_u}^{\rm lift}
\middle|\widetilde Z_u,X_{\mathbf d},W_{d_u},\mathbf d\right)=0.
\label{eq:supp-inherit-view-zero}
\end{align}

The user first recovers its parent cache by
\eqref{eq:supp-inherit-effective-recovery}, then applies the parent decoder:
\begin{align}
H(W_{d_u}\mid Z_{\mathcal A_u}^{\rm lift},X_{\mathbf d},\mathbf d)=0.
\end{align}
For secrecy, insert the recoverable cache \(\widetilde Z_u\) into the
observation and apply the chain rule:
\begin{align}
&I\!\left(W_{[N]\setminus\{d_u\}};
Z_{\mathcal A_u}^{\rm lift},X_{\mathbf d}\middle|W_{d_u},\mathbf d\right)
\notag\\
&=I\!\left(W_{[N]\setminus\{d_u\}};
\widetilde Z_u,Z_{\mathcal A_u}^{\rm lift},X_{\mathbf d}
\middle|W_{d_u},\mathbf d\right)
\notag\\
&=I\!\left(W_{[N]\setminus\{d_u\}};
\widetilde Z_u,X_{\mathbf d}\middle|W_{d_u},\mathbf d\right)
\notag\\
&\quad+I\!\left(W_{[N]\setminus\{d_u\}};Z_{\mathcal A_u}^{\rm lift}
\middle|\widetilde Z_u,X_{\mathbf d},W_{d_u},\mathbf d\right)=0.
\end{align}
The two terms vanish by parent security and
\eqref{eq:supp-inherit-view-zero}, respectively.
The transmitted random variable is exactly the parent's \(X_{\mathbf d}\).
Basis expansion preserves the original file and broadcast lengths.  The
fixed-field realization multiplies both lengths by its finite repetition factor.
The delivery rate is unchanged in both realizations.
\end{IEEEproof}

\section{Complete Proofs for the Direct and Dual Local Encoders}
\label{app:local-encoders}

User, cache, and formal-position indices in this section belong to
\(\mathbb Z_K\) and are reduced modulo \(K\).  Scalar-coordinate indices,
polynomial exponents, and antidiagonal indices use ordinary integers.
Write \(q=p^\nu\), with \(\nu\) a positive integer and \(p>K\) the prime
characteristic of the local coding field.
For a block matrix, \(M(\mathcal R,\mathcal S)\) denotes the indicated block
rows and columns in their displayed cyclic orders.

\subsection{Sparse roots and Pascal windows}

The direct decoder reduces to inverting consecutive windows of a Pascal
matrix.  We establish their invertibility using a bound on root multiplicity
for sparse polynomials.

\begin{Lem}[Multiplicity of a nonzero root of a sparse polynomial]
\label{lem:supp-sparse-root}
Let \(f\) be a polynomial with \(J\) nonzero terms:
\begin{align}
f(z)=\sum_{j=0}^{J-1}c_jz^{e_j},
\qquad
c_j\ne0,
\qquad
0\le e_0<\cdots<e_{J-1}.
\end{align}
If the field characteristic is larger than \(e_{J-1}\), every nonzero root
of \(f\) has multiplicity at most \(J-1\).
\end{Lem}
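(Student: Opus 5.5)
We argue by induction on the number of terms \(J\), using formal differentiation in the form of the operator \(z\frac{d}{dz}\). The base case \(J=1\) is immediate: \(f(z)=c_0z^{e_0}\) with \(c_0\neq0\) has no nonzero root at all, so every nonzero root has multiplicity \(0=J-1\).

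For the inductive step, assume the claim for polynomials with \(J-1\) nonzero terms. Let \(\zeta\neq0\) be a root of \(f\) of multiplicity \(\mu\), so \(f(z)=(z-\zeta)^{\mu}g(z)\) with \(g(\zeta)\neq0\). First remove the lowest exponent by writing \(f(z)=z^{e_0}f_1(z)\), where \(f_1(z)=\sum_{j}c_jz^{e_j-e_0}\). Since \(\zeta\neq0\), the root \(\zeta\) has the same multiplicity \(\mu\) in \(f_1\). The polynomial \(f_1\) has constant term \(c_0\) and the same number \(J\) of terms. Next differentiate:
\begin{align*}
f_1'(z)=\sum_{j=1}^{J-1}(e_j-e_0)c_jz^{e_j-e_0-1}.
\end{align*}
Each coefficient \((e_j-e_0)c_j\) is nonzero. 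Indeed, \(1\le e_j-e_0\le e_{J-1}<p\), so the integer \(e_j-e_0\) is nonzero in the field. Hence \(f_1'\) has exactly \(J-1\) nonzero terms, and its exponents are at most \(e_{J-1}\), so the characteristic hypothesis carries over. Differentiating \(f_1=(z-\zeta)^{\mu}g\) gives
\begin{align*}
f_1'=(z-\zeta)^{\mu-1}\bigl(\mu g+(z-\zeta)g'\bigr).
\end{align*}
Thus \(\zeta\) is a root of \(f_1'\) of multiplicity at least \(\mu-1\) whenever \(\mu\ge1\). The induction hypothesis then gives \(\mu-1\le J-2\), that is, \(\mu\le J-1\).

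The only delicate point is the characteristic condition. We need it to ensure that differentiation does not kill any term, so that \(f_1'\) really has \(J-1\) nonzero terms. We also need it in the multiplicity-drop step. There we only use the inequality multiplicity\((f_1')\ge\mu-1\), which holds in any characteristic, so no bound \(\mu<p\) is required.
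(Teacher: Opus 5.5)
Your proof is correct, but it takes a different route from the paper. You induct on the number of terms. You strip $z^{e_0}$, which is harmless since $\zeta\neq0$, then differentiate. The coefficients $(e_j-e_0)c_j$ survive because $1\le e_j-e_0<p$, and differentiation lowers the multiplicity at $\zeta$ by at most one.

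The paper argues in one shot, without induction. If $\alpha\neq0$ had multiplicity at least $J$, then $f^{(r)}(\alpha)=0$ for $r=0,\ldots,J-1$. Multiplying the $r$th condition by $\alpha^r$ gives the homogeneous system $\sum_j c_j\alpha^{e_j}e_j^{\underline r}=0$. Its falling-factorial coefficient matrix differs from the Vandermonde matrix in $e_0,\ldots,e_{J-1}$ by a unit lower-triangular row operation. Its determinant is therefore $\prod_{a<c}(e_c-e_a)\neq0$, which forces every $c_j\alpha^{e_j}=0$.

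What each approach buys: your argument needs no determinant evaluation and no falling-factorial-to-Vandermonde reduction, so it is more elementary. The paper's argument displays the $J$ vanishing conditions as a single nonsingular Vandermonde-type system. Unwound, both use only that the pairwise differences $e_c-e_a$ $(a<c)$ are nonzero in the field. In your induction these differences appear as the successive derivative multipliers $e_j-e_k$, $j>k$. So neither version is more general than the other.

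Two cosmetic points. Your last paragraph first says the characteristic hypothesis is needed in the multiplicity-drop step, then correctly says it is not; only the survival of the derivative coefficients uses it. You also announce the operator $z\frac{d}{dz}$ but actually use $\frac{d}{dz}$ after removing $z^{e_0}$; either works.
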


\begin{IEEEproof}
For \(J=1\), a nonzero monomial has no nonzero root.  Suppose \(J\ge2\) and a nonzero \(\alpha\)
has multiplicity at least \(J\).  Write
\(e^{\underline r}=e(e-1)\cdots(e-r+1)\) for the falling factorial, with
\(e^{\underline 0}=1\).  Then \(f^{(r)}(\alpha)=0\) for
\(r=0,\ldots,J-1\).  Multiplying the \(r\)th equation by \(\alpha^r\) gives
\begin{align}
\sum_{j=0}^{J-1}c_j\alpha^{e_j}e_j^{\underline r}=0,
\qquad r=0,\ldots,J-1.
\label{eq:supp-falling-system}
\end{align}
The coefficient matrix
\([e_j^{\underline r}]_{r,j=0}^{J-1}\) is obtained from the ordinary Vandermonde
matrix by a unit lower-triangular row transformation: each
\(x^{\underline r}\) is monic of degree \(r\).  Its determinant is therefore
\begin{align}
\prod_{0\le a<c\le J-1}(e_c-e_a),
\end{align}
which is nonzero under the characteristic assumption.  Thus
\eqref{eq:supp-falling-system} forces every
\(c_j\alpha^{e_j}\) to vanish, a contradiction.
\end{IEEEproof}

\begin{Lem}[Pascal-window invertibility]
\label{lem:supp-pascal-window}
Let \(w\ge0\), \(r\ge1\), and \(0\le\delta\le w\), with
\(\binom wt=0\) outside \(0\le t\le w\).  If the field
characteristic is larger than \(w+r-1\), then the matrix
\begin{align}
B_{w,r,\delta}
\triangleq
\left[\binom{w}{\delta+a-c}\right]_{a,c=0}^{r-1}
\end{align}
is nonsingular.
\end{Lem}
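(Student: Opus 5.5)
We argue by contradiction using Lemma~\ref{lem:supp-sparse-root}. Suppose \(B_{w,r,\delta}\) is singular. Then there is a nonzero vector \((x_0,\ldots,x_{r-1})\) with \(\sum_{c}\binom{w}{\delta+a-c}x_c=0\) for every \(a\in[0:r-1]\). Encode it as the polynomial \(g(z)=\sum_{c=0}^{r-1}x_cz^{c}\), which is nonzero and has degree at most \(r-1\), and consider the product \(f(z)=(1+z)^w g(z)\). The coefficient of \(z^{\delta+a}\) in \(f\) equals \(\sum_c\binom{w}{\delta+a-c}x_c\), with the convention \(\binom wt=0\) outside \(0\le t\le w\). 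The singularity hypothesis is therefore the same as saying that the \(r\) consecutive coefficients of \(f\) at exponents \(\delta,\delta+1,\ldots,\delta+r-1\) all vanish.

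Since \(\deg f\le w+r-1\), the support of \(f\) lies in \([0:w+r-1]\). Removing a window of \(r\) consecutive exponents leaves at most \(w\) exponents, so \(f\) has at most \(w\) nonzero terms. The field characteristic exceeds \(w+r-1\ge\deg f\), so Lemma~\ref{lem:supp-sparse-root} applies: any nonzero root of \(f\) has multiplicity at most \(J-1\le w-1\) when \(f\) has \(J\) nonzero terms.

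On the other hand, \(z=-1\) is a root of \(f\) of multiplicity at least \(w\), because of the factor \((1+z)^w\). This root is nonzero in any field. If \(w\ge1\), the multiplicity bound \(w-1\) is violated, so \(f\) must be the zero polynomial. Since \(\mathbb F_q[z]\) is an integral domain and \((1+z)^w\neq0\), this forces \(g=0\), a contradiction. If \(w=0\), then \(\delta=0\) and \(B_{0,r,0}\) is the identity matrix, so there is nothing to prove. One could also fold this case into the argument: \(f=g\) would have no nonzero terms at all and hence would be zero.

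The main obstacle is the term-counting step. We must check that the vanishing window really lies inside \([0:w+r-1]\); this holds because \(\delta+r-1\le w+r-1\) when \(\delta\le w\). We must also confirm that the case where \(f\) has only zero terms is handled, which it is, since that case directly gives \(f=0\). Together these guarantee the sparse polynomial has at most \(w\) terms. The rest is a direct application of Lemma~\ref{lem:supp-sparse-root} with the characteristic bound \(p>w+r-1\). In the application, \(w=\omega=b-L\) and \(r\le L\), so \(w+r-1\le b-1<K<p\).
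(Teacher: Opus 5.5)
Your proposal is correct and follows the paper's proof essentially step for step. You encode a kernel vector as $g$, note that $(1+z)^w g(z)$ has a vanishing coefficient window $[\delta:\delta+r-1]$ inside $[0:w+r-1]$ and hence at most $w$ nonzero terms, and then contradict Lemma~\ref{lem:supp-sparse-root} using the root $-1$ of multiplicity at least $w$. Your treatment of $w=0$, either as the identity matrix or via the empty-support argument, also matches the paper's separate handling of that case.
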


\begin{IEEEproof}
For \(w=0\), the matrix is \(I_r\).  For \(w\ge1\), take
\(v=(v_0,\ldots,v_{r-1})^{\mathsf T}\) in its kernel and set
\begin{align}
V(z)=\sum_{c=0}^{r-1}v_cz^c,
\qquad Q(z)=(1+z)^wV(z).
\end{align}
The coefficients of \(Q\) in the consecutive degree window
\(\delta,\ldots,\delta+r-1\) vanish.  Since \(\deg Q\le w+r-1\), write
\begin{align}
Q(z)=Q_-(z)+z^{\delta+r}Q_+(z),
\quad
\deg Q_-<\delta,
\quad
\deg Q_+<w-\delta.
\end{align}
The two remaining degree ranges allow at most \(w\) nonzero monomials in total.
Divisibility by \((1+z)^w\) makes \(-1\) a root of multiplicity at least
\(w\).  If \(Q\ne0\), every exponent is below the characteristic, so
Lemma~\ref{lem:supp-sparse-root} bounds this multiplicity by \(w-1\),
a contradiction.  Therefore \(Q=0\), so \(V=0\) and \(v=0\).
\end{IEEEproof}

\subsection{\texorpdfstring{Direct encoder for \(L\le b\le K-1\)}{Direct encoder for L<= b<= K-1}}

\begin{IEEEproof}[Proof of Theorem~\ref{thm:direct-local-transform}]
Put \(\omega=b-L\).  In the encoder
\eqref{eq:direct-local-encoder}, scalar coordinate \(m\in[0:b-1]\) of each cache is
\begin{align}
C_{i,m}=
\sum_{k=\max(0,m-\omega)}^{\min(m,L-1)}
\binom{\omega}{m-k}v_{i+m-k,k}.
\label{eq:supp-direct-scalar-entry}
\end{align}
Consequently, the global matrix, with row index \((i,m)\) and column index
\((n,k)\), has entry
\begin{align}
H_{(i,m),(n,k)}=
\begin{cases}
\binom{\omega}{m-k},
 &n\equiv i+m-k\pmod K,\ 0\le m-k\le\omega,\\
0,&\text{otherwise}.
\end{cases}
\label{eq:supp-direct-H-entry}
\end{align}

By cyclic translation, it suffices to consider \(u=0\).  If
\(i\in\mathcal A_0=[0:L-1]\) and a source block \(n\) contributes to cache
\(i\), then \(n=i+j\) for some \(j\in[0:\omega]\).  Its ordinary integer
representative obeys
\(0\le i+j\le L-1+\omega=b-1<K\), so
\(n\in[0:b-1]=\mathcal T_0^{(b)}\).  Hence the complement submatrix in
\eqref{eq:direct-local-zero} is zero for \(u=0\), and cyclic translation gives
the same property for every user.

There are \(Lb=bL\) local scalar equations and unknowns.  Whenever
\(v_{n,k}\) occurs in \(C_{i,m}\), \eqref{eq:supp-direct-H-entry} gives the
antidiagonal invariant
\begin{align}
n+k=i+m\triangleq s.
\end{align}
Thus the local system splits over \(s=0,\ldots,b+L-2\).  For fixed \(s\), set
\begin{align}
\alpha_s&=\max(0,s-b+1),&
\beta_s&=\min(L-1,s),
\notag\\
r_s&=\beta_s-\alpha_s+1,&
\delta_s&=s-\alpha_s-\beta_s.
\end{align}
Order the unknowns and equations as
\begin{align}
v_q^{(s)}&=v_{s-(\alpha_s+q),\alpha_s+q},
&&q=0,\ldots,r_s-1,
\notag\\
(i,m)&=(\beta_s-a,s-\beta_s+a),
&&a=0,\ldots,r_s-1.
\end{align}
Substitution in \eqref{eq:supp-direct-scalar-entry} gives
\begin{align}
C_{\beta_s-a,s-\beta_s+a}
=\sum_{q=0}^{r_s-1}
\binom{\omega}{\delta_s+a-q}v_q^{(s)}.
\end{align}
The coefficient block is \(B_{\omega,r_s,\delta_s}\).  For
\(0\le s\le L-1\), we have \(\delta_s=0\).
For \(L\le s\le b-1\), we have \(\delta_s=s-L+1\in[1:\omega]\).
For \(b\le s\le b+L-2\), we have \(\delta_s=\omega\).  Hence
\(0\le\delta_s\le\omega\).  Moreover, \(r_s\le L\), so
\begin{align}
\omega+r_s-1\le b-1<K<p.
\end{align}
Each antidiagonal block is nonsingular by Lemma~\ref{lem:supp-pascal-window}.
Ordering the equations and unknowns by \(s\) puts these blocks on the diagonal,
so the local matrix satisfies
\eqref{eq:direct-local-invertible} and can be inverted as in
\eqref{eq:direct-local-decoder}.

The accessible output is a deterministic function only of the target source
tuple.  Hence, for every joint input law,
\begin{align}
I\!\left(\mathbf V_{(\mathcal T_u^{(b)})^{\rm c}};
\mathbf C_{\mathcal A_u}\middle|
\mathbf V_{\mathcal T_u^{(b)}}\right)=0.
\end{align}
If \(\mathbf Hv=0\), applying each user's local inverse gives
  \(v_{\mathcal T_u^{(b)}}=0\).  The target windows cover \(\mathbb Z_K\), so
  \(v=0\) and \(\mathbf H\) has full column rank.  Each formal position has
  \(L\) source coordinates and each cache segment has \(b\) stored coordinates,
  so the normalized cost is \(b/L\).
\end{IEEEproof}

At \(b=L\), we have \(\omega=0\): the Pascal blocks reduce
to identity blocks, the local seed length is zero, and the normalized cost is
one.

\subsection{Complement alignment and systematic affine sampling}

For a block vector \(x=(x_0^{\mathsf T},\ldots,x_{K-1}^{\mathsf T})^{\mathsf T}\),
the positive shift is defined by
\((P_\Delta x)_{i+\Delta}=x_i\), equivalently
\((P_\Delta x)_r=x_{r-\Delta}\).  For a cyclic set, write
\(\mathcal S\oplus\Delta=\{i+\Delta:i\in\mathcal S\}\).

\begin{Lem}[Complement alignment]
\label{lem:supp-complement-alignment}
Let \(1\le b<L<K\), and set
\begin{align}
L'=K-L,
\qquad b'=K-b,
\qquad \Delta=L-b.
\end{align}
For the auxiliary direct system define
\(\mathcal A'_a=\{a,\ldots,a+L'-1\}\) and
\(\mathcal T'_a=\{a,\ldots,a+b'-1\}\).  With \(a=u+L\),
\begin{align}
\mathcal A'_a&=\mathcal A_u^{\rm c},&
\mathcal T'_a&=(\mathcal T_u^{(b)})^{\rm c}\oplus\Delta,&
(\mathcal T'_a)^{\rm c}&=\mathcal T_u^{(b)}\oplus\Delta.
\end{align}
\end{Lem}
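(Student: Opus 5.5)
The statement is a pure index computation in \(\mathbb Z_K\), so I will verify the three set identities directly. I will write each window as a cyclic interval and use that the complement of a length-\(\ell\) cyclic interval starting at \(u\) is the length-\((K-\ell)\) cyclic interval starting at \(u+\ell\). This holds because \(1\le \ell\le K-1\) in every case used, since \(1\le b<L<K\).

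First, \(\mathcal A_u=\{u,\ldots,u+L-1\}\) has length \(L\). Its complement is \(\{u+L,\ldots,u+K-1\}\), which has length \(K-L=L'\) and starts at \(a=u+L\). This is exactly \(\mathcal A'_a\).

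Second, \(\mathcal T_u^{(b)}=\{u,\ldots,u+b-1\}\). Its complement is \(\{u+b,\ldots,u+K-1\}\), of length \(K-b=b'\). Shifting by \(\Delta=L-b\) gives \(\{u+L,\ldots,u+L+b'-1\}\), and since \(u+b+\Delta=u+L=a\), this equals \(\mathcal T'_a\). Here I use that translation by \(\Delta\) is a bijection of \(\mathbb Z_K\) and maps cyclic intervals to cyclic intervals of the same length.

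Third, translation commutes with complementation, because it is a bijection of \(\mathbb Z_K\). Hence \((\mathcal T'_a)^{\rm c}=\bigl((\mathcal T_u^{(b)})^{\rm c}\oplus\Delta\bigr)^{\rm c}=\mathcal T_u^{(b)}\oplus\Delta\).

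There is no real obstacle here. The only care needed is to check the lengths, namely that \(L'\) and \(b'\) lie in \([1:K-1]\), so that the complement intervals are nonempty and proper and the auxiliary windows are well defined. As a consistency check, \(b'>L'\), so the auxiliary direct encoder's target window \(\mathcal T'_a\) contains its access window \(\mathcal A'_a\) at the same start point \(a\), as required for the direct construction.
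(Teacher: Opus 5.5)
Your proposal is correct and follows essentially the same route as the paper's proof. You identify \(\mathcal A_u^{\rm c}\) as the length-\(L'\) window starting at \(u+L\), shift \((\mathcal T_u^{(b)})^{\rm c}=\{u+b,\ldots,u-1\}\) by \(\Delta\) to obtain \(\mathcal T'_{u+L}\), and then take complements; your added checks that \(L',b'\in[1:K-1]\) and that translation commutes with complementation only make explicit what the paper leaves implicit.
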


\begin{IEEEproof}
The length-\(L'\) window beginning at \(u+L\) is the complement of
the length-\(L\) window beginning at \(u\).  The shifted target complement is
\begin{align*}
(\mathcal T_u^{(b)})^{\rm c}\oplus\Delta
&=\{u+b,\ldots,u-1\}\oplus(L-b)\\
&=\{u+L,\ldots,u+L+b'-1\}=\mathcal T'_{u+L}.
\end{align*}
Complementation gives
\((\mathcal T'_{u+L})^{\rm c}=\mathcal T_u^{(b)}\oplus\Delta\).
\end{IEEEproof}

\begin{Lem}[Systematic right inverse and kernel basis]
\label{lem:supp-systematic-affine}
Let \(A\in\mathbb F_q^{m\times n}\) have full row rank and put \(d=n-m\).
To parameterize the solutions of \(Ac=y\) for a fixed
\(y\in\mathbb F_q^m\), choose a permutation matrix \(\Pi\) such that
\begin{align}
A\Pi=[A_{\rm P}\;A_{\rm F}],
\qquad A_{\rm P}\in\mathbb F_q^{m\times m}\ \text{nonsingular}.
\end{align}
Define
\begin{align}
J&=\Pi\begin{pmatrix}A_{\rm P}^{-1}\\0_{d\times m}\end{pmatrix},&
B&=\Pi\begin{pmatrix}-A_{\rm P}^{-1}A_{\rm F}\\I_d\end{pmatrix}.
\end{align}
Then \(AJ=I_m\), the columns of \(B\) form a basis of \(\ker A\), and
\begin{align}
\{c:Ac=y\}=\{Jy+Bs:s\in\mathbb F_q^d\}.
\label{eq:supp-affine-parametrization}
\end{align}
The parametrization is one-to-one.  Thus a seed
\(S\sim\operatorname{Unif}(\mathbb F_q^d)\) makes \(Jy+BS\) uniform on the
full solution set \(\{c:Ac=y\}\).
\end{Lem}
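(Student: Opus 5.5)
The argument is direct linear algebra on the block decomposition \(A\Pi=[A_{\rm P}\;A_{\rm F}]\).

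\textbf{Right inverse.} Write \(\Pi^{-1}c=(c_{\rm P}^{\mathsf T},c_{\rm F}^{\mathsf T})^{\mathsf T}\), so that \(Ac=A\Pi\,\Pi^{-1}c=A_{\rm P}c_{\rm P}+A_{\rm F}c_{\rm F}\). Then
\begin{align*}
AJ=A\Pi\begin{pmatrix}A_{\rm P}^{-1}\\0\end{pmatrix}=A_{\rm P}A_{\rm P}^{-1}=I_m .
\end{align*}

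\textbf{Kernel basis.} The same computation gives \(AB=-A_{\rm P}A_{\rm P}^{-1}A_{\rm F}+A_{\rm F}=0\), so every column of \(B\) lies in \(\ker A\). The matrix \(\Pi^{-1}B\) contains \(I_d\) as its lower block, so \(B\) has rank \(d\). Because \(A\) has full row rank, rank-nullity gives \(\dim\ker A=n-m=d\). Hence the columns of \(B\) form a basis of \(\ker A\).

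\textbf{Parametrization of the solution set.} If \(Ac=y\), then \(A(c-Jy)=y-y=0\). So \(c-Jy\in\ker A\), which means \(c=Jy+Bs\) for a unique \(s\), using that the columns of \(B\) form a basis. Conversely, \(A(Jy+Bs)=y\) for every \(s\). This proves \eqref{eq:supp-affine-parametrization}.

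\textbf{Injectivity.} Suppose \(Jy+Bs=Jy+Bs'\). Then \(B(s-s')=0\), and since \(B\) has full column rank, \(s=s'\). A direct alternative: multiply by \(\Pi^{-1}\) and read off the lower block. The lower block of \(\Pi^{-1}J\) is zero and the lower block of \(\Pi^{-1}B\) is \(I_d\), so the free coordinates of \(Jy+Bs\) are exactly \(s\).

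\textbf{Uniformity.} For fixed \(y\), the map \(s\mapsto Jy+Bs\) is a bijection from \(\mathbb F_q^d\) onto \(\{c:Ac=y\}\). A uniform seed \(S\) therefore induces the uniform law on the solution set, with probability \(q^{-d}\) on each solution.

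The only point needing any care is the dimension count that turns the independent columns of \(B\) into a full basis of \(\ker A\). This uses full row rank of \(A\), which is exactly what guarantees that a nonsingular \(A_{\rm P}\) exists.
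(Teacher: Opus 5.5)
Your proof is correct and follows essentially the same route as the paper. Both use direct multiplication to get \(AJ=I_m\) and \(AB=0\), the identity block in the free coordinates to show the columns of \(B\) are independent, rank-nullity to conclude they form a basis of \(\ker A\), and uniqueness of the kernel correction \(c-Jy\) to obtain the bijection and hence uniformity; you simply write out each step in more detail.
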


\begin{IEEEproof}
Direct multiplication gives \(AJ=I_m\) and \(AB=0\).  In the permuted
coordinates, the free-coordinate block of \(B\) is \(I_d\), so its columns
are independent.  Rank-nullity gives \(\dim\ker A=d\), and hence those
columns form a basis.  Every solution differs from \(Jy\) by a unique kernel
vector.  Thus \eqref{eq:supp-affine-parametrization} is a bijection between
seed values and solutions, and a uniform seed gives a uniform solution.
\end{IEEEproof}

\subsection{\texorpdfstring{Short-block encoder for \(1\le b<L\)}{Short-block encoder for 1<= b<L}}

\begin{IEEEproof}[Proof of Theorem~\ref{thm:short-local-transform}]
Set \((L',b',\Delta)=(K-L,K-b,L-b)\).  Since
\(1\le L'<b'\le K-1\), the direct theorem supplies an auxiliary encoder.  Write its
global relation as \(\widetilde{\mathbf D}=\mathbf H\widetilde{\mathbf X}\),
where
\begin{align}
\mathbf H\in\mathbb F_q^{Kb'\times KL'}
\quad\text{has full column rank}.
\end{align}
Set \(A=\mathbf H^{\mathsf T}\), which has full row rank, and let
\begin{align}
d=Kb'-KL'=K(L-b).
\label{eq:supp-dual-nullity}
\end{align}
Construct the right inverse \(J\) and kernel basis \(B\) as in
Lemma~\ref{lem:supp-systematic-affine}.  Conditional on a source value
\(\mathbf V=v\), the placed vector
\(\mathbf C=JP_\Delta v+B\mathbf U\), with
\(\mathbf U\sim\operatorname{Unif}(\mathbb F_q^d)\), is uniform over
\begin{align}
\mathcal F(v)\triangleq
\{c\in\mathbb F_q^{Kb'}:Ac=P_\Delta v\},
\qquad |\mathcal F(v)|=q^d.
\end{align}
Each scalar group of each randomized block instance uses its own copy of
the seed vector \(\mathbf U\).  These vectors are mutually independent and
are sampled during placement, jointly independently of the file library,
parent randomness, and demand vector.

Fix \(u\) and order the rows as
\begin{align}
\mathcal R_u&=\mathcal T_u^{(b)}\oplus\Delta,&
\mathcal S_u&=(\mathcal T_u^{(b)})^{\rm c}\oplus\Delta,
\end{align}
and the columns as \((\mathcal A_u,\mathcal A_u^{\rm c})\).  The shift
convention gives
\begin{align}
(P_\Delta\mathbf V)_{\mathcal R_u}
 &=\mathbf V_{\mathcal T_u^{(b)}},&
(P_\Delta\mathbf V)_{\mathcal S_u}
 &=\mathbf V_{(\mathcal T_u^{(b)})^{\rm c}}.
\end{align}
By Lemma~\ref{lem:supp-complement-alignment}, the inaccessible columns are
indexed by the auxiliary access window \(\mathcal A'_{u+L}\).
The row set \(\mathcal R_u\) is the auxiliary target complement
\((\mathcal T'_{u+L})^{\rm c}\), and \(\mathcal S_u\) is the auxiliary target
window \(\mathcal T'_{u+L}\).  Transposing the auxiliary direct matrix's zero
and invertible submatrices gives
\begin{align}
A(\mathcal R_u,\mathcal A_u^{\rm c})&=0,&
A(\mathcal S_u,\mathcal A_u^{\rm c})&\ \text{is nonsingular}.
\end{align}
Therefore \(A\mathbf C=P_\Delta\mathbf V\) has the block form
\begin{align}
\begin{pmatrix}M_{11}&0\\M_{21}&M_{22}\end{pmatrix}
\begin{pmatrix}\mathbf C_{\mathcal A_u}\\
\mathbf C_{\mathcal A_u^{\rm c}}\end{pmatrix}
=
\begin{pmatrix}\mathbf V_{\mathcal T_u^{(b)}}\\
\mathbf V_{(\mathcal T_u^{(b)})^{\rm c}}\end{pmatrix},
\qquad M_{22}\ \text{nonsingular}.
\label{eq:supp-dual-block-triangular}
\end{align}
The block dimensions are
\begin{align*}
M_{11}&:bL'\times Lb',&
0&:bL'\times L'b',\\
M_{21}&:b'L'\times Lb',&
M_{22}&:b'L'\times L'b'=b'L'\times b'L'.
\end{align*}
The first block row recovers the target:
\begin{align}
\mathbf V_{\mathcal T_u^{(b)}}
=\mathbf H^{\mathsf T}
(\mathcal T_u^{(b)}\oplus\Delta,\mathcal A_u)
\mathbf C_{\mathcal A_u}.
\end{align}

To determine what the accessible caches reveal, fix a target value
\(v_{\mathcal T_u^{(b)}}\) and define the set of compatible accessible outputs:
\begin{align}
\mathcal G_u(v_{\mathcal T_u^{(b)}})
=\{c_{\mathcal A_u}:M_{11}c_{\mathcal A_u}
=v_{\mathcal T_u^{(b)}}\}.
\end{align}
Projecting \(\mathcal F(v)\) onto the accessible coordinates gives a
bijection onto this set.  The first row of
\eqref{eq:supp-dual-block-triangular} places every projected solution in the
set, and the second row uniquely determines the inaccessible coordinates:
\begin{align}
c_{\mathcal A_u^{\rm c}}
=M_{22}^{-1}
\bigl(v_{(\mathcal T_u^{(b)})^{\rm c}}-M_{21}c_{\mathcal A_u}\bigr).
\end{align}
For the following probability expression, abbreviate the fixed target and
complement values as \(v_T=v_{\mathcal T_u^{(b)}}\) and
\(v_{T^{\rm c}}=v_{(\mathcal T_u^{(b)})^{\rm c}}\).
Uniformity on the global solution set \(\mathcal F(v)\) therefore implies
\begin{align}
&\Pr\{\mathbf C_{\mathcal A_u}=c\mid
\mathbf V_{\mathcal T_u^{(b)}}=v_T,
\mathbf V_{(\mathcal T_u^{(b)})^{\rm c}}=v_{T^{\rm c}}\}
\notag\\
&\hspace{3em}=
\frac{\mathbf 1\{c\in\mathcal G_u(v_T)\}}
{|\mathcal G_u(v_T)|},
\end{align}
for every source realization in the support of an arbitrary joint source law.
The right-hand side depends only on \(v_T\), and therefore
\begin{align}
I\!\left(\mathbf V_{(\mathcal T_u^{(b)})^{\rm c}};
\mathbf C_{\mathcal A_u}\middle|
\mathbf V_{\mathcal T_u^{(b)}}\right)=0.
\end{align}
Each source position has \(L'=K-L\) coordinates
and each cache stores \(b'=K-b\), giving cost \((K-b)/(K-L)\).
Equation~\eqref{eq:supp-dual-nullity} gives the exact local seed dimension.
\end{IEEEproof}

\subsection{Expanded short-block check}

The case \((K,L,b)=(4,3,2)\) shows how the affine construction recovers target
symbols while hiding their complement.
The auxiliary parameters are
\((L',b',\Delta)=(1,2,1)\).  Stack the four cache pairs
\(\mathbf C_0,\ldots,\mathbf C_3\) in cyclic order, with the two coordinates
inside each pair ordered as \((c_{i,0},c_{i,1})\).  The transposed auxiliary
system is
\begin{align}
\underbrace{\begin{pmatrix}
1&0&0&0&0&0&0&1\\
0&1&1&0&0&0&0&0\\
0&0&0&1&1&0&0&0\\
0&0&0&0&0&1&1&0
\end{pmatrix}}_{A=\mathbf H^{\mathsf T}}
\mathbf C
=
\begin{pmatrix}V_3\\V_0\\V_1\\V_2\end{pmatrix}
=P_1\mathbf V.
\end{align}
The four rows have disjoint supports, so the matrix has rank four and nullity
four.  Choose the free coordinates
\(S_i=c_{i,1}\), with the \(S_i\) independent and uniform.  Solving each row gives
\begin{align}
\mathbf C_i=(V_{i-1}-S_{i-1},S_i)^{\mathsf T},
\qquad i\in\mathbb Z_4.
\end{align}
User \(u\) reads \(\mathbf C_u,\mathbf C_{u+1},\mathbf C_{u+2}\) and obtains
\begin{align*}
V_u=S_u+(V_u-S_u),
\qquad
V_{u+1}=S_{u+1}+(V_{u+1}-S_{u+1}).
\end{align*}
Conditioned on these two targets, \(V_{u+2}\) is absent from the accessible
window.  The symbol \(V_{u-1}\) appears as \(V_{u-1}-S_{u-1}\), and its
matching pad lies in the inaccessible cache.  Since that pad is independent
and uniform, the accessible view conditioned on the two targets is independent
of the complementary pair.

\section{Exact Run Counting, Achievable Family, and Parent-Preserving Allocation Optimality}
\label{app:allocation}

\subsection{Maximal runs and formal-block packing}

\begin{IEEEproof}[Proof of Lemma~\ref{lem:cycliccount}]
Fix a start \(a\in\mathbb Z_K\), an incidence size \(r\), and an appearance
span \(b\).  Let \(\mathcal R_{a,r,b}\) be the family of \(r\)-element
subsets of \(\mathbb Z_K\) having a maximal run of span \(b\) starting at
\(a\).  For \(1\le b\le r\le K-2\), a subset having
\(\{a,a+1,\ldots,a+b-1\}\) as a maximal cyclic run must contain those \(b\)
positions and exclude both neighbors \(a-1\) and \(a+b\).  Its remaining
\(r-b\) elements can be chosen freely from the other \(K-b-2\) positions.
Hence
\begin{align}
|\mathcal R_{a,r,b}|=\binom{K-b-2}{r-b}.
\end{align}
For \(r=K-1\), the subset is a single run of span \(K-1\).
For a fixed start \(a\), the subset is \(\mathbb Z_K\setminus\{a-1\}\), so
\(\Gamma(K,K-1,K-1)=1\).  Every remaining parameter choice produces zero
maximal runs, giving the stated values of \(\Gamma\).

Fix user \(u\) and a parent
symbol \(Y_{\mathcal S}\) with \(u\in\mathcal S\).  Since
\(|\mathcal S|\le K-1\), exactly one maximal run of \(\mathcal S\) contains
\(u\).  Let that run have start \(a\), span \(b\), and endpoint
\(p=a+b-1\).
The endpoint relation is
\begin{align}
u\in\{a,\ldots,p\}
\quad\Longleftrightarrow\quad
p\in\{u,\ldots,u+b-1\}=\mathcal T_u^{(b)}.
\end{align}
Let \(\rho\in[1:\Gamma(K,|\mathcal S|,b)]\) be the run's rank among this
family's occurrences with endpoint \(p\) and span \(b\).
This rank identifies its formal block, where
\(Y_{\mathcal S}\) occupies the endpoint position in user \(u\)'s target
collection.  Conversely, every position in that collection comes from a run
containing \(u\), so its indexing subset contains
\(u\).
Thus the target collection consists exactly of this family's named components
in \(\widetilde Z_u\).
A subset with several maximal runs contributes one formal position per run.
A fixed user \(u\in\mathcal S\) selects the position for the unique run of
\(\mathcal S\) containing that user.
Decoding the selected positions recovers each named component of the family in
\(\widetilde Z_u\) exactly once.
\end{IEEEproof}

\begin{Lem}[Exact number of complete block families]
For a subset-indexed incidence family of size \(r\), the number of complete
\(K\)-position formal blocks is
\begin{align}
B(K,r)\triangleq\sum_{b=1}^{K-1}\Gamma(K,r,b)
=
\begin{cases}
0,&r=0,\\
\binom{K-2}{r-1},&1\le r\le K-1.
\end{cases}
\end{align}
\end{Lem}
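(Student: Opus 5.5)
The plan is to compute the sum $\sum_{b}\Gamma(K,r,b)$ directly from the closed form \eqref{eq:gamma}, and then cross-check the answer with a double-counting interpretation from Lemma~\ref{lem:cycliccount}.

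\emph{Boundary cases.} For $r=0$, every case of \eqref{eq:gamma} requires $b\le r$, so $\Gamma(K,0,b)=0$ for all $b\ge1$. Hence $B(K,0)=0$.

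For $r=K-1$, the only nonzero term is $\Gamma(K,K-1,K-1)=1$. The first branch needs $r\le K-2$, so it contributes nothing. This gives $B(K,K-1)=1=\binom{K-2}{K-2}$, as claimed.

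\emph{Main range $1\le r\le K-2$.} Here
\begin{align*}
B(K,r)=\sum_{b=1}^{r}\binom{K-b-2}{r-b}.
\end{align*}
Substitute $j=r-b$, so that $j$ runs over $0,\ldots,r-1$. The sum becomes
\begin{align*}
\sum_{j=0}^{r-1}\binom{K-r-2+j}{j}.
\end{align*}
Apply the hockey-stick identity $\sum_{j=0}^{n}\binom{m+j}{j}=\binom{m+n+1}{n}$ with $m=K-r-2\ge0$ and $n=r-1$. This yields $\binom{K-2}{r-1}$. All upper indices are nonnegative, so the identity applies without sign issues.

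\emph{Combinatorial cross-check.} Lemma~\ref{lem:cycliccount} says that $B(K,r)$ counts maximal runs with a fixed start position $a$, over all $r$-subsets. Equivalently, it counts $r$-subsets $\mathcal S$ with $a\in\mathcal S$ and $a-1\notin\mathcal S$. For such a subset, $a$ is forced in and $a-1$ is forced out. The remaining $r-1$ elements are chosen freely from the other $K-2$ positions, giving $\binom{K-2}{r-1}$.

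This argument works uniformly for $1\le r\le K-1$; at $r=K-1$ it gives the single subset $\mathbb Z_K\setminus\{a-1\}$. I would present this bijective argument as the main proof and the hockey-stick computation as confirmation.

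\emph{Expected obstacle.} The only delicate step is the boundary bookkeeping. The definition of $\Gamma$ splits off the case $r=K-1$, and the identity must still hold for small $K$ such as $K=2$. I would check $K=2$ explicitly: there $r\in\{0,1\}$, and $B(2,1)=\Gamma(2,1,1)=1=\binom{0}{0}$.
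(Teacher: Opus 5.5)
Your proposal is correct. The hockey-stick computation you describe as a confirmation is exactly the paper's proof. For $1\le r\le K-2$ the paper substitutes $j=r-b$ and applies the hockey-stick identity to get $\binom{K-2}{r-1}$, and it treats $r=K-1$ (the single boundary block) and $r=0$ (no proper run) as separate cases. The paper also adds one sentence on the block interpretation: the $K\Gamma(K,r,b)$ span-$b$ occurrences around the cycle, $\Gamma(K,r,b)$ at each endpoint, pack into exactly $\Gamma(K,r,b)$ complete $K$-position blocks. You take this implicitly from Lemma~\ref{lem:cycliccount}, and you should state it, since the lemma claims that $B(K,r)$ counts blocks.

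The argument you propose to lead with is a genuinely different route. Summing over $b$ counts the $r$-subsets in which a maximal run starts at a fixed position $a$. Each subset contributes at most once, because $a$ lies in at most one run. These are exactly the subsets with $a\in\mathcal S$ and $a-1\notin\mathcal S$, which gives $\binom{K-2}{r-1}$ uniformly for $1\le r\le K-1$. This avoids the case split in \eqref{eq:gamma} and explains why the total does not depend on the span profile. Its cost is that it reaches the identity only through the combinatorial meaning of $\Gamma$ proved in Lemma~\ref{lem:cycliccount}. The hockey-stick route instead checks the stated equality directly from the closed form that defines $B(K,r)$. Keeping both, as you plan, is a sound presentation.
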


\begin{IEEEproof}
For fixed \((r,b)\), there are \(\Gamma(K,r,b)\) run occurrences at each
start and hence \(K\Gamma(K,r,b)\) occurrences around the cycle.  A complete
formal block has \(K\) endpoint positions, so endpoint packing produces
exactly \(\Gamma(K,r,b)\) blocks.  For \(1\le r\le K-2\), the hockey-stick
identity gives
\begin{align}
\sum_{b=1}^{r}\binom{K-b-2}{r-b}
=\sum_{j=0}^{r-1}\binom{K-r+j-2}{j}
=\binom{K-2}{r-1}.
\end{align}
The case \(r=K-1\) consists of the single boundary block, and \(r=0\) has
no proper run.
\end{IEEEproof}

\Needspace{6\baselineskip}
\subsection{The unit-rate parent realization}

To obtain a unit-rate endpoint, we use a parent that transmits one file's worth
of symbols for every demand vector.  Work over the chosen field, with all file
and key operations performed coordinatewise.
Index users by \([K]\).  In this construction, let \(D_1=0\) and
\(D_n=W_n-W_1\), \(n\in\{2,\ldots,N\}\), denote the file differences.
Draw mutually independent uniform keys \(S_j^{(i)}\), \(i\in[K-1]\),
\(j\in[N-1]\), from the same vector alphabet as a file.
Their joint collection is independent of the library.
Set \(S_0^{(i)}=0\) for each \(i\in[K-1]\).
For every \(j\in[N-1]\), form the \(K\)-position vector
\begin{align}
V_i^{(j)}&=S_j^{(i)},&&i\in[K-1],
\notag\\
V_K^{(j)}&=D_{j+1}+\sum_{i=1}^{K-1}S_j^{(i)}.
\end{align}
Parent user 1 holds positions \(1,\ldots,K-1\) of every vector.
Parent user \(i+1\), \(i\in[K-1]\), holds every position except \(i\).
Each position is a file-sized vector, so each parent cache stores
\((N-1)(K-1)\) file-sized vectors.  The \(N-1\) formal vectors give exactly
\(N-1\) appearance-span-\(K-1\) blocks.

For demand \(\mathbf d\), transmit
\begin{align}
X_{\mathbf d}^{\rm ep}
=W_{d_1}+\sum_{i=1}^{K-1}
\bigl(S_{d_1-1}^{(i)}-S_{d_{i+1}-1}^{(i)}\bigr).
\end{align}
User 1 recovers its file by subtracting all key differences.
For user \(i+1\), define the masked file differences \(P_n^{(i)}\) and the
broadcast after subtracting the available key differences, \(Y_i\), by
\begin{align}
P_1^{(i)}&=0,&
P_n^{(i)}&=D_n+S_{n-1}^{(i)},\quad n\ge2,
\notag\\
Y_i&=X_{\mathbf d}^{\rm ep}
-\sum_{\substack{h=1\\h\ne i}}^{K-1}
\bigl(S_{d_1-1}^{(h)}-S_{d_{h+1}-1}^{(h)}\bigr).
\end{align}
User \(i+1\)'s cache supplies every \(P_n^{(i)}\) and each key used in the
subtraction.  Substituting the broadcast into the decoder gives
\begin{align}
Y_i+P_{d_{i+1}}^{(i)}-P_{d_1}^{(i)}=W_{d_{i+1}},
\end{align}
so each user recovers its requested file.

For user 1, the cache and broadcast are functions of its requested file and
independent keys.  For user \(i+1\), the parent cache is in one-to-one correspondence
with the keys \(S_j^{(h)}\) having \(h\ne i\) and the masked differences
\((P_2^{(i)},\ldots,P_N^{(i)})\).  For any fixed library, the map from each
\(S_{n-1}^{(i)}\) to \(P_n^{(i)}\) is a translation by \(D_n\).
The entire tuple therefore has the same uniform distribution for every
library value and is independent of the library.
The broadcast after key subtraction is
\begin{align}
Y_i=W_{d_{i+1}}-P_{d_{i+1}}^{(i)}+P_{d_1}^{(i)}.
\end{align}
Conditioned on the requested file, both the cache and broadcast are therefore
functions of variables independent of the remaining files.  Thus
\begin{align}
I\!\left(W_{[N]\setminus\{d_k\}};
\widetilde Z_k^{\rm ep},X_{\mathbf d}^{\rm ep}
\middle|W_{d_k},\mathbf d\right)=0,
\qquad k\in[K].
\end{align}
The parent memory-rate pair is \(((N-1)(K-1),1)\).  Directly encoding its
\(N-1\) blocks gives the stated physical-cache endpoint.

\subsection{Achievable memory-rate family}

\begin{IEEEproof}[Proof of Theorem~\ref{thm:memoryload}]
By Lemma~\ref{lem:cycliccount}, the \(N\) file-share incidence families of
size \(t\) and the masking-key family of size \(t+1\) each split into the
counted formal blocks.
For each user \(u\), collecting the target references from all \(N\) share
families and the key family gives exactly \(\widetilde Z_u(t)\).
By Proposition~\ref{prop:inherit}, the lifted scheme uses the same parent message
at rate \(R(t)=K/(t+1)\).

Every parent share and masking key has normalized size
\(1/\binom{K-1}{t}\).  For each file, the incidence-size-\(t\) family produces
\(\Gamma(K,t,b)\) complete appearance-span-\(b\) blocks.  Since a block stores
\(\mu_{b,L}\) parent-symbol units per cache, the total
file-share contribution is
\begin{align}
\frac{N}{\binom{K-1}{t}}
\sum_{b=1}^{K-1}\mu_{b,L}\Gamma(K,t,b)
=\frac{N T_{\rm blk}(K,t,L)}{\binom{K-1}{t}}.
\end{align}
The masking-key family contributes
\(T_{\rm blk}(K,t+1,L)/\binom{K-1}{t}\).  Adding the two contributions gives
the stated formula for \(M(t)\), with the boundary count
\(\Gamma(K,K-1,K-1)=1\) when \(t=K-2\).

The unit-rate parent contains exactly \(N-1\) appearance-span-\(K-1\) blocks.
Because \(K-1\ge L\), each direct lift costs \((K-1)/L\) per cache and leaves
the parent transmission unchanged, realizing
\(((N-1)(K-1)/L,1)\).  Rational memory sharing is valid by
Lemma~\ref{lem:common-scalarization}, and every resulting code also fits in
larger caches.
\end{IEEEproof}

\subsection{Subpacketization, local randomness, and arithmetic costs}

\begin{IEEEproof}[Proof of Proposition~\ref{prop:implementation-cost}]
Choose a parent implementation over \(\mathbb F_Q\), where \(Q=p^e\),
\(p>K\), and \(\chi=\operatorname{lcm}(L,K-L)\) divides \(e\).
The field must also satisfy the parent's requirements.  Fix a public \(\mathbb F_p\)-basis of
\(\mathbb F_Q\) and a positive integer \(m\) satisfying the parent's
blocklength requirements.  Let \(F\) be the file length over \(\mathbb F_Q\)
and \(d\) the length of an expanded share or key over \(\mathbb F_p\), with
\begin{align}
C_t&=\binom{K-1}{t},&
\chi&=\operatorname{lcm}(L,K-L),&
F&=mC_t,&
d&=em.
\end{align}
Each file has \(F\) symbols over \(\mathbb F_Q\), and each parent file share
and masking key has \(m\) symbols over \(\mathbb F_Q\).
Expanding each share or key in this basis gives \(d\) coordinates over \(\mathbb F_p\),
so \(dC_t=eF\).  The file-share volume is
\(d\binom Kt\) prime-field coordinates per file, and the total masking-key
volume is \(d\binom K{t+1}\) prime-field coordinates.

For appearance span \(b\), the exact number of complete blocks is
\begin{align}
A_t(b)=N\Gamma(K,t,b)+\Gamma(K,t+1,b).
\end{align}
In the packing of Lemma~\ref{lem:cycliccount}, the \(K\) run starts supply
the \(K\) formal positions of a complete block.  There are
\(\Gamma(K,t,b)\) blocks per share family and \(\Gamma(K,t+1,b)\) blocks in
the key family.  Summing over the \(N\) share families and the key family gives
\(A_t(b)\).
The total number of complete blocks is
\begin{align}
\sum_{b=1}^{K-1}A_t(b)=NB(K,t)+B(K,t+1).
\end{align}

Since \(L\mid e\) and \(K-L\mid e\), the basis-coordinate length \(d=em\)
gives \(d/L\) groups per long block and \(d/(K-L)\) groups per short block
for every admissible parent length \(m\).  One encoder call per group gives
\begin{align}
G_{\rm dir}(t;d)&=\frac dL\sum_{b=L}^{K-1}A_t(b),
\notag\\
G_{\rm dual}(t;d)&=\frac d{K-L}\sum_{b=1}^{L-1}A_t(b).
\end{align}
Each long group stores \(b\) prime-field coordinates per cache.  Each short
group stores \(K-b\) prime-field coordinates per cache.
Summing the stored coordinates over the groups gives the fixed segment length
per physical cache:
\begin{align}
P_{\rm cache}(t;d)
&=\frac d{K-L}\sum_{b=1}^{L-1}(K-b)A_t(b)
+\frac dL\sum_{b=L}^{K-1}bA_t(b)
\notag\\*
&=d\bigl[NT_{\rm blk}(K,t,L)+T_{\rm blk}(K,t+1,L)\bigr].
\end{align}
With \(dC_t=eF\), the normalized allocation is
\begin{align}
\frac{P_{\rm cache}(t;d)}{eF}
&=\frac{NT_{\rm blk}(K,t,L)+T_{\rm blk}(K,t+1,L)}{C_t}
=M(t).
\end{align}
The prescribed cache alphabet is
\(\mathcal Z_i=\mathbb F_p^{P_{\rm cache}(t;d)}\).
Its exact length in parent-field units is
\begin{align}
\log_Q|\mathcal Z_i|
=P_{\rm cache}(t;d)\log_Qp
=\frac{P_{\rm cache}(t;d)}e
=FM(t),
\end{align}
which satisfies the physical-cache alphabet-size constraint.

For a short block put \(L'=K-L\) and \(b'=K-b\).  The affine encoder solves a
full-row-rank system over \(\mathbb F_p\), with \(Kb'\) cache coordinates and
\(KL'\) equations, so
its nullity is
\begin{align}
Kb'-KL'=K(L-b).
\end{align}
The affine encoder draws \(K(L-b)\) independent uniform
\(\mathbb F_p\) symbols per short scalar group.  The \(d/(K-L)\) groups in each
of the \(A_t(b)\) blocks therefore require a total of
\begin{align}
Q_{\rm loc}(t;d)
=\frac{dK}{K-L}
\sum_{b=1}^{L-1}(L-b)
\bigl[N\Gamma(K,t,b)+\Gamma(K,t+1,b)\bigr].
\label{eq:supp-total-local-randomness}
\end{align}
The direct encoder is deterministic.  The fresh prime-field seed length in
\eqref{eq:supp-total-local-randomness} scales linearly with \(m\).
The aggregate seed alphabet has base-\(Q\) length \(Q_{\rm loc}(t;d)/e\).

All arithmetic below is over \(\mathbb F_p\).
For one direct scalar group, \(\omega=b-L\), and each cache has
\(\omega+1\) sparse coefficient matrices with \(L\) nonzero entries each.
Evaluating the nonzero terms takes at most \(L(\omega+1)\) multiplications,
and summing them takes exactly \(L(\omega+1)-b\) additions across the
\(b\) outputs.  Across all caches, groups, and blocks,
\begin{align}
\mathsf{Mult}_{\rm dir}
&\le\frac dL\sum_{b=L}^{K-1}A_t(b)\,KL(b-L+1),
\notag\\
\mathsf{Add}_{\rm dir}
&\le\frac dL\sum_{b=L}^{K-1}A_t(b)\,
K\bigl[L(b-L+1)-b\bigr].
\end{align}

For a short span \(b\), write \(n_b=K(K-b)\).  Absorb the cyclic shift into
the precomputed right inverse and evaluate
\begin{align}
\mathbf C=
\begin{bmatrix}JP_\Delta&B\end{bmatrix}
\begin{bmatrix}\mathbf V\\\mathbf U\end{bmatrix}.
\end{align}
The vectors \(\mathbf V\) and \(\mathbf U\) together contain
\(K(K-L)+K(L-b)=n_b\) coordinates, and \(\mathbf C\) also has \(n_b\) coordinates.
Evaluating the dense matrix-vector product therefore uses at most \(n_b^2\) multiplications and
\(n_b(n_b-1)\) additions per group.
Across all short blocks, these per-group counts give
\begin{align}
\mathsf{Mult}_{\rm dual}
&\le\frac d{K-L}\sum_{b=1}^{L-1}A_t(b)[K(K-b)]^2,
\notag\\
\mathsf{Add}_{\rm dual}
&\le\frac d{K-L}\sum_{b=1}^{L-1}A_t(b)
K(K-b)[K(K-b)-1].
\end{align}

For the short encoder, put \(m_0=K(K-L)\).  The full-row-rank auxiliary
transpose has size \(m_0\times n_b\).  Precompute its right inverse and
kernel basis for use in every group of span \(b\).
Classical Gauss-Jordan elimination with pivot bookkeeping constructs a right
inverse and kernel basis in \(O(m_0^2n_b)\) operations over \(\mathbb F_p\).
The two matrices contain
\(n_bm_0+n_b(n_b-m_0)=n_b^2\) elements of \(\mathbb F_p\).
They depend only on \((K,L,b)\) over the chosen prime field, so the same
matrices can be reused across files, block families, and scalar groups.
The Pascal coefficients of a long span can likewise be generated
once and reused.
The operation counts above cover the local placement encoder applied to the parent
shares and masking keys expressed in the fixed public basis.
\end{IEEEproof}

At \(b=L\), the direct encoder achieves unit local cost with zero fresh local
randomness.
When \(L=1\), every proper block belongs to
the direct branch.

\subsection{Exact allocation optimality for parent-preserving block-separable lifting}

\begin{Lem}[Exact allocation by field-basis scalarization]
\label{lem:exact-basis-scalarization}
Fix $K\ge2$ and $1\le L<K$, let
$\chi=\operatorname{lcm}(L,K-L)$, and choose a prime $p>K$.
Over $\mathbb F_Q$, where $Q=p^\chi$, every proper span $b$ and every
input length $h\ge1$ admit a valid local encoder with
\begin{align}
V_i\in\mathbb F_Q^h,
\qquad
C_i\in\mathbb F_p^{\,\chi h\mu_{b,L}},
\qquad
\frac1h\log_Q|\mathcal C_i|=\mu_{b,L}.
\end{align}
For every joint input law, the encoder gives exact recovery and conditional
secrecy.  Every named reference uses the same basis representation.
\end{Lem}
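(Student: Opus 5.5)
The plan is to reduce everything to the two local encoder theorems over the prime field $\mathbb F_p$, and then account for storage in base-$Q$ units through a fixed basis. Fix an $\mathbb F_p$-basis of $\mathbb F_Q$ with $Q=p^\chi$ and coordinate bijection $\varphi:\mathbb F_Q\to\mathbb F_p^\chi$. Apply $\varphi$ coordinatewise to every formal input $V_i\in\mathbb F_Q^h$, giving $\overline V_i\in\mathbb F_p^{\chi h}$. Every reference to the same named symbol uses this same expansion, which is the common scalarization of Lemma~\ref{lem:common-scalarization} with $e=\chi$ and $\ell_Y=h$.

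Since $L\mid\chi$ and $(K-L)\mid\chi$, the $\chi h$ coordinates at each position split into exactly $\chi h/L$ groups of length $L$, or $\chi h/(K-L)$ groups of length $K-L$. Grouping is done by coordinate index, so group $g$ collects coordinate block $g$ from every formal position; this yields that many $K$-position scalar blocks over $\mathbb F_p$. For $b\ge L$, apply the direct encoder of Theorem~\ref{thm:direct-local-transform} to each group. For $b<L$, apply the complement-dual encoder of Theorem~\ref{thm:short-local-transform} to each group, with mutually independent seeds drawn independently of the inputs. Both theorems require only $p>K$, which holds.

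Cache $i$ stores the concatenation of its group segments:
\begin{itemize}
\item in the direct case, $b$ coordinates per group, for $(\chi h/L)\,b=\chi h\mu_{b,L}$ coordinates;
\item in the dual case, $K-b$ coordinates per group, for $(\chi h/(K-L))(K-b)=\chi h\mu_{b,L}$ coordinates.
\end{itemize}
Both counts are integers. Hence $\mathcal C_i=\mathbb F_p^{\chi h\mu_{b,L}}$, and $\log_Q|\mathcal C_i|=\chi h\mu_{b,L}\log_Q p=h\mu_{b,L}$, as required.

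Exact recovery follows by applying the group decoders \eqref{eq:direct-local-decoder} or \eqref{eq:dual-local-decoder} from $C_{\mathcal A_u}$. This recovers every coordinate group of $\overline V_{\mathcal T_u^{(b)}}$; regrouping and applying $\varphi^{-1}$ returns $V_{\mathcal T_u^{(b)}}$.

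The step needing care is conditional secrecy for an arbitrary joint input law, since the groups share inputs and the inputs may be correlated or aliased. I would not argue through entropies. Instead I would use the kernel form, as in the proof of Proposition~\ref{prop:inherit}. For each group $g$, validity of the group encoder for every input law (direct: a deterministic function of the targets; dual: the conditional law is uniform on $\mathcal G_u(v_T)$) gives a stochastic kernel $Q_{u,g}(c_g\mid \overline v^{(g)}_{\mathcal T_u^{(b)}})$ describing the accessible output given the full input. Conditioned on the full input $V=v$, the seeds remain independent across groups. Hence
\[
\Pr\{C_{\mathcal A_u}=c\mid V=v\}=\prod_g Q_{u,g}\bigl(c_g\mid \overline v^{(g)}_{\mathcal T_u^{(b)}}\bigr),
\]
which depends on $v$ only through $v_{\mathcal T_u^{(b)}}$, because $\varphi$ is a bijection. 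This product law gives $I(V_{\mathbb Z_K\setminus\mathcal T_u^{(b)}};C_{\mathcal A_u}\mid V_{\mathcal T_u^{(b)}})=0$ for every input law.

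Finally, the basis and the group partition depend only on $(K,L,p)$. Every named reference is therefore expanded identically across blocks, which is the last claim of the lemma.
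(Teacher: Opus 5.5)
Your proposal is correct and follows the paper's proof essentially step for step. Both arguments use a fixed $\mathbb F_p$-basis expansion and split each position into $\chi h/L$ or $\chi h/(K-L)$ coordinate groups, handled by the direct or complement-dual encoder with mutually independent seeds; both then count $\chi h\mu_{b,L}$ stored coordinates so that $\log_Q|\mathcal C_i|=h\mu_{b,L}$, recover the targets through the group decoders followed by $\varphi^{-1}$, and obtain conditional secrecy because, given the full source value, the accessible-output law is a product of per-group kernels that depend only on the target coordinates.
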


\begin{IEEEproof}
Fix an $\mathbb F_p$-basis $(e_1,\ldots,e_\chi)$ of $\mathbb F_Q$.
The coordinate map
\begin{align}
\phi:\mathbb F_Q^h&\longrightarrow\mathbb F_p^{\chi h},\\
(v_1,\ldots,v_h)&\longmapsto
(v_{1,1},\ldots,v_{1,\chi},\ldots,v_{h,1},\ldots,v_{h,\chi}),\\
v_j&=\sum_{r=1}^{\chi}v_{j,r}e_r
\end{align}
is a fixed bijection.  Both $L$ and $K-L$ divide $\chi h$.
For $b\ge L$, partition $\phi(V_i)$ into $\chi h/L$ groups of $L$
coordinates and apply Theorem~\ref{thm:direct-local-transform} over
$\mathbb F_p$ to each group.  Each cache receives
$(\chi h/L)b=\chi h\mu_{b,L}$ base-field coordinates.
For $b<L$, partition $\phi(V_i)$ into $\chi h/(K-L)$ groups of $K-L$ coordinates
and apply Theorem~\ref{thm:short-local-transform} over $\mathbb F_p$,
using mutually independent placement-time seeds whose joint vector is
independent of the source.  Each cache receives
$(\chi h/(K-L))(K-b)=\chi h\mu_{b,L}$ base-field coordinates.

Decode the target coordinates in each group, then apply $\phi^{-1}$ to
recover the original target symbols.  Conditioned on a complete source value $v$,
each group's accessible-output law depends only on its target coordinates.
Conditioning on $v$ leaves the placement-time seeds mutually independent and
does not change their distributions.
The joint accessible-output law is therefore the product of the group-output
laws and depends on $v$ only through
$v_{\mathcal T_u^{(b)}}$.  Hence conditional secrecy holds under every joint
source law.  Every occurrence of a named
symbol uses the same fixed coordinate map.

The segment alphabet has cardinality
$p^{\chi h\mu_{b,L}}$.  Since $Q=p^\chi$, its normalized allocation is
\begin{align}
\frac{\log_Q|\mathcal C_i|}{h}
=\frac{\chi h\mu_{b,L}\log_Q p}{h}
=\mu_{b,L}.
\end{align}
Thus the original $h$ symbols of $\mathbb F_Q$ are encoded exactly through
$\phi$.
\end{IEEEproof}

\begin{Lem}[Full-cache recovery for one formal block]
\label{lem:supp-all-caches}
If every user reconstructs its nonempty target window from its access window,
then the full physical-cache collection reconstructs all \(K\) formal source
positions.
\end{Lem}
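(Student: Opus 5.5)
The claim follows from a covering argument: every formal position lies in some user's target window, and each user's access window is a subset of the full cache collection. Fix one formal block of span \(b\in[1:K-1]\) with segments \(C_0,\ldots,C_{K-1}\). The hypothesis is the local reconstruction condition \eqref{eq:local-reconstruction}, namely \(H(V_{\mathcal T_u^{(b)}}\mid C_{\mathcal A_u})=0\) for every \(u\in\mathbb Z_K\). The goal is \(H(V_{\mathbb Z_K}\mid C_{\mathbb Z_K})=0\).

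The argument has three steps.

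\textbf{Step 1 (monotonicity).} Since \(\mathcal A_u\subseteq\mathbb Z_K\), conditioning reduces entropy, so
\[
H\bigl(V_{\mathcal T_u^{(b)}}\mid C_{\mathbb Z_K}\bigr)\le H\bigl(V_{\mathcal T_u^{(b)}}\mid C_{\mathcal A_u}\bigr)=0 .
\]

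\textbf{Step 2 (covering).} The target windows cover every position. Since \(b\ge1\), position \(n\) belongs to \(\mathcal T_n^{(b)}\), which starts at \(n\). Hence \(\bigcup_{u}\mathcal T_u^{(b)}=\mathbb Z_K\).

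\textbf{Step 3 (chain rule).} Order the positions as \(0,\ldots,K-1\). Then
\[
H(V_{\mathbb Z_K}\mid C_{\mathbb Z_K})=\sum_{n}H(V_n\mid V_{[0:n-1]},C_{\mathbb Z_K})\le\sum_n H(V_n\mid C_{\mathbb Z_K}).
\]
Each term is bounded by \(H(V_{\mathcal T_n^{(b)}}\mid C_{\mathbb Z_K})\), which is zero by Step 1. Each summand is therefore zero, and so is the total.

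Equivalently, one can argue functionally. Each \(V_{\mathcal T_u^{(b)}}\) is almost surely a function of \(C_{\mathcal A_u}\), and hence of \(C_{\mathbb Z_K}\). Reading \(V_n\) off the decoder of user \(n\) assembles the whole block.

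The argument uses no linearity, so it holds for arbitrary, possibly randomized and nonlinear, encoders and for any joint input law, as Definition~\ref{def:block-separable-parent-preserving} requires. The only point needing care is that the local seed enters only through the \(C_i\), so it never appears in the conditioning. I expect no real obstacle: the covering step is the whole content, and it needs only that the target windows are nonempty.
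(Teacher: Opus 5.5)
Your proposal is correct and matches the paper's proof. Both rest on the fact that position \(\ell\) lies in its own target window \(\mathcal T_\ell^{(b)}\), so user \(\ell\)'s decoder recovers \(V_\ell\) from \(C_{\mathcal A_\ell}\), which is a subcollection of all the caches; your monotonicity and chain-rule steps formalize that functional argument in entropy form.
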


\begin{IEEEproof}
For every \(\ell\in\mathbb Z_K\), the target window
\(\mathcal T_\ell^{(b)}\) contains \(\ell\), and its decoder recovers
\(V_\ell\) from \(C_{\mathcal A_\ell}\), a subcollection of all caches.
The full cache collection therefore recovers every formal source position.
\end{IEEEproof}

\begin{IEEEproof}[Proof of Theorem~\ref{thm:framework}]
Fix one appearance-span-\(b\) block, and let \(Z_i^{(b)}\) be its segment in
cache \(i\).  Evaluate the bounds under the independent-input distribution
specified in the theorem.  Each formal position has entropy \(h\), measured
with base-\(Q\) logarithms.  In the rest of this proof, divide all entropies
by \(h\), and write \(a_i^{(b)}=h^{-1}\log_Q|\mathcal C_i^{(b)}|\) for the
corresponding segment allocation.  Thus, in these normalized units,
\begin{align}
H(Z_i^{(b)})\le a_i^{(b)},
\qquad
\bar a_b\triangleq\frac1K\sum_{i=0}^{K-1}a_i^{(b)}.
\end{align}

For \(b\ge L\), the formal inputs are mutually independent and have unit
entropy in these units.  Exact target recovery gives, for every user,
\begin{align}
H(Z_{\mathcal A_u}^{(b)})
\ge I(V_{\mathcal T_u^{(b)}};Z_{\mathcal A_u}^{(b)})
=H(V_{\mathcal T_u^{(b)}})=b.
\end{align}
Each cache belongs to exactly \(L\) access windows.  Summing over users and
applying subadditivity gives
\begin{align}
Kb
&\le\sum_u H(Z_{\mathcal A_u}^{(b)})
\le L\sum_iH(Z_i^{(b)})
\le L\sum_i a_i^{(b)}
=LK\bar a_b.
\end{align}
Thus \(\bar a_b\ge b/L\).

For \(b<L\), the chain rule, input independence, and local conditional secrecy give
\begin{align}
I(V_{(\mathcal T_u^{(b)})^{\rm c}};Z_{\mathcal A_u}^{(b)})
&\le I(V_{(\mathcal T_u^{(b)})^{\rm c}};
V_{\mathcal T_u^{(b)}},Z_{\mathcal A_u}^{(b)})\notag\\
&=0.
\end{align}
Consequently,
\begin{align}
H(V_{(\mathcal T_u^{(b)})^{\rm c}}\mid Z_{\mathcal A_u}^{(b)})=K-b.
\end{align}
Lemma~\ref{lem:supp-all-caches} gives zero conditional entropy for the same
complement given the full cache collection.  Consequently,
\begin{align}
H(Z_{\mathcal A_u^{\rm c}}^{(b)})
&\ge H(Z_{\mathcal A_u^{\rm c}}^{(b)}\mid Z_{\mathcal A_u}^{(b)})
\notag\\
&\ge I(V_{(\mathcal T_u^{(b)})^{\rm c}};
Z_{\mathcal A_u^{\rm c}}^{(b)}\mid Z_{\mathcal A_u}^{(b)})
=K-b.
\end{align}
Every cache is outside exactly \(K-L\) access windows.  Summing the bound over
users and applying subadditivity gives
\begin{align}
K(K-b)
\le\sum_uH(Z_{\mathcal A_u^{\rm c}}^{(b)})
\le(K-L)\sum_iH(Z_i^{(b)})
\le K(K-L)\bar a_b.
\end{align}
Hence \(\bar a_b\ge(K-b)/(K-L)\).

Lemma~\ref{lem:exact-basis-scalarization} attains $a_i=\mu_{b,L}$
in every cache, for every positive integer $h$, over the same field
$\mathbb F_Q$.  Since the lower bound applies to every segment alphabet,
this is the exact minimum worst-cache allocation.

Let $h_\beta$ be the common length of the formal positions in block $\beta$.
The total allocation normalized by file length $F$ obeys
\begin{align}
\max_i\sum_\beta m_i^{(\beta)}
&\ge\frac1K\sum_i\sum_\beta m_i^{(\beta)}
\ge\sum_\beta\frac{h_\beta}{F}\mu_{b_\beta,L}.
\end{align}
Realize the subset-indexed parent over $\mathbb F_Q$
with $h_\beta=m$ and $F=mC_t$.  The same basis expansion for all blocks
stores $\chi m\mu_{b_\beta,L}$ base-field coordinates in each cache.
Their base-$Q$ allocation is exactly $m\mu_{b_\beta,L}$.
Every parent-cache symbol is recovered in its original field representation,
so the parent delivery message and rate are unchanged.
Summing the $A_t(b)$ blocks of each span gives
\begin{align}
\frac1{C_t}\sum_{b=1}^{K-1}A_t(b)\mu_{b,L}=M(t).
\end{align}
Thus $M(t)$ is the exact minimum normalized worst-cache allocation for the
fixed parent and block list in the parent-preserving block-separable class.
\end{IEEEproof}

\section{Proofs of the General Converse and Rate Comparisons}
\label{app:order-opt}
Theorem~\ref{thm:macc-converse} bounds the delivery rate using the overlap
between cyclic access windows.  Lemma~\ref{lem:feasible-memory} bounds the
cache memory needed for recovery under file secrecy.  For $N\ge K$, these
bounds and the cyclic-run memory estimate prove
Theorem~\ref{thm:order-opt} for every feasible memory.  For $K>N$, the
memory estimates prove Proposition~\ref{prop:order-opt-small-library} on
its stated intervals.  For $L=2$, the corner formulas and a midpoint
estimate prove Proposition~\ref{prop:L2-order}.

\subsection{Rate and Memory Converses}
The proof of Theorem~\ref{thm:macc-converse} follows the cut-set argument in
\cite[Section~V-D]{ImprovedLowerBoundsForMultiAccessCodedCaching}, using the
conditional secrecy and fixed-length resource definitions of our model.

\begin{IEEEproof}[Proof of Theorem~\ref{thm:macc-converse}]
Fix a finite code.  In the finite-code calculation, $M$ and $R$ denote its
actual normalized resources $M_F$ and $R_F$.
Let $s$ be the number of consecutive users considered and $q_s$ the number of
disjoint groups of requested files.  Choose
\begin{align}
s\in\{1,2,\ldots,\min\{\lfloor N/2\rfloor,K\}\},
\qquad
q_s\triangleq\left\lfloor\frac Ns\right\rfloor.
\label{eq:converse-s-q-choice}
\end{align}
The range of $s$ in \eqref{eq:converse-s-q-choice} ensures $q_s\ge2$.  The
$s$ consecutive users in $[s]$ have access-set union and reference access set
\begin{align}
\mathcal P_s\triangleq\bigcup_{k=1}^{s}\mathcal A_k,
\qquad
\mathcal A\triangleq\mathcal A_1.
\label{eq:converse-window-union}
\end{align}
The cyclic access pattern gives
\begin{align}
|\mathcal P_s\backslash\mathcal A|
=\min\{s-1,K-L\}.
\label{eq:converse-extra-cache-count}
\end{align}
In \eqref{eq:converse-window-union}, each user after user $1$ adds one cache
to $\mathcal A$ until the union contains all $K$ caches.  The number added
is therefore capped by the $K-L$ caches outside $\mathcal A$.

For each $\ell\in[q_s]$, choose a demand vector satisfying
\begin{align}
d_k^{(\ell)}=(\ell-1)s+k,
\qquad k\in[s].
\label{eq:converse-demand-family}
\end{align}
The remaining $K-s$ users have arbitrary valid demands.  Since $sq_s\le N$,
every requested index in \eqref{eq:converse-demand-family} belongs to $[N]$.
Write the broadcasts as
\begin{align}
X_\ell\triangleq X_{\mathbf d^{(\ell)}},
\qquad \ell\in[q_s].
\label{eq:converse-broadcast-abbreviation}
\end{align}
For a fixed $\ell$, every user $k\in[s]$ recovers $W_{(\ell-1)s+k}$ from
$(Z_{\mathcal A_k},X_\ell)$.  Since $\mathcal A_k\subseteq\mathcal P_s$,
correctness gives
\begin{align}
H(W_{(\ell-1)s+1:\ell s}\mid Z_{\mathcal P_s},X_\ell)=0,
\qquad \ell\in[q_s].
\label{eq:converse-demand-block-decode}
\end{align}
The $q_s$ decoding relations in \eqref{eq:converse-demand-block-decode}
jointly give
\begin{align}
H(W_{1:sq_s}\mid Z_{\mathcal P_s},X_{1:q_s})=0.
\label{eq:converse-all-demand-decode}
\end{align}

Under $\mathbf d^{(1)}$, user $1$ requests $W_1$. The conditional security condition for that user gives
\begin{align}
I(W_{2:N};Z_{\mathcal A},X_1\mid W_1,\mathbf d^{(1)})=0.
\label{eq:converse-reference-security}
\end{align}
The demand vector is fixed.  Removing it from the conditioning and using
file independence gives
\begin{align*}
I(W_{2:N};Z_{\mathcal A},X_1)
&\le I(W_{2:N};W_1,Z_{\mathcal A},X_1)\\
&=I(W_{2:N};W_1)
  +I(W_{2:N};Z_{\mathcal A},X_1\mid W_1)\\
&=0.
\end{align*}
In particular, $I(W_{2:sq_s};Z_{\mathcal A},X_1)=0$.  All $q_s$ broadcasts
and the caches in $\mathcal P_s\setminus\mathcal A$, together with the
reference window, recover $sq_s$ files.  The reference observation
$(Z_{\mathcal A},X_1)$ contains no information about $W_{2:sq_s}$.
The added caches and broadcasts must therefore supply the remaining
$(sq_s-1)F$ units of file entropy.  By \eqref{eq:converse-all-demand-decode},
\begin{subequations}
\begin{align}
(sq_s-1)F
&=H(W_{2:sq_s})
\label{eq:converse-entropy-file-count}\\
&=H(W_{2:sq_s}\mid Z_{\mathcal A},X_1)
\label{eq:converse-entropy-security}\\
&=I(W_{2:sq_s};Z_{\mathcal P_s\backslash\mathcal A},X_{2:q_s}
\mid Z_{\mathcal A},X_1)
\label{eq:converse-entropy-decode}\\
&\le H(Z_{\mathcal P_s\backslash\mathcal A},X_{2:q_s})
\label{eq:converse-entropy-observation}\\
&\le \min\{s-1,K-L\}MF+(q_s-1)RF.
\label{eq:converse-entropy-memory-rate}
\end{align}
\end{subequations}
The last inequality follows from subadditivity,
\eqref{eq:converse-extra-cache-count}, and the fixed-length storage and
delivery constraints.

Dividing \eqref{eq:converse-entropy-memory-rate} by $(q_s-1)F$ gives
\begin{align}
R\ge
\frac{s\lfloor N/s\rfloor-1-\min\{s-1,K-L\}M}
{\lfloor N/s\rfloor-1}.
\label{eq:converse-single-s-bound}
\end{align}

For an achievable target pair, return to the sequence notation.
For each $\varepsilon>0$, the limsup constraints give
$M_{F_j}\le M+\varepsilon$ and $R_{F_j}\le R+\varepsilon$ for all
sufficiently large $j$.  Applying the finite-code inequality to this
sequence and letting $\varepsilon\downarrow0$ gives the bound for the
target pair.
Bound \eqref{eq:converse-single-s-bound} holds for each admissible $s$ in
\eqref{eq:converse-s-q-choice}, with $q_s-1>0$, and for every achievable
rate $R$.  Maximizing over $s$ and taking the infimum over achievable $R$
gives \eqref{eq:macc-th6}.
\end{IEEEproof}

\begin{IEEEproof}[Proof of Lemma~\ref{lem:feasible-memory}]
Fix a finite code, and write $M$ for its actual normalized cache memory
$M_F$ in the calculation below.
To bound the memory, choose one demand in which as many users as possible
request distinct files.  Let the number of distinct requested files be
\begin{align}
r\triangleq\min\{N,K\},
\label{eq:feasible-distinct-file-count}
\end{align}
and choose the demand vector
\begin{align}
\mathbf d=(1,2,\ldots,r,r,\ldots,r),
\label{eq:feasible-demand-choice}
\end{align}
where the repeated entries are absent when $r=K$.  The first $r$ users in
\eqref{eq:feasible-demand-choice} request the $r$ distinct files $W_{1:r}$.
Relative to user $1$, define the union of their access windows and the
caches outside the reference window by
\begin{align}
\mathcal P_r\triangleq\bigcup_{k=1}^{r}\mathcal A_k,
\qquad
\mathcal C_r\triangleq\mathcal P_r\backslash\mathcal A_1.
\label{eq:feasible-access-partition}
\end{align}
The cyclic topology gives $\mathcal A_1\subseteq\mathcal P_r$ and
$|\mathcal C_r|\le\min\{r-1,K-L\}\le K-L$.
For each user $k\in[r]$, correctness under $\mathbf d$ makes $W_k$ a
deterministic function of $(Z_{\mathcal A_k},X_{\mathbf d})$.
Since $\mathcal A_k\subseteq\mathcal P_r$, the combined cache view and
broadcast recover all $r$ requested files:
\begin{align}
H(W_{1:r}\mid Z_{\mathcal P_r},X_{\mathbf d})=0.
\label{eq:feasible-full-decode}
\end{align}
Conditional security for user $1$, whose requested file is $W_1$, gives
\begin{align}
I(W_{2:N};Z_{\mathcal A_1},X_{\mathbf d}\mid W_1,\mathbf d)=0.
\label{eq:feasible-user-security}
\end{align}
Fixing $\mathbf d$ and using file independence converts
\eqref{eq:feasible-user-security} into
$I(W_{2:N};Z_{\mathcal A_1},X_{\mathbf d})=0$.  Marginalizing the files
outside $W_{2:r}$ gives
\begin{align}
I(W_{2:r};Z_{\mathcal A_1},X_{\mathbf d})=0.
\label{eq:feasible-requested-security}
\end{align}
The observation $(Z_{\mathcal P_r},X_{\mathbf d})$ recovers all $r$ requested
files.  Its subcollection $(Z_{\mathcal A_1},X_{\mathbf d})$ satisfies
\eqref{eq:feasible-requested-security}.  Splitting the mutual information
along $\mathcal P_r=\mathcal A_1\mathbin{\dot\cup}\mathcal C_r$ gives
\begin{subequations}
\begin{align}
rF
&=I(W_{1:r};Z_{\mathcal P_r},X_{\mathbf d})
\label{eq:feasible-chain-decode}\\
&=I(W_{1:r};Z_{\mathcal A_1},X_{\mathbf d})
+I(W_{1:r};Z_{\mathcal C_r}\mid Z_{\mathcal A_1},X_{\mathbf d})
\label{eq:feasible-chain-partition}\\
&=I(W_1;Z_{\mathcal A_1},X_{\mathbf d}\mid W_{2:r})
+I(W_{1:r};Z_{\mathcal C_r}\mid Z_{\mathcal A_1},X_{\mathbf d})
\label{eq:feasible-chain-security}\\
&\le H(W_1)+H(Z_{\mathcal C_r})
\label{eq:feasible-chain-entropy}\\
&\le F+(K-L)MF.
\label{eq:feasible-chain-memory}
\end{align}
\end{subequations}
The entropy of $W_{1:r}$ is bounded by the entropy of one file and the caches in
$\mathcal C_r$.  Since $|\mathcal C_r|\le K-L$ and $H(Z_i)\le MF$, their
total entropy is at most $F+(K-L)MF$.  Dividing by the file length gives
$r\le1+(K-L)M$.  Substitution of \eqref{eq:feasible-distinct-file-count}
yields \eqref{eq:feasible-memory}.
For an achievable target pair and each $\varepsilon>0$, the sequence
satisfies $M_{F_j}\le M+\varepsilon$ for all sufficiently large $j$.
Applying the finite-code bound and letting $\varepsilon\downarrow0$
gives the same threshold for the target memory.
\end{IEEEproof}

\subsection{\texorpdfstring{Approximation Guarantee When $N\ge K$}{Approximation Guarantee When N>= K}}
For $N\ge K$, define the reference memory interval by
\begin{align}
\mathcal M_{\mathrm{ref}}
\triangleq
\left[\frac{K-1}{K-L},\frac{(N-1)(K-1)}{L}\right].
\label{eq:order-reference-domain}
\end{align}
For $N=K$ and $L=K-1$, this interval consists of the point $M=K-1$.

By \eqref{eq:left-endpoint}, the achievable envelope obeys
\begin{align}
R_{\mathrm{lift}}(M)\le K,\qquad \forall M\in\mathcal M_{\mathrm{ref}}.
\label{eq:order-trivial}
\end{align}

To compare the achievable rate and converse at a given memory budget, we
first bound the memory of each lifted corner using the run count.  Two
floor-selection lemmas then identify a corner within the budget and a
converse term at the same memory.
These three lemmas hold for all integers $N,K\ge2$ and $1\le L<K$, subject to
their stated hypotheses.

\begin{Lem}[Cyclic-run cost bound]\label{lem:order-run-cost}
Let $m_0\triangleq\min\{L,K-L\}$. Then
\begin{align}
T_{\mathrm{blk}}(K,0,L)&=0,\nonumber\\
\sum_{b=1}^{K-1}\Gamma(K,r,b)&=\binom{K-2}{r-1},
\qquad 1\le r\le K-1,
\label{eq:order-gamma-t}\\
T_{\mathrm{blk}}(K,r,L)&\le
\frac{K-1}{m_0}\binom{K-2}{r-1},
\qquad 1\le r\le K-1.
\label{eq:order-run-cost}
\end{align}
\end{Lem}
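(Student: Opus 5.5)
The first identity follows from the definition \eqref{eq:gamma}. For $r=0$, every case of $\Gamma(K,0,b)$ requires $b\le r=0$ or $r=b=K-1$. Neither holds for $b\ge1$, so every summand of \eqref{eq:T} vanishes and $T_{\mathrm{blk}}(K,0,L)=0$.

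The second identity \eqref{eq:order-gamma-t} is exactly the block-count lemma proved above (the one giving $B(K,r)$). For $1\le r\le K-2$ we apply the hockey-stick identity to $\sum_{b=1}^{r}\binom{K-b-2}{r-b}$. For $r=K-1$ the only nonzero term is $\Gamma(K,K-1,K-1)=1=\binom{K-2}{K-2}$. I will simply cite that lemma.

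For \eqref{eq:order-run-cost}, the plan is a uniform bound on the local cost:
\[
\mu_{b,L}\le\frac{K-1}{m_0}\qquad\text{for every } b\in[1:K-1].
\]
Combined with \eqref{eq:order-gamma-t}, this gives
\[
T_{\mathrm{blk}}(K,r,L)=\sum_b\mu_{b,L}\Gamma(K,r,b)\le\frac{K-1}{m_0}\sum_b\Gamma(K,r,b),
\]
which is the claim, since all $\Gamma\ge0$.

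To check the uniform bound, split into the two branches.
\begin{itemize}
\item For $1\le b<L$: $\mu_{b,L}=(K-b)/(K-L)\le(K-1)/(K-L)$, because $b\ge1$. Then $K-L\ge m_0$ gives $\mu_{b,L}\le(K-1)/m_0$.
\item For $L\le b\le K-1$: $\mu_{b,L}=b/L\le(K-1)/L\le(K-1)/m_0$, because $L\ge m_0$.
\end{itemize}

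There is no real obstacle. The only care needed is with the boundary cases $r=K-1$ (where only the span-$(K-1)$ block appears) and $L=K-1$ (where the short branch covers $b<K-1$). Both are handled by the same monotonicity bounds.
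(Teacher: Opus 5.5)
Your proposal is correct and follows the paper's proof essentially verbatim. Both arguments bound each branch of $\mu_{b,L}$ uniformly by $(K-1)/m_0$, use $\Gamma\ge 0$, and invoke the earlier block-count identity $\sum_b\Gamma(K,r,b)=\binom{K-2}{r-1}$. The only cosmetic difference is that you read off $T_{\mathrm{blk}}(K,0,L)=0$ directly from the case definition of $\Gamma$ instead of citing Lemma~\ref{lem:cycliccount}, which is equally valid.
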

\begin{IEEEproof}
Both branches of \eqref{eq:mu} obey
\begin{align}
\mu_{b,L}\le\frac{K-1}{m_0},
\qquad 1\le b\le K-1.
\label{eq:order-mu-uniform-bound}
\end{align}
For $b<L$, the bound follows from $K-b\le K-1$ and $K-L\ge m_0$.  For
$b\ge L$, it follows from $b\le K-1$ and $L\ge m_0$.  Hence
\begin{align}
T_{\mathrm{blk}}(K,r,L)
\le\frac{K-1}{m_0}\sum_{b=1}^{K-1}\Gamma(K,r,b),
\qquad \forall r\in\{0,1,\ldots,K-1\}.
\label{eq:order-T-upper}
\end{align}
Lemma~\ref{lem:cycliccount} gives the zero case and, for $1\le r\le K-1$,
the binomial sum in \eqref{eq:order-gamma-t}, which counts the complete
$K$-position formal block families over all appearance spans.
The uniform local-cost bound in \eqref{eq:order-T-upper} yields
the aggregate cost \eqref{eq:order-run-cost}.
\end{IEEEproof}

\begin{Lem}[Floor-selected achievable envelope]\label{lem:floor-selected-envelope}
Let $a>0$, and let $T\ge0$ be an integer.  Suppose the points indexed by
$t\in\{0,\ldots,T\}$ belong to the achievable envelope $R_{\mathrm{lift}}$,
have rate $K/(t+1)$, and satisfy $M(t)\le a(t+1)$.  Suppose a rate-$K$
anchor belonging to $R_{\mathrm{lift}}$ is available at budget $M$ by memory
monotonicity.  If
$0<M<a(T+2)$, then
\begin{align}
R_{\mathrm{lift}}(M)\le\frac{2Ka}{M}.
\label{eq:floor-selected-envelope}
\end{align}
\end{Lem}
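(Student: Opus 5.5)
The plan is to pick a single index $t^\star$ whose corner fits within the budget $M$, and then use memory monotonicity of the envelope. Set
\[
t^\star\triangleq\min\{T,\ \lfloor M/a\rfloor-1\}.
\]
I would split into two cases according to whether $M<2a$ or $M\ge 2a$.

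\emph{Case $M<2a$.} Here the rate-$K$ anchor is available at budget $M$ by hypothesis, so $R_{\mathrm{lift}}(M)\le K$. Since $M<2a$ gives $K<2Ka/M$, this already yields \eqref{eq:floor-selected-envelope}.

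\emph{Case $M\ge2a$.} Now $\lfloor M/a\rfloor\ge2$, so $t^\star\ge1$, and $t^\star\in\{0,\ldots,T\}$ because it is capped by $T$. The memory of this corner satisfies
\[
M(t^\star)\le a(t^\star+1)\le a\lfloor M/a\rfloor\le M.
\]
Each finite-rate construction remains achievable at every larger cache budget, and the infimum in \eqref{eq:ach-envelope-def} permits the pure choice $\lambda_{t^\star}=1$. Hence $R_{\mathrm{lift}}(M)\le K/(t^\star+1)$.

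It remains to show $t^\star+1\ge M/(2a)$, which I would do in two subcases.

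\emph{Subcase $t^\star=\lfloor M/a\rfloor-1$.} Then $t^\star+1=\lfloor M/a\rfloor\ge M/a-1$. This quantity is at least $M/(2a)$ exactly when $M\ge 2a$, which holds in this case. A cleaner route is $\lfloor x\rfloor\ge x/2$ for $x\ge1$, applied with $x=M/a\ge2$.

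\emph{Subcase $t^\star=T$.} Then $t^\star+1=T+1$. The hypothesis $M<a(T+2)$ gives $M/a<T+2\le 2(T+1)$, using $T\ge0$. Hence $T+1>M/(2a)$.

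In both subcases $K/(t^\star+1)\le 2Ka/M$, which proves \eqref{eq:floor-selected-envelope}.

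The only delicate step is this floor bookkeeping. In particular, the condition $M<a(T+2)$ is exactly what keeps the cap $t^\star=T$ from losing more than a factor of two. The cases $M<2a$ and $T=0$ are the ones to double-check, since they are where the anchor is needed. I expect no other obstacle.
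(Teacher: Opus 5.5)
Your proof is correct and uses the same floor-selection argument as the paper: the rate-$K$ anchor covers small $M$ (your threshold is $M<2a$, the paper's is $M<a$), and otherwise the corner $t=\lfloor M/a\rfloor-1$ fits the budget, with $\lfloor M/a\rfloor\ge M/(2a)$. Your cap $\min\{T,\cdot\}$ never binds, because $M<a(T+2)$ already forces $\lfloor M/a\rfloor\le T+1$; that hypothesis is what makes the floor-selected index admissible, not a control on a factor-of-two loss from capping.
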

\begin{IEEEproof}
If $M/a<1$, the rate-$K$ anchor gives
$R_{\mathrm{lift}}(M)\le K<2Ka/M$.  For $M/a\ge1$, the floor-selected indices
are $u=\lfloor M/a\rfloor$ and $t=u-1$.  The assumed upper range gives
$1\le u\le T+1$, and
\begin{align}
M(t)\le au\le M,
\qquad
R_{\mathrm{lift}}(M)\le\frac Ku,
\qquad
u\ge\frac{M}{2a}.
\label{eq:floor-selected-core}
\end{align}
The corner fits within budget $M$ and achieves rate $K/u$.
The estimate $u\ge M/(2a)$ gives
\eqref{eq:floor-selected-envelope}.
\end{IEEEproof}

\begin{Lem}[Floor-selected converse]\label{lem:floor-selected-converse}
Assume $1\le M\le N/2$, and set
\begin{align}
s=\left\lfloor\frac{N}{2M}\right\rfloor,
\qquad
g=\left\lfloor\frac Ns\right\rfloor.
\label{eq:floor-converse-indices}
\end{align}
If $s$ is admissible in \eqref{eq:R-conv} and $s\le K-L+1$, then
\begin{align}
R_{\mathrm{conv}}(M)&\ge\frac{s}{2}\ge\frac{N}{8M},
\label{eq:floor-converse-general}\\
R_{\mathrm{conv}}(M)&\ge\frac{N}{6M},
\qquad M\le N/4.
\label{eq:floor-converse-sharp}
\end{align}
\end{Lem}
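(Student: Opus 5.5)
The plan is to evaluate the single $s$-indexed term of \eqref{eq:R-conv} at the floor-selected $s=\lfloor N/(2M)\rfloor$ and lower-bound it directly. Since $R_{\mathrm{conv}}(M)$ is a maximum over admissible indices and $s$ is assumed admissible, it suffices to bound that one term from below.

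\emph{Simplify the coefficient.} The hypothesis $s\le K-L+1$ gives $\min\{s-1,K-L\}=s-1$. The hypothesis $1\le M\le N/2$ gives $N/(2M)\ge1$, so $s\ge1$. Admissibility gives $s\le\lfloor N/2\rfloor$, hence $g=\lfloor N/s\rfloor\ge2$ and $g-1>0$. The $s$-term is therefore
\begin{align*}
\frac{sg-1-(s-1)M}{g-1}.
\end{align*}

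\emph{Prove the bound $s/2$.} Multiplying by $2(g-1)>0$, the inequality ``term $\ge s/2$'' is equivalent to
\begin{align*}
sg+s-2-2(s-1)M\ge0.
\end{align*}
Two floor estimates handle this.
\begin{itemize}
\item From $g>N/s-1$ and integrality, $sg\ge N-s+1$, so $sg+s-2\ge N-1$.
\item From $s\le N/(2M)$ we get $M\le N/(2s)$, so $2(s-1)M\le (s-1)N/s=N-N/s$.
\end{itemize}
The required inequality then reduces to $N-1\ge N-N/s$, i.e.\ $N/s\ge1$. This holds because $s\le N/2$.

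\emph{Pass to $N/(8M)$ and $N/(6M)$.} Put $x=N/(2M)$, so $s=\lfloor x\rfloor$.
\begin{itemize}
\item For $x\ge1$ we have $\lfloor x\rfloor\ge x/2$. Hence $s/2\ge x/4=N/(8M)$, which gives \eqref{eq:floor-converse-general}.
\item When $M\le N/4$ we have $x\ge2$, and then $\lfloor x\rfloor\ge 2x/3$. On $[2,3)$ the floor is $2>2x/3$ fails only if $x\ge3$, so $2\ge2x/3$ holds there. For $x\ge3$, $\lfloor x\rfloor>x-1\ge 2x/3$. Therefore $s/2\ge x/3=N/(6M)$, which gives \eqref{eq:floor-converse-sharp}.
\end{itemize}

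The only step needing care is the $s/2$ inequality. There the two floors must be combined in the right direction: a lower bound on $sg$ from $g=\lfloor N/s\rfloor$, and an upper bound on $M$ from $s=\lfloor N/(2M)\rfloor$. With those, it collapses to the trivial $s\le N$. The remaining steps are elementary floor estimates.
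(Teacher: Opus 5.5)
Your proof is correct and follows essentially the paper's route: bound the single admissible $s$-term below by $s/2$ using the two floor relations (the paper packages them as $M\le(g+1)/2$, while you keep $sg\ge N-s+1$ and $M\le N/(2s)$ separate), then apply $\lfloor x\rfloor\ge x/2$ for $x\ge1$ and $\lfloor x\rfloor\ge 2x/3$ for $x\ge2$. The sentence justifying the second estimate on $[2,3)$ is garbled and should simply read: there $\lfloor x\rfloor=2\ge 2x/3$ because $x<3$.
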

\begin{IEEEproof}
The condition $s\le K-L+1$ gives $\min\{s-1,K-L\}=s-1$.
Also, $s\le N/(2M)$ implies $g\ge2M-1$, so $M\le(g+1)/2$.
The selected term of \eqref{eq:R-conv} therefore satisfies
\begin{align}
R_{\mathrm{conv}}(M)
\ge\frac{sg-1-(s-1)M}{g-1}
=\frac{s(g-M)+M-1}{g-1}
\ge\frac{s}{2}.
\label{eq:floor-converse-core}
\end{align}
Apply the elementary floor bounds with $x=N/(2M)$.  For $N/(2M)\ge1$,
$\lfloor x\rfloor\ge x/2$ gives $s\ge N/(4M)$.  For $N/(2M)\ge2$,
the sharper $\lfloor x\rfloor\ge2x/3$ gives $s\ge N/(3M)$.
The first estimate yields \eqref{eq:floor-converse-general}, and the
second yields \eqref{eq:floor-converse-sharp}.
\end{IEEEproof}

\begin{Lem}\label{lem:order-memory}
Assume $N\ge K$. Then, for every $t\in\{0,1,\ldots,K-2\}$,
\begin{align}
M(t)\le \frac{N(t+1)}{\min\{L,K-L\}}.
\label{eq:order-memory}
\end{align}
\end{Lem}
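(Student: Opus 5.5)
The plan is to start from the formula \eqref{eq:M-t} for $M(t)$ and bound its two parts separately with Lemma~\ref{lem:order-run-cost}. Put $m_0=\min\{L,K-L\}$ and $C_t=\binom{K-1}{t}$.

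First consider the masking-key family, whose incidence size is $r=t+1\in[1:K-1]$. By \eqref{eq:order-run-cost},
\[
T_{\mathrm{blk}}(K,t+1,L)\le\frac{K-1}{m_0}\binom{K-2}{t}.
\]
Absorption gives $\binom{K-2}{t}/\binom{K-1}{t}=(K-1-t)/(K-1)$. The key contribution to $M(t)$ is therefore at most $(K-1-t)/m_0$.

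Next consider the share families. At $t=0$ the share contribution vanishes, since $T_{\mathrm{blk}}(K,0,L)=0$. For $t\ge1$, \eqref{eq:order-run-cost} gives
\[
T_{\mathrm{blk}}(K,t,L)\le\frac{K-1}{m_0}\binom{K-2}{t-1}.
\]
Absorption gives $\binom{K-2}{t-1}/\binom{K-1}{t}=t/(K-1)$. The share contribution is therefore at most $Nt/m_0$.

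Adding the two pieces yields
\[
M(t)\le\frac{Nt+K-1-t}{m_0}.
\]
The hypothesis $N\ge K$ gives $K-1-t\le N$, so the numerator is at most $Nt+N=N(t+1)$. This is exactly \eqref{eq:order-memory}.

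No step is a real obstacle; the proof is a bookkeeping exercise. The only care needed is at the boundary cases. At $t=0$ the share term is absent, and the key term is $(K-1)/m_0\le N/m_0$. At $t=K-2$ the key family has $r=K-1$, and the run-count identity \eqref{eq:order-gamma-t} still applies there through the boundary value $\Gamma(K,K-1,K-1)=1$. I would double-check the two binomial absorption identities, $\binom{K-2}{t-1}=\frac{t}{K-1}\binom{K-1}{t}$ and $\binom{K-2}{t}=\frac{K-1-t}{K-1}\binom{K-1}{t}$.
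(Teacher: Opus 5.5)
Your proposal is correct and follows the paper's proof essentially step for step. Like the paper, you apply Lemma~\ref{lem:order-run-cost} at $r=t$ and $r=t+1$, use the two binomial ratios $t/(K-1)$ and $(K-1-t)/(K-1)$ to reach $(Nt+K-1-t)/m_0$, and finish with $K-1-t\le N$; the $t=0$ case, where the share term vanishes, is handled the same way.
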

\begin{IEEEproof}
With $m_0=\min\{L,K-L\}$, Lemma~\ref{lem:order-run-cost} bounds the cost
of each block family.  At $t=0$, the file-share term vanishes:
$T_{\mathrm{blk}}(K,0,L)=0$ and
\begin{align}
T_{\mathrm{blk}}(K,1,L)\le \frac{K-1}{m_0}.
\label{eq:order-t0-key-block-bound}
\end{align}
Consequently,
\begin{align}
M(0)
=\frac{N\,T_{\mathrm{blk}}(K,0,L)+T_{\mathrm{blk}}(K,1,L)}{\binom{K-1}{0}}
\le \frac{K-1}{m_0}
\le \frac{N}{m_0},
\label{eq:order-t0-memory-chain}
\end{align}
where the last step uses $K-1\le N$.

For $1\le t\le K-2$, including the boundary $t=K-2$, the run-count
identity at $r=t+1$ reads
\begin{align}
\sum_{b=1}^{K-1}\Gamma(K,t+1,b)=\binom{K-2}{t},
\label{eq:order-gamma-t1}
\end{align}
and applying Lemma~\ref{lem:order-run-cost} at $r=t$ and $r=t+1$ gives
\begin{subequations}
\begin{align}
T_{\mathrm{blk}}(K,t,L)&\le \frac{K-1}{m_0}\binom{K-2}{t-1},
\label{eq:order-T-t-upper}\\
T_{\mathrm{blk}}(K,t+1,L)&\le \frac{K-1}{m_0}\binom{K-2}{t}.
\label{eq:order-T-tplusone-upper}
\end{align}
\end{subequations}
The binomial ratios simplify as
\begin{subequations}
\begin{align}
\frac{\binom{K-2}{t-1}}{\binom{K-1}{t}}
&=\frac{t}{K-1},
\label{eq:order-memory-ratio-file}\\
\frac{\binom{K-2}{t}}{\binom{K-1}{t}}
&=\frac{K-1-t}{K-1}.
\label{eq:order-memory-ratio-key}
\end{align}
\end{subequations}
Substitution in \eqref{eq:M-t} gives
\begin{subequations}
\begin{align}
M(t)
&\le \frac{K-1}{m_0}\cdot
\frac{N\binom{K-2}{t-1}+\binom{K-2}{t}}{\binom{K-1}{t}}
\label{eq:order-memory-substitute}\\
&= \frac{Nt+K-1-t}{m_0}
\label{eq:order-memory-simplify}\\
&\le \frac{N(t+1)}{m_0},
\label{eq:order-memory-final}
\end{align}
\end{subequations}
Since $K-1-t\le N$, the bound $(Nt+K-1-t)/m_0$ is at most
$N(t+1)/m_0$.  Together with the $t=0$ case, this proves
\eqref{eq:order-memory}.
\end{IEEEproof}

By \eqref{eq:order-memory}, every corner $t$ satisfying
$N(t+1)/m_0\le M$ fits within the budget.  For $m_0M/N\ge1$, the floor
choice selects the largest such index and achieves its rate at budget $M$.

\begin{Lem}\label{lem:order-envelope}
Assume $N\ge K$. Then, for every $M\in\mathcal M_{\mathrm{ref}}$,
\begin{align}
R_{\mathrm{lift}}(M)\le \frac{2KN}{\min\{L,K-L\}\,M}.
\label{eq:order-envelope}
\end{align}
\end{Lem}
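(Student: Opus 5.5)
The plan is to apply Lemma~\ref{lem:floor-selected-envelope} with $a=N/m_0$, where $m_0=\min\{L,K-L\}$, and $T=K-2$. Lemma~\ref{lem:order-memory} gives $M(t)\le N(t+1)/m_0=a(t+1)$ for every $t\in\{0,\ldots,K-2\}$, so all the corners qualify. Each such corner is a point of $R_{\mathrm{lift}}$ with rate $K/(t+1)$. The rate-$K$ anchor is the $t=0$ point $((K-1)/(K-L),K)$ from \eqref{eq:left-endpoint}. Every $M\in\mathcal M_{\mathrm{ref}}$ satisfies $M\ge (K-1)/(K-L)$, so the anchor is available at budget $M$ by memory monotonicity. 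Once the upper-range hypothesis $M<a(T+2)=NK/m_0$ is verified, the lemma gives exactly $R_{\mathrm{lift}}(M)\le 2Ka/M=2KN/(m_0M)$, which is \eqref{eq:order-envelope}.

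It remains to check the range condition. On $\mathcal M_{\mathrm{ref}}$ we have $M\le (N-1)(K-1)/L$. Since $m_0\le L$, this gives
\[
M\le \frac{(N-1)(K-1)}{m_0}<\frac{NK}{m_0},
\]
so $0<M<a(T+2)$ holds strictly.

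The case $M/a<1$ is already handled inside Lemma~\ref{lem:floor-selected-envelope} by the anchor, since $K<2Ka/M$. A second, redundant route covers all large $M$: there $R_{\mathrm{lift}}(M)\le K$, and if $M\le 2N/m_0$ then $2KN/(m_0M)\ge K$, so the bound is trivial in that regime anyway.

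The only point needing care is that the floor-selected index $t=\lfloor M/a\rfloor-1$ lies in $\{0,\ldots,K-2\}$. This is exactly what the strict inequality $M<aK$ ensures, because it gives $\lfloor M/a\rfloor\le K-1$. I do not expect a genuine obstacle: the argument is a direct composition of Lemmas~\ref{lem:order-memory} and~\ref{lem:floor-selected-envelope}, with the range check above.
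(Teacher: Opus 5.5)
Your proposal is correct and is essentially the paper's proof. You apply Lemma~\ref{lem:floor-selected-envelope} with $a=N/m_0$ and $T=K-2$, use Lemma~\ref{lem:order-memory} for the corner memories and the $t=0$ point as the rate-$K$ anchor, and check the range from $M\le(N-1)(K-1)/L$ and $m_0\le L$. The paper records the slightly stronger $m_0M/N<K-1$, but your $M<aK$ is all that is needed. Your ``redundant route'' paragraph concerns $M\le 2N/m_0$, i.e.\ small rather than large $M$, and can simply be dropped.
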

\begin{IEEEproof}
With $m_0=\min\{L,K-L\}$, apply Lemma~\ref{lem:floor-selected-envelope} at
$a=N/m_0$ and $T=K-2$.  Lemma~\ref{lem:order-memory} gives the required
memory bound, and \eqref{eq:order-trivial} gives the rate-$K$ anchor.
For $M\in\mathcal M_{\mathrm{ref}}$, the inequality $m_0\le L$ gives
\begin{align}
\frac{m_0M}{N}
\le \frac{m_0}{L}\frac{(N-1)(K-1)}{N}
< K-1.
\label{eq:order-envelope-u-range-chain}
\end{align}
In particular, $0<M<a(T+2)$.  The lemma therefore gives
\eqref{eq:order-envelope}.
\end{IEEEproof}

\begin{Propo}\label{prop:order-gap-interval}
Assume $N\ge K$. Then for every $M\in\mathcal M_{\mathrm{ref}}$,
\begin{align}
\frac{R_{\mathrm{lift}}(M)}{R_{\mathrm{conv}}(M)}
\le
\frac{12}{\min\left\{\frac{L}{K},\,1-\frac{L}{K}\right\}}.
\label{eq:order-gap-interval-proof}
\end{align}
\end{Propo}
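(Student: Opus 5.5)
The plan is to compare the two sides of the sandwich pointwise on $\mathcal M_{\mathrm{ref}}$. The achievable side comes from Lemma~\ref{lem:order-envelope} together with the trivial bound \eqref{eq:order-trivial}, and the converse side from specific $s$-terms of \eqref{eq:R-conv}. Throughout, write $m_0=\min\{L,K-L\}$, so that $12/\alpha_{K,L}=12K/m_0$. Note that every $M\in\mathcal M_{\mathrm{ref}}$ satisfies $M\ge(K-1)/(K-L)\ge1$. I will split into two memory regimes, $M>N/4$ and $1\le M\le N/4$. The second regime is further split according to whether the floor-selected index of Lemma~\ref{lem:floor-selected-converse} is admissible and at most $K-L+1$.

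\emph{Large memory, $M>N/4$.} The $s=1$ term of \eqref{eq:R-conv} equals $(\lfloor N\rfloor-1)/(\lfloor N\rfloor-1)=1$. It is admissible since $N\ge2$, so $R_{\mathrm{conv}}(M)\ge1$. Lemma~\ref{lem:order-envelope} gives $R_{\mathrm{lift}}(M)\le 2KN/(m_0M)<8K/m_0$. The ratio is therefore below $12K/m_0$.

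\emph{Small memory, $1\le M\le N/4$, uncapped case.} Set $s=\lfloor N/(2M)\rfloor$. Since $M\ge1$, we have $s\le N/2$, hence $s\le\lfloor N/2\rfloor$. If additionally $s\le K$ and $s\le K-L+1$, then \eqref{eq:floor-converse-sharp} gives $R_{\mathrm{conv}}(M)\ge N/(6M)$. Dividing the bound of Lemma~\ref{lem:order-envelope} by this gives exactly $12K/m_0$.

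\emph{Small memory, capped case.} Suppose instead that $\lfloor N/(2M)\rfloor>K-L+1$; note that $K-L+1\le K$. I would replace $s$ by $s'=K-L+1$ and rerun the computation \eqref{eq:floor-converse-core} with $g'=\lfloor N/s'\rfloor$. The key facts are as follows. First, $s'\le N/(2M)$ still gives $g'\ge 2M-1$, i.e.\ $M\le(g'+1)/2$. Second, $\min\{s'-1,K-L\}=s'-1$. Third, $s'$ is admissible: $s'\le K$, and $s'<N/(2M)\le N/2$. The same algebra then yields $R_{\mathrm{conv}}(M)\ge s'/2=(K-L+1)/2$. Combining with $R_{\mathrm{lift}}(M)\le K$ from \eqref{eq:order-trivial} gives a ratio of at most $2K/(K-L+1)\le 2K/m_0$, which is well within the claim.

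The main obstacle is this capped case. There the sharp floor-selected converse is unavailable, and one must check that the generic inequality $(s(g-M)+M-1)/(g-1)\ge s/2$ survives when $s$ is chosen below $N/(2M)$ rather than at its floor. I expect this to hold because only $s\le N/(2M)$ (for $g\ge2M-1$) and $g\ge2$ were used. A secondary check is the degenerate case $N=K$, $L=K-1$, where $\mathcal M_{\mathrm{ref}}=\{K-1\}$. There the large-memory or capped bound should apply directly, since the ratio is at most $K$ and $12K/m_0=12K$.
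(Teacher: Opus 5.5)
Your proof is correct and follows essentially the same three-regime argument as the paper. For $M>N/4$ you use the $s=1$ term, in the middle range you pair $s=\lfloor N/(2M)\rfloor$ with Lemma~\ref{lem:order-envelope}, and for small memory you compare the trivial bound $R_{\mathrm{lift}}\le K$ with a capped converse index. The only difference is in that capped regime: you take $s'=K-L+1$ and reuse the $M\le(g'+1)/2$ computation (which also needs $M\ge1$, as you already have) to get $R_{\mathrm{conv}}\ge(K-L+1)/2$, whereas the paper takes $s=\lceil (K-L)/2\rceil$ with the margin $M\le(g+1)/4$ to get $(K-L)/4$. Your choice is slightly sharper but immaterial for the constant $12K/m_0$.
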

\Needspace{7\baselineskip}
\begin{IEEEproof}

\emph{1) Small memory:} Assume
\begin{align}
\frac{K-1}{K-L}\le M\le \frac{N}{2(K-L+1)}.
\label{eq:order-regime-C}
\end{align}
The small-memory converse indices are
\begin{align}
s\triangleq \left\lfloor \frac{K-L+1}{2}\right\rfloor
=\left\lceil \frac{K-L}{2}\right\rceil,
\qquad
g\triangleq \left\lfloor \frac{N}{s}\right\rfloor.
\label{eq:order-C-sq-definition}
\end{align}
The assumed small-memory range, $N\ge K$, and
$s\le\lfloor K/2\rfloor$ make $s$ admissible in \eqref{eq:R-conv}.
The definition of $s$ also gives
\begin{align}
\min\{s-1,K-L\}=s-1.
\label{eq:order-C-min-evaluation}
\end{align}
The upper memory bound in \eqref{eq:order-regime-C} and the definition of
$s$ give
\begin{align}
g\ge \frac{N}{s}-1
\ge \frac{2N}{K-L+1}-1
\ge 4M-1,
\label{eq:order-C-q-chain}
\end{align}
or equivalently
\begin{align}
M\le \frac{g+1}{4}.
\label{eq:order-q-quarter}
\end{align}
The selected converse term is therefore
\begin{align}
R_{\mathrm{conv}}(M)\ge \frac{sg-1-(s-1)M}{g-1}.
\label{eq:order-C-conv-substitute}
\end{align}
Admissibility gives $g\ge2$, so the denominator is positive.
Equation~\eqref{eq:order-C-q-chain} gives the further bound
$g\ge4M-1\ge3$.  Subtracting $s(g-1)/2$ from the numerator gives
\begin{subequations}
\begin{align}
sg-1-(s-1)M-\frac{s(g-1)}{2}
&= \frac{sg+s-2}{2}-(s-1)M
\label{eq:order-regime-C-chain-algebra1}\\
&\ge \frac{sg+s-2}{2}-\frac{(s-1)(g+1)}{4}
\label{eq:order-regime-C-chain-quarter}\\
&= \frac{(s+1)(g+1)-4}{4}
\label{eq:order-regime-C-chain-algebra2}\\
&\ge 0.
\label{eq:order-regime-C-chain-nonnegative}
\end{align}
\end{subequations}
Hence
\begin{align}
R_{\mathrm{conv}}(M)\ge \frac{s}{2}\ge \frac{K-L}{4},
\label{eq:order-lb-C}
\end{align}
and the uniform achievable bound \eqref{eq:order-trivial} gives
\begin{align}
\frac{R_{\mathrm{lift}}(M)}{R_{\mathrm{conv}}(M)}
\le \frac{4K}{K-L}
\le \frac{4K}{\min\{L,K-L\}}
\le \frac{12K}{\min\{L,K-L\}}.
\label{eq:order-C-ratio}
\end{align}

\emph{2) Medium memory:} Assume
\begin{align}
\frac{N}{2(K-L+1)}\le M\le \frac{N}{4}.
\label{eq:order-regime-A1}
\end{align}
The medium-memory converse indices are
\begin{align}
s\triangleq \left\lfloor \frac{N}{2M}\right\rfloor,
\qquad
g\triangleq \left\lfloor \frac{N}{s}\right\rfloor.
\label{eq:order-A1-sq-definition}
\end{align}
The condition $M\le N/4$ gives $s\ge2$.
The reference interval also gives $M\ge1$, hence
$s\le\lfloor N/2\rfloor$.  The lower bound in
\eqref{eq:order-regime-A1} gives $s\le K-L+1$, and therefore
\begin{align}
\min\{s-1,K-L\}=s-1.
\label{eq:order-A1-min-evaluation}
\end{align}
Thus $s$ satisfies both the admissibility and window-size conditions in
Lemma~\ref{lem:floor-selected-converse}, which gives
\begin{subequations}
\begin{align}
g&\ge 2M-1,
\label{eq:order-A1-q-lower}\\
M&\le \frac{g+1}{2},
\label{eq:order-q-half}\\
R_{\mathrm{conv}}(M)
&\ge \frac{sg-1-(s-1)M}{g-1}
\label{eq:order-A1-conv-substitute}\\
&= \frac{s(g-M)+M-1}{g-1}
\label{eq:order-A1-conv-rewrite}\\
&\ge \frac{s}{2},
\label{eq:order-A1-conv-s-over-two}\\
s&\ge \frac{N}{3M},
\label{eq:order-s-lower-A1}\\
R_{\mathrm{conv}}(M)&\ge \frac{N}{6M}.
\label{eq:order-lb-A1}
\end{align}
\end{subequations}
The achievable envelope in Lemma~\ref{lem:order-envelope} and the converse
bound \eqref{eq:order-lb-A1} yield
\begin{align}
\frac{R_{\mathrm{lift}}(M)}{R_{\mathrm{conv}}(M)}
\le \frac{2KN/(\min\{L,K-L\}M)}{N/(6M)}
= \frac{12K}{\min\{L,K-L\}}.
\label{eq:order-A1-ratio}
\end{align}

\emph{3) High memory:} First consider the lower subinterval
\begin{align}
\frac{N}{4}\le M\le \frac{N}{2}.
\label{eq:order-regime-A2}
\end{align}
The $s=1$ term of \eqref{eq:R-conv} is
\begin{align}
R_{\mathrm{conv}}(M)&\ge 1.
\label{eq:order-A2-conv-one}
\end{align}
By Lemma~\ref{lem:order-envelope},
\begin{align}
R_{\mathrm{lift}}(M)\le \frac{2KN}{\min\{L,K-L\}\,M}\le \frac{8K}{\min\{L,K-L\}}.
\label{eq:order-A2-ach-bound}
\end{align}
The last step uses $M\ge N/4$.  Dividing the achievable bound by
the unit lower bound \eqref{eq:order-A2-conv-one} gives
\begin{align}
\frac{R_{\mathrm{lift}}(M)}{R_{\mathrm{conv}}(M)}
\le \frac{8K}{\min\{L,K-L\}}
\le \frac{12K}{\min\{L,K-L\}}.
\label{eq:order-A2-ratio}
\end{align}

The upper subinterval is
\begin{align}
M\ge \frac{N}{2}.
\label{eq:order-regime-B}
\end{align}
The same $s=1$ term gives
\begin{align}
R_{\mathrm{conv}}(M)&\ge 1.
\label{eq:order-B-conv-one}
\end{align}
On this interval, Lemma~\ref{lem:order-envelope} gives
\begin{align}
R_{\mathrm{lift}}(M)\le \frac{2KN}{\min\{L,K-L\}\,M}\le \frac{4K}{\min\{L,K-L\}},
\label{eq:order-B-ach-bound}
\end{align}
and the same unit converse gives
\begin{align}
\frac{R_{\mathrm{lift}}(M)}{R_{\mathrm{conv}}(M)}
\le \frac{4K}{\min\{L,K-L\}}
\le \frac{12K}{\min\{L,K-L\}}.
\label{eq:order-B-ratio}
\end{align}

Let $B=N/[2(K-L+1)]$.  Within $\mathcal M_{\mathrm{ref}}$,
\eqref{eq:order-regime-C} covers memory up to $B$, and
\eqref{eq:order-regime-A1} covers memory from $B$ to $N/4$.
The high-memory case covers $M\ge N/4$.  The two high-memory intervals
\eqref{eq:order-regime-A2} and \eqref{eq:order-regime-B} meet at $N/2$.
Thus \eqref{eq:order-C-ratio}, \eqref{eq:order-A1-ratio},
\eqref{eq:order-A2-ratio}, and \eqref{eq:order-B-ratio} bound the ratio by
$12K/\min\{L,K-L\}$ throughout the reference interval.  The identity
$K/\min\{L,K-L\}=1/\min\{L/K,1-L/K\}$
then gives
\eqref{eq:order-gap-interval-proof}.
\end{IEEEproof}

\begin{IEEEproof}[Proof of Theorem~\ref{thm:order-opt}]
Proposition~\ref{prop:order-gap-interval} proves
\eqref{eq:order-gap-interval-proof} for $M\in\mathcal M_{\mathrm{ref}}$.
For $M>(N-1)(K-1)/L$, the endpoint \eqref{eq:endpoint} and memory
monotonicity give $R_{\mathrm{lift}}(M)=1$.  The term of
\eqref{eq:R-conv} with $s=1$ gives $R_{\mathrm{conv}}(M)\ge 1$,
so the ratio bound also holds above the endpoint.
Lemma~\ref{lem:feasible-memory} gives $M\ge(K-1)/(K-L)$ for every
feasible cache size when $N\ge K$.  The reference interval and the upper
region cover every feasible $M$, proving \eqref{eq:order-opt-main}.
\end{IEEEproof}

\subsection{\texorpdfstring{Comparison When Users Outnumber Files ($K>N$)}{Comparison When Users Outnumber Files (K>N)}}
Throughout this subsection, $2\le N<K$.  We first locate the memory range
covered by the lifted envelope, then compare its rate with the converse.
Let $m_0$ denote the smaller of $L$ and $K-L$.  Define
$M_{\mathrm{feas}}$ as the necessary memory threshold from
Lemma~\ref{lem:feasible-memory}, $M_0$ as the memory of the $t=0$
construction, and $M_\star$ as the memory of the unit-rate endpoint:
\begin{align}
m_0\triangleq\min\{L,K-L\},
\qquad
M_{\mathrm{feas}}\triangleq\frac{N-1}{K-L},
\qquad
M_0\triangleq\frac{K-1}{K-L},
\qquad
M_\star\triangleq\frac{(N-1)(K-1)}{L}.
\label{eq:small-library-basic-parameters}
\end{align}
The lifted envelope is finite from
$M_{\mathrm{start}}=\min\{M_0,M_\star\}$ by
Lemma~\ref{lem:small-library-start}.  The global ratio bound below applies for
$M\ge M_{\mathrm{start}}$.

\begin{Lem}[Starting memory of the lifted envelope]
\label{lem:small-library-start}
For every $t\in\{1,\ldots,K-2\}$, $M(t)>M_0$.
Consequently, for $M\ge0$, the envelope in \eqref{eq:ach-envelope-def} is
finite exactly when $M\ge M_{\mathrm{start}}$.
\end{Lem}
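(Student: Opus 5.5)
The plan is to bound $M(t)$ from below using only the short-branch line of $\mu_{b,L}$, counting runs subset by subset, and then to reduce $M(t)>M_0$ to an elementary inequality in $(K,t)$ that holds for every $N\ge2$. The finiteness statement then follows from the structure of the envelope \eqref{eq:ach-envelope-def}.

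\emph{Step 1 (pointwise cost bound).} Both lines in \eqref{eq:mu} equal one at $b=L$. For $b<L$ the short line exceeds the long line, and for $b\ge L$ the reverse holds. Hence
\begin{align*}
\mu_{b,L}=\max\Bigl\{\frac bL,\frac{K-b}{K-L}\Bigr\}\ge\frac{K-b}{K-L},\qquad b\in[1:K-1].
\end{align*}
By Lemma~\ref{lem:cycliccount}, each $r$-subset $\mathcal S$ with $k(\mathcal S)$ maximal runs of spans $b_1,\ldots,b_k$ contributes $\sum_j\mu_{b_j,L}$ to $K\,T_{\mathrm{blk}}(K,r,L)$, because endpoint packing places the $K$ rotations into $\Gamma$ blocks. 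Since $\sum_jb_j=r$, this contribution is at least $(k(\mathcal S)K-r)/(K-L)$.

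\emph{Step 2 (average number of runs).} A position $u$ ends a run of $\mathcal S$ exactly when $u\in\mathcal S$ and $u+1\notin\mathcal S$. Summing the probability of this event over $u$ for a uniform random $r$-subset gives $\mathbb E[k]=r(K-r)/(K-1)$. Therefore
\begin{align*}
T_{\mathrm{blk}}(K,r,L)\ge\frac{\binom Kr}{K(K-L)}\Bigl[\frac{Kr(K-r)}{K-1}-r\Bigr].
\end{align*}
Next I substitute this at $r=t$ and $r=t+1$ into \eqref{eq:M-t}. The needed identities are $\binom Kt=\frac{K}{K-t}\binom{K-1}t$ and $\binom K{t+1}=\frac K{t+1}\binom{K-1}t$. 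The result is
\begin{align*}
(K-1)(K-L)\,M(t)\ge\frac{Nt\,[K(K-t-1)+1]}{K-t}+K(K-t-2)+1.
\end{align*}
As a sanity check, $(K,L,N,t)=(4,3,2,1)$ gives $M(1)\ge11/3$. This equals the exact value, so the bound is tight there, while the weaker $k\ge1$ bound fails to be strict in this example.

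\emph{Step 3 (elementary inequality; main obstacle).} Since $M_0=(K-1)/(K-L)$, the claim $M(t)>M_0$ is equivalent to the right-hand side above exceeding $(K-1)^2$. The right-hand side is increasing in $N$, so it suffices to take $N=2$. The claim then becomes
\begin{align*}
f(t)\triangleq\frac{2t\,[K(K-t-1)+1]}{K-t}+K(K-t-2)+1-(K-1)^2>0,\qquad 1\le t\le K-2,
\end{align*}
and the range of $t$ is nonempty only when $K\ge3$. The endpoint values are $f(1)=K-2>0$ and $f(K-2)=K-2>0$. For interior $t$, I would multiply by $K-t>0$ and show that the resulting cubic in $t$ is concave on $[1,K-2]$, or directly check that it dominates the chord joining the two endpoint values. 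Concavity, or the chord comparison, is the step most likely to require care. If it fails, the fallback is to restore the long-branch line $r/L$ for the share family, which only increases the lower bound on $M(t)$.

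\emph{Step 4 (finiteness).} Every point in \eqref{eq:ach-envelope-def} is a convex combination of the corner memories $M(0)=M_0$, $M(t)$ for $t\ge1$, and $M_\star$. The envelope is therefore finite exactly when $M\ge\min\{M_0,\min_{t\ge1}M(t),M_\star\}$. Step~3 removes the middle term, which leaves $M_{\mathrm{start}}=\min\{M_0,M_\star\}$. Conversely, the corner attaining this minimum is achievable at every budget $M\ge M_{\mathrm{start}}$ by memory monotonicity, so the envelope is finite there.
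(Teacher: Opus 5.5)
Your proposal is correct and follows the paper's own argument. The paper combines the bound $\mu_{b,L}\ge(K-b)/(K-L)$ with the total run count $K\binom{K-2}{r-1}$, which you obtain by counting run endpoints, and the total run length $r\binom Kr$; this gives exactly the estimate $M(t)\ge M_0+\frac{t}{K-L}\bigl(\frac{K(N-1)}{K-1}-\frac{N}{K-t}\bigr)$, and your finiteness argument matches the paper's. Your Step~3 is easier than you expect: $f(t)$ simplifies to $t\,[K(K-t-2)+2]/(K-t)$, which is visibly positive for $1\le t\le K-2$, and $(K-t)f(t)$ is a concave quadratic rather than a cubic.
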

\begin{IEEEproof}
The maximal appearance runs partition each $r$-subset into consecutive
sets of users.  Summing their lengths over all $r$-subsets therefore gives
$r\binom Kr$.  By cyclic symmetry, one $K$th of this total comes from runs
ending at each user, giving
\begin{align}
\sum_{b=1}^{K-1}b\Gamma(K,r,b)
&=\frac rK\binom Kr=\binom{K-1}{r-1},
\qquad 1\le r\le K-1.
\label{eq:small-library-run-moment}
\end{align}
Apply the local-cost inequality $\mu_{b,L}\ge(K-b)/(K-L)$ to
\eqref{eq:M-t}.  Evaluating the resulting sums with
\eqref{eq:order-gamma-t} and \eqref{eq:small-library-run-moment} gives
\begin{align}
M(t)&\ge M_0+\frac{t}{K-L}
\left(\frac{K(N-1)}{K-1}-\frac{N}{K-t}\right),
\qquad 1\le t\le K-2.
\label{eq:small-library-start-lower}
\end{align}
Since $K-t\ge2$ and $N\ge2$, the parenthesized factor satisfies
\begin{align*}
\frac{K(N-1)}{K-1}-\frac{N}{K-t}
&\ge\frac{N(K+1)-2K}{2(K-1)}
\ge\frac{1}{K-1}>0.
\end{align*}
Thus the minimum memory among the $t$-indexed corners is $M(0)=M_0$.
After including the rate-one point, the smallest memory among the points
that define the envelope is $\min\{M_0,M_\star\}$.
Every convex combination in \eqref{eq:ach-envelope-def} has at least this
memory.  A point attaining the minimum is available at every larger cache
budget.
\end{IEEEproof}

\begin{Lem}\label{lem:small-library-memory}
For every $t\in\{0,1,\ldots,K-2\}$,
\begin{align}
M(t)\le
\frac{(N-1)(t+1)+K-N}{m_0}.
\label{eq:small-library-memory-bound}
\end{align}
\end{Lem}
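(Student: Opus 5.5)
The bound follows the same route as Lemma~\ref{lem:order-memory}, but keeps the $N$-dependence exact instead of relaxing $K-1-t\le N$. Throughout, write $m_0=\min\{L,K-L\}$ and use the uniform local-cost bound \eqref{eq:order-mu-uniform-bound}, $\mu_{b,L}\le(K-1)/m_0$. Together with the run-count identity \eqref{eq:order-gamma-t}, Lemma~\ref{lem:order-run-cost} then gives $T_{\mathrm{blk}}(K,r,L)\le\frac{K-1}{m_0}\binom{K-2}{r-1}$ for $1\le r\le K-1$, and $T_{\mathrm{blk}}(K,0,L)=0$.

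\emph{Case $t=0$.} Only the masking-key family contributes. Hence $M(0)=T_{\mathrm{blk}}(K,1,L)\le (K-1)/m_0$. The claimed right-hand side at $t=0$ is $((N-1)+K-N)/m_0=(K-1)/m_0$, so the bound holds with equality of the two upper estimates.

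\emph{Case $1\le t\le K-2$.} Substitute the run-cost bounds at $r=t$ and $r=t+1$ into \eqref{eq:M-t}. Then apply the binomial ratios \eqref{eq:order-memory-ratio-file} and \eqref{eq:order-memory-ratio-key}. This yields exactly \eqref{eq:order-memory-simplify}:
\begin{align*}
M(t)\le\frac{Nt+K-1-t}{m_0}.
\end{align*}
The remaining step is the identity $Nt+K-1-t=(N-1)t+K-1=(N-1)(t+1)+K-N$, which matches \eqref{eq:small-library-memory-bound} exactly. Note that this step uses no assumption on the ordering of $N$ and $K$.

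The only point needing care is that the derivation of \eqref{eq:order-memory-simplify} in Lemma~\ref{lem:order-memory} did not use $N\ge K$ before its final relaxation. I will check this by rereading the chain: $N\ge K$ enters only in the last step, where $K-1-t\le N$ is used. Everything else comes from the run-count identities in Lemma~\ref{lem:cycliccount} and the cost bound, both of which hold for all $N,K\ge2$. So there is no real obstacle, and the lemma is essentially a bookkeeping refinement of Lemma~\ref{lem:order-memory}.
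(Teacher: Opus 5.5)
Your proposal is correct and matches the paper's proof step for step. For $t=0$, the paper uses the exact value $M(0)=(K-1)/(K-L)$ and $m_0\le K-L$, which is equivalent to your uniform-cost bound; for $1\le t\le K-2$, it uses the run-cost bound at $r=t,t+1$, the binomial ratios, and the identity $Nt+K-1-t=(N-1)(t+1)+K-N$, and you are right that $N\ge K$ enters Lemma~\ref{lem:order-memory} only through its final relaxations.
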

\begin{IEEEproof}
For $t=0$, the identity in \eqref{eq:left-endpoint} and the inequality $m_0\le K-L$ give
\begin{align}
M(0)=\frac{K-1}{K-L}
\le\frac{K-1}{m_0}
=\frac{(N-1)+K-N}{m_0}.
\label{eq:small-library-memory-t0}
\end{align}
For every $t\in\{1,\ldots,K-2\}$, the cyclic-run bound in
Lemma~\ref{lem:order-run-cost} at $r=t$ and $r=t+1$, together with the binomial
cancellations \eqref{eq:order-memory-ratio-file} through 
\eqref{eq:order-memory-ratio-key}, gives
\begin{align}
M(t)
&\le\frac{Nt+K-1-t}{m_0}
\label{eq:small-library-memory-intermediate}
\\
&=\frac{(N-1)(t+1)+K-N}{m_0}.
\label{eq:small-library-memory-rewrite}
\end{align}
Combining this bound with the $t=0$ case in
\eqref{eq:small-library-memory-t0} proves
\eqref{eq:small-library-memory-bound}.
\end{IEEEproof}

The bound \eqref{eq:small-library-memory-bound} depends on the library size
through $N-1$ and $K-N$.  A uniform upper bound on this expression is
\begin{align}
M(t)\le\frac{K(t+1)}{m_0},
\qquad t\in\{0,1,\ldots,K-2\},
\label{eq:small-library-coarse-memory-bound}
\end{align}
because
\begin{align*}
K(t+1)-[(N-1)(t+1)+K-N]
=(K-N+1)t+1>0.
\end{align*}

Suppose $M_0<M_\star$ and $M\in[M_0,M_\star]$.  To bound the rate at
budget $M$, compare two memory-sharing constructions.  The first gives
$V_{<}(M)$: regard each corner as achievable at the upper memory budget in
\eqref{eq:small-library-memory-bound}, as allowed by memory monotonicity.
Interpolate between adjacent corners whose budgets bracket $M$, and use
the $t=0$ point at smaller budgets.  The second construction shares between
$(M_0,K)$ and $(M_\star,1)$.  In the definitions below, $x(M)$ gives the
interpolation position, and $u(M)-1$ is the index of the lower corner when
interpolation is used.  Define $U_{<}(M)$ as the smaller of the two rates:
\begin{align}
x(M)&\triangleq\frac{m_0M-(K-N)}{N-1},
\label{eq:small-library-x-definition}\\
u(M)&\triangleq\max\{1,\lfloor x(M)\rfloor\},
\label{eq:small-library-u-definition}\\
V_{<}(M)&\triangleq
\begin{cases}
K,&x(M)<1,\\[0.5ex]
\dfrac{K\bigl(2u(M)+1-x(M)\bigr)}{u(M)(u(M)+1)},&x(M)\ge1,
\end{cases}
\label{eq:small-library-adjacent-sharing-bound}\\
U_{<}(M)&\triangleq
\min\left\{
V_{<}(M),
1+(K-1)\frac{M_\star-M}{M_\star-M_0}
\right\}.
\label{eq:small-library-pointwise-envelope}
\end{align}

\begin{Lem}\label{lem:small-library-pointwise-envelope}
Assume $M_0<M_\star$ and $M\in[M_0,M_\star]$. Then the quantities in \eqref{eq:small-library-x-definition} through \eqref{eq:small-library-pointwise-envelope} satisfy
\begin{align}
1&\le u(M)\le K-2,
\label{eq:small-library-pointwise-index-range}\\
R_{\mathrm{lift}}(M)&\le U_{<}(M).
\label{eq:small-library-pointwise-achievable-bound}
\end{align}
\end{Lem}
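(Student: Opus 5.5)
The plan has two parts: first the index range \eqref{eq:small-library-pointwise-index-range}, then the bound $R_{\mathrm{lift}}(M)\le U_{<}(M)$. For the latter I exhibit two explicit memory-sharing combinations in \eqref{eq:ach-envelope-def} and use that $R_{\mathrm{lift}}$ is an infimum, so it lies below each of them.

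\emph{Index range.} The lower bound $u(M)\ge1$ is built into \eqref{eq:small-library-u-definition}. For the upper bound, use $M\le M_\star=(N-1)(K-1)/L$ and $m_0\le L$ to get $m_0M\le(N-1)(K-1)$. Substituting into \eqref{eq:small-library-x-definition} gives
\begin{align*}
x(M)\le K-1-\frac{K-N}{N-1}.
\end{align*}
Since $K>N$, this is strictly less than $K-1$. Hence $\lfloor x(M)\rfloor\le K-2$, and therefore $u(M)\le K-2$. The strictness, which comes from $K>N$, is exactly what is needed here. I expect this to be the only delicate point.

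\emph{Rate-$K$ to rate-one sharing.} Both $(M_0,K)$ (the $t=0$ point, \eqref{eq:left-endpoint}) and $(M_\star,1)$ (\eqref{eq:endpoint}) belong to the envelope. Put weight $\lambda_\star=(M-M_0)/(M_\star-M_0)$ on the endpoint and $1-\lambda_\star$ on $t=0$. The memory used is exactly $M$, and the rate is
\begin{align*}
K-(K-1)\lambda_\star=1+(K-1)\frac{M_\star-M}{M_\star-M_0},
\end{align*}
which is the second entry of the minimum in \eqref{eq:small-library-pointwise-envelope}.

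\emph{Adjacent-corner sharing.} For each $t$, set the budget $B(t)\triangleq[(N-1)(t+1)+K-N]/m_0$. Lemma~\ref{lem:small-library-memory} gives $M(t)\le B(t)$, so by memory monotonicity corner $t$ with rate $K/(t+1)$ is available at budget $B(t)$. By construction $B(t)$ is affine in $t$ and satisfies $x(B(t))=t+1$.

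If $x(M)<1$, the $t=0$ point already fits, because $M\ge M_0=M(0)$. This gives rate $K$.

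If $x(M)\ge1$, let $u=u(M)=\lfloor x(M)\rfloor$, so $u\le x(M)<u+1$. Take corners $t=u-1$ and $t=u$, both legitimate since $u\le K-2$. Put weight $\lambda=x(M)-u\in[0,1)$ on corner $u$ and $1-\lambda$ on corner $u-1$.
\begin{itemize}
\item Memory: because $B$ is affine, the memory is $(1-\lambda)B(u-1)+\lambda B(u)=M$.
\item Rate: direct algebra gives
\begin{align*}
(1-\lambda)\frac{K}{u}+\lambda\frac{K}{u+1}=\frac{K\bigl[(u+1)(u+1-x)+u(x-u)\bigr]}{u(u+1)}=\frac{K(2u+1-x)}{u(u+1)},
\end{align*}
where $x=x(M)$. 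This is $V_{<}(M)$ in \eqref{eq:small-library-adjacent-sharing-bound}.
\end{itemize}

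Taking the smaller of the two achievable rates yields $R_{\mathrm{lift}}(M)\le U_{<}(M)$.
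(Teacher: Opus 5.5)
Your proposal is correct and follows essentially the same route as the paper's proof. Both arguments obtain the index range from $m_0\le L$ and the strict inequality $x(M)<K-1$, which holds because $K>N$. Both use the $t=0$ point when $x(M)<1$, interpolate between corners $u-1$ and $u$ at the padded budgets (your $B(t)$ is the paper's $\widehat M(t+1)$, and your weight $x(M)-u$ on the upper corner is the paper's $1-\theta$), and share directly between $(M_0,K)$ and $(M_\star,1)$.
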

\begin{IEEEproof}
The definition in \eqref{eq:small-library-u-definition} gives $u(M)\ge1$. The upper bound $M\le M_\star$ and the inequality $m_0\le L$ give
\begin{subequations}
\begin{align}
x(M)
&\le
\frac{(N-1)(K-1)m_0/L-(K-N)}{N-1}
\label{eq:small-library-u-upper-substitute}\\
&\le K-1-\frac{K-N}{N-1}
<K-1.
\label{eq:small-library-u-upper-final}
\end{align}
\end{subequations}
Since $u(M)=\max\{1,\lfloor x(M)\rfloor\}$, the bound $x(M)<K-1$
gives \eqref{eq:small-library-pointwise-index-range}.

If $x(M)<1$, the $t=0$ point in \eqref{eq:left-endpoint} is available because $M\ge M_0$, and hence
\begin{align}
R_{\mathrm{lift}}(M)\le K=V_{<}(M).
\label{eq:small-library-x-below-one-rate}
\end{align}
For $x(M)\ge1$, $u(M)=\lfloor x(M)\rfloor$ and
\begin{align}
u(M)\le x(M)<u(M)+1.
\label{eq:small-library-adjacent-index-interval}
\end{align}
For $j\in\{u(M),u(M)+1\}$, define the upper memory budget
\begin{align}
\widehat M(j)\triangleq\frac{(N-1)j+K-N}{m_0}.
\label{eq:small-library-padded-memory}
\end{align}
By \eqref{eq:small-library-pointwise-index-range}, the index
$j-1\in\{0,1,\ldots,K-2\}$ is admissible.
Lemma~\ref{lem:small-library-memory} bounds the memory of corner $t=j-1$
by $\widehat M(j)$, at rate $K/j$.  Memory monotonicity makes
$(\widehat M(j),K/j)$ achievable.  The budgets $\widehat M(u(M))$ and $\widehat M(u(M)+1)$
bracket $M$.

The interpolation coefficient is
\begin{align}
\theta\triangleq u(M)+1-x(M).
\label{eq:small-library-adjacent-sharing-coefficient}
\end{align}
Equations \eqref{eq:small-library-adjacent-sharing-coefficient} and
\eqref{eq:small-library-adjacent-index-interval} place $\theta$ in $(0,1]$.
Memory sharing between the two adjacent corners at these budgets gives
\begin{subequations}
\begin{align}
\theta\widehat M(u(M))+(1-\theta)\widehat M(u(M)+1)
&=\frac{(N-1)x(M)+K-N}{m_0}
\label{eq:small-library-adjacent-sharing-memory-substitute}\\
&=M.
\label{eq:small-library-adjacent-sharing-memory}
\end{align}
\end{subequations}
The identity $\theta u(M)+(1-\theta)(u(M)+1)=x(M)$ and
\eqref{eq:small-library-x-definition} make the mixed memory equal to $M$.
The memory-sharing rate is
\begin{subequations}
\begin{align}
\theta\frac{K}{u(M)}+(1-\theta)\frac{K}{u(M)+1}
&=\frac{K\bigl(2u(M)+1-x(M)\bigr)}{u(M)(u(M)+1)}
\label{eq:small-library-adjacent-sharing-rate-evaluate}\\
&=V_{<}(M).
\label{eq:small-library-adjacent-sharing-rate}
\end{align}
\end{subequations}
Together with the $x(M)<1$ case in
\eqref{eq:small-library-x-below-one-rate}, this proves
$R_{\mathrm{lift}}(M)\le V_{<}(M)$.

For the second rate in $U_{<}(M)$, share between $(M_0,K)$ and
$(M_\star,1)$ with coefficient
\begin{align}
\lambda=\frac{M_\star-M}{M_\star-M_0}.
\label{eq:small-library-endpoint-sharing-coefficient}
\end{align}
The corresponding convex combination has memory $M$ and rate
\begin{align}
\lambda K+(1-\lambda)
=1+(K-1)\frac{M_\star-M}{M_\star-M_0}.
\label{eq:small-library-endpoint-sharing-rate}
\end{align}
Choosing the smaller of the adjacent-corner rate and
\eqref{eq:small-library-endpoint-sharing-rate} gives
\eqref{eq:small-library-pointwise-achievable-bound}.
\end{IEEEproof}

\begin{Lem}\label{lem:small-library-envelope}
If $M_0\le M\le M_\star$, then
\begin{align}
R_{\mathrm{lift}}(M)
\le \frac{2K^2}{m_0M}.
\label{eq:small-library-envelope}
\end{align}
\end{Lem}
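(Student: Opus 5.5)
The plan is to apply Lemma~\ref{lem:floor-selected-envelope} with the coarse, library-independent corner bound \eqref{eq:small-library-coarse-memory-bound} rather than the refined bound \eqref{eq:small-library-memory-bound}. We set
\begin{align*}
a\triangleq\frac{K}{m_0},\qquad T\triangleq K-2 .
\end{align*}
We then check the hypotheses of that lemma one at a time.

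First, for every $t\in\{0,\ldots,K-2\}$ the point $(M(t),K/(t+1))$ from Theorem~\ref{thm:memoryload} belongs to the envelope \eqref{eq:ach-envelope-def}. By \eqref{eq:small-library-coarse-memory-bound} it satisfies $M(t)\le K(t+1)/m_0=a(t+1)$, which is the required memory hypothesis. Second, the rate-$K$ anchor is the $t=0$ point $(M_0,K)$ of \eqref{eq:left-endpoint}. It is available at every budget $M\ge M_0$ by memory monotonicity, and the assumption $M_0\le M$ guarantees this.

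Third, we verify the range condition $0<M<a(T+2)=K^2/m_0$. Positivity is immediate from $M\ge M_0=(K-1)/(K-L)>0$. For the upper end we use $M\le M_\star=(N-1)(K-1)/L$ together with $m_0\le L$ and $N\le K-1$ (recall $N<K$ throughout this subsection):
\begin{align*}
M\le\frac{(N-1)(K-1)}{L}\le\frac{(K-2)(K-1)}{m_0}<\frac{K^2}{m_0}.
\end{align*}

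Lemma~\ref{lem:floor-selected-envelope} then yields $R_{\mathrm{lift}}(M)\le 2Ka/M=2K^2/(m_0M)$, which is \eqref{eq:small-library-envelope}.

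There is no real obstacle. The only point needing care is the upper range check $M<a(T+2)$, so that the floor-selected index $u=\lfloor M/a\rfloor$ stays at most $K-1$ and hence corresponds to an existing corner $t=u-1\le K-2$. The chain above settles this using only $N<K$ and $m_0\le L$. It also shows why the coarse bound with $a=K/m_0$ is the convenient choice: with the refined bound, the affine offset $K-N$ would prevent writing $M(t)\le a(t+1)$ directly.
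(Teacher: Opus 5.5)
Your proposal is correct and matches the paper's proof: it applies Lemma~\ref{lem:floor-selected-envelope} with $a=K/m_0$ and $T=K-2$, uses the coarse bound \eqref{eq:small-library-coarse-memory-bound} and the $t=0$ anchor at $M\ge M_0$, and checks $M<a(T+2)=K^2/m_0$ from $M\le M_\star$, $m_0\le L$, and $N<K$. The only difference is cosmetic: the paper states the range check as $m_0M/K<N-1<K-1$, while you bound $M\le (K-2)(K-1)/m_0$ directly.
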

\begin{IEEEproof}
Apply Lemma~\ref{lem:floor-selected-envelope} with $a=K/m_0$ and $T=K-2$.
Bound \eqref{eq:small-library-coarse-memory-bound} gives
$M(t)\le a(t+1)$, and $M\ge M_0$ makes the $t=0$ rate-$K$ point available.  For
$m_0M/K\ge1$, the floor-selected index is
\begin{align}
u\triangleq\left\lfloor\frac{m_0M}{K}\right\rfloor,
\qquad t\triangleq u-1.
\label{eq:small-library-coarse-envelope-index}
\end{align}
The endpoint bounds the floor argument by
\begin{align}
\frac{m_0M}{K}
\le\frac{m_0}{L}\frac{(N-1)(K-1)}{K}
<N-1<K-1.
\label{eq:small-library-coarse-index-range}
\end{align}
Thus $M<a(T+2)$, and the lemma gives
$2Ka/M=2K^2/(m_0M)$ as the upper bound in
\eqref{eq:small-library-envelope}.
\end{IEEEproof}

\begin{Lem}\label{lem:small-library-converse}
If $M_0\le M\le N/4$, then
\begin{align}
R_{\mathrm{conv}}(M)\ge\frac{N}{6M}.
\label{eq:small-library-converse}
\end{align}
\end{Lem}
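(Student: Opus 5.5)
The plan is to apply Lemma~\ref{lem:floor-selected-converse} directly, using its sharp branch \eqref{eq:floor-converse-sharp}. For that I need three things: that $M$ lies in $[1,N/2]$, that the floor-selected index $s=\lfloor N/(2M)\rfloor$ is admissible in \eqref{eq:R-conv}, and that $s\le K-L+1$. Once these hold, the hypothesis $M\le N/4$ is exactly the extra condition of \eqref{eq:floor-converse-sharp}, and that branch gives $R_{\mathrm{conv}}(M)\ge N/(6M)$.

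First, the memory range. Since $L\ge1$, we have $M_0=(K-1)/(K-L)\ge1$, so $M\ge M_0$ gives $M\ge1$. The assumption $M\le N/4$ gives $M\le N/2$. Moreover $N/(2M)\ge2$, so $s\ge2$.

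Second, admissibility. From $M\ge1$ we get $s\le N/(2M)\le N/2$, and since $s$ is an integer, $s\le\lfloor N/2\rfloor$. Also $s\le N/2<K$. Hence $s\in\{1,\ldots,\min\{\lfloor N/2\rfloor,K\}\}$.

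Third, the window condition, which is the only step that uses $K>N$ and so is the one to check with care. Using $M\ge M_0$,
\[
s\le\frac{N}{2M}\le\frac{N(K-L)}{2(K-1)}\le\frac{K-L}{2}\le K-L+1,
\]
where the third inequality uses $N\le K-1$. Thus $\min\{s-1,K-L\}=s-1$, as the lemma requires.

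With all hypotheses of Lemma~\ref{lem:floor-selected-converse} verified, \eqref{eq:floor-converse-sharp} yields \eqref{eq:small-library-converse}. No further obstacle is expected. The only subtlety is that the lower bound $M\ge M_0$, rather than just $M\ge1$, is what keeps $s$ inside the window range $s\le K-L+1$.
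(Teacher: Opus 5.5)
Your proposal is correct and follows the paper's proof essentially step for step. Both verify $M\ge M_0\ge1$ and $2\le s\le\lfloor N/2\rfloor<K$, use $M\ge M_0$ together with $N\le K-1$ to obtain $s\le K-L+1$, and then invoke the sharp branch of Lemma~\ref{lem:floor-selected-converse}; the only difference is that your chain gives the slightly stronger $s\le (K-L)/2$, whereas the paper passes through $M_0\ge N/(2(K-L+1))$.
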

\begin{IEEEproof}
For $M\in[M_0,N/4]$, the converse indices are
\begin{align}
s\triangleq\left\lfloor\frac{N}{2M}\right\rfloor,
\qquad
g\triangleq\left\lfloor\frac Ns\right\rfloor.
\label{eq:small-library-converse-index}
\end{align}
The bounds $M\ge M_0\ge1$ and $M\le N/4$ imply $2\le s\le\lfloor N/2\rfloor<K$. Since $N\le K-1$,
\begin{align}
N(K-L)
\le (K-1)(K-L)
\le 2(K-1)(K-L+1),
\label{eq:small-library-M0-comparison}
\end{align}
and division by the positive denominators gives
\begin{align}
M_0=\frac{K-1}{K-L}
\ge\frac{N}{2(K-L+1)}.
\label{eq:small-library-M0-lower}
\end{align}
Consequently $s\le N/(2M)\le K-L+1$, so
\begin{align}
\min\{s-1,K-L\}=s-1.
\label{eq:small-library-min-evaluation}
\end{align}
Lemma~\ref{lem:floor-selected-converse} therefore applies and gives
\begin{subequations}
\begin{align}
R_{\mathrm{conv}}(M)
&\ge\frac{sg-1-(s-1)M}{g-1}
\label{eq:small-library-converse-substitute}\\
&=\frac{s(g-M)+M-1}{g-1}
\label{eq:small-library-converse-rewrite}\\
&\ge\frac{s}{2}.
\label{eq:small-library-converse-s-half}
\end{align}
\end{subequations}
The lemma also bounds the selected index by
\begin{align}
s\ge\frac{N}{3M}.
\label{eq:small-library-converse-s-lower}
\end{align}

Combining \eqref{eq:small-library-converse-s-lower} with
\eqref{eq:small-library-converse-s-half} gives
\eqref{eq:small-library-converse}.
\end{IEEEproof}

\begin{IEEEproof}[Proof of Proposition~\ref{prop:order-opt-small-library}]
The $s=1$ term in \eqref{eq:R-conv} gives
\begin{align}
R_{\mathrm{conv}}(M)\ge1
\label{eq:small-library-unit-converse}
\end{align}
for every $M$.  First suppose $M_\star\le M_0$, so
$M_{\mathrm{start}}=M_\star$.  The endpoint in \eqref{eq:endpoint} gives
$R_{\mathrm{lift}}(M)\le1$ for every $M\ge M_{\mathrm{start}}$.
Every component rate in \eqref{eq:ach-envelope-def} is at least one,
so $R_{\mathrm{lift}}(M)=1$.
The achievable endpoint and $R_{\mathrm{conv}}(M)\le R^*(M)$ give
$R_{\mathrm{conv}}(M)\le1$.  Together with
\eqref{eq:small-library-unit-converse}, this gives
$R_{\mathrm{conv}}(M)=1$ throughout this regime.

Now suppose $M_0<M_\star$.  For $M\ge M_\star$, the endpoint and unit
converse give
$R_{\mathrm{lift}}(M)=R_{\mathrm{conv}}(M)=1$.
For $M_0\le M\le M_\star$, Lemma~\ref{lem:small-library-pointwise-envelope}
gives \eqref{eq:small-library-pointwise-achievable-bound}. If $M\le N/4$,
Lemma~\ref{lem:small-library-converse} gives
\begin{align}
\frac{R_{\mathrm{lift}}(M)}{R_{\mathrm{conv}}(M)}
\le\frac{U_{<}(M)}{N/(6M)}
=\frac{6M}{N}U_{<}(M).
\label{eq:small-library-pointwise-ratio-low}
\end{align}
For $M>N/4$, \eqref{eq:small-library-pointwise-achievable-bound} and
\eqref{eq:small-library-unit-converse} imply
\begin{align}
\frac{R_{\mathrm{lift}}(M)}{R_{\mathrm{conv}}(M)}
\le U_{<}(M).
\label{eq:small-library-pointwise-ratio-high}
\end{align}
Equations~\eqref{eq:small-library-pointwise-ratio-low} and
\eqref{eq:small-library-pointwise-ratio-high} bound
$R_{\mathrm{lift}}(M)/R_{\mathrm{conv}}(M)$ throughout $[M_0,M_\star]$.

To obtain the uniform ratio bound on $[M_0,M_\star]$, use the simpler
envelope in Lemma~\ref{lem:small-library-envelope}.
If $M\le N/4$, Lemmas~\ref{lem:small-library-envelope} and
\ref{lem:small-library-converse} give
\begin{align}
\frac{R_{\mathrm{lift}}(M)}{R_{\mathrm{conv}}(M)}
\le
\frac{2K^2/(m_0M)}{N/(6M)}
=\frac{12K^2}{Nm_0}.
\label{eq:small-library-ratio-low-memory}
\end{align}
If $M>N/4$, Lemma~\ref{lem:small-library-envelope} and
\eqref{eq:small-library-unit-converse} give
\begin{align}
\frac{R_{\mathrm{lift}}(M)}{R_{\mathrm{conv}}(M)}
\le\frac{2K^2}{m_0M}
<\frac{8K^2}{Nm_0}
\le\frac{12K^2}{Nm_0}.
\label{eq:small-library-ratio-high-memory}
\end{align}
Using $m_0/K=\min\{L/K,1-L/K\}$, the bound $12K^2/(Nm_0)$ in
\eqref{eq:small-library-ratio-low-memory} and
\eqref{eq:small-library-ratio-high-memory} becomes
\eqref{eq:small-library-order-main} on $M_0\le M\le M_\star$.
The unit-rate equality applies for $M\ge M_\star$ and throughout the
$M_\star\le M_0$ regime.  Thus \eqref{eq:small-library-order-main}
holds for every $M\ge M_{\mathrm{start}}$.

To remove the factor $K/N$ from the ratio bound, use adjacent-corner
interpolation for $M\ge M_{\mathrm{imp}}$.  The definition of $M_{\mathrm{imp}}$ and
$m_0\le K-L$ give
\begin{align}
M_{\mathrm{imp}}\ge\frac{K-1}{m_0}\ge M_0.
\label{eq:small-library-Mimp-ge-M0}
\end{align}
The ratio is one for $M\ge M_\star$.  On
$M_{\mathrm{imp}}\le M<M_\star$, bound \eqref{eq:small-library-Mimp-ge-M0}
gives $M\ge M_0$ and $M_0<M_\star$, so
Lemma~\ref{lem:small-library-pointwise-envelope} applies.  The definition of
$M_{\mathrm{imp}}$ gives the two memory inequalities
\begin{align}
m_0M\ge K-1,
\qquad
m_0M\ge2(K-N).
\label{eq:small-library-refined-memory-conditions}
\end{align}
For $x(M)$ in \eqref{eq:small-library-x-definition}, the condition
$m_0M\ge K-1$ in \eqref{eq:small-library-refined-memory-conditions} gives
$x(M)\ge1$, so the adjacent-sharing branch in
\eqref{eq:small-library-adjacent-sharing-bound} applies.
For $u(M)\le x(M)<u(M)+1$, completing the square bounds the
interpolation rate as follows:
\begin{subequations}
\begin{align}
\frac{x(M)V_{<}(M)}{K}
&=\frac{x(M)(2u(M)+1-x(M))}{u(M)(u(M)+1)}
\label{eq:small-library-adjacent-factor-exact}\\
&\le\frac{(u(M)+1/2)^2}{u(M)(u(M)+1)}
\label{eq:small-library-adjacent-factor-square}\\
&=1+\frac{1}{4u(M)(u(M)+1)}
\label{eq:small-library-adjacent-factor-identity}\\
&\le\frac98.
\label{eq:small-library-adjacent-factor-final}
\end{align}
\end{subequations}
The bound on the quadratic numerator follows by completing the square:
\begin{align*}
x(M)(2u(M)+1-x(M))
=(u(M)+1/2)^2-(x(M)-u(M)-1/2)^2
\le (u(M)+1/2)^2.
\end{align*}
The last inequality in \eqref{eq:small-library-adjacent-factor-final} uses
$u(M)\ge1$.
The pointwise envelope bound gives
\begin{subequations}
\begin{align}
R_{\mathrm{lift}}(M)
&\le V_{<}(M)
\label{eq:small-library-refined-ach-adjacent}\\
&\le\frac{9K(N-1)}{8[m_0M-(K-N)]}
\label{eq:small-library-refined-ach-denominator}\\
&\le\frac{9K(N-1)}{4m_0M}.
\label{eq:small-library-refined-ach-final}
\end{align}
\end{subequations}
Substituting $x(M)=[m_0M-(K-N)]/(N-1)$ into the
$9/8$ bound and applying $m_0M-(K-N)\ge m_0M/2$ gives
\eqref{eq:small-library-refined-ach-adjacent} through \eqref{eq:small-library-refined-ach-final}.

If $M\le N/4$, Lemma~\ref{lem:small-library-converse} and \eqref{eq:small-library-refined-ach-final} yield
\begin{subequations}
\begin{align}
\frac{R_{\mathrm{lift}}(M)}{R_{\mathrm{conv}}(M)}
&\le
\frac{9K(N-1)/(4m_0M)}{N/(6M)}
\label{eq:small-library-refined-ratio-low-substitute}\\
&=\frac{27K(N-1)}{2Nm_0}
\label{eq:small-library-refined-ratio-low-evaluate}\\
&\le\frac{27K}{2m_0}.
\label{eq:small-library-refined-ratio-low-final}
\end{align}
\end{subequations}
If $M>N/4$, inequalities \eqref{eq:small-library-unit-converse} and
\eqref{eq:small-library-refined-ach-final} give
\begin{subequations}
\begin{align}
\frac{R_{\mathrm{lift}}(M)}{R_{\mathrm{conv}}(M)}
&\le\frac{9K(N-1)}{4m_0M}
\label{eq:small-library-refined-ratio-high-substitute}\\
&<\frac{9K(N-1)}{Nm_0}
\label{eq:small-library-refined-ratio-high-memory}\\
&\le\frac{9K}{m_0}
\label{eq:small-library-refined-ratio-high-library}\\
&\le\frac{27K}{2m_0}.
\label{eq:small-library-refined-ratio-high-final}
\end{align}
\end{subequations}
Both \eqref{eq:small-library-refined-ratio-low-final} and
\eqref{eq:small-library-refined-ratio-high-final} bound the ratio by
$27K/(2m_0)$ for $M\ge M_{\mathrm{imp}}$.  Taking the minimum with
\eqref{eq:small-library-order-main} gives
\eqref{eq:small-library-refined-order-main} after substituting
$K/m_0=1/\min\{L/K,1-L/K\}$.

\end{IEEEproof}

\Needspace{6\baselineskip}
\subsection{\texorpdfstring{Constant-Factor Bound for $L=2$}{Constant-Factor Bound for L=2}}
Throughout this subsection,
\begin{align*}
L=2,\qquad N\ge K\ge 4.
\end{align*}
Define the midpoint corner index by
\begin{align}
h\triangleq \left\lfloor \frac{K-1}{2}\right\rfloor.
\label{eq:L2-h}
\end{align}

\begin{Lem}\label{lem:L2-explicit}
For each $t\in\{0,1,\ldots,h\}$, define the coefficients $A_t$ and $B_t$ by
\begin{align}
A_t&=
\frac{t(K-t-1)}{(K-2)^2}
+
\frac{t(t-1)(2K-t-2)}{2(K-2)(K-1)(K-t)},
\label{eq:L2-A}\\
B_t&=
\frac{2K^3-2K^2t-8K^2+Kt^2+3Kt+10K-4}
{2(K-2)^2(K-1)}.
\label{eq:L2-B}
\end{align}
Then the memory at corner $t$ is
\begin{align}
M(t)=N A_t+B_t.
\label{eq:L2-memory-decomp}
\end{align}
\end{Lem}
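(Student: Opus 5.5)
The plan is to evaluate \eqref{eq:M-t} in closed form at $L=2$. I would first simplify $T_{\mathrm{blk}}(K,r,2)$ using the run-moment identity \eqref{eq:small-library-run-moment}, then divide by $\binom{K-1}{t}$, and finally match the result to \eqref{eq:L2-A} and \eqref{eq:L2-B}.

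\emph{Step 1: simplify $T_{\mathrm{blk}}(K,r,2)$.} At $L=2$, \eqref{eq:mu} gives $\mu_{1,2}=(K-1)/(K-2)$ and $\mu_{b,2}=b/2$ for $b\ge2$. Substituting into \eqref{eq:T} gives
\begin{align*}
T_{\mathrm{blk}}(K,r,2)
=\frac{K-1}{K-2}\Gamma(K,r,1)+\frac12\sum_{b=2}^{K-1}b\,\Gamma(K,r,b).
\end{align*}
The identity \eqref{eq:small-library-run-moment} states $\sum_{b}b\Gamma(K,r,b)=\binom{K-1}{r-1}$ for $1\le r\le K-1$. Using it to replace the tail sum by $\binom{K-1}{r-1}-\Gamma(K,r,1)$ yields
\begin{align*}
T_{\mathrm{blk}}(K,r,2)=\frac12\binom{K-1}{r-1}+\frac{K}{2(K-2)}\Gamma(K,r,1).
\end{align*}
This formula also holds trivially at $r=0$, since both terms vanish.

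\emph{Step 2: identify the range of $r$.} We need $r=t$ and $r=t+1$ with $t\le h=\lfloor (K-1)/2\rfloor$. I would check that $t+1\le K-2$ whenever $K\ge4$. For $K\ge5$ this follows from $h+1\le (K+1)/2\le K-2$. For $K=4$ we have $h=1$ and $t+1\le 2=K-2$. Hence \eqref{eq:gamma} gives $\Gamma(K,r,1)=\binom{K-3}{r-1}$ throughout, with no $r=K-1$ boundary case.

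\emph{Step 3: divide by $\binom{K-1}{t}$.} The factorial identities needed are
\begin{align*}
\frac{\binom{K-1}{t-1}}{\binom{K-1}{t}}&=\frac{t}{K-t},\\
\frac{\binom{K-3}{t-1}}{\binom{K-1}{t}}&=\frac{t(K-1-t)}{(K-1)(K-2)},\\
\frac{\binom{K-3}{t}}{\binom{K-1}{t}}&=\frac{(K-1-t)(K-2-t)}{(K-1)(K-2)}.
\end{align*}
Grouping the terms multiplying $N$ gives the candidate coefficients
\begin{align*}
A_t&=\frac{t}{2(K-t)}+\frac{Kt(K-1-t)}{2(K-1)(K-2)^2},\\
B_t&=\frac12+\frac{K(K-1-t)(K-2-t)}{2(K-1)(K-2)^2}.
\end{align*}
These give $M(t)=NA_t+B_t$, which is \eqref{eq:L2-memory-decomp} once the two expressions are shown equal to \eqref{eq:L2-A} and \eqref{eq:L2-B}.

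\emph{Step 4: match to the stated forms.} This is pure rational-function algebra, and I expect it to be the main, though mechanical, obstacle, because the stated forms split the terms differently.
\begin{itemize}
\item For $B_t$, I would put everything over $2(K-2)^2(K-1)$ and expand $(K-2)^2(K-1)+K(K-1-t)(K-2-t)$. The target numerator is $2K^3-2K^2t-8K^2+Kt^2+3Kt+10K-4$.
\item For $A_t$, I would subtract the stated first term $t(K-t-1)/(K-2)^2$ from my expression. I would then verify that the remainder equals $t(t-1)(2K-t-2)/\bigl(2(K-2)(K-1)(K-t)\bigr)$ by clearing the common denominator $2(K-1)(K-2)^2(K-t)$ and comparing polynomials in $t$.
\end{itemize}
If a discrepancy appears, I would recheck the cases $t=0$ (where $A_0=0$ and $B_0=(K-1)/(K-2)$ should match \eqref{eq:left-endpoint}) and $t=1$ numerically, for example at $K=4$, to locate the error.
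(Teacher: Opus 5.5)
Your proposal is correct, and the matching in Step~4 closes. Over the common denominator $2(K-1)(K-2)^2(K-t)$, after removing the factor $t$, your expression for $A_t$ and \eqref{eq:L2-A} both have numerator $2K^3-6K^2+8K-4-(2K^2-K)t+Kt^2$. Likewise, $(K-2)^2(K-1)+K(K-1-t)(K-2-t)$ expands exactly to the numerator in \eqref{eq:L2-B}.

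Your route differs from the paper's in how you evaluate the long-span part of $T_{\mathrm{blk}}(K,s,2)$. The paper substitutes the explicit counts $\Gamma(K,s,b)=\binom{K-b-2}{s-b}$. It then computes $\sum_{b=2}^{s}b\binom{K-b-2}{s-b}=s\binom{K-3}{s-2}-(K-s-1)\binom{K-3}{s-3}$ with an ordinary and a weighted hockey-stick identity, and divides by $\binom{K-1}{t}$ using three binomial ratios. The particular split of $A_t$ in \eqref{eq:L2-A} is what that calculation produces after the bracket simplification \eqref{eq:L2-A-bracket-identity}.

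You instead collapse the whole tail with the first-moment identity $\sum_b b\,\Gamma(K,r,b)=\binom{K-1}{r-1}$. Only the span-one count then enters, and $T_{\mathrm{blk}}(K,r,2)=\tfrac12\binom{K-1}{r-1}+\tfrac{K}{2(K-2)}\binom{K-3}{r-1}$. This is shorter and makes the structure visible: the cost as if every run were charged $b/L$, plus a correction for short runs. It also extends verbatim to general $L$, where only the spans $b<L$ need explicit counts. The price is the extra, purely mechanical conversion into the paper's normal form.

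One citation detail: \eqref{eq:small-library-run-moment} is derived in the subsection that assumes $N<K$, whereas this lemma sits under $N\ge K$. The identity is purely combinatorial, so it applies here. The maximal runs of a proper $r$-subset partition it, and cyclic symmetry divides the total $r\binom Kr$ equally among the $K$ starts. You should either say this or reprove it in one line.
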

\begin{IEEEproof}
Specializing \eqref{eq:T} to $L=2$ gives
\begin{align}
T_{\mathrm{blk}}(K,s,2)
=
\frac{K-1}{K-2}\Gamma(K,s,1)
+
\sum_{b=2}^{K-1}\frac{b}{2}\Gamma(K,s,b).
\label{eq:L2-T-pre}
\end{align}
For $s=0$, Lemma~\ref{lem:cycliccount} gives
$T_{\mathrm{blk}}(K,0,2)=0$.  For every
$s\in\{1,\ldots,h+1\}$, the condition $K\ge4$ gives $s\le K-2$.
Therefore Lemma~\ref{lem:cycliccount} yields
\begin{align}
\Gamma(K,s,1)=\binom{K-3}{s-1},
\qquad
\Gamma(K,s,b)=\binom{K-b-2}{s-b},\quad 2\le b\le s.
\label{eq:L2-gamma-specialization}
\end{align}
Let $S_s$ denote the weighted binomial sum
\begin{align}
S_s\triangleq \sum_{b=2}^{s}b\binom{K-b-2}{s-b}.
\label{eq:L2-b-sum-def}
\end{align}
Substituting \eqref{eq:L2-gamma-specialization} into \eqref{eq:L2-T-pre}
rewrites the $b\ge2$ term as $S_s/2$.
Here and below, binomial coefficients with negative lower indices are zero.
With $j=s-b$, the ordinary and weighted hockey-stick identities give
\begin{align}
S_s
=s\binom{K-3}{s-2}-(K-s-1)\binom{K-3}{s-3}.
\label{eq:L2-b-sum}
\end{align}
Identities \eqref{eq:L2-b-sum-reindex} through \eqref{eq:L2-b-sum-hockey2}
give \eqref{eq:L2-b-sum}.  Substituting this sum into
\eqref{eq:L2-T-pre} gives, for $1\le s\le h+1$,
\begin{align}
T_{\mathrm{blk}}(K,s,2)
&=\frac{K-1}{K-2}\binom{K-3}{s-1}
+\frac{s}{2}\binom{K-3}{s-2}
-\frac{K-s-1}{2}\binom{K-3}{s-3}.
\label{eq:L2-T-closed}
\end{align}

The coefficient of $N$ in $M(t)$ is
\begin{align}
A_t=\frac{T_{\mathrm{blk}}(K,t,2)}{\binom{K-1}{t}}.
\label{eq:L2-A-coefficient-ratio}
\end{align}
The zero case has $A_0=0$. For $1\le t\le h$, the specialization $s=t$ in
\eqref{eq:L2-T-closed} and the three binomial ratios give
\begin{align}
A_t
&=\frac{t(K-t-1)}{(K-2)^2}
+\frac{t(t-1)(2K-t-2)}{2(K-2)(K-1)(K-t)}.
\label{eq:L2-A-closed}
\end{align}
The ratios \eqref{eq:L2-A-ratio-one} through \eqref{eq:L2-A-ratio-three}
give \eqref{eq:L2-A-closed} for $1\le t\le h$.
The equality $A_0=0$ completes \eqref{eq:L2-A} at the zero index.

The constant term in $M(t)$ is
\begin{align}
B_t=\frac{T_{\mathrm{blk}}(K,t+1,2)}{\binom{K-1}{t}}.
\label{eq:L2-B-coefficient-ratio}
\end{align}
Substituting $s=t+1$ in \eqref{eq:L2-T-closed} and simplifying the
binomial ratios gives
\begin{align}
B_t
&=\frac{2K^3-2K^2t-8K^2+Kt^2+3Kt+10K-4}
{2(K-2)^2(K-1)}.
\label{eq:L2-B-expanded}
\end{align}
Equations~\eqref{eq:L2-B-ratio-one} through
\eqref{eq:L2-B-numerator-identity} give the expression for $B_t$ in
\eqref{eq:L2-B-expanded}.  Combining it with
\eqref{eq:L2-A-closed} proves \eqref{eq:L2-memory-decomp}.
\end{IEEEproof}

\begin{Lem}[Coefficient bounds]\label{lem:L2-coefficient-bounds}
For $K\ge6$ and $0\le t\le h$, the coefficients in \eqref{eq:L2-A} through \eqref{eq:L2-B} satisfy
\begin{align*}
A_t\le\frac{3t}{2K},
\qquad
B_t\le\frac{3N}{2K}.
\end{align*}
\end{Lem}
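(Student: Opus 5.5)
The plan is to reduce both inequalities to elementary one-variable estimates on the closed forms of Lemma~\ref{lem:L2-explicit}. We use convexity in $t$ to push the worst case to the endpoints of $[0,h]$, and then check those endpoints directly, using $K\ge6$ and $N\ge K$.

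\emph{Bound on $B_t$.} Regard the numerator of \eqref{eq:L2-B} as a polynomial in $t$:
\[
2K^3-8K^2+10K-4+t(3K-2K^2)+Kt^2 .
\]
Its derivative in $t$ is $-2K^2+3K+2Kt$. This is negative whenever $t<K-3/2$, and in particular on $[0,h]$. Hence $B_t\le B_0$ there. At $t=0$ the numerator factors as
\[
2K^3-8K^2+10K-4=2(K-1)(K^2-3K+2)=2(K-1)^2(K-2),
\]
so $B_0=(K-1)/(K-2)$. This is at most $3/2$ exactly when $K\ge4$. Combined with $N\ge K$, we get
\[
B_t\le \frac32\le\frac{3N}{2K}.
\]

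\emph{Bound on $A_t$.} The case $t=0$ is trivial because $A_0=0$. For $1\le t\le h$, we divide by $t$ and prove $K f(t)\le 3/2$, where
\[
f(t)=\frac{K-t-1}{(K-2)^2}+\frac{(t-1)(2K-t-2)}{2(K-2)(K-1)(K-t)}.
\]
The first term is affine in $t$. For the second term, substitute $s=K-t>0$. A direct expansion gives
\[
(t-1)(2K-t-2)=(K-1)(K-2)+s-s^2,
\]
so the quotient equals
\[
\frac{(K-1)(K-2)}{s}+1-s,
\]
which is convex in $s$ and hence convex in $t$. Therefore $f$ is convex on $[1,h]$ and attains its maximum at $t=1$ or $t=h$.

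At $t=1$, $f(1)=1/(K-2)$, and $K/(K-2)\le 3/2$ is equivalent to $K\ge6$. This endpoint is exactly where the hypothesis $K\ge6$ is needed.

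\emph{Main obstacle: the endpoint $t=h$.} Heuristically $Kf(h)\approx \tfrac12+\tfrac34=\tfrac54$, so there is slack, but it has to be made rigorous for both parities of $K$. First we would use monotonicity. The first term of $f$ decreases in $t$. The second term, in the form $(K-1)(K-2)/s+1-s$, is decreasing in $s$ and hence increasing in $t$. It is therefore enough to evaluate the first term at $t=(K-2)/2\le h$ and the second at $t=(K-1)/2\ge h$. This gives
\[
Kf(h)\le \frac{K^2}{2(K-2)^2}+\frac{K(K-3)(3K-3)}{4(K-2)(K-1)(K+1)}.
\]
Then we would check that this rational function is at most $3/2$ for all $K\ge6$, by clearing denominators and verifying that the resulting polynomial has the right sign. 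If this crude bound fails for small $K$, the fallback is to treat $K$ even and $K$ odd separately with the exact value of $h$, and to verify the few smallest cases $K=6,7,8$ numerically.
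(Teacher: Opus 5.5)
Your proposal is correct, but one step has to be carried out rather than left conditional: the sandwich bound at $t=h$ really does fail for the smallest cases. After clearing the positive denominator $4(K-2)^2(K+1)$, the inequality
\[
\frac{K^2}{2(K-2)^2}+\frac{3K(K-3)}{4(K-2)(K+1)}\le\frac32
\]
is equivalent to $K^3-5K^2-18K+24\ge0$. This is false at $K=6$ and $K=7$ (values $-48$ and $-4$) and true for every $K\ge8$. Your fallback closes the gap. With the exact $h$, you get $Kf(h)=57/40$ for $K=6$ and $Kf(h)=273/200$ for $K=7$, both below $3/2$. If you prefer to avoid numerics, insert the exact $h$ for each parity. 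The endpoint condition then reduces to $K^4-44K^2+88K-48\ge0$ for even $K$ and to $K^3-3K^2-16K+24\ge0$ for odd $K$. These hold for all even $K\ge6$ and all odd $K\ge7$, respectively.

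Your $B_t$ argument is the same as the paper's. For $A_t$ you take a different route. The paper writes $3t/(2K)-A_t$ as $tQ(K,t)$ over the positive denominator $2K(K-1)(K-2)^2(K-t)$ and substitutes $u=K-2t$ to get a quartic $P(K,u)$. It shows $P(K,u)>0$ for $K\ge10$ because the concave quadratic $u(4K-28-u)$ is nonnegative on $0\le u\le K$. It then tabulates $3t/(2K)-A_t$ at every $1\le t\le h$ for $6\le K\le9$. You instead use the rewriting in $s=K-t$ to show that $A_t/t$ is convex in $t$, which reduces everything to the endpoints $t=1$ and $t=h$. This explains the hypothesis. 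The $t=1$ endpoint is $K/(K-2)\le3/2$, so $K\ge6$ is exactly the threshold, and it is sharp: equality holds at $K=6$, $t=1$, which is the zero entry of the paper's table. The $t=h$ endpoint has visible slack, since $Kf(h)\to5/4$. Your finite check shrinks to two values, or none in the parity-exact form. The paper's computation needs no convexity argument and gives one polynomial certificate for all $t$ at once. In exchange, its expansion is heavier, the argument only starts at $K\ge10$, and the small-$K$ table is larger.
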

\begin{IEEEproof}
For $K\ge10$, formula \eqref{eq:L2-A-diff} and the positive lower bound
\eqref{eq:L2-P-positive} give $A_t\le3t/(2K)$.  The values in
\eqref{eq:L2-A-small-K-values} are nonnegative for $6\le K\le9$,
with equality at $t=0$.  The numerator of $B_t$ has the negative derivative
\eqref{eq:L2-B-numerator-derivative}.  The denominator of $B_t$ is positive
and independent of $t$, so $B_t$ decreases on $0\le t\le h$.
Equation~\eqref{eq:L2-B-bound} gives $B_t\le3N/(2K)$.
\end{IEEEproof}

\begin{Lem}\label{lem:L2-memory}
For every $t\in\{0,1,\ldots,h\}$,
\begin{align}
M(t)\le \frac{3N(t+1)}{2K}.
\label{eq:L2-memory-bound}
\end{align}
\end{Lem}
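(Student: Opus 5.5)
The plan is to combine the exact decomposition of Lemma~\ref{lem:L2-explicit} with the coefficient bounds of Lemma~\ref{lem:L2-coefficient-bounds}. Because those bounds are stated only for $K\ge6$, the small cases $K\in\{4,5\}$ will be checked by hand. Throughout, $L=2$, $N\ge K\ge4$ and $0\le t\le h$, so Lemma~\ref{lem:L2-explicit} gives $M(t)=NA_t+B_t$.

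\emph{Case $K\ge6$.} Lemma~\ref{lem:L2-coefficient-bounds} gives $A_t\le 3t/(2K)$ and $B_t\le 3N/(2K)$. Substituting these into the decomposition gives
\begin{align*}
M(t)\le N\cdot\frac{3t}{2K}+\frac{3N}{2K}=\frac{3N(t+1)}{2K},
\end{align*}
which is \eqref{eq:L2-memory-bound}. No further work is needed here.

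\emph{Case $K\in\{4,5\}$.} Here $h\le2$, so there are only finitely many pairs $(K,t)$. For each pair I would compute $A_t$ and $B_t$ exactly from \eqref{eq:L2-A} and \eqref{eq:L2-B}, and cross-check them against a direct run count with $\mu_{1,2}=(K-1)/(K-2)$ and $\mu_{b,2}=b/2$.

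The target inequality can be written as
\begin{align*}
N\left(\frac{3(t+1)}{2K}-A_t\right)\ge B_t .
\end{align*}
The left side is affine in $N$, so it suffices to check two things: that the slope $3(t+1)/(2K)-A_t$ is nonnegative, and that the inequality holds at $N=K$.

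For $K=4$ this goes through as follows.
\begin{itemize}
\item At $t=0$: $A_0=0$ and $B_0=3/2$, and $3N/8\ge 3/2$ holds for $N\ge4$.
\item At $t=1$: $A_1=1/2$ and $B_1=5/6$. The condition becomes $N/4\ge5/6$, which holds for $N\ge4$.
\end{itemize}
The case $K=5$ with $t\in\{0,1,2\}$ is handled by the same arithmetic.

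\emph{Main obstacle.} The only real risk is in these small-$K$ checks. There $B_t$ is comparatively large, so the inequality holds with little slack, or with equality, as in $K=4$, $t=0$, $N=4$. I would therefore evaluate the run counts directly rather than rely solely on the closed forms, to guard against arithmetic slips.
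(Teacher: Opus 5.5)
Your proposal is correct and matches the paper's proof: for $K\ge6$ it combines $M(t)=NA_t+B_t$ with the coefficient bounds, and for $K\in\{4,5\}$ it checks the finitely many corners directly using $N\ge K$. Your $K=4$ values agree with the paper's, and the $K=5$ cases you left unchecked also hold for $N\ge5$, with $M(0)=4/3$, $M(1)=N/3+11/12$, and $M(2)=11N/18+23/36$.
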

\begin{IEEEproof}
For $K\ge6$, Lemmas~\ref{lem:L2-explicit} and \ref{lem:L2-coefficient-bounds} give
\begin{align}
M(t)=NA_t+B_t
\le \frac{3Nt}{2K}+\frac{3N}{2K}
=\frac{3N(t+1)}{2K}.
\label{eq:L2-memory-combine}
\end{align}
For $K\in\{4,5\}$, direct substitution in \eqref{eq:L2-memory-decomp} gives
\begin{align}
K=4:\quad
M(0)&=\frac32\le\frac{3N}{8},&
M(1)&=\frac N2+\frac56\le\frac{3N}{4},
\label{eq:L2-memory-K4-check}\\
K=5:\quad
M(0)&=\frac43\le\frac{3N}{10},&
M(1)&=\frac N3+\frac{11}{12}\le\frac{3N}{5},\nonumber\\
&&M(2)&=\frac{11N}{18}+\frac{23}{36}\le\frac{9N}{10}.
\label{eq:L2-memory-K5-check}
\end{align}
Each inequality follows from $N\ge K$ for $K\in\{4,5\}$.
\end{IEEEproof}

\begin{Lem}\label{lem:L2-midpoint}
With $h$ as in \eqref{eq:L2-h},
\begin{align}
M(h)\le N-1,
\qquad
R(h)\le 2.
\label{eq:L2-midpoint}
\end{align}
\end{Lem}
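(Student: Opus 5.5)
The plan is to handle the two inequalities in \eqref{eq:L2-midpoint} separately. The rate bound follows from \eqref{eq:R-t} alone. The memory bound follows from the corner estimate of Lemma~\ref{lem:L2-memory}, with the smallest values of $K$ checked directly from Lemma~\ref{lem:L2-explicit}.

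For the rate, \eqref{eq:R-t} gives $R(h)=K/(h+1)$. From \eqref{eq:L2-h}, $h+1=\lfloor (K+1)/2\rfloor\ge K/2$ for both parities of $K$. Hence $R(h)\le2$. Since $K\ge4$, we also have $h\in\{1,\ldots,K-2\}$, so corner $h$ is an admissible index of Theorem~\ref{thm:memoryload}.

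For the memory, I will apply Lemma~\ref{lem:L2-memory} at $t=h$. Using $h+1\le (K+1)/2$, it gives
\begin{align*}
M(h)\le\frac{3N(h+1)}{2K}\le\frac{3N(K+1)}{4K}.
\end{align*}
The right-hand side is at most $N-1$ exactly when $N(K-3)\ge4K$. Under $N\ge K$, this holds as soon as $K(K-3)\ge 4K$, that is, for $K\ge7$. For $K=6$ we have $h=2$, so Lemma~\ref{lem:L2-memory} gives the sharper bound $M(2)\le 3N\cdot3/12=3N/4$. This is at most $N-1$ because $N\ge6\ge4$.

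The remaining cases $K\in\{4,5\}$ use the explicit values listed in the proof of Lemma~\ref{lem:L2-memory}. For $K=4$, $h=1$ and $M(1)=N/2+5/6$. This is at most $N-1$ iff $N\ge11/3$, which holds since $N\ge4$. For $K=5$, $h=2$ and $M(2)=11N/18+23/36$. This is at most $N-1$ iff $N\ge 59/14$, which holds since $N\ge5$.

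The only delicate point is that the uniform bound $3N(K+1)/(4K)$ is too weak for small $K$. This is exactly why the cases $K\le6$ are split off and treated with the explicit decomposition \eqref{eq:L2-memory-decomp}. The remaining arithmetic is routine.
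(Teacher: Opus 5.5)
Your proof is correct and follows the paper's argument: $R(h)\le2$ comes from $h+1\ge K/2$, and $M(h)\le N-1$ comes from Lemma~\ref{lem:L2-memory} at $t=h$, with the smallest values of $K$ checked explicitly. The only difference is the case split. The paper splits by parity: the bound $3N/4$ covers every even $K\ge4$, the $(K+1)$ bound covers odd $K\ge7$, and the explicit value covers $K=5$. You instead use the $(K+1)$ bound for all $K\ge7$ and treat $K=4,5,6$ separately, which is equally valid.
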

\begin{IEEEproof}
Since \(h+1\ge K/2\), \eqref{eq:R-t} gives
\begin{align}
R(h)=\frac{K}{h+1}\le 2.
\label{eq:L2-midpoint-rate-proof}
\end{align}

Lemma~\ref{lem:L2-memory} also gives
\begin{align}
M(h)&\le\frac{3N(h+1)}{2K}\nonumber\\
&\le
\begin{cases}
3N/4,&K\text{ even},\\[0.3ex]
3N(K+1)/(4K),&K\text{ odd}.
\end{cases}
\label{eq:L2-midpoint-memory-parity-bound}
\end{align}
For even $K\ge4$, $3N/4\le N-1$. For odd $K\ge7$, the second branch is at
most $N-1$ because $N(K-3)/(4K)\ge(K-3)/4\ge1$.  At $K=5$,
\eqref{eq:L2-memory-K5-check} gives
$M(2)=11N/18+23/36\le N-1$ for $N\ge5$.  Together with
\eqref{eq:L2-midpoint-rate-proof}, these bounds give \eqref{eq:L2-midpoint}.
\end{IEEEproof}

\begin{Lem}\label{lem:L2-envelope}
The achievable envelope satisfies
\begin{align}
R_{\mathrm{lift}}(M)\le \frac{3N}{M},
\qquad \frac{K-1}{K-2}\le M\le N-1,
\label{eq:L2-envelope-low}\\
R_{\mathrm{lift}}(M)\le 2,
\qquad M\ge N-1.
\label{eq:L2-envelope-high}
\end{align}
\end{Lem}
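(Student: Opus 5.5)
The plan is to reuse Lemma~\ref{lem:floor-selected-envelope} for the low-memory range and the midpoint corner of Lemma~\ref{lem:L2-midpoint} for the high range. Throughout, $L=2$ and $N\ge K\ge4$.

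For \eqref{eq:L2-envelope-low}, apply Lemma~\ref{lem:floor-selected-envelope} with $a=3N/(2K)$ and $T=h$. Lemma~\ref{lem:L2-memory} supplies $M(t)\le a(t+1)$ for $t\in\{0,\ldots,h\}$. The rate-$K$ anchor is the $t=0$ point $\bigl((K-1)/(K-2),K\bigr)$ from \eqref{eq:left-endpoint}, which is available at every budget $M\ge(K-1)/(K-2)$ by memory monotonicity. The lemma then gives
\[
R_{\mathrm{lift}}(M)\le \frac{2Ka}{M}=\frac{3N}{M},
\]
but only under its range hypothesis $M<a(h+2)=3N(h+2)/(2K)$.

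Checking that range hypothesis is the one step needing care. Since $h+2\ge (K-2)/2+2=(K+2)/2$, we get
\[
a(h+2)\ge \frac{3N(K+2)}{4K}>\frac{3N}{4}.
\]
This falls short of $N-1$ when $N$ is large, so a single application does not cover the whole range. I would handle this by splitting the interval.

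\emph{Case $M\le a(h+1)$ (roughly up to $3N/4$).} Here the range hypothesis holds and Lemma~\ref{lem:floor-selected-envelope} applies directly.

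\emph{Case $a(h+1)\le M\le N-1$.} By Lemma~\ref{lem:L2-memory}, corner $h$ has memory at most $a(h+1)\le M$, and by \eqref{eq:R-t} its rate is
\[
\frac{K}{h+1}\le 2\le \frac{3N}{M},
\]
where the last inequality uses $M\le N-1<3N/2$. Memory monotonicity then yields the bound.

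The two cases overlap at $M=a(h+1)$, so together they cover $\bigl[(K-1)/(K-2),\,N-1\bigr]$. The same two cases also cover the degenerate situation $M<a$, since the anchor gives rate $K$ and $K\le 3N/M$ whenever $M<a\le 3N/K$.

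For \eqref{eq:L2-envelope-high}, Lemma~\ref{lem:L2-midpoint} gives a lifted corner with $M(h)\le N-1$ and $R(h)\le2$. For every $M\ge N-1$, memory monotonicity of the envelope \eqref{eq:ach-envelope-def} lets this corner be used at budget $M$, so $R_{\mathrm{lift}}(M)\le2$.

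The main risk is bookkeeping at the parity of $K$ and at the small cases $K\in\{4,5\}$. Those cases are already absorbed into Lemmas~\ref{lem:L2-memory} and~\ref{lem:L2-midpoint}, so no new computation should be needed.
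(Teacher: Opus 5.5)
Your proof is correct and follows essentially the same route as the paper. Below the midpoint corner you use Lemma~\ref{lem:floor-selected-envelope} with $a=3N/(2K)$, above it the corner $t=h$ (rate at most $2\le 3N/M$), and for $M\ge N-1$ Lemma~\ref{lem:L2-midpoint} with memory monotonicity. The only difference is bookkeeping: the paper takes $T=h-1$ and splits at the actual corner memory $M(h)$, while you take $T=h$ and split at the bound $a(h+1)\ge M(h)$, which works equally well.
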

\begin{IEEEproof}
For $(K-1)/(K-2)\le M<M(h)$, take $a=3N/(2K)$ and $T=h-1$.
Since $K\ge4$, $a>0$ and $T\ge0$ is an integer.
Theorem~\ref{thm:memoryload} gives the achievable points indexed by
$t\in\{0,\ldots,T\}$ with rate $K/(t+1)$, and
Lemma~\ref{lem:L2-memory} gives $M(t)\le a(t+1)$.
The rate-$K$ endpoint in \eqref{eq:left-endpoint} is available at budget $M$
by memory monotonicity.  The same memory bound gives
\begin{align*}
0<M<M(h)\le\frac{3N(h+1)}{2K}=a(T+2).
\end{align*}
Lemma~\ref{lem:floor-selected-envelope} therefore yields
\begin{align}
R_{\mathrm{lift}}(M)\le\frac{2Ka}{M}=\frac{3N}{M}.
\label{eq:L2-envelope-selected-final}
\end{align}

For $M(h)\le M\le N-1$, Lemma~\ref{lem:L2-midpoint} and memory monotonicity give
\begin{align}
R_{\mathrm{lift}}(M)\le R(h)\le2\le\frac{3N}{M}.
\label{eq:L2-envelope-midpoint-chain}
\end{align}
The two intervals cover the memory range in \eqref{eq:L2-envelope-low}.
For every $M\ge N-1$, Lemma~\ref{lem:L2-midpoint} gives $M\ge M(h)$,
so memory monotonicity yields
$R_{\mathrm{lift}}(M)\le R(h)\le2$, which is \eqref{eq:L2-envelope-high}.
\end{IEEEproof}

\begin{IEEEproof}[Proof of Proposition~\ref{prop:L2-order}]
Lemma~\ref{lem:feasible-memory}, specialized to \(L=2\), gives
\begin{align}
M\ge \frac{K-1}{K-2}
\label{eq:L2-feasible-memory-lower}
\end{align}
for every feasible cache size when \(N\ge K\).  The point in
\eqref{eq:left-endpoint} is therefore available at every feasible memory
budget by memory monotonicity:
\begin{align}
R_{\mathrm{lift}}(M)\le K
\label{eq:L2-trivial}
\end{align}
for every feasible $M$.

\emph{1) Very small memory:} Assume
\begin{align}
\frac{K-1}{K-2}\le M\le \frac{N}{2(K-1)}.
\label{eq:L2-very-small-memory-interval}
\end{align}
The two bounds on $M$ in \eqref{eq:L2-very-small-memory-interval} give
\begin{align}
N\ge 2M(K-1)\ge \frac{2(K-1)^2}{K-2}>2K-2.
\label{eq:L2-very-small-admissibility-chain}
\end{align}
The inequality $(K-1)/(K-2)>1$ gives the strict bound in
\eqref{eq:L2-very-small-admissibility-chain}.  Since $N$ is an integer,
$K-1\le\lfloor N/2\rfloor$, making $s=K-1$ admissible in
\eqref{eq:R-conv}.  Define
\begin{align}
g\triangleq \left\lfloor \frac{N}{K-1}\right\rfloor.
\label{eq:L2-very-small-q-definition}
\end{align}
For $g$ in \eqref{eq:L2-very-small-q-definition},
$\lfloor x\rfloor\ge x-1$ and $M\le N/(2(K-1))$ imply
$g\ge N/(K-1)-1\ge 2M-1$, and therefore
\begin{align}
M\le \frac{g+1}{2}.
\label{eq:L2-very-small-q-half}
\end{align}
The $s=K-1$ term of the converse is
\begin{align}
R_{\mathrm{conv}}(M)
\ge \frac{(K-1)g-1-(K-2)M}{g-1}.
\label{eq:L2-very-small-conv-substitute}
\end{align}
Equation~\eqref{eq:L2-very-small-q-half} yields
\begin{subequations}
\begin{align}
(K-1)g-1-(K-2)M-\frac{K(g-1)}{2}
&=(K-2)\left(\frac{g+1}{2}-M\right)
\label{eq:L2-very-small-numerator-rewrite}\\
&\ge 0.
\label{eq:L2-very-small-numerator-nonnegative}
\end{align}
\end{subequations}
Thus the numerator in \eqref{eq:L2-very-small-conv-substitute}
is at least $K(g-1)/2$, giving
\begin{align}
R_{\mathrm{conv}}(M)\ge \frac{K}{2}.
\label{eq:L2-converse-very-small}
\end{align}
Dividing the achievable rate bound \eqref{eq:L2-trivial} by
\eqref{eq:L2-converse-very-small} gives
\begin{align}
\frac{R_{\mathrm{lift}}(M)}{R_{\mathrm{conv}}(M)}\le 2.
\label{eq:L2-ratio-very-small}
\end{align}

\emph{2) Low memory:} Assume
\begin{align}
\frac{N}{2(K-1)}\le M\le \frac{N}{2}.
\label{eq:L2-low-memory-interval}
\end{align}
Set
\begin{align}
s\triangleq \left\lfloor \frac{N}{2M}\right\rfloor,
\qquad
g\triangleq \left\lfloor \frac{N}{s}\right\rfloor.
\label{eq:L2-low-sq-definition}
\end{align}
The interval endpoints give $1\le s\le K-1$.
The feasible-memory bound \eqref{eq:L2-feasible-memory-lower} gives
$M>1$, so $s<N/2$ and $s\le\lfloor N/2\rfloor$.
Thus $s$ is admissible in \eqref{eq:R-conv},
$\min\{s-1,K-2\}=s-1$, and Lemma~\ref{lem:floor-selected-converse} gives
\begin{align}
g=\left\lfloor\frac Ns\right\rfloor\ge 2M-1,
\qquad\text{and hence}\qquad
M\le\frac{g+1}{2}.
\label{eq:L2-low-q-half}
\end{align}
For the selected $s$ and $g$, the converse term is
\begin{subequations}
\begin{align}
R_{\mathrm{conv}}(M)
&\ge \frac{sg-1-(s-1)M}{g-1}
\label{eq:L2-low-conv-substitute}\\
&=\frac{s(g-M)+M-1}{g-1}.
\label{eq:L2-low-conv-rewrite}
\end{align}
\end{subequations}
Equation~\eqref{eq:floor-converse-general} gives
\begin{align}
R_{\mathrm{conv}}(M)\ge \frac{s}{2}\ge \frac{N}{8M}.
\label{eq:L2-converse-low}
\end{align}
Lemma~\ref{lem:L2-envelope} and \eqref{eq:L2-converse-low} bound the ratio by
\begin{align}
\frac{R_{\mathrm{lift}}(M)}{R_{\mathrm{conv}}(M)}
\le
\frac{3N/M}{N/(8M)}
=24.
\label{eq:L2-ratio-low}
\end{align}

\emph{3) High memory:} First consider
\begin{align}
\frac{N}{2}\le M\le N-1.
\label{eq:L2-intermediate-memory-interval}
\end{align}
The $s=1$ term of \eqref{eq:R-conv} is
\begin{align}
R_{\mathrm{conv}}(M)\ge 1.
\label{eq:L2-intermediate-conv-one}
\end{align}
Lemma~\ref{lem:L2-envelope} gives
\begin{align}
R_{\mathrm{lift}}(M)\le \frac{3N}{M}\le 6.
\label{eq:L2-intermediate-ach-bound}
\end{align}
Dividing \eqref{eq:L2-intermediate-ach-bound} by the unit lower bound
\eqref{eq:L2-intermediate-conv-one} gives
\begin{align}
\frac{R_{\mathrm{lift}}(M)}{R_{\mathrm{conv}}(M)}\le 6.
\label{eq:L2-ratio-intermediate}
\end{align}

For larger cache budgets, consider
\begin{align}
M\ge N-1.
\label{eq:L2-large-memory-interval}
\end{align}
On this interval, $R_{\mathrm{conv}}(M)\ge1$ and
$R_{\mathrm{lift}}(M)\le2$, so
\begin{align}
\frac{R_{\mathrm{lift}}(M)}{R_{\mathrm{conv}}(M)}\le 2.
\label{eq:L2-ratio-large}
\end{align}

Bounds \eqref{eq:L2-ratio-very-small}, \eqref{eq:L2-ratio-low},
\eqref{eq:L2-ratio-intermediate}, and \eqref{eq:L2-ratio-large}
cover every feasible $M$.  Each is at most $24$, proving
\eqref{eq:L2-order-main}.
\end{IEEEproof}

\subsubsection{\texorpdfstring{Algebraic Identities for the $L=2$ Bounds}{Algebraic Identities for the L=2 Bounds}}

The change of variable $j=s-b$ in \eqref{eq:L2-b-sum-def} separates the
sum into ordinary and weighted binomial sums:
\begin{subequations}
\begin{align}
S_s
&=\sum_{j=0}^{s-2}(s-j)\binom{K-s+j-2}{j}
\label{eq:L2-b-sum-reindex}\\
&=s\sum_{j=0}^{s-2}\binom{K-s+j-2}{j}
-\sum_{j=1}^{s-2}j\binom{K-s+j-2}{j}.
\label{eq:L2-b-sum-split}
\end{align}
\end{subequations}
The ordinary and weighted sums follow from the hockey-stick identity:
\begin{align}
\sum_{j=0}^{s-2}\binom{K-s+j-2}{j}
&=\binom{K-3}{s-2},
\label{eq:L2-b-sum-hockey1}\\
j\binom{K-s+j-2}{j}
&=(K-s-1)\binom{K-s+j-2}{j-1},
\label{eq:L2-b-sum-weight-identity}\\
\sum_{j=1}^{s-2}j\binom{K-s+j-2}{j}
&=(K-s-1)\binom{K-3}{s-3}.
\label{eq:L2-b-sum-hockey2}
\end{align}
Equations~\eqref{eq:L2-b-sum-hockey1} and
\eqref{eq:L2-b-sum-hockey2} reduce \eqref{eq:L2-b-sum-split} to
\eqref{eq:L2-b-sum}, which yields \eqref{eq:L2-T-closed}.

For $A_t$, the ratios to $\binom{K-1}{t}$ are
\begin{subequations}
\begin{align}
\frac{\binom{K-3}{t-1}}{\binom{K-1}{t}}
&=\frac{t(K-t-1)}{(K-1)(K-2)},
\label{eq:L2-A-ratio-one}\\
\frac{\binom{K-3}{t-2}}{\binom{K-1}{t}}
&=\frac{t(t-1)}{(K-1)(K-2)},
\label{eq:L2-A-ratio-two}\\
\frac{\binom{K-3}{t-3}}{\binom{K-1}{t}}
&=\frac{t(t-1)(t-2)}{(K-1)(K-2)(K-t)}.
\label{eq:L2-A-ratio-three}
\end{align}
\end{subequations}
Consequently,
\begin{subequations}
\begin{align}
A_t
&=\frac{t(K-t-1)}{(K-2)^2}
+\frac{t^2(t-1)}{2(K-1)(K-2)}
-\frac{(K-t-1)t(t-1)(t-2)}{2(K-1)(K-2)(K-t)}
\label{eq:L2-A-substitution}\\
&=\frac{t(K-t-1)}{(K-2)^2}
+\frac{t(t-1)}{2(K-1)(K-2)}
\left(t-\frac{(K-t-1)(t-2)}{K-t}\right).
\label{eq:L2-A-factor}
\end{align}
\end{subequations}
The bracket simplifies to
\begin{align}
t-\frac{(K-t-1)(t-2)}{K-t}
=\frac{2K-t-2}{K-t}.
\label{eq:L2-A-bracket-identity}
\end{align}
Substituting this bracket into \eqref{eq:L2-A-factor} gives
\eqref{eq:L2-A-closed}.  To evaluate $B_t$, use the ratios
\begin{subequations}
\begin{align}
\frac{\binom{K-3}{t}}{\binom{K-1}{t}}
&=\frac{(K-t-1)(K-t-2)}{(K-1)(K-2)},
\label{eq:L2-B-ratio-one}\\
\frac{\binom{K-3}{t-1}}{\binom{K-1}{t}}
&=\frac{t(K-t-1)}{(K-1)(K-2)},
\label{eq:L2-B-ratio-two}\\
\frac{\binom{K-3}{t-2}}{\binom{K-1}{t}}
&=\frac{t(t-1)}{(K-1)(K-2)}.
\label{eq:L2-B-ratio-three}
\end{align}
\end{subequations}
Substitution of these three ratios into \eqref{eq:L2-B-coefficient-ratio}
gives
\begin{subequations}
\begin{align}
B_t
&=\frac{(K-t-1)(K-t-2)}{(K-2)^2}
+\frac{t(t+1)(K-t-1)}{2(K-1)(K-2)}
-\frac{t(t-1)(K-t-2)}{2(K-1)(K-2)}
\label{eq:L2-B-substitution}\\
&=\frac{(K-t-1)(K-t-2)}{(K-2)^2}
+\frac{t(2K-t-3)}{2(K-1)(K-2)}
\label{eq:L2-B-combine-last-two}\\
&=\frac{2(K-1)(K-t-1)(K-t-2)+t(2K-t-3)(K-2)}
{2(K-2)^2(K-1)}.
\label{eq:L2-B-common-denominator}
\end{align}
\end{subequations}
The two terms with a common denominator combine by the identity
\begin{align}
(t+1)(K-t-1)-(t-1)(K-t-2)=2K-t-3.
\label{eq:L2-B-numerator-identity}
\end{align}
Expanding the numerator gives \eqref{eq:L2-B-expanded}.

The numerator of $B_t$ has derivative
\begin{align}
K(2t-2K+3),
\label{eq:L2-B-numerator-derivative}
\end{align}
which is negative on $0\le t\le h$. Hence
\begin{align}
B_t\le B_0=\frac{K-1}{K-2}\le\frac32\le\frac{3N}{2K}.
\label{eq:L2-B-bound}
\end{align}
For $K\ge6$, the file coefficient satisfies
\begin{align}
A_t\le\frac{3t}{2K},
\qquad 0\le t\le h.
\label{eq:L2-A-target}
\end{align}
For $K\ge10$, set
\begin{align}
u\triangleq K-2t.
\label{eq:L2-u-definition}
\end{align}
Define the polynomial $Q(K,t)$ by
\nopagebreak
\begin{align}
Q(K,t)
&=K^4-K^3t-9K^3-K^2t^2+14K^2t\nonumber\\*
&\quad+16K^2-24Kt-8K+12t.
\label{eq:L2-A-diff-Q}
\end{align}
Bringing the terms of $3t/(2K)-A_t$ to a common denominator gives
\begin{align}
\frac{3t}{2K}-A_t
=\frac{tQ(K,t)}{2K(K-1)(K-2)^2(K-t)}.
\label{eq:L2-A-diff-t}
\end{align}
The polynomial $Q(K,t)$ also has the unexpanded form
\begin{align}
&3(K-1)(K-2)^2(K-t)-2K(K-t-1)(K-1)(K-t)\nonumber\\
&\quad-K(t-1)(2K-t-2)(K-2).
\label{eq:L2-A-diff-Q-unexpanded}
\end{align}
Define the polynomial $P(K,u)$ by
\begin{align}
P(K,u)
&=K^4+4K^3u-8K^3-K^2u^2-28K^2u+16K^2
 +48Ku-8K-24u
\label{eq:L2-P-definition}\\
&=K^2\big((K-4)^2+u(4K-28-u)\big)+24u(2K-1)-8K.
\label{eq:L2-P-rearranged}
\end{align}
The substitution $t=(K-u)/2$ gives
\begin{align}
Q\!\left(K,\frac{K-u}{2}\right)&=\frac{P(K,u)}4,
\label{eq:L2-A-diff-Q-to-P}\\
\frac{3t}{2K}-A_t
&=\frac{tP(K,u)}{8K(K-1)(K-2)^2(K-t)}.
\label{eq:L2-A-diff}
\end{align}
For $K\ge10$, the concave quadratic $u(4K-28-u)$ is nonnegative at
$u=0$ and $u=K$.  It is therefore nonnegative throughout $0\le u\le K$, giving
\begin{align}
P(K,u)&\ge K^2(K-4)^2-8K>0,
\label{eq:L2-P-positive}\\
K^2(K-4)^2-8K
&=K\bigl(K(K-4)^2-8\bigr)
\ge K(10\cdot6^2-8)>0.
\label{eq:L2-P-coarse-positive}
\end{align}
Hence \eqref{eq:L2-A-target} holds for $K\ge10$.
For $6\le K\le9$, the values of $3t/(2K)-A_t$ for $1\le t\le h$ are
\begin{align}
K=6&:\quad 0,\ \frac{1}{40},\nonumber\\
K=7&:\quad \frac{1}{70},\ \frac{22}{525},\ \frac{81}{1400},\nonumber\\
K=8&:\quad \frac{1}{48},\ \frac{25}{504},\ \frac{121}{1680},\nonumber\\
K=9&:\quad \frac{1}{42},\ \frac{31}{588},\ \frac{61}{784},\ \frac{61}{735},
\label{eq:L2-A-small-K-values}
\end{align}
listed in increasing $t$ for each $K$.  Equality holds at $t=0$, and each
listed value is nonnegative.  Thus \eqref{eq:L2-A-target} holds for
$6\le K\le9$.  Combined with the bound for $K\ge10$, this covers every
$K\ge6$.

\section{Proofs for the Exact Largest-Access Tradeoff}
\label{app:largest-access}

\begin{Remark}[Normalized entropy converses]
Fix $(N,K,L)$ and collect the files, caches, and all $N^K$
demand-indexed broadcasts into a finite tuple $V$.
Along an achievable sequence, the normalized entropy vectors satisfy
\[
\frac{H(V)}{F_j}\le N+KM_{F_j}+N^KR_{F_j}.
\]
Every coordinate is bounded, so these normalized entropy vectors have a
coordinatewise convergent subsequence.  The limiting vector satisfies the
Shannon inequalities, file independence, the entropy equalities for recovery
and secrecy, and the resource bounds $H(Z_k)\le M$ and
$H(X_{\mathbf d})\le R$ in normalized units.
The entropy inequalities below and the single-access endpoint converse
of \cite{FangSmallRate} depend only on these relations, so they also hold
for the limiting vector.  In the following rate-one argument, the entropy
notation denotes coordinates of this normalized limiting vector.
Put $C_k=Z_{\mathcal A_k}$.  Marginalizing secrecy for requests 1 and 2 gives
$I(W_{2:N};C_k\mid W_1)=0$ and $I(W_1;C_k\mid W_2)=0$.
File independence implies $I(W_1;C_k)\le I(W_1;C_k,W_2)=0$, so the chain rule
gives $I(W_{[N]};C_k)=0$.  At rate one, recovery from $(C_k,X_{\mathbf d})$
therefore yields
\[
H(X_{\mathbf d}\mid C_k)
=1+H(X_{\mathbf d}\mid C_k,W_{d_k})
\le H(X_{\mathbf d})\le1.
\]
Thus $H(X_{\mathbf d}\mid C_k)=1$ and
$H(X_{\mathbf d}\mid C_k,W_{d_k})=0$, the endpoint identities for the normalized
entropy limit of the induced single-access code sequence.
Each finite-code inequality uses that code's budgets $M_F,R_F$ before taking
the limit.
\end{Remark}

\begin{Lem}[Conditional secrecy implies observation secrecy]
\label{lem:supp-security-equivalence}
Let \(O\) be the observation of a user requesting \(W_d\) under a fixed
public demand vector.  Suppose
\begin{align*}
I(W_{[N]\setminus\{d\}};O\mid W_d)=0.
\end{align*}
For independent files,
\(I(W_{[N]\setminus\{d\}};O)=0\).  The conclusion also holds for
any subtuple of the nonrequested files and any deterministic function
of \(O\).
\end{Lem}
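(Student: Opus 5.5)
The argument uses only file independence and the chain rule. Write $\mathcal N=[N]\setminus\{d\}$. First I will show the unconditional statement, and then note that it survives both restriction to a subtuple of nonrequested files and deterministic processing of $O$.

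\emph{Step 1 (unconditional secrecy).} I will insert $W_d$ into the observation and expand:
\begin{align*}
I(W_{\mathcal N};O)
&\le I(W_{\mathcal N};O,W_d)\\
&=I(W_{\mathcal N};W_d)+I(W_{\mathcal N};O\mid W_d).
\end{align*}
The first term on the right vanishes because the files are mutually independent, so $W_{\mathcal N}$ is independent of $W_d$. The second term vanishes by hypothesis. Since mutual information is nonnegative, $I(W_{\mathcal N};O)=0$. The demand vector stays fixed throughout, so all quantities are read under that fixed public $\mathbf d$. This is the same manipulation already used in the proofs of Theorem~\ref{thm:macc-converse} and Lemma~\ref{lem:feasible-memory}.

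\emph{Step 2 (subtuples and functions).} For any $\mathcal S\subseteq\mathcal N$ and any deterministic $f$, I will use
\begin{align*}
I(W_{\mathcal S};f(O))\le I(W_{\mathcal S};O)\le I(W_{\mathcal N};O)=0.
\end{align*}
The first inequality is the data-processing inequality, since $W_{\mathcal S}\to O\to f(O)$ is trivially a Markov chain. The second is monotonicity of mutual information under enlarging one argument, which is the chain rule together with nonnegativity. The same reasoning also yields the conditional form $I(W_{\mathcal S};f(O)\mid W_d)=0$. This follows because $I(W_{\mathcal S};f(O)\mid W_d)\le I(W_{\mathcal N};O\mid W_d)=0$ by the conditional chain rule and data processing.

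No step is a real obstacle. The only point that needs care is the role of the demand. Because $\mathbf D$ is independent of the library and the demand is fixed in every expression, the file-independence identity $I(W_{\mathcal N};W_d)=0$ still holds under this conditioning. In the normalized-limit setting of the preceding remark, the argument goes through unchanged, since it uses only the Shannon inequalities and file independence, and both are preserved in the limit.
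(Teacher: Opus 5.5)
Your proof is correct and matches the paper's argument: you insert $W_d$, split $I(W_{[N]\setminus\{d\}};O,W_d)$ by the chain rule into $I(W_{[N]\setminus\{d\}};W_d)+I(W_{[N]\setminus\{d\}};O\mid W_d)=0$, and handle subtuples and deterministic functions of $O$ by data processing. Your additional conditional form $I(W_{\mathcal S};f(O)\mid W_d)=0$ is also valid, though the statement does not need it.
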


\begin{IEEEproof}
File independence and the chain rule give
\begin{align*}
I(W_{[N]\setminus\{d\}};O)
&\le I(W_{[N]\setminus\{d\}};O,W_d)\\
&=I(W_{[N]\setminus\{d\}};W_d)
+I(W_{[N]\setminus\{d\}};O\mid W_d)=0.
\end{align*}
The claims for file subtuples and deterministic functions of \(O\) follow
by data processing.
\end{IEEEproof}

\subsection{Effective-cache reduction}

\begin{Lem}[Largest-access effective-cache specialization]
For \(L=K-1\), every SMACC code of cache size \(M\) induces a secure
single-access code with the same delivery rate and per-user cache budget at
most \((K-1)M\).  Consequently,
\begin{align}
R^*_{K,K-1,N}(M)
\ge R^*_{\mathrm{SA},K,N}((K-1)M),
\label{eq:supp-lkm1-single-access-reduction}
\end{align}
and every achievable pair with \(R\le1\) satisfies \(M\ge N-1\).
\end{Lem}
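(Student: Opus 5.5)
The plan is to specialize the general effective-cache reduction and then invoke the rate-one single-access converse through the normalized entropy limit described in the preceding remark.

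\emph{Reduction.} The first claim is Proposition~\ref{prop:effective-cache-reduction} at $L=K-1$. Given a length-$F$ SMACC code, assign single-access user $k$ the effective cache $\widetilde Z_k=Z_{\mathcal A_k}$, with alphabet $\prod_{i\in\mathcal A_k}\mathcal Z_i$. Its log-cardinality is at most $(K-1)M_FF$. The broadcast, decoders, correctness condition \eqref{eq:correctness}, and conditional secrecy \eqref{eq:security} carry over verbatim. Passing to code sequences gives \eqref{eq:supp-lkm1-single-access-reduction}.

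\emph{Memory bound at rate at most one.} Let $(M,R)$ be achievable with $R\le1$. I would take the induced single-access sequence, with cache budgets $(K-1)M_{F_j}$ and rates $R_{F_j}$, and pass to the convergent normalized entropy limit from the remark. In that limit, with $C_k=Z_{\mathcal A_k}$, the remark gives $I(W_{[N]};C_k)=0$, $H(X_{\mathbf d}\mid C_k)=1$ and $H(X_{\mathbf d}\mid C_k,W_{d_k})=0$. The rate-one converse of \cite{FangSmallRate} uses only these relations together with Shannon inequalities, file independence, recovery, and secrecy. It yields per-user effective cache entropy at least $(N-1)(K-1)$, so $(K-1)M\ge(N-1)(K-1)$, that is, $M\ge N-1$.

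For a self-contained version of the cache bound, I would argue directly. Fix $k$ and consider demands in which user $k$ requests each file $n$ in turn. The relation $H(X_{\mathbf d}\mid C_k,W_{d_k})=0$ means that, given $C_k$, the broadcast is a function of the requested file. Secrecy for another user $j$ whose window differs from $C_k$, combined with $I(W_{[N]};C_k)=0$, should then force $C_k$ to carry key material for every file but one, against each other user. The main obstacle is reproducing this step cleanly: the bound must come from the joint view, in which each of the $K-1$ other users is protected separately, rather than from the full-entropy claim $I(W_{[N]};C_k)=0$ alone. If a direct derivation is awkward, I will rely on the cited converse, applied to the limiting vector as the remark permits.

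Finally, the condition $R\le1$ is compatible with this argument: $R\ge1$ always holds, by the $s=1$ converse term in Theorem~\ref{thm:macc-converse}, so an achievable pair with $R\le1$ attains rate exactly one. Each finite-code inequality is applied with that code's budgets $M_F,R_F$, and then $\varepsilon\downarrow0$ removes the slack.
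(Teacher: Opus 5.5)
Your proposal is correct and follows the paper's proof. The reduction is the $L=K-1$ case of the effective-cache argument; the paper only relabels the omitted cache so that user $k$'s view is $Z_{[K]\setminus\{k\}}$. The bound $M\ge N-1$ then comes from applying the rate-one converse of \cite{FangSmallRate} to the normalized entropy limit of the induced single-access sequence, which gives $(K-1)M\ge(K-1)(N-1)$. The self-contained derivation you sketch is not needed and, as written, is incomplete, so keep the cited-converse route, as the paper does.
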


\begin{IEEEproof}
In the original cyclic labeling, user \(k\) omits exactly cache
\(\langle k-1\rangle_K\).  Relabel that omitted cache as \(Z_k\) and put
\begin{align}
\mathcal B_k=[K]\setminus\{k\},
\qquad
\widehat Z_k\triangleq Z_{\mathcal B_k}.
\end{align}
At blocklength \(F\), let \(\widehat{\mathcal Z}_k^{(F)}\) be the virtual-cache
alphabet: the Cartesian product of the constituent physical-cache alphabets.
The fixed-length
physical-cache constraints give
\begin{align}
\log_q\bigl|\widehat{\mathcal Z}_k^{(F)}\bigr|
&\le\sum_{i\in\mathcal B_k}\log_q\bigl|\mathcal Z_i^{(F)}\bigr|
\le(K-1)M_F F.
\end{align}
The virtual cache thus fits within budget \((K-1)M_F\).  Since
\(\widehat Z_k\) is exactly the
original access view, reusing the delivery signal preserves
\begin{align}
H(W_{d_k}\mid\widehat Z_k,X_{\mathbf d},\mathbf d)&=0,
\notag\\
I\!\left(W_{[N]\setminus\{d_k\}};
\widehat Z_k,X_{\mathbf d}\middle|W_{d_k},\mathbf d\right)&=0.
\end{align}
Passing to the normalized limsup bounds along an achievable sequence gives a single-access pair
with memory budget \((K-1)M\) and the same rate, giving
\eqref{eq:supp-lkm1-single-access-reduction}.

For every achievable pair with \(R\le1\), applying the rate-one entropy
converse of \cite{FangSmallRate} to the normalized entropy limit vector of the
induced single-access code sequence gives
\((K-1)M\ge(K-1)(N-1)\).
Dividing by \(K-1>0\) yields \(M\ge N-1\).
\end{IEEEproof}

\subsection{The two achievable corners}

At \(t=0\), the run formula gives \(T_{\rm blk}(K,0,K-1)=0\).  For incidence
size one, only \(b=1\) contributes, and with \(K-L=1\),
\begin{align}
T_{\rm blk}(K,1,K-1)
=\frac{K-1}{K-L}\Gamma(K,1,1)=K-1.
\end{align}
The achievable formulas therefore give
\((M(0),R(0))=(K-1,K)\).
The unit-rate formula \(M=(N-1)(K-1)/L\) gives the second endpoint
\((N-1,1)\).

\subsection{Elementary converse}

\begin{IEEEproof}[Proof of Lemma~\ref{lem:lkm1-basic}]
Fix a length-\(F\) code with budgets \(M_F,R_F\).
Fix user \(u\), file \(W_n\), and a demand with \(d_u=n\).  A second demand
assigns user \(u\) a different file, which exists because \(N\ge2\).
Under the second demand, Lemma~\ref{lem:supp-security-equivalence} implies
\(I(W_n;Z_{\mathcal B_u})=0\).  Correctness under the first demand yields
\begin{align}
F=H(W_n\mid Z_{\mathcal B_u})
=I(W_n;X_{\mathbf d}\mid Z_{\mathcal B_u})
\le H(X_{\mathbf d})\le R_F F.
\end{align}
Thus every finite code has \(R_F\ge1\), so every achievable pair has \(R\ge1\).

Set \(D=\min\{N,K\}\).  Choose a demand in which the first
\(D\) users request distinct files \(W_1,\ldots,W_D\).
If \(D=N<K\), the remaining users repeat an already requested file.
Collective decoding gives
\begin{align}
H(W_{1:D}\mid Z_{[K]},X_{\mathbf d})=0.
\label{eq:supp-lkm1-D-decode}
\end{align}
User 1 requests \(W_1\).  By
Lemma~\ref{lem:supp-security-equivalence}, its complete observation is
unconditionally independent of \(W_{2:D}\):
\begin{align}
I(W_{2:D};Z_{\mathcal B_1},X_{\mathbf d})=0.
\label{eq:supp-lkm1-D-security}
\end{align}
The decomposition \(Z_{[K]}=(Z_{\mathcal B_1},Z_1)\) yields
\begin{align}
DF
&=I(W_{1:D};Z_{[K]},X_{\mathbf d})
\notag\\
&=I(W_{1:D};Z_{\mathcal B_1},X_{\mathbf d})
+I(W_{1:D};Z_1\mid Z_{\mathcal B_1},X_{\mathbf d})
\notag\\
&=I(W_1;Z_{\mathcal B_1},X_{\mathbf d}\mid W_{2:D})
+I(W_{1:D};Z_1\mid Z_{\mathcal B_1},X_{\mathbf d})
\notag\\
&\le H(W_1)+H(Z_1)\le(1+M_F)F.
\end{align}
The bound \(DF\le(1+M_F)F\) gives \(M_F\ge D-1\).
Passing to the normalized limsup bound yields
\(M\ge D-1=\min\{N,K\}-1\).
\end{IEEEproof}

\subsection{\texorpdfstring{Affine converse for \(N\ge K+1\)}{Affine converse for N>= K+1}}

\begin{IEEEproof}[Proof of Proposition~\ref{prop:lkm1-line}]
Fix the distinct-demand vector \(\mathbf d=(1,2,\ldots,K)\), abbreviate
\(X=X_{1,\ldots,K}\), and let \(\mathcal B_1=[K]\setminus\{1\}\).

To make placement entropies depend only on the number of files included,
we symmetrize over file labels.  This operation preserves the memory and rate
budgets, correctness, and security.
Starting from a length-\(F\) code with budgets \(M_F,R_F\), take \(N!\)
independent copies indexed by
permutations \(\pi\) of \([N]\), with independent file layers and random seeds.
Each new file \(W_n=(W_n^{[\pi]})_\pi\) has length \(F'=N!F\).
Copy \(\pi\) applies the original maps to the virtual library
\((W_{\pi^{-1}(n)}^{[\pi]})_{n\in[N]}\) and demand
\((\pi(d_1),\ldots,\pi(d_K))\), so its user \(k\) recovers
\(W_{d_k}^{[\pi]}\).  The concatenated cache contents and broadcasts let each
user recover every layer of its requested file.  The file layers and random
seeds are independent across copies, so conditional leakage is zero jointly
across the layers.  The logarithms of the concatenated alphabet sizes add,
preserving the normalized memory and rate budgets.
For \(\mathcal S\subseteq[N]\), the placement entropy
\(H(W_{\mathcal S},Z_{[K]})\) depends only on \(|\mathcal S|\).
Entropies involving broadcasts are invariant under
simultaneous permutations of file and demand labels.  The entropy calculations
below use \(F\) for the new file length and \(M_F,R_F\) for the preserved
normalized budgets of the symmetrized code.

The affine converse follows from the entropy inequality
\begin{align}
(K-1)H(Z_1)+(N-K)H(X)
\ge\bigl((K-1)(N-1)+(N-K)\bigr)F.
\label{eq:supp-lkm1-entropy-target}
\end{align}
The fixed-length constraints bound its left side by
\(((K-1)M_F+(N-K)R_F)F\), and the right coefficient equals \(KN-2K+1\).

All \(K\) requested files are recoverable from \((Z_{[K]},X)\).
Decoding, Lemma~\ref{lem:supp-security-equivalence}, and subadditivity give
\begin{align}
&H(W_{K+1:N},Z_{[K]},X)+H(Z_1)
\notag\\
&=H(W_{[N]},Z_{[K]},X)+H(Z_1)
\notag\\
&\ge H(W_{2:N},Z_{\mathcal B_1},X)+H(Z_1)
\notag\\
&=H(W_{2:N})+H(Z_{\mathcal B_1},X)+H(Z_1)
\notag\\
&\ge(N-1)F+H(Z_{[K]},X).
\label{eq:supp-lkm1-isolation}
\end{align}
Here collective decoding supplies \(W_{1:K}\), and user-1 secrecy makes
\(W_{2:N}\) independent of \((Z_{\mathcal B_1},X)\).

The joint cache and broadcast entropies also satisfy
\begin{align}
&(K-1)H(Z_{[K]},X)+(N-K)H(Z_{\mathcal B_1},X)
\notag\\*
&\quad\ge
(K-1)H(W_{K+1:N},Z_{[K]},X)
+(N-K)H(W_1)
+(N-K)H(Z_{\mathcal B_1}).
\label{eq:supp-lkm1-transfer}
\end{align}
Placement is independent of demands.  Applying
Lemma~\ref{lem:supp-security-equivalence} to a demand in which user 1 requests
another file and taking the cache marginal gives
\begin{align}
I(W_1;Z_{\mathcal B_1})=0.
\end{align}
User 1 decodes \(W_1\) from \((Z_{\mathcal B_1},X)\).
Decodability and \(I(W_1;Z_{\mathcal B_1})=0\) make
\eqref{eq:supp-lkm1-transfer} equivalent to
\begin{align}
(N-K)H(X\mid W_1,Z_{\mathcal B_1})
\ge(K-1)H(W_{K+1:N}\mid Z_{[K]},X).
\label{eq:supp-lkm1-transfer-midgoal}
\end{align}
Correctness and conditioning give
\begin{align}
H(X\mid W_1,Z_{\mathcal B_1})
&\ge H(X\mid W_1,Z_{[K]})
\notag\\
&=H(X,W_{2:K}\mid W_1,Z_{[K]})
\notag\\
&\ge H(W_{2:K}\mid W_1,Z_{[K]}),
\label{eq:supp-lkm1-early-transfer}\\
H(W_{K+1:N}\mid Z_{[K]},X)
&=H(W_{K+1:N}\mid W_{1:K},Z_{[K]},X)
\notag\\
&\le H(W_{K+1:N}\mid W_{1:K},Z_{[K]}).
\label{eq:supp-lkm1-late-transfer}
\end{align}

We use file symmetry to compare the conditional entropies of
\(W_{2:K}\) and \(W_{K+1:N}\) in
\eqref{eq:supp-lkm1-early-transfer} and \eqref{eq:supp-lkm1-late-transfer}.
For each integer \(a\in\{0,\ldots,N\}\), define \(\mathcal E(a)\) below,
with \(W_{1:0}\) denoting the empty tuple:
\begin{align}
\mathcal E(a)=H(W_{1:a},Z_{[K]}),
\qquad0\le a\le N.
\end{align}
Thus \(\mathcal E(0)=H(Z_{[K]})\).  For \(1\le a\le N-1\), file symmetry gives
\(H(W_{2:a+1},Z_{[K]})=\mathcal E(a)\).  Submodularity applied to
\((W_{1:a},Z_{[K]})\) and \((W_{2:a+1},Z_{[K]})\) yields
\begin{align}
2\mathcal E(a)\ge\mathcal E(a-1)+\mathcal E(a+1).
\end{align}
Thus the increments are nonincreasing, and the average of increments
\(2,\ldots,K\) is no smaller than the average of increments
\(K+1,\ldots,N\):
\begin{align}
\frac{\mathcal E(K)-\mathcal E(1)}{K-1}
\ge
\frac{\mathcal E(N)-\mathcal E(K)}{N-K}.
\end{align}
Equivalently,
\begin{align}
(N-K)H(W_{2:K}\mid W_1,Z_{[K]})
\ge(K-1)H(W_{K+1:N}\mid W_{1:K},Z_{[K]}).
\label{eq:supp-lkm1-gap-comparison}
\end{align}
Combining \eqref{eq:supp-lkm1-early-transfer},
\eqref{eq:supp-lkm1-late-transfer}, and
\eqref{eq:supp-lkm1-gap-comparison} proves
\eqref{eq:supp-lkm1-transfer-midgoal}.
Hence \eqref{eq:supp-lkm1-transfer} also holds.

Subadditivity gives
\begin{align}
H(Z_{\mathcal B_1})+H(X)\ge H(Z_{\mathcal B_1},X).
\label{eq:supp-lkm1-final-subadditivity}
\end{align}
Adding \((K-1)\) times \eqref{eq:supp-lkm1-isolation},
\eqref{eq:supp-lkm1-transfer}, and \((N-K)\) times
\eqref{eq:supp-lkm1-final-subadditivity} cancels the joint-entropy terms and
gives \eqref{eq:supp-lkm1-entropy-target}.  Dividing by \(F\) and applying the
fixed-length constraints gives
\((K-1)M_F+(N-K)R_F\ge KN-2K+1\).
Taking limsup along an achievable sequence gives the affine converse.
\end{IEEEproof}

\subsection{Exact tradeoff and the two-user case}

\begin{IEEEproof}[Complete proof of Theorem~\ref{thm:lkm1-boundary}]
The corners \((K-1,K)\) and \((N-1,1)\) are achievable.  Each code remains
valid at any larger cache budget.

If \(N\ge K+1\), the elementary bound excludes \(M<K-1\).  Memory sharing
between the two corners gives, for \(K-1\le M\le N-1\),
\begin{align}
R_{\rm ms}(M)
=\frac{KN-2K+1-(K-1)M}{N-K}.
\label{eq:supp-lkm1-memory-sharing-line}
\end{align}
The affine converse gives the matching lower bound.  For \(M\ge N-1\), the
unit-rate corner is achievable and the elementary converse gives \(R\ge1\).
The achievable rate therefore equals the converse throughout the feasible
range for \(N\ge K+1\).

If \(K\ge N\), including \(K=N\), the elementary converse gives simultaneously
\(M\ge N-1\) and \(R\ge1\).
The unit-rate code attains both bounds and is valid at every \(M\ge N-1\).
Hence \(R^*(M)=1\) throughout this range.
\end{IEEEproof}

\begin{IEEEproof}[Proof of Corollary~\ref{cor:k2-single-access}]
When \(K=2\), the standing constraint forces \(L=1=K-1\).  For \(N\ge3\),
the exact boundary in \eqref{eq:supp-lkm1-memory-sharing-line} specializes to
\begin{align}
R^*_{2,1,N}(M)=\frac{2N-3-M}{N-2},
\qquad1\le M\le N-1.
\end{align}
The exact rate is one for \(M\ge N-1\), and the feasible-memory threshold is
\(M=1\).  For \(N=2\), the exact rate is one throughout the feasible range
\(M\ge1\).
\end{IEEEproof}

\end{document}